\newcommand{\rank}{\operatorname{rank}}
\newcommand{\calb}{\mathcal{B}}
\newcommand{\ecalb}{\mathcal{B}^e}
\newcommand{\calc}{\mathcal{C}}
\newcommand{\ecalc}{\mathcal{C}^e}
\newtheorem{theorem}{Theorem}[section]
\newtheorem{lem}[theorem]{Lemma}
\newtheorem{sigmodel}[theorem]{Signal Model}
\newtheorem{corollary}[theorem]{Corollary}
\newtheorem{definition}[theorem]{Definition}
\newtheorem{remark}[theorem]{Remark}
\newtheorem{ass}[theorem]{Assumption}
\newtheorem{fact}[theorem]{Fact}
\newcommand{\vect}[2]{\left[\begin{array}{cccccc}
     #1 \\
     #2
   \end{array}
  \right]
  }
\newcommand{\ds}{\displaystyle}
\newcommand{\beq}{\begin{equation}}
\newcommand{\eeq}{\end{equation}}
\newcommand{\bdm}{\begin{displaymath}}
\newcommand{\edm}{\end{displaymath}}
\newcommand{\eps}{\epsilon}
\newcommand{\bd}{\begin{definition}}
\newcommand{\ed}{\end{definition}}
\newcommand{\bv}{\begin{vugraph}}
\newcommand{\ev}{\end{vugraph}}
\newcommand{\bi}{\begin{itemize}}
\newcommand{\ei}{\end{itemize}}
\newcommand{\ben}{\begin{enumerate}}
\newcommand{\een}{\end{enumerate}}
\newcommand{\bean}{\begin{eqnarray*} }
\newcommand{\eean}{\end{eqnarray*} }
\newcommand{\bea}{\begin{eqnarray} }
\newcommand{\eea}{\end{eqnarray} }
\newcommand{\nn}{\nonumber}
\newcommand{\ba}{\begin{array} }
\newcommand{\ea}{\end{array} }
\newcommand{\E}{\mathbf{E}}
\renewcommand\thetheorem{\arabic{section}.\arabic{theorem}}
\newcommand{\new}{\mathrm{new}}
\newcommand{\cs}{\text{cs}}
\newcommand{\old}{\mathrm{old}}
\newcommand{\Shat}{\hat{S}}
\newcommand{\Lhat}{\hat{L}}
\newcommand{\Phat}{\hat{P}}
\newcommand{\Span}{\operatorname{span}}
\newcommand{\del}{\text{del}}
\newcommand{\add}{\text{add}}
\newcommand{\train}{\text{train}}
\newcommand{\That}{\hat{T}}
\newcommand{\SE}{\text{SE}}
\newcommand{\egam}{\Gamma^e}
\title{Recursive Robust PCA or Recursive Sparse Recovery in Large but Structured Noise}
\author{
Chenlu~Qiu,~\IEEEmembership{Member,~IEEE,} Namrata~Vaswani,~\IEEEmembership{Senior Member,~IEEE,} Brian~Lois~\IEEEmembership{Graduate~Student~Member,~IEEE}, and Leslie Hogben
\thanks{
C. Qiu, N. Vaswani are with the ECE dept at Iowa State University. B. Lois, L. Hogben are with the Mathematics dept. Email: \{chenlu,namrata,blois,lhogben\}@iastate.edu. Some ideas of this paper were presented at Allerton 2010, ICASSP 2013 and ISIT 2013 \cite{rrpcp_allerton,rrpcp_perf,rrpcp_perf2}. Copyright (c) 2014 IEEE. Personal use of this material is permitted.  However, permission to use this material for any other purposes must be obtained from the IEEE by sending a request to pubs-permissions@ieee.org. This research was supported in part by NSF grants CCF-0917015 and CCF-1117125.
}
}
\begin{document}

\maketitle




\begin{abstract}
This work studies the recursive robust principal components analysis (PCA) problem. If the outlier is the signal-of-interest, this problem can be interpreted as one of recursively recovering a time sequence of sparse vectors, $S_t$, in the presence of large but structured noise, $L_t$. The structure that we assume on $L_t$ is that $L_t$ is dense and lies in a low dimensional subspace that is either fixed or changes ``slowly enough." A key application where this problem occurs is in video surveillance where the goal is to separate a slowly changing background ($L_t$) from moving foreground objects ($S_t$) on-the-fly. To solve the above problem, in recent work, we introduced a novel solution called Recursive Projected CS (ReProCS). In this work we develop a simple modification of the original ReProCS idea and analyze it. This modification assumes knowledge of a subspace change model on the $L_t$'s. Under mild assumptions and a denseness assumption on the unestimated part of the subspace of $L_t$ at various times, we show that, with high probability (w.h.p.), the proposed approach can exactly recover the support set of $S_t$ at all times; and the reconstruction errors of both $S_t$ and $L_t$ are upper bounded by a time-invariant and small value. In simulation experiments, we observe that the last assumption holds as long as there is some support change of $S_t$ every few frames.
%
\end{abstract}

{\bf Keywords: } robust PCA, sparse recovery, compressive sensing, robust matrix completion


\section{Introduction}
\label{intro}

%

Principal Components Analysis (PCA) is a widely used dimension reduction technique that finds a small number of orthogonal basis vectors, called principal components (PCs), along which most of the variability of the dataset lies. It is well known that PCA is sensitive to outliers. Accurately computing the PCs in the presence of outliers is called robust PCA \cite{Roweis98emalgorithms,Torre03aframework,rpca,rpca2}.
Often, for time series data, the PCs space changes gradually over time. Updating it on-the-fly (recursively) in the presence of outliers, as more data comes in is referred to as online or recursive robust PCA \cite{sequentialSVD,ipca_weightedand,Li03anintegrated}.
``Outlier" is a loosely defined term that refers to any corruption that is not small compared to the true data vector and that occurs occasionally. As suggested in \cite{error_correction_PCP_l1,rpca}, an outlier can be nicely modeled as a sparse vector whose nonzero values can have any magnitude. 


A key application where the robust PCA problem occurs is in video analysis where the goal is to separate a slowly changing background from moving foreground objects \cite{Torre03aframework,rpca}. 
If we stack each frame as a column vector, the background is well modeled as being dense and lying in a low dimensional subspace that may gradually change over time, while the moving foreground objects constitute the sparse outliers \cite{error_correction_PCP_l1,rpca}. Other applications include detection of brain activation patterns from functional MRI (fMRI) sequences (the ``active" part of the brain can be interpreted as a sparse outlier), detection of anomalous behavior in dynamic social networks and  sensor networks based detection and tracking of abnormal events such as forest fires or oil spills. Clearly, in all these applications, an online solution is desirable. 

The moving objects or the active regions of the brain or the oil spill region may be ``outliers" for the PCA problem, but in most cases, these are actually the signals-of-interest whereas the background image is the noise. Also, all the above signals-of-interest are sparse vectors. Thus, this problem can also be interpreted as one of recursively recovering a time sequence of sparse signals, $S_t$, from measurements $M_t: = S_t + L_t$ that are corrupted by (potentially) large magnitude but dense and structured noise, $L_t$.  The structure that we require is that $L_t$ be dense and lie in a low dimensional subspace that is either fixed or changes ``slowly enough" in the sense quantified in Sec \ref{slowss}. 
\subsection{Related Work}
There has been a large amount of work on robust PCA, e.g. \cite{Torre03aframework,rpca,rpca2,Roweis98emalgorithms,novel_m_estimator,outlier_pursuit, mccoy_tropp11}, and recursive robust PCA e.g. \cite{sequentialSVD,ipca_weightedand,Li03anintegrated}. In most of these works, either the locations of the missing/corruped data points are assumed known \cite{sequentialSVD} (not a practical assumption); or they first detect the corrupted data points and then replace their values using nearby values \cite{ipca_weightedand}; or weight each data point in proportion to its reliability (thus soft-detecting and down-weighting the likely outliers) \cite{Torre03aframework,Li03anintegrated}; or just remove the entire outlier vector \cite{outlier_pursuit, mccoy_tropp11}. 
Detecting or soft-detecting outliers ($S_t$) as in \cite{ipca_weightedand,Torre03aframework,Li03anintegrated} is easy when the outlier magnitude is large, but not otherwise. When the signal of interest is $S_t$, the most difficult situation is when nonzero elements of $S_t$ have small magnitude compared to those of $L_t$ and in this case, these approaches do not work.

In recent works \cite{rpca,rpca2}, a new and elegant solution to robust PCA called Principal Components' Pursuit (PCP)  has been proposed, that does not require a two step outlier location detection/correction process and also does not throw out the entire vector. It redefines batch robust PCA as a problem of separating a low rank matrix, ${\cal L}_t := [L_1,\dots,L_t]$, from a sparse matrix, ${\cal S}_t := [S_1,\dots,S_t]$, using the measurement matrix, ${\cal M}_t := [M_1,\dots,M_t] = {\cal L}_t+ {\cal S}_t$. Other recent works that also study batch algorithms for recovering a sparse ${\cal S}_t$ and a low-rank ${\cal L}_t$ from ${\cal M}_t := {\cal L}_t+ {\cal S}_t$ or from undersampled measurements include \cite{rpca_tropp, linear_inverse_prob, rpca_hu,SpaRCS,rpca_vayatis,rpca_zhang, rpca_Giannakis,compressivePCP,rpca_reduced,noisy_undersampled_yuan}.

Let $\|A\|_*$ be the nuclear norm of $A$ (sum of singular values of $A$) while $\|A\|_1$ is the $\ell_1$ norm of $A$ seen as a long vector.
It was shown in \cite{rpca} that, with high probability (w.h.p.), one can recover ${\cal L}_t$ and ${\cal S}_t$ exactly by solving PCP: 
\beq
\underset{{\cal L},{\cal S}} {\min}\|{\cal L}\|_* + \lambda\|{\cal S}\|_1 \ \text{subject to}  \ \ {\cal L} + {\cal S} = {\cal M}_t
\label{pcp_prob}
\vspace{-2mm}
\eeq
provided that (a) the left and right singular vectors of ${\cal L}_t$ are dense; (b) any element of the matrix ${\cal S}_t$ is nonzero w.p. $\varrho$, and zero w.p. $1-\varrho$, independent of all others;
%
and (c) the rank of ${\cal L}_t$ is bounded by a small enough value. 


As described earlier, many applications where robust PCA is required, such as video surveillance, require an online (recursive) solution. Even for offline applications, a recursive solution is typically faster than a batch one. In recent work \cite{rrpcp_allerton,rrpcp_allerton11,han_tsp}, we introduced a novel solution approach, called Recursive Projected Compressive Sensing (ReProCS), that recursively recovered $S_t$ and $L_t$ at each time $t$. In simulation and real data experiments (see \cite{han_tsp} and \url{http://www.ece.iastate.edu/~chenlu/ReProCS/ReProCS_main.htm}), it was faster than batch methods such as PCP and also significantly outperformed them in situations where the support changes were correlated over time (as long as there was some support change every few frames) or when the background subspace dimension was large (for a given support size).
In this work we develop a simple modification of the original ReProCS idea and analyze it. This modification assumes knowledge of the subspace change model on the $L_t$'s.

\subsection{Our Contributions} 
We show that (i) if an estimate of the subspace of $L_t$ at the initial time is available; (ii) if $L_t$, lies in a slowly changing low dimensional subspace as defined in Sec \ref{slowss}, (iii) if this subspace is dense, if (iv) the unestimated part of the changed subspace is dense at all times, and (v) if the subspace change model is known to the algorithm, then, w.h.p., ReProCS can exactly recover the support set of $S_t$ at all times; and the reconstruction errors of both $S_t$ and $L_t$ are upper bounded by a time invariant and small value. Moreover, after every subspace change time, w.h.p., the subspace error decays to a small enough value within a finite delay. Because (iv) depends on an algorithm estimate, our result, in its current form, cannot be interpreted as a correctness result but only a useful step towards it. From simulation experiments, we have observed that (iv) holds for correlated support changes as long as the support changes every few frames. This connection is being quantified in ongoing work. Assumption (v) is also restrictive and we explain in Sec \ref{discuss_add} how it can possibly be removed in future work.

We also develop and analyze a generalization of ReProCS called ReProCS with cluster-PCA (ReProCS-cPCA) that is designed for a more general subspace change model, and that needs an extra clustering assumption. Its main advantage is that it does not require a bound on the number of subspace changes, $J$, as long as the separation between the change points is allowed to grow logarithmically with $J$. Equivalently, it does not need a bound on the rank of ${\cal L}_t$.

If $L_t$ is the signal of interest, then ReProCS is a solution to recursive robust PCA in the presence of sparse outliers. To the best of our knowledge, this is the first analysis of any recursive (online) robust PCA approach. If $S_t$ is the signal of interest, then ReProCS is a solution to recursive sparse recovery in large but low-dimensional noise. To our knowledge, this work is also the first to analyze any recursive (online) sparse plus low-rank recovery algorithm. Another online algorithm that addresses this problem is given in \cite{grass_undersampled}, however, it does not contain any performance analysis.
Our results directly apply to the recursive version of the matrix completion problem \cite{matrix_com_candes,Bresler_matrix} as well since it is a simpler special case of the current problem (the support set of $S_t$ is the set of indices of the missing entries and is thus known) \cite{rpca}. 

The proof techniques used in our work are very different from those used to analyze other recent batch robust PCA works \cite{rpca,rpca2, novel_m_estimator,mccoy_tropp11, outlier_pursuit,rpca_tropp, linear_inverse_prob, rpca_reduced, rpca_Giannakis,rpca_zhang,compressivePCP,noisy_undersampled_yuan}. The works of \cite{mccoy_tropp11,outlier_pursuit} also study a different case: that where an entire vector is either an outlier or an inlier. Our proof utilizes (a) sparse recovery results \cite{candes_rip}; (b) results from matrix perturbation theory that bound the estimation error in computing the eigenvectors of a perturbed Hermitian matrix with respect to eigenvectors of the original Hermitian matrix (the famous sin $\theta$ theorem of Davis and Kahan \cite{davis_kahan}) and (c) high probability bounds on eigenvalues of sums of independent random matrices (matrix Hoeffding inequality \cite{tail_bound}).

A key difference of our approach to analyzing the subspace estimation step compared with most existing work analyzing finite sample PCA, e.g. \cite{nadler} and references therein, is that it needs to provably work in the presence of error/noise that is correlated with $L_t$. Most existing works, including \cite{nadler} and the references it discusses, assume that the noise is independent of (or at least uncorrelated with) the data. However, in our case, because of how the estimate $\Lhat_t$ is computed, the error $e_t := L_t - \Lhat_t$ is correlated with $L_t$. As a result, the tools developed in these earlier works cannot be used for our problem. This is also the reason why simple PCA cannot be used and we need to develop and analyze projection-PCA based approaches for subspace estimation (see Appendix \ref{projpca} for details).

The ReProCS approach is related to that of \cite{decodinglp,rpca_regression, rpca_regression_sparse} in that all of these first try to nullify the low dimensional signal by projecting the measurement vector into a subspace perpendicular to that of the low dimensional signal, and then solve for the sparse ``error" vector (outlier). However, the big difference is that in all of these works the basis for the subspace of the low dimensional signal is {\em perfectly known.} Our work studies {\em the case where the subspace is not known}. We have an initial approximate estimate of the subspace, but over time it can change significantly. 
In this work, to keep things simple, we use $\ell_1$ minimization done separately for each time instant (also referred to as basis pursuit denoising (BPDN)) \cite{candes_rip,bpdn}. However, this can be replaced by any other sparse recovery algorithm, either recursive or batch, as long as the batch algorithm is applied to $\alpha$ frames at a time, e.g. one can replace BPDN by modified-CS or support-predicted modified-CS \cite{rrpcp_isit}.

\subsection{Paper Organization}
The rest of the paper is organized as follows. We give the notation and background required for the rest of the paper in Sec \ref{bgnd}. The problem definition and the model assumptions are given in Sec \ref{probdef}. We explain the ReProCS algorithm and give its performance guarantees (Theorem \ref{thm1}) in Sec \ref{reprocs_sec}. The terms used in the proof are defined in Sec \ref{detailed}. The proof is given in Sec \ref{mainlemmas}.
A more general subspace change model and ReProCS-cPCA which is designed to handle this model are given in Sec. \ref{Del_section}. We also give the main result for ReProCS-cPCA in this section and discuss it. A discussion with respect to the result for PCP \cite{rpca} is also provided here. Section \ref{thmproof} contains the proof of this theorem.
 In Sec \ref{model_verify}, we show that our slow subspace change model indeed holds for real videos. In Sec \ref{sims}, we show numerical experiments demonstrating Theorem \ref{thm1}, as well as comparisons of ReProCS with PCP. Conclusions and future work are given in Sec \ref{conc}.

\section{Notation and Background}
\label{bgnd}
\subsection{Notation}

For a set $T \subset \{1,2,\dots, n\}$, we use $|T|$ to denote its cardinality, i.e., the number of elements in $T$. We use $T^c$ to denote its complement w.r.t. $\{1,2,\dots n\}$, i.e. $T^c:= \{i \in \{1,2,\dots n\}: i \notin T \}$.

We use the interval notation, $[t_1, t_2]$, to denote the set of all integers between and including $t_1$ to $t_2$, i.e. $[t_1, t_2]:=\{t_1, t_1+1, \dots, t_2\}$. For a vector $v$, $v_i$ denotes the $i$th entry of $v$ and $v_T$ denotes a vector consisting of the entries of $v$ indexed by $T$. We use $\|v\|_p$ to denote the $\ell_p$ norm of $v$. The support of $v$, $\text{supp}(v)$, is the set of indices at which $v$ is nonzero, $\text{supp}(v) := \{i : v_i\neq 0\}$. We say that $v$ is s-sparse if $|\text{supp}(v)| \leq  s$.

For a matrix $B$, $B'$ denotes its transpose, and $B^{\dag}$ its pseudo-inverse. For a matrix with linearly independent columns, $B^{\dag} = (B'B)^{-1}B'$.
We use $\|B\|_2:= \max_{x \neq 0} \|Bx\|_2/\|x\|_2$ to denote the induced 2-norm of the matrix. Also, $\|B\|_*$ is the nuclear norm (sum of singular values) and $\|B\|_{\max}$ denotes the maximum over the absolute values of all its entries.
We let $\sigma_i(B)$ denotes the $i$th largest singular value of $B$. For a Hermitian matrix, $B$, we use the notation $B \overset{EVD}{=} U \Lambda U'$ to denote the eigenvalue decomposition of $B$. Here $U$ is an orthonormal matrix and $\Lambda$ is a diagonal matrix with entries arranged in decreasing order. Also, we use $\lambda_i(B)$ to denote the $i$th largest eigenvalue of a Hermitian matrix $B$ and we use $\lambda_{\max}(B)$ and $\lambda_{\min}(B)$ denote its maximum and minimum eigenvalues. If $B$ is Hermitian positive semi-definite (p.s.d.), then $\lambda_i(B) = \sigma_i(B)$. For Hermitian matrices $B_1$ and $B_2$, the notation $B_1 \preceq B_2$ means that $B_2-B_1$ is p.s.d. Similarly, $B_1 \succeq B_2$ means that $B_1-B_2$ is p.s.d.

For a Hermitian matrix $B$, $\|B\|_2 = \sqrt{\max( \lambda_{\max}^2(B), \lambda_{\min}^2(B) )}$ and thus, $\|B\|_2 \le b$ implies that $-b \le \lambda_{\min}(B) \le \lambda_{\max}(B) \le b$.


We use $I$ to denote an identity matrix of appropriate size. For an index set $T$ and a matrix $B$, $B_T$ is the sub-matrix of $B$ containing columns with indices in the set $T$. Notice that $B_T = B I_T$. Given a matrix $B$ of size $m \times n$ and $B_2$ of size $m \times n_2$, $[B \ B_2]$ constructs a new matrix by concatenating matrices $B$ and $B_2$ in the horizontal direction. Let $B_{\text{rem}}$ be a matrix containing some columns of $B$. Then $B \setminus B_{\text{rem}}$ is the matrix $B$ with columns in $B_{\text{rem}}$ removed.

For a tall matrix $P$, $\Span(P)$ denotes the subspace spanned by the column vectors of $P$.

The notation $[.]$ denotes an empty matrix.

\begin{definition}
We refer to a tall matrix $P$ as a {\em basis matrix} if it satisfies $P'P=I$.
\end{definition}

\begin{definition}
We use the notation $Q = \mathrm{basis}(B)$ to mean that $Q$ is a basis matrix and $\Span(Q) = \Span(B)$.  In other words, the columns of $Q$ form an orthonormal basis for the range of $B$.
\end{definition}


\begin{definition}\label{defn_delta}
The {\em $s$-restricted isometry constant (RIC)} \cite{decodinglp}, $\delta_s$, for an $n \times m$ matrix $\Psi$ is the smallest real number satisfying $(1-\delta_s) \|x\|_2^2 \leq \|\Psi_T x\|_2^2 \leq (1+\delta_s) \|x\|_2^2$
for all sets $T \subseteq \{1,2,\dots n \}$ with $|T| \leq s$ and all real vectors $x$ of length $|T|$.%
\end{definition} 

It is easy to see that $\max_{T:|T| \le s} \|({\Psi_T}'\Psi_T)^{-1}\|_2 \le \frac{1}{1-\delta_s(\Psi)}$ \cite{decodinglp}.


\begin{definition}
We give some notation for random variables in this definition.
\ben
\item We let $\E[Z]$ denote the expectation of a random variable (r.v.) $Z$ and $\E[Z|X]$ denote its conditional expectation given another r.v. $X$.
\item Let $\calb$ be a set of values that a r.v. $Z$ can take. We use $\ecalb$ to denote the {\em event} $Z \in \calb$, i.e. $\ecalb:= \{Z \in \calb\}$.
\item The probability of any event $\ecalb$ can be expressed as \cite{grimmett},
$$\mathbf{P}(\ecalb) := \E[\mathbb{I}_\calb(Z)].$$
where
\bea
\mathbb{I}_\calb(Z): = \left\{ \begin{array}{cc}
                                 1 \ & \ \text{if} \ Z \in \calb \nn \\
                                 0  \ & \ \text{otherwise}
                                 \end{array} \right. \nn
\eea
is the indicator function on the set $\calb$.

\item For two events $\ecalb, \tilde\ecalb$,  $\mathbf{P}(\ecalb|\tilde\ecalb)$ refers to the conditional probability of $\ecalb$ given $\tilde\ecalb$, i.e. $\mathbf{P}(\ecalb|\tilde\ecalb) := \mathbf{P}(\ecalb, \tilde\ecalb)/\mathbf{P}(\tilde\ecalb)$.

\item For a r.v. $X$, and a set $\calb$ of values that the r.v. $Z$ can take, the notation $\mathbf{P}(\ecalb|X)$ is defined as 
$$\mathbf{P}(\ecalb|X) := \E[\mathbb{I}_{\calb}(Z)|X].$$
Notice that $\mathbf{P}(\ecalb|X)$ is a r.v. (it is a function of the r.v. $X$) that always lies between zero and one.
%
\een
\label{probdefs}
\end{definition}

Finally, RHS refers to the right hand side of an equation or inequality; w.p. means ``with probability"; and w.h.p. means ``with high probability".

\subsection{Compressive Sensing result}
The error bound for noisy compressive sensing (CS) based on the RIC is as follows \cite{candes_rip}.
%
\begin{theorem}[\cite{candes_rip}]
\label{candes_csbound}
Suppose we observe
\beq
y := \Psi x + z \nn
\eeq
where $z$ is the noise. Let $\hat{x}$ be the solution to following problem
\beq
\min_{x} \|x\|_1 \ \text{subject to} \  \|y - \Psi x\|_2 \leq \xi  \label{*}
\eeq
Assume that $x$ is $s$-sparse, $\|z\|_2 \leq \xi$, and $\delta_{2s}(\Psi) < b (\sqrt{2}-1)$ for some $0 \le b < 1$. Then the solution of (\ref{*}) obeys
$$\|\hat{x} - x\|_2 \leq C_1 \xi$$
with $\ds C_1 = \frac{4\sqrt{1+\delta_{2s}(\Psi)}}{1-(\sqrt{2}+1)\delta_{2s}(\Psi)} \le \frac{4\sqrt{1+b (\sqrt{2}-1)}}{1-b}$.
\end{theorem}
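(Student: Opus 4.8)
The plan is to run the standard $\ell_1$-analysis (``cone plus tube'') argument on the error vector $h := \hat x - x$. First I would record the two constraints $h$ must satisfy. Since $x$ is $s$-sparse and $\|z\|_2 \le \xi$, the vector $x$ is feasible for (\ref{*}), so optimality of $\hat x$ forces $\|\hat x\|_1 \le \|x\|_1$; writing $T_0 := \supp(x)$ and splitting the $\ell_1$ norm over $T_0$ and $T_0^c$ gives the cone constraint $\|h_{T_0^c}\|_1 \le \|h_{T_0}\|_1$. Second, since $\hat x$ and $x$ both lie in the feasible tube $\{u : \|y - \Psi u\|_2 \le \xi\}$, the triangle inequality gives $\|\Psi h\|_2 \le 2\xi$.

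Next I would set up the block decomposition: order the indices of $T_0^c$ by decreasing magnitude of the corresponding entries of $h$, and cut them into consecutive size-$s$ blocks $T_1, T_2, \dots$ (the last possibly smaller). The ``shifting'' estimate $\|h_{T_j}\|_2 \le s^{-1/2}\|h_{T_{j-1}}\|_1$ for $j \ge 2$, summed and combined with the cone constraint and Cauchy--Schwarz ($\|h_{T_0}\|_1 \le \sqrt s\,\|h_{T_0}\|_2$), yields $\sum_{j \ge 2}\|h_{T_j}\|_2 \le \|h_{T_0}\|_2 \le \|h_{T_{01}}\|_2$ where $T_{01} := T_0 \cup T_1$. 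In particular the tail $\|h_{(T_{01})^c}\|_2$ is controlled by $\|h_{T_{01}}\|_2$, so it suffices to bound the latter.

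Then comes the RIP step. I would expand $\|\Psi h_{T_{01}}\|_2^2 = \langle \Psi h_{T_{01}}, \Psi h\rangle - \sum_{j \ge 2}\langle \Psi h_{T_{01}}, \Psi h_{T_j}\rangle$. For the first term, Cauchy--Schwarz with $|T_{01}| \le 2s$ and Definition \ref{defn_delta} give $|\langle \Psi h_{T_{01}}, \Psi h\rangle| \le \sqrt{1+\delta_{2s}}\,\|h_{T_{01}}\|_2\cdot 2\xi$. For the cross terms, the RIP implies the near-orthogonality bound $|\langle \Psi u, \Psi v\rangle| \le \delta_{2s}\|u\|_2\|v\|_2$ for disjointly supported $u,v$ with total support $\le 2s$ (proved by the parallelogram identity applied to $\|\Psi(u \pm v)\|_2^2$); applying it to $h_{T_0}$ and $h_{T_1}$ separately against $h_{T_j}$ and using $a+b \le \sqrt2\sqrt{a^2+b^2}$ gives $|\langle \Psi h_{T_{01}}, \Psi h_{T_j}\rangle| \le \sqrt2\,\delta_{2s}\|h_{T_{01}}\|_2\|h_{T_j}\|_2$, hence $\sum_{j\ge2}(\cdots) \le \sqrt2\,\delta_{2s}\|h_{T_{01}}\|_2^2$ by the tail bound. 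The lower RIP bound gives $\|\Psi h_{T_{01}}\|_2^2 \ge (1-\delta_{2s})\|h_{T_{01}}\|_2^2$. Combining, $(1-\delta_{2s})\|h_{T_{01}}\|_2 \le 2\sqrt{1+\delta_{2s}}\,\xi + \sqrt2\,\delta_{2s}\|h_{T_{01}}\|_2$, so $\|h_{T_{01}}\|_2 \le \frac{2\sqrt{1+\delta_{2s}}}{1-(1+\sqrt2)\delta_{2s}}\,\xi$ provided $\delta_{2s} < 1/(1+\sqrt2) = \sqrt2-1$. Finally $\|h\|_2 \le \|h_{T_{01}}\|_2 + \|h_{(T_{01})^c}\|_2 \le 2\|h_{T_{01}}\|_2$ yields $\|\hat x - x\|_2 \le C_1\xi$ with $C_1 = \frac{4\sqrt{1+\delta_{2s}}}{1-(1+\sqrt2)\delta_{2s}}$; since this is increasing in $\delta_{2s}$, substituting the hypothesis $\delta_{2s} \le b(\sqrt2-1)$ and simplifying $(1+\sqrt2)(\sqrt2-1) = 1$ gives $C_1 \le \frac{4\sqrt{1+b(\sqrt2-1)}}{1-b}$.

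The one place needing care is the cross-term estimate: splitting $T_{01}$ into two size-$s$ pieces to get the constant $\sqrt2$ (rather than $2$) is exactly what produces the sharp threshold $\delta_{2s} < \sqrt2-1$, and everything else is bookkeeping. Since the statement is a verbatim restatement of a published result, in the paper I would simply invoke \cite{candes_rip} for the details.
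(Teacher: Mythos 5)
The paper does not prove this statement at all: it is quoted verbatim from Cand\`es' 2008 RIP paper and the only ``proof'' the paper offers is the citation \cite{candes_rip}. So there is no in-paper argument to compare your proposal against. That said, your reconstruction is the standard and correct ``cone plus tube'' argument from the cited reference, and every step checks out: the cone constraint $\|h_{T_0^c}\|_1 \le \|h_{T_0}\|_1$ uses exact $s$-sparsity of $x$ (no compressibility term), the tube bound $\|\Psi h\|_2 \le 2\xi$ uses feasibility of both $x$ and $\hat x$, the shifting estimate plus Cauchy--Schwarz gives $\sum_{j\ge2}\|h_{T_j}\|_2 \le \|h_{T_{01}}\|_2$ and hence controls the tail, and the near-orthogonality lemma applied to the two size-$s$ halves of $T_{01}$ against each $T_j$ produces the constant $\sqrt2\,\delta_{2s}$ that yields the threshold $\delta_{2s} < \sqrt2-1$ and exactly the constant $C_1 = \frac{4\sqrt{1+\delta_{2s}}}{1-(\sqrt2+1)\delta_{2s}}$ stated. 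One small aside: in the last step you could use disjointness of supports to get $\|h\|_2 \le \sqrt2\,\|h_{T_{01}}\|_2$ instead of $2\|h_{T_{01}}\|_2$, but that would produce a smaller constant than the one the theorem asserts, so the triangle-inequality version is the right match. As you correctly note at the end, in the context of this paper one would simply cite \cite{candes_rip} rather than re-derive this.
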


\begin{remark}
Notice that if $b$ is small enough, $C_1$ is a small constant but $C_1 >1$. For example, if $\delta_{2s}(\Psi) \le 0.15$, then $C_1 \le 7$.
If $C_1 \xi  > \|x\|_2$, the normalized reconstruction error bound would be greater than $1$, making the result useless. Hence, (\ref{*}) gives a small reconstruction error bound only for the small noise case, i.e., the case where $\|z\|_2 \leq \xi \ll \|x\|_2$. 
\end{remark}

\subsection{Results from linear algebra}

Davis and Kahan's $\sin \theta$ theorem \cite{davis_kahan} studies the rotation of eigenvectors by perturbation.

\begin{theorem}[$\sin \theta$ theorem \cite{davis_kahan}] \label{sin_theta}
Given two Hermitian matrices $\mathcal{A}$ and $\mathcal{H}$ satisfying
\begin{align}
 \label{sindecomp}
\mathcal{A} &= \left[ \begin{array}{cc} E & E_{\perp} \\ \end{array} \right]
\left[\begin{array}{cc} A\ & 0\ \\ 0 \ & A_{\perp} \\ \end{array} \right]
\left[ \begin{array}{c} E' \\ {E_{\perp}}' \\ \end{array} \right]\nn, \\
\mathcal{H} &= \left[ \begin{array}{cc} E & E_{\perp} \\ \end{array} \right]
\left[\begin{array}{cc} H \ & B'\ \\ B \ & H_{\perp} \\ \end{array} \right]
\left[ \begin{array}{c} E' \\ {E_{\perp}}' \\ \end{array} \right]
\end{align}
where $[E \ E_{\perp}]$ is an orthonormal matrix. The two ways of representing $\mathcal{A}+\mathcal{H}$ are
\begin{align*}
\mathcal{A} + \mathcal{H}  &= \left[ \begin{array}{cc} E & E_{\perp} \\ \end{array} \right]
\left[\begin{array}{cc} A + H \ & B'\ \\ B \ & A_{\perp} + H_{\perp} \\ \end{array} \right]
\left[ \begin{array}{c} E' \\ {E_{\perp}}' \\ \end{array} \right] \\
&= \left[ \begin{array}{cc} F & F_{\perp} \\ \end{array} \right]
\left[\begin{array}{cc} \Lambda\ & 0\ \\ 0 \ & \Lambda_{\perp} \\ \end{array} \right]
\left[ \begin{array}{c} F' \\ {F_{\perp}}' \\ \end{array} \right] \nn
\end{align*}
where $[F\ F_{\perp}]$ is another orthonormal matrix. Let $\mathcal{R} := (\mathcal{A}+\mathcal{H}) E - \mathcal{A}E = \mathcal{H} E $. If $ \lambda_{\min}(A) >\lambda_{\max}(\Lambda_{\perp})$, then
\beq
\|(I-F F')E \|_2 \leq \frac{\|\mathcal{R}\|_2}{\lambda_{\min}(A) - \lambda_{\max}(\Lambda_{\perp})} \nn
\eeq

\end{theorem}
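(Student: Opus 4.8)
The plan is the classical Sylvester‑equation argument. The key first observation is that $(I-FF')E$ is simply the projection of $E$ onto $\Span(F_\perp)$: since $[F\ F_\perp]$ is orthonormal we have $FF'+F_\perp F_\perp'=I$, so $(I-FF')E=F_\perp F_\perp'E=F_\perp G$ where $G:=F_\perp'E$, and because $F_\perp$ has orthonormal columns, $\|(I-FF')E\|_2=\|G\|_2$. Hence it suffices to bound $\|G\|_2$.

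Next I would extract a Sylvester equation for $G$. From the block form of $\mathcal{A}$ one reads off $\mathcal{A}E=EA$, and from the eigendecomposition $\mathcal{A}+\mathcal{H}=[F\ F_\perp]\operatorname{diag}(\Lambda,\Lambda_\perp)[F\ F_\perp]'$ one reads off $F_\perp'(\mathcal{A}+\mathcal{H})=\Lambda_\perp F_\perp'$. Computing the matrix $F_\perp'(\mathcal{A}+\mathcal{H})E$ in two ways — on one side $\Lambda_\perp(F_\perp'E)=\Lambda_\perp G$, on the other $F_\perp'\mathcal{A}E+F_\perp'\mathcal{H}E=GA+F_\perp'\mathcal{R}$, using $\mathcal{R}=\mathcal{H}E$ — and equating yields
$$\Lambda_\perp G - G A = F_\perp'\mathcal{R}.$$

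It remains to bound $\|G\|_2$ from this identity, and this is where the main obstacle lies, since a naive entrywise solve controls $\|G\|_F$ but not $\|G\|_2$. I would diagonalize the Hermitian matrices as $A=U_A D_A U_A'$ and $\Lambda_\perp=U_\perp D_\perp U_\perp'$ and set $\tilde G:=U_\perp' G U_A$, $\tilde R:=U_\perp' F_\perp'\mathcal{R}\,U_A$, so that the equation decouples entrywise into $(\beta_j-\alpha_i)\tilde G_{ji}=\tilde R_{ji}$, where $\alpha_i\ge\lambda_{\min}(A)$, $\beta_j\le\lambda_{\max}(\Lambda_\perp)$, hence $\alpha_i-\beta_j\ge\delta:=\lambda_{\min}(A)-\lambda_{\max}(\Lambda_\perp)>0$. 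Using $1/(\beta_j-\alpha_i)=-\int_0^\infty e^{-t(\alpha_i-\beta_j)}\,dt$ and recombining gives $\tilde G=-\int_0^\infty e^{tD_\perp}\,\tilde R\,e^{-tD_A}\,dt$, so $\|G\|_2=\|\tilde G\|_2\le\|\tilde R\|_2\int_0^\infty e^{t(\lambda_{\max}(\Lambda_\perp)-\lambda_{\min}(A))}\,dt=\|\tilde R\|_2/\delta$; finally $\|\tilde R\|_2=\|F_\perp'\mathcal{R}\|_2\le\|\mathcal{R}\|_2$ since $U_A,U_\perp$ are unitary and $F_\perp$ has orthonormal columns. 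The spectral‑gap hypothesis $\lambda_{\min}(A)>\lambda_{\max}(\Lambda_\perp)$ is exactly what makes this integral converge (and makes the Sylvester solution unique), so it is genuinely needed.
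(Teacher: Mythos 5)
The paper does not prove this statement; Theorem~\ref{sin_theta} is cited directly from Davis and Kahan \cite{davis_kahan} and used as a black box, so there is no in-paper proof to compare against. Your argument, however, is a correct and complete self-contained proof, and it is the classical one: reduce to bounding $G := F_\perp'E$ (via $I - FF' = F_\perp F_\perp'$), derive the Sylvester equation $\Lambda_\perp G - GA = F_\perp'\mathcal{R}$ by evaluating $F_\perp'(\mathcal{A}+\mathcal{H})E$ two ways, and then invert the Sylvester operator in spectral norm via the integral representation $1/(\beta_j - \alpha_i) = -\int_0^\infty e^{-t(\alpha_i - \beta_j)}\,dt$ (valid precisely because the gap hypothesis forces $\alpha_i - \beta_j \ge \delta > 0$). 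All the intermediate identities check out: $\mathcal{A}E = EA$ and $F_\perp'(\mathcal{A}+\mathcal{H}) = \Lambda_\perp F_\perp'$ follow directly from the block forms, the conjugation by $U_A$, $U_\perp$ preserves spectral norm, and $\|e^{tD_\perp}\|_2\,\|e^{-tD_A}\|_2 = e^{-t\delta}$ for $t\ge 0$ integrates to $1/\delta$. One small point worth stating explicitly if you write this up: $\Lambda_\perp$ need only be Hermitian (not necessarily diagonal) for your argument, which your diagonalization step already handles, so the proof covers the most general block form of the Davis--Kahan statement. This is exactly the operator-norm Sylvester bound (often attributed to Heinz and to Bhatia--Davis--McIntosh) that underlies the $\sin\theta$ theorem.
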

The above result bounds the amount by which the two subspaces $\Span(E)$ and $\Span(F)$ differ as a function of the norm of the perturbation $\|\mathcal{R}\|_2$ and of the gap between the minimum eigenvalue of $A$ and the maximum eigenvalue of $\Lambda_{\perp}$. Next, we state Weyl's theorem which bounds the eigenvalues of a perturbed Hermitian matrix, followed by Ostrowski's theorem.

\begin{theorem}[Weyl \cite{hornjohnson}]\label{weyl}
Let $\mathcal{A}$ and $\mathcal{H}$ be two  $n \times n$ Hermitian matrices. For each $i = 1,2,\dots,n$ we have
$$\lambda_i(\mathcal{A}) + \lambda_{\min}(\mathcal{H}) \leq \lambda_i(\mathcal{A}+\mathcal{H}) \leq \lambda_i(\mathcal{A}) + \lambda_{\max}(\mathcal{H})$$
\end{theorem}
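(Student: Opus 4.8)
The plan is to derive Weyl's inequalities from the Courant--Fischer min--max characterization of the eigenvalues of a Hermitian matrix. Recall that for an $n \times n$ Hermitian matrix $\mathcal{B}$ with eigenvalues $\lambda_1(\mathcal{B}) \ge \lambda_2(\mathcal{B}) \ge \cdots \ge \lambda_n(\mathcal{B})$, one has
\beq
\lambda_i(\mathcal{B}) = \max_{\substack{V \subseteq \mathbb{C}^n \\ \dim V = i}} \ \min_{\substack{x \in V \\ x \neq 0}} \frac{x' \mathcal{B} x}{x' x}. \nn
\eeq
If a self-contained treatment is desired, I would first establish this identity: for any $i$-dimensional subspace $V$ and the span $W$ of the eigenvectors corresponding to $\lambda_i(\mathcal{B}),\dots,\lambda_n(\mathcal{B})$ (dimension $n-i+1$), a dimension count forces $V \cap W \neq \{0\}$, so $\min_{x \in V} x'\mathcal{B}x/x'x \le \lambda_i(\mathcal{B})$; equality is attained by taking $V$ to be the span of the top $i$ eigenvectors.

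Next I would record the elementary Rayleigh-quotient bound: for every nonzero $x$,
\beq
\frac{x' (\mathcal{A}+\mathcal{H}) x}{x' x} = \frac{x' \mathcal{A} x}{x' x} + \frac{x' \mathcal{H} x}{x' x}, \qquad \lambda_{\min}(\mathcal{H}) \le \frac{x' \mathcal{H} x}{x' x} \le \lambda_{\max}(\mathcal{H}), \nn
\eeq
the last display being itself an immediate consequence of Courant--Fischer applied to $\mathcal{H}$. For the lower bound in the theorem, let $V^\star$ be the $i$-dimensional subspace achieving the maximum in the variational formula for $\lambda_i(\mathcal{A})$. Then, since $V^\star$ is a feasible choice in the variational formula for $\lambda_i(\mathcal{A}+\mathcal{H})$,
\beq
\lambda_i(\mathcal{A}+\mathcal{H}) \ \ge \ \min_{\substack{x \in V^\star \\ x \neq 0}} \frac{x'(\mathcal{A}+\mathcal{H})x}{x'x} \ \ge \ \min_{\substack{x \in V^\star \\ x \neq 0}} \left( \frac{x'\mathcal{A}x}{x'x} + \lambda_{\min}(\mathcal{H}) \right) = \lambda_i(\mathcal{A}) + \lambda_{\min}(\mathcal{H}). \nn
\eeq

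Finally, the upper bound follows by applying the lower bound just proved with $\mathcal{A}$ replaced by $\mathcal{A}+\mathcal{H}$ and $\mathcal{H}$ replaced by $-\mathcal{H}$, using $\lambda_{\min}(-\mathcal{H}) = -\lambda_{\max}(\mathcal{H})$: this gives $\lambda_i(\mathcal{A}) \ge \lambda_i(\mathcal{A}+\mathcal{H}) - \lambda_{\max}(\mathcal{H})$, i.e. $\lambda_i(\mathcal{A}+\mathcal{H}) \le \lambda_i(\mathcal{A}) + \lambda_{\max}(\mathcal{H})$, completing the proof for every $i$. The only real obstacle here is bookkeeping: one must be careful to pair the ``$\max$ over $i$-dimensional $V$'' form of Courant--Fischer with the correct side of each inequality (choosing the optimal subspace for $\mathcal{A}$ and using it as a test subspace for $\mathcal{A}+\mathcal{H}$), and to get the sign flip right in the duality step that produces the upper bound; there is no analytic difficulty beyond that.
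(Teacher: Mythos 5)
Your proof is correct: the Courant--Fischer min--max characterization, applied by using the optimal $i$-dimensional subspace for $\mathcal{A}$ as a test subspace for $\mathcal{A}+\mathcal{H}$, yields the lower bound, and the substitution $\mathcal{A} \mapsto \mathcal{A}+\mathcal{H}$, $\mathcal{H} \mapsto -\mathcal{H}$ gives the upper bound by duality. The paper does not prove this statement at all---it is quoted verbatim as a known result with a citation to Horn and Johnson---and the variational argument you give is precisely the standard proof found in that reference, so there is nothing to compare beyond noting that your write-up supplies the proof the paper deliberately omits.
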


\begin{theorem}[Ostrowski \cite{hornjohnson}]\label{ost}
Let $H$ and $W$ be $n \times n$ matrices, with $H$ Hermitian and $W$ nonsingular. For each $i=1,2 \dots n$, there exists a positive real number $\theta_i$ such that $\lambda_{\min} (WW') \leq \theta_i \leq \lambda_{\max}(W{W}')$ and $\lambda_i(W H {W}') = \theta_i \lambda_i(H)$. Therefore,
$$\lambda_{\min}(W H {W}') \geq \lambda_{\min} (W{W}') \lambda_{\min} (H)$$
\end{theorem}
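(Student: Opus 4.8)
I would follow the classical argument: first normalize $W$ by a polar decomposition, and then read off the multiplicative perturbation bounds from the Courant--Fischer min--max characterization of eigenvalues. Since $W$ is nonsingular, write $W = QP$ with $Q$ orthogonal ($Q'Q=I$) and $P := (W'W)^{1/2}$ symmetric positive definite. Then $WHW' = Q(PHP)Q'$, so $\lambda_i(WHW') = \lambda_i(PHP)$ for every $i$, because conjugation by an orthogonal matrix preserves the spectrum. Also $P^2 = W'W$ has the same eigenvalues as $WW'$ (both are the squared singular values of $W$), so $\lambda_{\min}(WW') = \lambda_{\min}(P^2)$ and $\lambda_{\max}(WW') = \lambda_{\max}(P^2)$. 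Hence it suffices to exhibit, for each $i$, a number $\theta_i \in [\lambda_{\min}(P^2),\lambda_{\max}(P^2)]$ with $\theta_i > 0$ and $\lambda_i(PHP) = \theta_i \lambda_i(H)$; the ``Therefore'' will then be the case $i=n$.

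The main step uses Courant--Fischer both for $H$ and for $PHP$, linked by the substitution $y = Px$, which is a bijection carrying an $i$-dimensional subspace $\mathcal S$ to the $i$-dimensional subspace $P\mathcal S$ and converting the ordinary Rayleigh quotient of $PHP$ into a generalized one:
\[
\frac{x'PHPx}{x'x} = \frac{y'Hy}{y'P^{-2}y} = \frac{y'Hy}{y'y}\cdot\frac{y'y}{y'P^{-2}y},
\]
where the second factor always lies in $[\lambda_{\min}(P^2),\lambda_{\max}(P^2)]$ since the eigenvalues of $P^{-2}$ are the reciprocals of those of $P^2$. To get the upper bound on $\lambda_i(PHP)$ I would feed the $(n-i+1)$-dimensional subspace that is optimal for $H$ in the $\min$--$\max$ form (on which $y'Hy/y'y \le \lambda_i(H)$) into the $\min$--$\max$ form for $PHP$; for the lower bound I would feed the $i$-dimensional subspace that is optimal for $H$ in the $\max$--$\min$ form (on which $y'Hy/y'y \ge \lambda_i(H)$) into the $\max$--$\min$ form for $PHP$. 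Combining the bound on $y'Hy/y'y$ with the envelope for the second factor, keeping track of the sign of $\lambda_i(H)$, yields: $\lambda_{\min}(P^2)\lambda_i(H) \le \lambda_i(PHP) \le \lambda_{\max}(P^2)\lambda_i(H)$ when $\lambda_i(H) > 0$; the two outer quantities swapped when $\lambda_i(H) < 0$; and $\lambda_i(PHP) = 0$ when $\lambda_i(H) = 0$ (consistent with Sylvester's law of inertia, since $H \mapsto PHP$ is a congruence). Setting $\theta_i := \lambda_i(PHP)/\lambda_i(H)$ when $\lambda_i(H) \neq 0$ and $\theta_i := \lambda_{\min}(P^2)$ otherwise gives $\theta_i > 0$, $\theta_i \in [\lambda_{\min}(WW'),\lambda_{\max}(WW')]$, and $\lambda_i(WHW') = \theta_i \lambda_i(H)$. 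Finally, taking $i=n$, $\lambda_{\min}(WHW') = \theta_n\lambda_{\min}(H) \ge \lambda_{\min}(WW')\lambda_{\min}(H)$, where the last inequality uses $\theta_n \ge \lambda_{\min}(WW')$ together with $\lambda_{\min}(H) \ge 0$ (which holds in the intended applications, where $H$ is a positive semidefinite matrix).

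The one point that needs care is precisely this sign bookkeeping: a blunt bound on $y'Hy/y'y$ does not pin $\theta_i$ down, because that factor changes sign as $y$ varies, so one must evaluate it only on the extremal subspace coming from Courant--Fischer for $H$, where its sign is controlled, and then multiply the bounds on the two factors in the direction dictated by $\operatorname{sign}(\lambda_i(H))$. Everything else --- the polar decomposition, the spectral equalities, and the final specialization --- is routine.
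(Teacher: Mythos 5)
The paper does not give a proof of this theorem; it cites it directly from Horn and Johnson, so there is no in-paper argument to compare against. Your proof via polar decomposition, the substitution $y = Px$, and Courant--Fischer is the standard textbook argument, and the details you give are correct: the conjugation by $Q$ reduces to the symmetric case, the factorization of the Rayleigh quotient of $PHP$ into $(y'Hy/y'y)\cdot(y'y/y'P^{-2}y)$ isolates a factor whose range is $[\lambda_{\min}(WW'),\lambda_{\max}(WW')]$, and evaluating on the Courant--Fischer extremal subspaces of $H$ (rather than bounding blindly over all subspaces) gives the two-sided multiplicative bounds, with the inequalities swapping according to $\operatorname{sign}(\lambda_i(H))$.

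You are also right to flag the one delicate point. The displayed ``Therefore'' follows from $\theta_n \ge \lambda_{\min}(WW')$ only when $\lambda_{\min}(H) \ge 0$; if $\lambda_{\min}(H) < 0$ the inequality reverses (e.g.\ $W = \operatorname{diag}(1,2)$, $H = \operatorname{diag}(1,-1)$ gives $\lambda_{\min}(WHW') = -4 < -1 = \lambda_{\min}(WW')\lambda_{\min}(H)$). The paper's statement omits this hypothesis, but it is harmless here because every invocation of Theorem~\ref{ost} in the paper (in the proofs of Lemmas~\ref{termbnds} and~\ref{lem_bound_terms}) takes $H$ to be a conditional covariance matrix, which is positive semidefinite. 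So your observation does not expose a gap in the paper's reasoning, only an implicit assumption in the statement of the lemma that you correctly made explicit.
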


The following lemma proves some simple linear algebra facts.
\begin{lem} \label{lemma0}\label{hatswitch}
Suppose that $P$, $\Phat$ and $Q$ are three basis matrices. Also, $P$ and $\Phat$ are of the same size, ${Q}'P = 0$ and $\|(I-\Phat{\Phat}')P\|_2 = \zeta_*$. Then,
\ben
  \item $\|(I-\Phat{\Phat}')PP'\|_2 =\|(I - P{P}')\Phat{\Phat}'\|_2 =  \|(I - P P')\Phat\|_2 = \|(I - \Phat \Phat')P\|_2 =  \zeta_*$
  \item $\|P{P}' - \Phat {\Phat}'\|_2 \leq 2 \|(I-\Phat{\Phat}')P\|_2 = 2 \zeta_*$
  \item $\|{\Phat}' Q\|_2 \leq \zeta_*$ \label{lem_cross}
  \item $ \sqrt{1-\zeta_*^2} \leq \sigma_i((I-\Phat \Phat')Q)\leq 1 $
\een
Further, if $P$ is an $n \times r_1$ basis matrix and $\Phat$ is an $n \times r_2$ basis matrix with $r_2 \geq r_1$,
then $\|(I-\Phat{\Phat}')PP'\|_2 \leq \|(I - P{P}')\Phat{\Phat}'\|_2$
\end{lem}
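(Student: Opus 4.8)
The plan is to reduce everything to one elementary fact: for any matrix $M$ and any basis matrix $B$ (i.e.\ $B'B=I$), $\|MB\|_2 = \|MBB'\|_2$. Indeed $\|MBB'\|_2 \le \|MB\|_2\|B'\|_2 = \|MB\|_2$, and $\|MB\|_2 = \|MBB'B\|_2 \le \|MBB'\|_2\|B\|_2 = \|MBB'\|_2$. Beyond this I will only use that the spectral norm is invariant under transposition and that $\|N\|_2^2 = \lambda_{\max}(N'N)$ for any $N$. Given these, item 1 is almost automatic: applying the fact with $B=P$ gives $\|(I-\Phat\Phat')PP'\|_2 = \|(I-\Phat\Phat')P\|_2 = \zeta_*$, and with $B=\Phat$ gives $\|(I-PP')\Phat\Phat'\|_2 = \|(I-PP')\Phat\|_2$. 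The one genuine equality left is $\|(I-PP')\Phat\|_2 = \|(I-\Phat\Phat')P\|_2$, which I will get by squaring: since $I-\Phat\Phat'$ is a projector, $\|(I-\Phat\Phat')P\|_2^2 = \lambda_{\max}(P'(I-\Phat\Phat')P) = 1 - \lambda_{\min}(P'\Phat\Phat'P) = 1 - \sigma_{\min}^2(\Phat'P)$, and symmetrically $\|(I-PP')\Phat\|_2^2 = 1 - \sigma_{\min}^2(P'\Phat)$; these agree because $P$ and $\Phat$ have the same size, so $P'\Phat$ is square and $\sigma_{\min}(P'\Phat) = \sigma_{\min}(\Phat'P)$. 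This equal-size point is the one place that really has to be handled with care.

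For item 2 I will use the identity $PP' - \Phat\Phat' = PP'(I-\Phat\Phat') - (I-PP')\Phat\Phat'$, apply the triangle inequality, note that $\|PP'(I-\Phat\Phat')\|_2 = \|(I-\Phat\Phat')PP'\|_2$ (transpose, both factors Hermitian), and bound each of the two terms by $\zeta_*$ via item 1. For item 3, $Q'P=0$ gives $(I-PP')Q = Q$, hence $\Phat'Q = \Phat'(I-PP')Q$ and $\|\Phat'Q\|_2 \le \|(I-PP')\Phat\|_2 \le \zeta_*$ by item 1. For item 4, the upper bound is just $\sigma_i((I-\Phat\Phat')Q) \le \|(I-\Phat\Phat')Q\|_2 \le 1$, and the lower bound follows from $\sigma_{\min}^2((I-\Phat\Phat')Q) = \lambda_{\min}(Q'(I-\Phat\Phat')Q) = 1 - \lambda_{\max}(Q'\Phat\Phat'Q) = 1 - \|\Phat'Q\|_2^2 \ge 1 - \zeta_*^2$, using item 3.

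For the final inequality (with $P$ an $n\times r_1$ and $\Phat$ an $n\times r_2$ basis matrix, $r_2 \ge r_1$), the basis-matrix fact turns it into $\|(I-\Phat\Phat')P\|_2 \le \|(I-PP')\Phat\|_2$, so by the squaring identity it suffices to show $\lambda_{\min}(\Phat'PP'\Phat) \le \sigma_{\min}^2(\Phat'P)$. This holds because $\Phat'PP'\Phat = (\Phat'P)(\Phat'P)'$ is $r_2\times r_2$ and its spectrum consists of the $r_1$ squared singular values of $\Phat'P$ together with $r_2 - r_1 \ge 0$ additional zeros, so its smallest eigenvalue is no larger than $\sigma_{\min}^2(\Phat'P)$ (with equality exactly when $r_2=r_1$, consistent with item 1). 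The only mildly delicate bookkeeping in the whole argument is tracking which product is a square matrix in item 1 and this singular-value multiplicity count; everything else is routine.
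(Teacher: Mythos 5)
Your proof is correct. The overall structure matches the paper's (bound items $2$--$4$ by reducing to item $1$ and a Weyl-type eigenvalue bound), but for the central equality in item $1$ and the final inequality you take a slightly leaner route: the paper writes the compact SVD $P'\Phat = U\Sigma V'$ and exhibits $D_1'D_1 = PU(I-\Sigma\Sigma')U'P'$ and $D_2'D_2 = \Phat V(I-\Sigma'\Sigma)V'\Phat'$ as having the same nonzero spectrum (with extra zeros on the $D_2$ side when $r_2>r_1$), whereas you never write the SVD, instead reducing $\|(I-\Phat\Phat')P\|_2^2$ and $\|(I-PP')\Phat\|_2^2$ to $1-\lambda_{\min}(P'\Phat\Phat'P)$ and $1-\lambda_{\min}(\Phat'PP'\Phat)$ and then using two elementary facts -- that a square matrix and its transpose have the same smallest singular value (for the equal-size case) and that $(\Phat'P)(\Phat'P)'$ picks up $r_2-r_1$ extra zero eigenvalues (for the rectangular case). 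These are the same underlying spectral facts packaged differently; your packaging is somewhat more elementary because it avoids the explicit SVD factorization and the claim that those two expressions are "compact SVD's." Items $2$ and $3$ agree with the paper up to a cosmetic rearrangement of the telescoping decomposition, and your item $4$ (applying Weyl to $Q'(I-\Phat\Phat')Q = I - Q'\Phat\Phat'Q$) is identical to the paper's.
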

The proof is in the Appendix.

\subsection{High probability tail bounds for  sums of independent random matrices}

The following lemma follows easily using Definition \ref{probdefs}. We will use this at various places in the paper.
\begin{lem}
Suppose that $\calb$ is the set of values that the r.v.s $X,Y$ can take. Suppose that $\calc$ is a set of values that the r.v. $X$ can take.
For a $0 \le p \le 1$, if $\mathbf{P}(\ecalb|X) \ge p$ for all $X \in \calc$,  then $\mathbf{P}(\ecalb|\ecalc) \ge p$ as long as $\mathbf{P}(\ecalc)> 0$.
\label{rem_prob}
\end{lem}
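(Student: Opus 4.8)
The plan is to unwind every probability in the statement back to an expectation of an indicator function, exactly as in Definition~\ref{probdefs}, and then apply the tower property of conditional expectation together with one pointwise inequality.

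First I would fix notation: write $Z := (X,Y)$, so that $\ecalb = \{Z \in \calb\}$, and recall from Definition~\ref{probdefs} that $\mathbf{P}(\ecalb\mid X) = \E[\mathbb{I}_\calb(Z)\mid X]$ is a function of the random variable $X$; call this function $g$, so that $\mathbf{P}(\ecalb\mid X) = g(X)$. The hypothesis ``$\mathbf{P}(\ecalb\mid X) \ge p$ for all $X \in \calc$'' then reads: $g(x) \ge p$ for every value $x \in \calc$.

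Next I would rewrite the target. By definition $\mathbf{P}(\ecalb\mid\ecalc) = \mathbf{P}(\ecalb,\ecalc)/\mathbf{P}(\ecalc)$, and
\[
\mathbf{P}(\ecalb,\ecalc) = \E\!\left[\mathbb{I}_\calb(Z)\,\mathbb{I}_\calc(X)\right].
\]
Since $\mathbb{I}_\calc(X)$ is a function of $X$, the tower property (``take out what is known'') gives
\[
\E\!\left[\mathbb{I}_\calb(Z)\,\mathbb{I}_\calc(X)\right]
= \E\!\left[\mathbb{I}_\calc(X)\,\E[\mathbb{I}_\calb(Z)\mid X]\right]
= \E\!\left[\mathbb{I}_\calc(X)\,g(X)\right].
\]
Now I would invoke the pointwise bound $\mathbb{I}_\calc(x)\,g(x) \ge p\,\mathbb{I}_\calc(x)$, valid for every $x$: when $x \in \calc$ it is $g(x) \ge p$, and when $x \notin \calc$ both sides are $0$. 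Taking expectations yields $\E[\mathbb{I}_\calc(X)\,g(X)] \ge p\,\E[\mathbb{I}_\calc(X)] = p\,\mathbf{P}(\ecalc)$, hence $\mathbf{P}(\ecalb,\ecalc) \ge p\,\mathbf{P}(\ecalc)$, and dividing by $\mathbf{P}(\ecalc) > 0$ gives $\mathbf{P}(\ecalb\mid\ecalc) \ge p$.

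There is no genuinely hard step here; the only points requiring care are the measure-theoretic reading of the hypothesis (it is a statement about the regular conditional probability $g$ as a function on the range of $X$) and the justification for pulling the $X$-measurable factor $\mathbb{I}_\calc(X)$ out of the inner conditional expectation. Both are standard facts about conditional expectation, so the proof is essentially a two-line computation once the definitions are in place.
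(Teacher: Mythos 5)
Your proposal is correct and follows essentially the same route as the paper's proof: express $\mathbf{P}(\ecalb,\ecalc)$ as $\E[\mathbb{I}_\calb\,\mathbb{I}_\calc]$, apply the tower property to pull out $\mathbb{I}_\calc(X)$ and expose $\mathbf{P}(\ecalb\mid X)$, then use the pointwise bound $\mathbb{I}_\calc(x)g(x)\ge p\,\mathbb{I}_\calc(x)$ and divide by $\mathbf{P}(\ecalc)$. The paper's argument is identical in substance, differing only in that it does not introduce the shorthand $Z=(X,Y)$ or $g(X)$.
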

The proof is in the Appendix.

The following lemma is an easy consequence of the chain rule of probability applied to a contracting sequence of events.
\begin{lem} \label{subset_lem}
For a sequence of events $E_0^e, E_1^e, \dots E_m^e$ that satisfy $E_0^e \supseteq E_1^e  \supseteq E_2^e \dots  \supseteq E_m^e$, the following holds
$$\mathbf{P}(E_m^e|E_0^e) = \prod_{k=1}^{m} \mathbf{P}(E_k^e | E_{k-1}^e).$$
\end{lem}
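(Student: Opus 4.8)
The plan is to prove Lemma~\ref{subset_lem} by a direct induction on $m$, using the definition of conditional probability together with the nesting hypothesis $E_0^e \supseteq E_1^e \supseteq \dots \supseteq E_m^e$. The crucial observation is that nesting means intersections collapse: $E_k^e \cap E_{k-1}^e = E_k^e$ whenever $k \ge 1$, and more generally $E_m^e \cap E_0^e = E_m^e$. So the telescoping product on the right-hand side is really just a rewriting of $\mathbf{P}(E_m^e)/\mathbf{P}(E_0^e)$, provided all the intermediate probabilities are positive.

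First I would note that if $\mathbf{P}(E_0^e) = 0$ the conditional probability $\mathbf{P}(E_m^e \mid E_0^e)$ is undefined, so we may assume $\mathbf{P}(E_0^e) > 0$; by nesting this does not force the smaller events to have positive probability, but any $E_k^e$ with $\mathbf{P}(E_k^e) = 0$ makes both sides zero (the left side directly since $E_m^e \subseteq E_k^e$, the right side because the factor $\mathbf{P}(E_k^e \mid E_{k-1}^e) = 0$ once we have checked $\mathbf{P}(E_{k-1}^e) > 0$), so the identity holds trivially in that degenerate case and we may assume $\mathbf{P}(E_k^e) > 0$ for all $k$. Then for each $k \ge 1$, since $E_k^e \subseteq E_{k-1}^e$,
\[
\mathbf{P}(E_k^e \mid E_{k-1}^e) = \frac{\mathbf{P}(E_k^e \cap E_{k-1}^e)}{\mathbf{P}(E_{k-1}^e)} = \frac{\mathbf{P}(E_k^e)}{\mathbf{P}(E_{k-1}^e)}.
\]
Multiplying these from $k=1$ to $k=m$ telescopes to $\mathbf{P}(E_m^e)/\mathbf{P}(E_0^e)$, which equals $\mathbf{P}(E_m^e \cap E_0^e)/\mathbf{P}(E_0^e) = \mathbf{P}(E_m^e \mid E_0^e)$ since $E_m^e \subseteq E_0^e$. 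Alternatively, an induction on $m$ packages the same computation: the base case $m=1$ is the definition, and the inductive step writes $\mathbf{P}(E_m^e \mid E_0^e) = \mathbf{P}(E_m^e \mid E_{m-1}^e)\,\mathbf{P}(E_{m-1}^e \mid E_0^e)$ after verifying that the pairwise intersections collapse as above, then applies the hypothesis to the second factor.

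There is no real obstacle here — the statement is essentially the chain rule of probability restricted to a decreasing sequence, where every ``future'' event already lies inside every ``past'' event so no genuine conditioning information is lost in the rewriting. The only point that needs a word of care is the bookkeeping around zero-probability events, i.e.\ ensuring each denominator $\mathbf{P}(E_{k-1}^e)$ is positive before dividing; this is handled by the remark above that a vanishing intermediate probability makes both sides zero. I expect the written proof to be two or three lines.
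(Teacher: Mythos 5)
Your proof is correct and reaches the result by a slightly more elementary route than the paper's. The paper writes $\mathbf{P}(E_m^e|E_0^e) = \mathbf{P}(E_m^e, E_{m-1}^e, \dots, E_0^e \mid E_0^e)$, applies the full chain rule to get $\prod_k \mathbf{P}(E_k^e \mid E_{k-1}^e, \dots, E_0^e)$, and then uses nesting to collapse each conditioning event to $E_{k-1}^e$; you instead collapse each factor immediately to the ratio $\mathbf{P}(E_k^e)/\mathbf{P}(E_{k-1}^e)$ and telescope the product directly. The two arguments are really the same two-line calculation seen from different ends, but yours avoids invoking the general chain rule as a black box and, unlike the paper, explicitly addresses the degenerate case where some intermediate $\mathbf{P}(E_k^e)=0$ — a minor but genuine improvement in rigor over the paper's one-line proof.
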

\begin{proof}
$\mathbf{P}(E_m^e|E_0^e) = \mathbf{P}(E_m^e, E_{m-1}^e, \dots E_0^e | E_0^e) = \prod_{k=1}^{m} \mathbf{P}(E_k^e | E_{k-1}^e, E_{k-2}^e, \dots E_0^e) = \prod_{k=1}^{m} \mathbf{P}(E_k^e | E_{k-1}^e)$.
\end{proof}

Next, we state the matrix Hoeffding inequality \cite[Theorem 1.3]{tail_bound} which gives tail bounds for sums of independent random matrices.

\begin{theorem}[Matrix Hoeffding for a zero mean Hermitian matrix \cite{tail_bound}]\label{hoeffding}
Consider a finite sequence $\{Z_t\}$ of independent, random, Hermitian matrices of size $n\times n$, and let $\{A_t\}$ be a sequence of fixed Hermitian matrices. Assume that each random matrix satisfies (i) $\mathbf{P}(Z_t^2 \preceq A_t^2)=1$ and (ii) $\E(Z_t) = 0$. Then, for all $\epsilon >0$, 
\[
\mathbf{P} \left(\lambda_{\max}\left(\sum_t Z_t \right) \leq \epsilon \right) \geq 1 - n \exp\left(\frac{-\epsilon^2}{8 \sigma^2}\right),
\]
where $\sigma^2 = \Big\| \sum_t A_t^2 \Big\|_2$.
\end{theorem}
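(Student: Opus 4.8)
\emph{Proof sketch.} The plan is to prove this by the matrix Laplace transform (Chernoff-type) method, in the form developed by Ahlswede--Winter and Tropp. The stated bound is the complement of $\mathbf{P}\big(\lambda_{\max}(\sum_t Z_t) \ge \epsilon\big) \le n\exp(-\epsilon^2/(8\sigma^2))$, so it suffices to prove this upper tail estimate. First I would reduce it to a moment bound on the trace-exponential: for any $\theta > 0$, on the event $\{\lambda_{\max}(\sum_t Z_t)\ge\epsilon\}$ we have $\trace\exp(\theta\sum_t Z_t) \ge \lambda_{\max}\big(\exp(\theta\sum_t Z_t)\big) = \exp\big(\theta\,\lambda_{\max}(\sum_t Z_t)\big) \ge e^{\theta\epsilon}$, where the first inequality holds because $\exp(\theta\sum_t Z_t)$ is positive definite so its trace dominates its largest eigenvalue, and the equality holds because $\exp$ acts on eigenvalues of a Hermitian matrix. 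Markov's inequality then gives $\mathbf{P}\big(\lambda_{\max}(\sum_t Z_t)\ge\epsilon\big) \le e^{-\theta\epsilon}\,\E\big[\trace\exp(\theta\sum_t Z_t)\big]$.

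The substantive step is to control $\E[\trace\exp(\theta\sum_t Z_t)]$: because the $Z_t$ do not commute, one cannot factor this into a product of scalar moment generating functions as in the classical Chernoff argument. The remedy is Lieb's concavity theorem, which states that $A \mapsto \trace\exp(C + \log A)$ is concave on the positive-definite cone; combining it with Jensen's inequality and the independence of the $Z_t$ yields the subadditivity of the matrix cumulant generating function, $\E[\trace\exp(\theta\sum_t Z_t)] \le \trace\exp\big(\sum_t \log\E[\exp(\theta Z_t)]\big)$. Next I would establish a matrix Hoeffding lemma: since $\E[Z_t]=0$ and $Z_t^2 \preceq A_t^2$ almost surely, a functional-calculus argument (the matrix analog of the scalar inequality $e^{x}\le x + e^{x^2}$, or of Hoeffding's scalar lemma applied in each direction) gives $\E[\exp(\theta Z_t)] \preceq \exp(c\,\theta^2 A_t^2)$ for an absolute constant $c$, hence $\log\E[\exp(\theta Z_t)] \preceq c\,\theta^2 A_t^2$. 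Using that $M \mapsto \trace\exp(M)$ is monotone under the semidefinite order, that $\trace\exp(M)\le n\,e^{\lambda_{\max}(M)}$, and the definition $\sigma^2 = \|\sum_t A_t^2\|_2 = \lambda_{\max}(\sum_t A_t^2)$, this gives $\E[\trace\exp(\theta\sum_t Z_t)] \le n\exp(c\,\theta^2\sigma^2)$.

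Combining the two displays, $\mathbf{P}\big(\lambda_{\max}(\sum_t Z_t)\ge\epsilon\big) \le n\inf_{\theta>0}\exp(-\theta\epsilon + c\,\theta^2\sigma^2) = n\exp\!\big(-\epsilon^2/(4c\sigma^2)\big)$; with the constant $c$ that the matrix Hoeffding lemma of \cite{tail_bound} supplies under the hypothesis $Z_t^2\preceq A_t^2$, the exponent is $-\epsilon^2/(8\sigma^2)$, as claimed (a sharper matrix Hoeffding lemma would in fact allow the constant $2$ in place of $8$, but the stated form already suffices for our purposes). I expect the main obstacle to be the second paragraph: the scalar Chernoff recipe of multiplying moment generating functions fails because of non-commutativity, and the two ingredients that rescue it --- Lieb's concavity theorem to obtain subadditivity of the matrix cumulant generating function, and the matrix Hoeffding lemma $\E[\exp(\theta Z_t)]\preceq\exp(c\theta^2 A_t^2)$ with a clean constant --- are the genuinely nontrivial parts; the remaining steps are the routine Chernoff optimization over $\theta$.
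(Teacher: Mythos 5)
The paper itself does not prove this statement; it is imported verbatim as \cite[Theorem~1.3]{tail_bound}, so there is no in-paper proof to compare against. Your sketch --- the matrix Laplace transform reduction to a trace-exponential, Lieb's concavity theorem to obtain subadditivity of the matrix cumulant generating function, a matrix Hoeffding MGF lemma of the form $\E\exp(\theta Z_t)\preceq\exp(c\,\theta^2 A_t^2)$, and a Chernoff optimization over $\theta$ --- is exactly the proof strategy of the cited reference, and your intermediate steps (spectral mapping, monotonicity of $\operatorname{tr}\exp$, the bound $\operatorname{tr}\exp(M)\le n e^{\lambda_{\max}(M)}$, and the optimization giving $-\epsilon^2/(4c\sigma^2)$) are all correct, so the proposal is sound.
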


The following two corollaries of Theorem \ref{hoeffding} are easy to prove. The proofs are given in Appendix \ref{appendix prelim}.

\begin{corollary}[Matrix Hoeffding conditioned on another random variable for a nonzero mean Hermitian matrix]\label{hoeffding_nonzero}
Given an $\alpha$-length sequence $\{Z_t\}$ of random Hermitian matrices of size $n\times n$, a r.v. $X$, and a set ${\cal C}$ of values that $X$ can take. Assume that, for all $X \in \calc$, (i) $Z_t$'s are conditionally independent given $X$; (ii) $\mathbf{P}(b_1 I \preceq Z_t \preceq b_2 I|X) = 1$ and (iii) $b_3 I \preceq \frac{1}{\alpha}\sum_t \E(Z_t|X) \preceq b_4 I $. Then for all $\epsilon > 0$,
\begin{multline}
\mathbf{P} \left( \lambda_{\max}\left(\frac{1}{\alpha}\sum_t Z_t \right) \leq b_4 + \epsilon \Big | X \right) \\
\geq 1- n \exp\left(\frac{-\alpha \epsilon^2}{8(b_2-b_1)^2}\right) \ \text{for all} \ X \in \calc
\nn 
\end{multline}
\begin{multline}
\mathbf{P} \left(\lambda_{\min}\left(\frac{1}{\alpha}\sum_t Z_t \right) \geq b_3 -\epsilon \Big| X \right) \\
\geq  1- n \exp\left(\frac{-\alpha \epsilon^2}{8(b_2-b_1)^2} \right) \ \text{for all} \ X \in \calc \nn
\end{multline}
\end{corollary}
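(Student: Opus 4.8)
The plan is to reduce the corollary to Theorem~\ref{hoeffding} (zero-mean matrix Hoeffding) applied, for each fixed value $x\in\calc$, under the conditional probability measure $\mathbf{P}(\cdot\mid X=x)$, after centering the summands. First I would fix $x\in\calc$ and define the centered matrices $Y_t := Z_t - \E(Z_t\mid X=x)$. By assumption (i), conditioned on $X=x$ the $Z_t$ are independent, hence so are the $Y_t$ (subtracting the matrix $\E(Z_t\mid X=x)$, which is deterministic given $x$, does not affect independence), and $\E(Y_t\mid X=x)=0$. To obtain the uniform bound required by Theorem~\ref{hoeffding}, I would note that $b_1 I \preceq Z_t \preceq b_2 I$ with conditional probability one implies, by monotonicity of conditional expectation with respect to $\preceq$, that $b_1 I \preceq \E(Z_t\mid X=x) \preceq b_2 I$; subtracting gives $(b_1-b_2)I \preceq Y_t \preceq (b_2-b_1)I$ w.p.\ one, so every eigenvalue of $Y_t$ lies in $[-(b_2-b_1),\,b_2-b_1]$ and therefore $Y_t^2 \preceq (b_2-b_1)^2 I =: A_t^2$ w.p.\ one.

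Next I would apply Theorem~\ref{hoeffding} to $\{Y_t\}$ under $\mathbf{P}(\cdot\mid X=x)$ with the threshold $\alpha\epsilon$ in place of $\epsilon$: since $\sigma^2 = \|\sum_t A_t^2\|_2 = \alpha(b_2-b_1)^2$, this yields
\[
\mathbf{P}\!\left(\lambda_{\max}\!\Big(\tfrac{1}{\alpha}\sum_t Y_t\Big) \le \epsilon \;\Big|\; X=x\right) \ge 1 - n\exp\!\left(\frac{-\alpha\epsilon^2}{8(b_2-b_1)^2}\right),
\]
because $\lambda_{\max}(\sum_t Y_t)\le\alpha\epsilon$ is the same event as $\lambda_{\max}(\frac1\alpha\sum_t Y_t)\le\epsilon$ and the exponent $-(\alpha\epsilon)^2/(8\alpha(b_2-b_1)^2)$ simplifies as stated. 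To pass to $\frac1\alpha\sum_t Z_t$, I would write $\frac1\alpha\sum_t Z_t = \frac1\alpha\sum_t Y_t + \frac1\alpha\sum_t \E(Z_t\mid X=x)$ and invoke Weyl's theorem (Theorem~\ref{weyl}) with assumption (iii): $\lambda_{\max}(\frac1\alpha\sum_t Z_t) \le \lambda_{\max}(\frac1\alpha\sum_t Y_t) + \lambda_{\max}(\frac1\alpha\sum_t \E(Z_t\mid X=x)) \le \lambda_{\max}(\frac1\alpha\sum_t Y_t) + b_4$. Hence $\{\lambda_{\max}(\frac1\alpha\sum_t Y_t)\le\epsilon\} \subseteq \{\lambda_{\max}(\frac1\alpha\sum_t Z_t)\le b_4+\epsilon\}$, which gives the first inequality, valid for every $x\in\calc$. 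The second inequality follows by running the identical argument on $\{-Y_t\}$ (same bound $A_t^2$, zero conditional mean), using $\lambda_{\min}(M)=-\lambda_{\max}(-M)$ together with the lower half of Weyl's theorem and $b_3 I \preceq \frac1\alpha\sum_t\E(Z_t\mid X=x)$.

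The centering, the $A_t$ bound, and the $\alpha$-scaling bookkeeping in the exponent are all mechanical. The one point that genuinely needs care is the conditioning: one must justify that Theorem~\ref{hoeffding}, stated for literally independent unconditional random matrices, applies verbatim under the measure $\mathbf{P}(\cdot\mid X=x)$. This is the only real subtlety, and it is resolved by working pointwise in $x$ (equivalently, with a regular conditional distribution), so that conditionally on $X=x$ the family $\{Y_t\}$ is an honest independent, zero-mean, a.s.-bounded sequence and all hypotheses of Theorem~\ref{hoeffding} hold under $\mathbf{P}(\cdot\mid X=x)$; this matches the convention for $\mathbf{P}(\cdot\mid X)$ set up in Definition~\ref{probdefs} and is all that is needed for how the corollary is later used. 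A small auxiliary fact worth isolating is the monotonicity of conditional expectation with respect to the semidefinite order ($A\preceq B$ a.s.\ $\Rightarrow \E[A\mid X]\preceq\E[B\mid X]$), which is immediate by applying $\E[\cdot\mid X]$ to the scalar inequality $u'(B-A)u\ge 0$ for each fixed vector $u$.
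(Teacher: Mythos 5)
Your proposal is correct and follows essentially the same route as the paper: center to $Y_t = Z_t - \E(Z_t\mid X)$, bound $Y_t^2 \preceq (b_2-b_1)^2 I$, apply Theorem~\ref{hoeffding} under the conditional law, and transfer back to $\frac{1}{\alpha}\sum_t Z_t$ via Weyl's theorem, with the $\lambda_{\min}$ bound obtained by the symmetric argument. The only cosmetic difference is that you justify $b_1 I \preceq \E(Z_t\mid X) \preceq b_2 I$ by monotonicity of conditional expectation in the semidefinite order, whereas the paper invokes convexity of $\lambda_{\max}$ and concavity of $\lambda_{\min}$; both are valid and lead to the identical bound.
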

The proof is in Appendix \ref{appendix prelim}.

\begin{corollary}[Matrix Hoeffding conditioned on another random variable for an arbitrary nonzero mean matrix]\label{hoeffding_rec}
Given an $\alpha$-length sequence $\{Z_t\}$ of random matrices of size $n_1 \times n_2$, a r.v. $X$, and a set ${\cal C}$ of values that $X$ can take. Assume that, for all $X \in \calc$, (i) $Z_t$'s are conditionally independent given $X$; (ii) $\mathbf{P}(\|Z_t\|_2 \le b_1|X) = 1$ and (iii) $\|\frac{1}{\alpha}\sum_t \E( Z_t|X)\|_2 \le b_2$. Then, for all $\epsilon >0$,
\begin{multline*}
\mathbf{P} \left(\bigg\|\frac{1}{\alpha}\sum_t Z_t \bigg\|_2 \leq b_2 + \epsilon \Big| X \right) \\
\geq 1-(n_1+n_2) \exp\left(\frac{-\alpha \epsilon^2}{32 b_1^2}\right)  \ \text{for all} \ X \in \calc
\end{multline*}
\end{corollary}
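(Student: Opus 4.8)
The plan is to reduce this non-Hermitian (and in general non-square) case to the Hermitian case already settled in Corollary~\ref{hoeffding_nonzero}, using the \emph{Hermitian dilation} --- the same device by which one deduces the unconditional Theorem~\ref{hoeffding} for rectangular matrices. For an $n_1\times n_2$ matrix $B$, define the $(n_1+n_2)\times(n_1+n_2)$ Hermitian matrix
\[
\widetilde{B} := \begin{bmatrix} 0 & B \\ B' & 0 \end{bmatrix}.
\]
First I would record three elementary facts about this map: (a) $B\mapsto\widetilde{B}$ is linear and deterministic, so it commutes with conditional expectation --- $\frac{1}{\alpha}\sum_t\E(\widetilde{Z}_t|X)=\widetilde{G}$ where $G:=\frac{1}{\alpha}\sum_t\E(Z_t|X)$ --- and it preserves conditional independence given $X$; (b) $(\widetilde{B})^2=\diag(BB',B'B)$, so $\widetilde{B}$ is Hermitian with $\lambda_{\max}(\widetilde{B})=\sigma_1(B)=\|B\|_2$; (c) since the spectrum of $\widetilde{B}$ is symmetric about the origin, $\|B\|_2\le c$ is equivalent to $-cI\preceq\widetilde{B}\preceq cI$.

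Next I would verify that $\{\widetilde{Z}_t\}$ meets the hypotheses of Corollary~\ref{hoeffding_nonzero} for every $X\in\calc$. Conditional independence given $X$ holds by (a). The almost-sure bound $\mathbf{P}(\|Z_t\|_2\le b_1|X)=1$ turns, via (c), into $\mathbf{P}(-b_1I\preceq\widetilde{Z}_t\preceq b_1I|X)=1$. And $\frac{1}{\alpha}\sum_t\E(\widetilde{Z}_t|X)=\widetilde{G}$ with $\|G\|_2\le b_2$, so $-b_2I\preceq\widetilde{G}\preceq b_2I$ by (c). Hence Corollary~\ref{hoeffding_nonzero} applies, in ambient dimension $n_1+n_2$, with its almost-sure lower/upper bounds taken to be $-b_1$ and $b_1$ (so that the quantity ``$b_2-b_1$'' appearing there equals $2b_1$) and its mean lower/upper bounds taken to be $-b_2$ and $b_2$.

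The first conclusion of Corollary~\ref{hoeffding_nonzero} then gives, for all $\epsilon>0$ and all $X\in\calc$,
\[
\mathbf{P}\!\left(\lambda_{\max}\!\left(\tfrac{1}{\alpha}\textstyle\sum_t\widetilde{Z}_t\right)\le b_2+\epsilon \,\Big|\, X\right)\ge 1-(n_1+n_2)\exp\!\left(\frac{-\alpha\epsilon^2}{8(2b_1)^2}\right).
\]
To conclude, I would observe that $\frac{1}{\alpha}\sum_t\widetilde{Z}_t$ is precisely the dilation of $\frac{1}{\alpha}\sum_t Z_t$ by (a), so by (b) its largest eigenvalue equals $\big\|\frac{1}{\alpha}\sum_t Z_t\big\|_2$, while $8(2b_1)^2=32b_1^2$; this is exactly the claimed inequality.

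I do not anticipate a genuine obstacle: this is a routine dilation reduction. The two points that need a little care are the bookkeeping of the constant --- the factor of $2$ coming from ``$b_2-b_1=2b_1$'' is exactly what turns the $8\sigma^2$ of Theorem~\ref{hoeffding} into the $32b_1^2$ appearing here --- and checking that the deterministic dilation map carries conditional independence and conditional expectations through unchanged, which is what makes it legitimate to quote the already-conditioned Corollary~\ref{hoeffding_nonzero} verbatim.
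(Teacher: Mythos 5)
Your proof is correct and follows essentially the same route as the paper: reduce to Corollary~\ref{hoeffding_nonzero} via the Hermitian dilation, use that the dilation map is linear (hence commutes with conditional expectation and preserves conditional independence), convert the spectral-norm bounds into two-sided semidefinite bounds, and read off the constant $8(2b_1)^2 = 32b_1^2$. The only cosmetic difference is the arrangement of the off-diagonal blocks, which is a permutation similarity and leaves the spectrum unchanged.
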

The proof is in Appendix \ref{appendix prelim}.

\section{Problem Definition and Model Assumptions}
\label{probdef}
We give the problem definition below followed by the model and then describe the two key assumptions.

\subsection{Problem Definition}
\label{model}
The measurement vector at time $t$, $M_t$, is an $n$ dimensional vector which can be decomposed as
\beq
M_t = L_t + S_t \label{problem_defn}
\eeq
Here $S_t$ is a sparse vector with support set size at most $s$ and minimum magnitude of nonzero values at least $S_{\min}$. $L_t$ is a dense but low dimensional vector, i.e. $L_t = P_{(t)} a_t$ where $P_{(t)}$ is an $n \times r_{(t)}$ basis matrix with $r_{(t)} < n$, that changes every so often according to the model given below. We are given an accurate estimate of the subspace in which the initial $t_\train$ $L_t$'s lie, i.e. we are given a basis matrix $\Phat_0$ so that $\|(I-\Phat_0 \Phat_0')P_0 \|_2$ is small.
Here $P_0$ is a basis matrix for $\Span({\cal L}_{t_{\train}})$, i.e. $\Span(P_0) = \Span({\cal L}_{t_{\train}})$. Also,  for the first $t_{\train}$ time instants, $S_t$ is zero.
The goal is
\ben
\item to estimate both $S_t$ and $L_t$ at each time $t > t_\train$, and
\item to estimate $\Span({\cal L}_t)$ every so often, i.e. compute $\Phat_{(t)}$ so that  the subspace estimation error, $\SE_{(t)}:=\|(I- \hat{P}_{(t)} \hat{P}_{(t)}')P_{(t)}\|_2$ is small. \label{item2}
\een

We assume a subspace change model that allows the subspace to change at certain change times $t_j$ rather than continuously at each time. It should be noted that this is only a model for reality. In practice there will typically be some changes at every time $t$; however this is difficult to model in a simple fashion. Moreover the analysis for such a model will be a lot more complicated. However, we do allow the variance of the projection of $L_t$ along the subspace directions to change continuously. The projection along the new directions is assumed to be small initially and allowed to gradually increase to a large value (see Sec \ref{slowss}). 

\begin{sigmodel}[Model on $L_t$] \label{Ltmodel} \
\ben
\item We assume that $L_t = P_{(t)} a_t$ with $P_{(t)} = P_j$ for all $t_j \leq t <t_{j+1}$, $j=0,1,2 \cdots J$. Here  $P_j$ is an $n \times r_j$ basis matrix with $r_j  < \min(n,(t_{j+1} - t_j))$ that changes as
    $$P_j = [P_{j-1} \ P_{j,\new}]$$
    where $P_{j,\new}$ is a $n \times c_{j,\new}$ basis matrix with $P_{j,\new}'P_{j-1} = 0$.   Thus
    $$r_j = \rank(P_j) =  r_{j-1} + c_{j,\new}.$$
We let $t_0=0$. Also $t_{J+1}$ can be the length of the sequence or $t_{J+1} = \infty$.     This model is illustrated in Figure \ref{model_fig}.

\item The vector of coefficients, $a_t:={P_{(t)}}'L_t$, is a zero mean random variable (r.v.) with mutually uncorrelated entries, i.e. $\E[a_t]=0$ and $\E[(a_t)_i (a_t)_j]=0$ for $i \neq j$. 
.
\een
\end{sigmodel}

\begin{figure}
\centerline{
\epsfig{file = 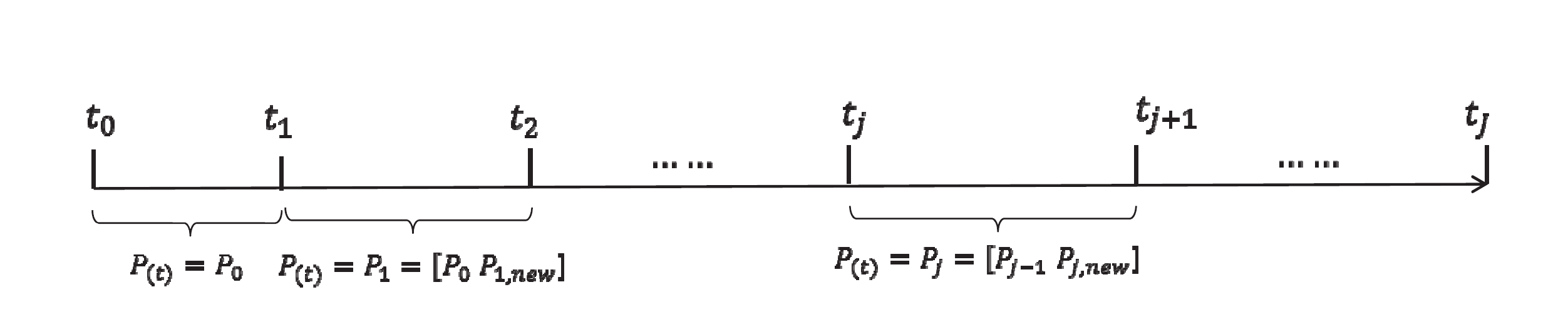, width = \columnwidth}
}
\caption{\small{The subspace change model explained in Sec \ref{model}. Here $t_0=0$ and $0 < t_\train < t_1$.}}
\label{model_fig}
\end{figure}

\begin{definition}
Define the covariance matrix of $a_t$ to be the diagonal matrix
$$\Lambda_t: = \mathrm{Cov}[a_t] = \E(a_ta_t').$$
Define
For $t_j \le t < t_{j+1}$, $a_t$ is an $r_j$ length vector which can be split as
  $$a_t ={P_j}'L_t = \vect{a_{t,*}}{a_{t,\new}}$$
where $a_{t,*}: = {P_{j-1}}'L_t$ and $a_{t,\new}: = {P_{j,\new}}'L_t$. 
Thus, for this interval, $L_t$ can be rewritten as
\[
L_t = \left[ P_{j-1} \ P_{j,\new}\right] \vect{a_{t,*}}{a_{t,\new}} = P_{j-1} a_{t,*} + P_{j,\new} a_{t,\new}
\]
Also, $\Lambda_t$ can be split as
$$\Lambda_t = \left[ \begin{array}{cc}
(\Lambda_t)_* &  0 \nn \\
0  & (\Lambda_t)_\new  \nn \\
\end{array}
\right]
$$
where $(\Lambda_t)_* = \mathrm{Cov}[a_{t,*}] $ and $(\Lambda_t)_\new = \operatorname{Cov}[a_{t,\new}]$ are diagonal matrices.
Define
\begin{align*}
\lambda^- &: = \inf_t \lambda_{\min}(\Lambda_t), \quad \lambda^+:=\sup_t \lambda_{\max} (\Lambda_t),\\
&\text{and}\\
\lambda_\new^- &: = \inf_t \lambda_{\min}((\Lambda_t)_\new), \quad
\lambda_\new^+ :=\sup_t \lambda_{\max} ( (\Lambda_t)_\new).
\end{align*}
Also let,
\[
f : = \frac{\lambda^+}{\lambda^-}
\]
and
\[
g : = \frac{\lambda_\new^+}{\lambda_\new^-}.
\]
\end{definition}

The above simple model only allows new additions to the subspace and hence the rank of $P_j$ can only grow over time. The ReProCS algorithm designed for this model can be interpreted as a recursive algorithm for solving the robust PCA problem studied in \cite{rpca} and other batch robust PCA works. At time $t$ we estimate the subspace spanned by $L_1,L_2, \dots L_t$. For the above model, the subspace dimension is bounded by $r_0+J c_{\max}$. Thus a bound on $J$ is needed to keep the subspace dimension small at all times. We remove this limitation in Sec \ref{Del_section} where we also allow for subspace deletions and correspondingly design a ReProCS algorithm that does the same thing. For that algorithm, as we will see, we will not need a bound on the number of changes, $J$, as long as the separation between the subspace change times is allowed to grow logarithmically with $J$ and a clustering assumption holds.

Define the following quantities for the sparse part.
\begin{definition}
Let $T_t :=\{i: \  (S_t)_i \neq 0 \}$ denote the support of $S_t$. Define
\[
S_{\min}: = \min_{t> t_\train} \min_{i \in T_t} |(S_t)_i |, \ \text{and} \ s: = \max_t |T_t|
\]
\end{definition}

\subsection{Slow subspace change}
\label{slowss}
By slow subspace change we mean all of the following.

First, the delay between consecutive subspace change times is large enough, i.e., for a $d$  large enough,
\bea
t_{j+1} - t_j \ge d
\label{delay}
\eea
Second, the magnitude of the projection of $L_t$ along the newly added directions, $a_{t,\new}$, is initially small but can increase gradually. We model this as follows. Assume that for an $\alpha > 0$ \footnote{As we will see in the algorithm $\alpha$ is the number of previous frames used to get a new estimate of $P_{j,\new}$.} the following holds
\bea
\|a_{t,\new}\|_{\infty} \le \min\Big(v^{\frac{t-t_j}{\alpha}-1} \gamma_\new,\gamma_*\Big)
\label{atnew_inc}
\eea
when $ t \in [t_j, t_{j+1}-1]$
 for a $v>1$ but not too large and with $\gamma_\new < \gamma_* \ \text{and} \  \gamma_\new < S_{\min}$. Clearly, the above assumption implies that 
\[
\|a_{t,\new}\|_{\infty} \le \gamma_{\new,k}:= \min(v^{k-1} \gamma_\new,\gamma_*)
\]
for all $t \in [t_j + (k-1) \alpha, t_{j}+k\alpha-1]$.
 This assumption is verified for real video data in Sec. \ref{model_verify}.

Third, the number of newly added directions is small, i.e.  $c_{j,\new} \le c_{\max} \ll r_{0}$. This is also verified in Sec. \ref{model_verify}.

\begin{remark}[Large $f$] \label{large_f}
Since our problem definition allows large noise, $L_t$, but assumes slow subspace change, thus the maximum condition number of $\operatorname{Cov}[L_t]$, which is bounded by $f$, cannot be bounded by a small value. The reason is as follows. Slow subspace change implies that the projection of $L_t$ along the new directions is initially small, i.e. $\gamma_\new$ is small. Since $\lambda^- \le \gamma_\new$, this means that $\lambda^-$ is small.
Since $\E[\|L_t\|^2] \le r_{\max} \lambda^+ $ and $r_{\max}$ is small (low-dimensional), thus, large $L_t$ means that $\lambda^+$ needs to be large. As a result $f=\lambda^+/\lambda^-$ cannot be upper bounded by a small value.
%
\end{remark}


\subsection{Measuring denseness of a matrix and its relation with RIC}
\label{denseness}
Before we can state the denseness assumption, we need to define the denseness coefficient.
\begin{definition}[denseness coefficient]\label{subspace_kappa}
For a matrix or a vector $B$, define
\beq 
\kappa_s(B)=\kappa_s(\Span(B)) : = \max_{|T| \le s} \|{I_T}' \mathrm{basis}(B)\|_2
\eeq
where $\|.\|_2$ is the vector or matrix $\ell_2$-norm.
\end{definition}

Clearly, $\kappa_s(B) \le 1$. First consider an $n$-length vector $B$. Then $\kappa_s$ measures the denseness (non-compressibility) of $B$. A small value indicates that the entries in $B$ are spread out, i.e. it is a dense vector. A large value indicates that it is compressible (approximately or exactly sparse). The worst case (largest possible value) is $\kappa_s(B)=1$ which indicates that $B$ is an $s$-sparse vector. The best case is $\kappa_s(B) = \sqrt{s/n}$ and this will occur if each entry of $B$ has the same magnitude. 
Similarly, for an $n \times r$ matrix $B$, a small $\kappa_s$ means that most (or all) of its columns are dense vectors.
\begin{remark}\label{kapparemark}
The following facts should be noted about $\kappa_s(.)$:
\ben
\item For a given matrix $B$, $\kappa_s(B)$ is an non-decreasing function of $s$.

\item $\kappa_s([B_1]) \leq \kappa_s([B_1 \ B_2])$ i.e. adding columns cannot decrease $\kappa_s$.
\item A bound on $\kappa_s(B)$ is $\kappa_s(B) \le \sqrt{s} \kappa_1(B)$.
This follows because $\|B\|_2 \leq \big\| \big[\|b_1\|_2 \dots \|b_r\|_2 \big] \big\|_2$ where $b_i$ is the $i^{\text{th}}$ column of $B$.


\een
\label{dense_remark}
\end{remark}
%

The lemma below relates the denseness coefficient of a basis matrix $P$ to the RIC of $I-PP'$. The proof is in the Appendix. 
\begin{lem}\label{delta_kappa}
For an $n \times r$ basis matrix $P$  (i.e $P$ satisfying $P'P=I$), 
$$\delta_s(I-PP') = \kappa_s^2 (P).$$
\end{lem}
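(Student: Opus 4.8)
$\delta_s(I-PP') = \kappa_s^2(P)$.

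The plan is to unwind both sides of the identity into the same quantity, namely $\max_{|T|\le s}\|I_T'P\|_2^2$. Recall that $I-PP'$ is an orthogonal projector, hence Hermitian and idempotent, with all eigenvalues equal to $0$ or $1$; in particular $0 \preceq I-PP' \preceq I$. First I would write out the definition of the RIC applied to $\Psi = I-PP'$: $\delta_s(I-PP')$ is the smallest $\delta$ such that $(1-\delta)\|x\|_2^2 \le \|(I-PP')_T x\|_2^2 \le (1+\delta)\|x\|_2^2$ for all $|T|\le s$ and all $x$ of length $|T|$, where $(I-PP')_T = (I-PP')I_T$. The key observation is that $\|(I-PP')I_T x\|_2^2 = x' I_T'(I-PP')(I-PP')I_T x = x' I_T'(I-PP')I_T x$, using idempotency of $I-PP'$. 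So the relevant matrix is the $|T|\times|T|$ Hermitian matrix $I_T'(I-PP')I_T = I_T'I_T - I_T'PP'I_T = I - (P'I_T)'(P'I_T)$, and the RIC condition says $\|I - I_T'PP'I_T - I\cdot(\text{something})\|$... more precisely, $\delta_s$ is the smallest number with $\|I_T'(I-PP')I_T - I\|_2 \le \delta$ for all $|T|\le s$ — wait, one must be careful: the two-sided bound $(1-\delta)\le \lambda_{\min}\le\lambda_{\max}\le(1+\delta)$ on $I_T'(I-PP')I_T$ is equivalent to $\|I_T'(I-PP')I_T - I\|_2 \le \delta$ since that matrix is Hermitian (using the fact quoted in the Notation section that for Hermitian $B$, $\|B\|_2\le b$ iff $-b\le\lambda_{\min}\le\lambda_{\max}\le b$).

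Next I would simplify $I_T'(I-PP')I_T - I = -I_T'PP'I_T = -(P'I_T)'(P'I_T)$. This is negative semidefinite, so its $2$-norm equals $\lambda_{\max}((P'I_T)'(P'I_T)) = \|P'I_T\|_2^2 = \|I_T'P\|_2^2$. Therefore $\delta_s(I-PP') = \max_{|T|\le s}\|I_T'P\|_2^2$. On the other side, since $P$ is already a basis matrix we have $\mathrm{basis}(P) = P$ (up to the span; one can take $\mathrm{basis}(\Span(P)) = P$), so by Definition \ref{subspace_kappa}, $\kappa_s(P) = \max_{|T|\le s}\|I_T'P\|_2$, hence $\kappa_s^2(P) = \max_{|T|\le s}\|I_T'P\|_2^2$. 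Comparing the two expressions gives $\delta_s(I-PP') = \kappa_s^2(P)$.

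I do not expect any serious obstacle here; this is essentially a definition-chase. The one point requiring a little care is the equivalence between the two-sided RIC inequality and the single operator-norm bound $\|I_T'(I-PP')I_T - I\|_2 \le \delta$: one needs that the minimizing $\delta$ for the two-sided form coincides with $\max_{|T|\le s}\|I_T'(I-PP')I_T - I\|_2$, which holds precisely because each $I_T'(I-PP')I_T$ is Hermitian and we are taking a max over $T$, so the tightest uniform bound is attained. A second minor point is justifying $\mathrm{basis}(P)=P$ in Definition \ref{subspace_kappa}: since $P'P=I$, $P$ itself is a valid orthonormal basis for $\Span(P)$, and $\|I_T'\,\mathrm{basis}(B)\|_2$ is independent of the particular orthonormal basis chosen (any two differ by right-multiplication by an orthogonal matrix, which does not change the $\ell_2$ norm), so the value is well-defined and equals $\|I_T'P\|_2$.
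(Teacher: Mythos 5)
Your proof is correct and follows essentially the same route as the paper's: both reduce $\delta_s(I-PP')$ to $\max_{|T|\le s}\bigl(1-\lambda_{\min}(I - I_T'PP'I_T)\bigr) = \max_{|T|\le s}\|I_T'P\|_2^2$, using idempotency of the projector and the observation that the eigenvalues of $I_T'(I-PP')I_T$ lie in $[0,1]$ so only the lower RIC bound is active. The paper states this by explicitly writing $\delta_s$ as a max of two terms and showing one is nonpositive, while you package the same fact as an operator-norm identity for the Hermitian matrix $I_T'(I-PP')I_T - I$; the content is identical.
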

In other words, if $P$ is dense enough (small $\kappa_s$), then the RIC of $I-PP'$ is small.


In this work, we assume an upper bound on $\kappa_{2s}(P_{j})$ for all $j$, and a tighter upper bound on $\kappa_{2s}(P_{j,\new})$, i.e., there exist $\kappa_{2s,*}^+<1$ and a $\kappa_{2s,\new}^+ < \kappa_{2s,*}^+$ such that
\begin{align}
\max_{j} \kappa_{2s}(P_{j-1}) \leq \kappa_{2s,*}^+ \label{kappa plus}\\
\max_{j} \kappa_{2s}(P_{j,\new}) \leq \kappa_{2s,\new}^+ \label{kappa new plus}
\end{align}
Additionally, we also assume denseness of another matrix, $D_{j,\new,k}$, whose columns span the currently unestimated part of $\Span(P_{j,\new})$ (see Theorem \ref{thm1}).

The denseness coefficient $\kappa_s(B)$ is related to the denseness assumption required by PCP \cite{rpca}. That work uses $\kappa_1(B)$ to quantify denseness.




%

\section{Recursive Projected CS (ReProCS) and its Performance Guarantees}
\label{reprocs_sec}
In this section we introduce the ReProCS algorithm and state the performance guarantee for it.
We begin by first stating the result in \ref{result}, and then
describe and explain the algorithm in Section \ref{basic_rep}.
In Section \ref{proj PCA} we describe the projection-PCA algorithm that is used in the ReProCS algorithm.
The assumptions used by the result are discussed in Section \ref{discuss_add}.

\subsection{Performance Guarantees}
\label{result}
We state the main result here and then discuss it in Section \ref{discuss_add}. Definitions needed for the proof are given in Section \ref{detailed} and the actual proof is given in Section \ref{mainlemmas}.


\begin{definition}\label{defn_alpha}
We define here the parameters that will be used in Theorem \ref{thm1}.
\ben
\item Let $c:= c_{\max}$ and $r:= r_0 + (J-1)c$.
\item  Define $K=K(\zeta) := \left\lceil\frac{\log(0.6c\zeta)}{\log {0.6}} \right\rceil$
\item Define $\xi_0(\zeta)  :=  \sqrt{c} \gamma_{\new} + \sqrt{\zeta}(\sqrt{r}  +  \sqrt{c})$
\item Define
\begin{multline*}
 \alpha_\add(\zeta)  :=  \left\lceil (\log 6KJ + 11 \log n) \frac{8 \cdot 24^2} {\zeta^2 (\lambda^-)^2}\cdot \right. \\
 \max\left(\min(1.2^{4K} \gamma_{\new}^4, \gamma_*^4), \frac{16}{c^2}, \right. \\  4(0.186 \gamma_\new^2 + 0.0034 \gamma_\new + 2.3)^2 \Big)
\Bigg\rceil
\end{multline*}
We note that $\alpha_{\text{add}}$ is the number of data points, $\alpha$, used for one projection PCA step and is chosen to ensure that the conclusions of Theorem \ref{thm1} hold with probability at least $(1 -   n^{-10})$.
If $\gamma_*$ is large enough (${\gamma_*}^4>16$), a simpler but larger value for $\alpha_\add(\zeta)$ is
$$\alpha_\add(\zeta) = \left\lceil (\log 6KJ + 11 \log n) \frac{ 8 \cdot 24^2 \gamma_*^4}{\zeta^2 (\lambda^-)^2} \right\rceil$$
\een
\end{definition}

\begin{theorem} \label{thm1} 
Consider Algorithm \ref{reprocs}. Pick a $\zeta$ that satisfies
\[
\zeta  \leq  \min\left(\frac{10^{-4}}{r^2},\frac{1.5 \times 10^{-4}}{r^2 f},\frac{1}{r^{3}\gamma_*^2}\right)
\]
Assume that the initial subspace estimate is accurate enough, i.e. $\|(I - \Phat_0 \Phat_0') P_0\| \le r_0 \zeta$. If the following conditions hold:

\ben
\item The algorithm parameters are set as
$\xi = \xi_0(\zeta), \   7 \xi \leq \omega \leq S_{\min} - 7 \xi,  \ K = K(\zeta), \ \alpha \ge \alpha_{\text{add}}(\zeta)$
\item$L_t$ satisfies Signal Model \ref{Ltmodel} with
\ben
\item  $0 \le c_{j,\new} \leq c_{\max}$ for all $j$ (thus $r_j \le r_{\max}:=r_0 + J c_{\max}$),
\item the $a_t$'s mutually independent over $t$,
\item  $\|a_t\|_{\infty} \leq \gamma_*$ for all $t$ ($a_t$'s bounded);,
\item $0 < \lambda^- \le \lambda^+ < \infty$,and
\item $g \le g^+ = \sqrt{2}$;
\een

\item \label{slow}slow subspace change holds:  
(\ref{delay}) holds with $d=K\alpha$; (\ref{atnew_inc}) holds with $v = 1.2$; and $c$ and $\gamma_\new$ are small enough so that $14 \xi_0 (\zeta) \le S_{\min}$.

\item denseness holds: equation \eqref{kappa plus} holds with  $\kappa_{2s,*}^+ = 0.3$ and equation \eqref{kappa new plus} holds with $\kappa_{2s,\new}^+ = 0.15$


\item the matrices
\begin{align*}
D_{j,\new,k} &:= (I - \Phat_{j-1} \Phat_{j-1}'-\Phat_{j,\new,k} \Phat_{j,\new,k}')P_{j,\new} \\
\text{and} \\
Q_{j,\new,k} &: = (I-P_{j,\new}{P_{j,\new}}')\Phat_{j,\new,k}
\end{align*}
 satisfy
\begin{align*}
\max_j \max_{1 \le k \le K}  \kappa_{s}(D_{j,\new,k}) & \le \kappa_{s}^+ := 0.152 \\
\max_j \max_{1 \le k \le K} \kappa_{2s}(Q_{j,\new,k}) & \leq \tilde{\kappa}_{2s}^+ := 0.15
\end{align*}


\een

then, with probability at least $(1 -  n^{-10})$, at all times, $t$, all of the following hold:
\ben
\item at all times, $t$, $$\That_t = T_t \ \ \text{and}$$
\begin{multline*}
\|e_t\|_2 = \|L_t - \hat{L}_t\|_2 = \|\hat{S}_t - S_t\|_2  \le  \\0.18 \sqrt{c} \gamma_{\new} + 1.2\sqrt{\zeta}(\sqrt{r} + 0.06 \sqrt{c}).
\end{multline*}

\item  the subspace error $\SE_{(t)} := \|(I - \Phat_{(t)} \Phat_{(t)}') P_{(t)} \|_2$ satisfies
\begin{align*}
\SE_{(t)}   &\le    \left\{  \begin{array}{ll}
(r_0 + (j-1)c) \zeta + 0.4 c \zeta + 0.6^{k-1}  & \ \\
\hspace{1in} \text{if}  \ \    t \in \mathcal{I}_{j,k}, \ k=1,2 \dots K \nn \\  
(r_0 + jc) \zeta    \qquad \text{if} \ \   t \in \mathcal{I}_{j,K+1}  
\end{array} \right. \nn \\
  &\le   \left\{  \begin{array}{ll}
 10^{-2} \sqrt{\zeta} +  0.6^{k-1}  \ \\
\hspace{.8in} \text{if}  \ \ t \in \mathcal{I}_{j,k}, \ k=1,2 \dots K \nn \\  
10^{-2} \sqrt{\zeta}    \qquad \text{if} \ \    t \in \mathcal{I}_{j,K+1}  
\end{array} \right.
\end{align*}

\item the error $e_t = \hat{S}_t - S_t = L_t - \hat{L}_t$ satisfies the following at various times
\begin{align*}
\|e_t\|_2  & \le   \left\{  \begin{array}{ll}
0.18 \sqrt{c}0.72^{k-1}\gamma_{\new} + \\
\qquad 1.2 (\sqrt{r} + 0.06 \sqrt{c})  (r_0+(j-1)c)\zeta  \gamma_*      \\
 \hspace{1 in} \text{if} \ \ t \in \mathcal{I}_{j,k}, \ k=1,2 \dots K  \nn \\ 
1.2(r_0+ j c) \zeta \sqrt{r} \gamma_*    \quad \ \text{if} \ \ t \in \mathcal{I}_{j,K+1} 
\end{array} \right. \nn \\
 & \le   \left\{  \begin{array}{ll}
0.18 \sqrt{c}0.72^{k-1}\gamma_{\new} + 1.2(\sqrt{r} + 0.06 \sqrt{c}) \sqrt{\zeta}   \\
\hspace{1in} \text{if} \ \ t \in \mathcal{I}_{j,k}, \ k=1,2 \dots K \nn \\ 
1.2 \sqrt{r} \sqrt{\zeta}   \quad \text{if} \ \  t \in \mathcal{I}_{j,K+1} 
\end{array} \right.
\end{align*}

\een
\end{theorem}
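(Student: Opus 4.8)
The plan is to establish the three conclusions simultaneously by a double induction: over the subspace change times $t_j$, and, within each change interval, over the $K$ projection-PCA steps performed by Algorithm \ref{reprocs}. Introduce the sub-intervals $\mathcal{I}_{j,k} := [t_j + (k-1)\alpha,\ t_j + k\alpha - 1]$ for $k=1,\dots,K$ and $\mathcal{I}_{j,K+1} := [t_j + K\alpha,\ t_{j+1}-1]$, and let $\Phat_{(t)}$ denote the subspace estimate in force on $\mathcal{I}_{j,k}$, namely $\Phat_{(t)} = [\Phat_{j-1}\ \Phat_{j,\new,k-1}]$. For each $(j,k)$ define the ``good'' event $\Gamma_{j,k}^e$ asserting that the claimed bound on $\SE_{(t)}$ holds for all $t$ up to the end of $\mathcal{I}_{j,k}$ (in particular that the error along the new directions has contracted to $\approx 0.6^{k-1}$), and similarly $\Gamma_{j,0}^e$ at the start of a change interval. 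The base case is the hypothesis $\|(I-\Phat_0\Phat_0')P_0\|\le r_0\zeta$, which gives $\SE_{(t)}\le r_0\zeta$ for $t\le t_1$. The inductive step has three ingredients, which are then glued by the probability-bookkeeping lemmas.

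The first ingredient is the per-time-instant analysis \emph{conditioned on} a subspace estimate of prescribed accuracy. On $\mathcal{I}_{j,k}$ write $y_t = \Phi_{(t)} M_t = \Phi_{(t)} S_t + \beta_t$ with $\Phi_{(t)} := I-\Phat_{(t)}\Phat_{(t)}'$ and $\beta_t := \Phi_{(t)} L_t$; slow subspace change \eqref{atnew_inc} together with the current $\SE_{(t)}$ bound and the bound on $\|\Phat_{j,\new,k-1}' P_{j,\new}\|$ (via Lemma \ref{lemma0}) gives $\|\beta_t\|_2 \lesssim \xi_0(\zeta)$. Since $\delta_{2s}(\Phi_{(t)}) = \kappa_{2s}^2(\Phat_{(t)})$ by Lemma \ref{delta_kappa}, and $\kappa_{2s}(\Phat_{(t)})$ is controlled by the denseness hypotheses \eqref{kappa plus}, \eqref{kappa new plus} and the bound on $\kappa_{2s}(Q_{j,\new,k})$ (writing $\Phat_{(t)}$ in terms of $P_{j-1},P_{j,\new}$ and a small perturbation, again via Lemma \ref{lemma0}), Theorem \ref{candes_csbound} applies and yields $\|\hat S_{t,\cs}-S_t\|_2\le C_1\xi$ with small $C_1$. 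The choice $7\xi\le\omega\le S_{\min}-7\xi$ together with $14\xi_0(\zeta)\le S_{\min}$ forces the thresholding step to give $\That_t=T_t$; then $e_t = \hat S_t - S_t = I_{T_t}[(\Phi_{(t)})_{T_t}]^\dagger\beta_t$, whose norm is bounded using $\|((\Phi_{(t)})_{T_t}'(\Phi_{(t)})_{T_t})^{-1}\|_2\le(1-\delta_{2s})^{-1}$, giving conclusion (1)/(3). This step also supplies the decomposition of $e_t$ into a component along $P_{j,\new}$ and a component along $P_{j-1}$ plus a tiny residual — exactly what the PCA step consumes.

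The second, and hardest, ingredient is the projection-PCA step: at $t=t_j+k\alpha$ we take the top $c_{j,\new}$ eigenvectors $\Phat_{j,\new,k}$ of $\frac1\alpha\sum_{t\in\mathcal{I}_{j,k}}(I-\Phat_{j-1}\Phat_{j-1}')\Lhat_t\Lhat_t'(I-\Phat_{j-1}\Phat_{j-1}')$ and must show the error along the new directions contracts by $\approx 0.6$. The only available tool is Davis--Kahan's $\sin\theta$ theorem (Theorem \ref{sin_theta}): split this matrix into a ``clean signal'' $\mathcal{A}$ (the part built from $a_{t,\new}a_{t,\new}'$ projected appropriately) and a ``perturbation'' $\mathcal{H}$ (cross terms among $a_{t,\new}$, $a_{t,*}$ and $e_t$, the $e_te_t'$ term, and $a_{t,*}$ leakage through the imperfect $\Phat_{j-1}$); lower-bound $\lambda_{\min}$ of the relevant block of $\mathcal{A}$ using Weyl's and Ostrowski's theorems (Theorems \ref{weyl}, \ref{ost}) plus denseness of $D_{j,\new,k}$; upper-bound $\|\mathcal{H}\|_2$; the contraction factor is (perturbation)$/$(gap). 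The genuine obstacle — and the reason finite-sample PCA results do not apply — is that $e_t$ is correlated with $L_t$, so the expected cross terms $\frac1\alpha\sum_t\E[P_{j,\new}'a_{t,\new}e_t'|\cdot]$ etc.\ do not vanish; one must use the explicit formula $e_t = I_{T_t}[(\Phi_{(t)})_{T_t}]^\dagger\Phi_{(t)}L_t$ to show these are themselves $O(\SE_{(t)}\cdot\gamma_\new)$ or smaller, so that once the earlier PCA steps have driven $\SE$ down they are small compared with the $\lambda_\new^-$-sized gap. This is precisely what yields the self-improving recursion $\SE_k \lesssim 0.6\,\SE_{k-1} + (\text{const})\zeta$ and the geometric $0.6^{k-1}$ decay; verifying that all numerical constants (the $0.6$, the $\kappa^+$'s, $g^+=\sqrt2$, the RIC thresholds) actually close this recursion is where the bulk of the work lies. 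Concentration of the random sum about its conditional expectation is handled by the matrix Hoeffding corollaries \ref{hoeffding_nonzero}, \ref{hoeffding_rec}, applied conditioned on $\Gamma_{j,k-1}^e$ (so the summands $Z_t$ are conditionally independent and bounded), and $\alpha_{\text{add}}(\zeta)$ is chosen exactly so each such bound holds with probability $\ge 1-n^{-c}$ for $c$ large.

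Finally, chain the pieces: the events form a contracting sequence $\Gamma_{0,1}^e\supseteq\Gamma_{1,1}^e\supseteq\cdots$, so by Lemma \ref{subset_lem} the overall probability is the product of the per-step conditional probabilities, each bounded below by using Lemma \ref{rem_prob} to pass from ``for all $X\in\mathcal{C}$'' to ``conditioned on the event $\mathcal{C}$''; a union bound over the $\le KJ$ projection-PCA steps (and the $n$ inside Hoeffding) produces the stated $1-n^{-10}$ once $\alpha\ge\alpha_{\text{add}}(\zeta)$. On the last sub-interval $\mathcal{I}_{j,K+1}$ the estimate $\Phat_{j,\new,K}$ is accurate enough to be appended, $\Phat_j:=[\Phat_{j-1}\ \Phat_{j,\new,K}]$, giving $\SE_{(t)}\le(r_0+jc)\zeta$ and re-seeding the induction for $t_{j+1}$. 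Substituting the $\SE_{(t)}$ bounds into the $e_t$ bound from the first ingredient gives conclusion (3), and the choice $\zeta\le\min(10^{-4}/r^2,\dots)$ is what collapses the first (exact) bounds into the clean second forms. As noted, the main obstacle is the perturbation estimate in the $\sin\theta$ step in the presence of the $e_t$–$L_t$ correlation, together with the constant-chasing needed to close the geometric-decay recursion.
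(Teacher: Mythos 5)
Your proposal is correct and follows essentially the same route as the paper: a double induction over change times $j$ and projection-PCA steps $k$ on ``good'' events $\Gamma_{j,k}^e$, with the per-time projected-CS analysis (Theorem \ref{candes_csbound} plus the RIC bound from Lemma \ref{delta_kappa} and the denseness/$Q_{j,\new,k}$ assumptions, yielding exact support recovery and the LS formula for $e_t$) feeding into a Davis--Kahan $\sin\theta$ bound on the projection-PCA step whose numerator is controlled via matrix Hoeffding (Corollaries \ref{hoeffding_nonzero}, \ref{hoeffding_rec}) applied to the $\mathcal{A}_k/\mathcal{H}_k$ split, and the probabilities chained via Lemmas \ref{rem_prob} and \ref{subset_lem}. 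You also correctly identify the central difficulty — that $e_t$ is correlated with $L_t$, so the cross terms in $\mathcal{H}_k$ must be bounded using the explicit LS formula for $e_t$ rather than assumed to vanish — which is exactly why the paper uses projection-PCA instead of standard PCA; the remaining work is the constant-chasing in Lemmas \ref{expzeta}, \ref{RIC_bnd}, and \ref{termbnds} that you flag but do not carry out.
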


\begin{remark} \label{Dnew0_rem} 
Consider the last assumption. We actually also need a similar denseness of $\kappa_s(D_{j,\new})$ where $D_{j,\new} =D_{j,\new,0}= (I - \Phat_{j-1} \Phat_{j-1}')P_{j,\new}$. Conditioned on the fact that $\Span(P_{j-1})$ has been accurately estimated, this follows easily from the denseness of $P_{j,\new}$ (see Lemma \ref{Dnew0_lem}). 
\end{remark} 

The above result says the following. Consider Algorithm \ref{reprocs}. Assume that the initial subspace error is small enough.  If the algorithm parameters are appropriately set, if slow subspace change holds, if the subspaces are dense, if the condition number of $\mathrm{Cov}[a_{t,\new}]$ is small enough, and if the currently unestimated part of the newly added subspace is dense enough (this is an assumption on the algorithm estimates), then, w.h.p., we will get exact support recovery at all times. Moreover, the sparse recovery error will always be bounded by $0.18\sqrt{c} \gamma_\new$ plus a constant times $\sqrt{\zeta}$. Since $\zeta$ is very small, $\gamma_\new < S_{\min}$, and $c$ is also small, the normalized reconstruction error for recovering $S_t$ will be small at all times. 
In the second conclusion, we bound the subspace estimation error, $\SE_{(t)}$. When a subspace change occurs, this error is initially bounded by one. The above result shows that, w.h.p., with each projection PCA step, this error decays exponentially and falls below $0.01\sqrt{\zeta}$ within $K$ projection PCA steps. The third conclusion shows that, with each projection PCA step, w.h.p., the sparse recovery error as well as the error in recovering $L_t$ also decay in a similar fashion.

As we explain in Section \ref{discuss_add}, the most important limitation of our result is that it requires an assumption on $D_{\new,k}$ and $Q_{\new,k}$ which depend on algorithm estimates. Moreover, it studies an algorithm that requires knowledge of model parameters.

\subsection{Projection-PCA algorithm for ReProCS}
\label{proj PCA}
Given a data matrix $\mathcal{D}$, a basis matrix $P$ and an integer $r$, projection-PCA (proj-PCA) applies PCA on $\mathcal{D}_{\text{proj}}:=(I-PP')\mathcal{D}$, i.e., it computes the top $r$ eigenvectors (the eigenvectors with the largest $r$ eigenvalues) of $\frac{1}{\alpha} \mathcal{D}_{\text{proj}} {\mathcal{D}_{\text{proj}}}'$. Here $\alpha$ is the number of column vectors in $\mathcal{D}$. This is summarized in Algorithm \ref{algo_pPCA}.

If $P =[.]$, then projection-PCA reduces to standard PCA, i.e. it computes the top $r$ eigenvectors of $\frac{1}{\alpha} \mathcal{D} {\mathcal{D}}'$.

The reason we need projection PCA algorithm in step 3 of Algorithm \ref{reprocs} is because the error $e_t = \Lhat_t - L_t = S_t - \Shat_t$ is correlated with $L_t$; and the maximum condition number of $\operatorname{Cov}(L_t)$, which is bounded by $f$, cannot be bounded by a small value (see Remark \ref{large_f}). This issue is explained in detail in Appendix \ref{projpca}. Most other works that analyze standard PCA, e.g. \cite{nadler} and references therein, do not face this issue because they assume uncorrelated-ness of the noise/error and the true data vector. With this assumption, one only needs to increase the PCA data length $\alpha$ to deal with the larger condition number.

We should mention that the idea of projecting perpendicular to a partly estimated subspace has been used in other different contexts in past work \cite{PP_PCA_Li_Chen, mccoy_tropp11}.

\begin{algorithm}
\caption{projection-PCA: $Q \leftarrow \text{proj-PCA}(\mathcal{D},P,r)$}\label{algo_pPCA}
\ben
\item Projection: compute $\mathcal{D}_{\text{proj}} \leftarrow (I - P P') \mathcal{D}$
\item PCA: compute $\frac{1}{\alpha}  \mathcal{D}_{\text{proj}}{\mathcal{D}_{\text{proj}}}' \overset{EVD}{=}
\left[ \begin{array}{cc}Q & Q_{\perp} \\\end{array}\right]
\left[ \begin{array}{cc} \Lambda & 0 \\0 & \Lambda_{\perp} \\\end{array}\right]
\left[ \begin{array}{c} Q' \\ {Q_{\perp}}'\\\end{array}\right]$
where $Q$ is an $n \times r$ basis matrix and  $\alpha$ is the number of columns in $\mathcal{D}$.
\een
\end{algorithm}

\subsection{Recursive Projected CS (ReProCS)}
\label{basic_rep}


\begin{algorithm*}[ht]
\caption{Recursive Projected CS (ReProCS)}\label{reprocs}
{\em Parameters: } algorithm parameters: $\xi$, $\omega$, $\alpha$, $K$, model parameters: $t_j$, $c_{j,\new}$ 
\\ (set as in Theorem \ref{thm1} )
\\
{\em Input: } $M_t$, {\em Output: } $\Shat_t$, $\Lhat_t$, $\Phat_{(t)}$ 
\\
Initialization: Compute $\Phat_0 \leftarrow$ proj-PCA$\left( [L_{1},L_{2},\cdots,L_{t_{\train}}], [.], r_0 \right)$ where $r_0 = \rank([L_{1},L_{2},\cdots,L_{t_{\train}}])$.
\\
Set $\Phat_{(t)} \leftarrow \Phat_0$,  $j \leftarrow 1$, $k\leftarrow 1$.

For $t > t_{\train}$, do the following:
\ben
\item Estimate $T_t$ and $S_t$ via Projected CS:
\ben
\item \label{othoproj} Nullify most of $L_t$: compute $\Phi_{(t)} \leftarrow I-\Phat_{(t-1)} {\Phat_{(t-1)}}'$, compute $y_t \leftarrow \Phi_{(t)} M_t$
\item \label{Shatcs} Sparse Recovery: compute $\hat{S}_{t,\cs}$ as the solution of $\min_{x} \|x\|_1 \ s.t. \ \|y_t - \Phi_{(t)} x\|_2 \leq \xi$
\item \label{That} Support Estimate: compute $\hat{T}_t = \{i: \ |(\hat{S}_{t,\cs})_i| > \omega\}$
\item \label{LS} LS Estimate of $S_t$: compute $(\hat{S}_t)_{\hat{T}_t}= ((\Phi_t)_{\hat{T}_t})^{\dag} y_t, \ (\hat{S}_t)_{\hat{T}_t^{c}} = 0$
\een
\item Estimate $L_t$: $\hat{L}_t = M_t - \hat{S}_t$.
\item \label{PCA} 
Update $\Phat_{(t)}$: K Projection PCA steps.
\ben
\item If $t = t_j + k\alpha-1$,
\ben

\item $\Phat_{j,\new,k} \leftarrow$ proj-PCA$\left(\left[\hat{L}_{t_j+(k-1)\alpha}, \dots, \hat{L}_{t_j+k\alpha-1}\right],\Phat_{j-1},c_{j,\new}\right)$.

\item set $\Phat_{(t)} \leftarrow [\Phat_{j-1} \ \Phat_{j,\new,k}]$; increment $k \leftarrow k+1$.
\een
Else
\ben
\item set $\Phat_{(t)} \leftarrow \Phat_{(t-1)}$.
\een
\item If $t = t_j + K \alpha - 1$, then set $\Phat_{j} \leftarrow [\Phat_{j-1} \ \Phat_{j,\new,K}]$. Increment $j \leftarrow j + 1$. Reset $k \leftarrow 1$.
\een
\item Increment $t \leftarrow t + 1$ and go to step 1.
\een
\end{algorithm*}


\begin{figure*}[!t]
\centerline{
\includegraphics[width =15cm]{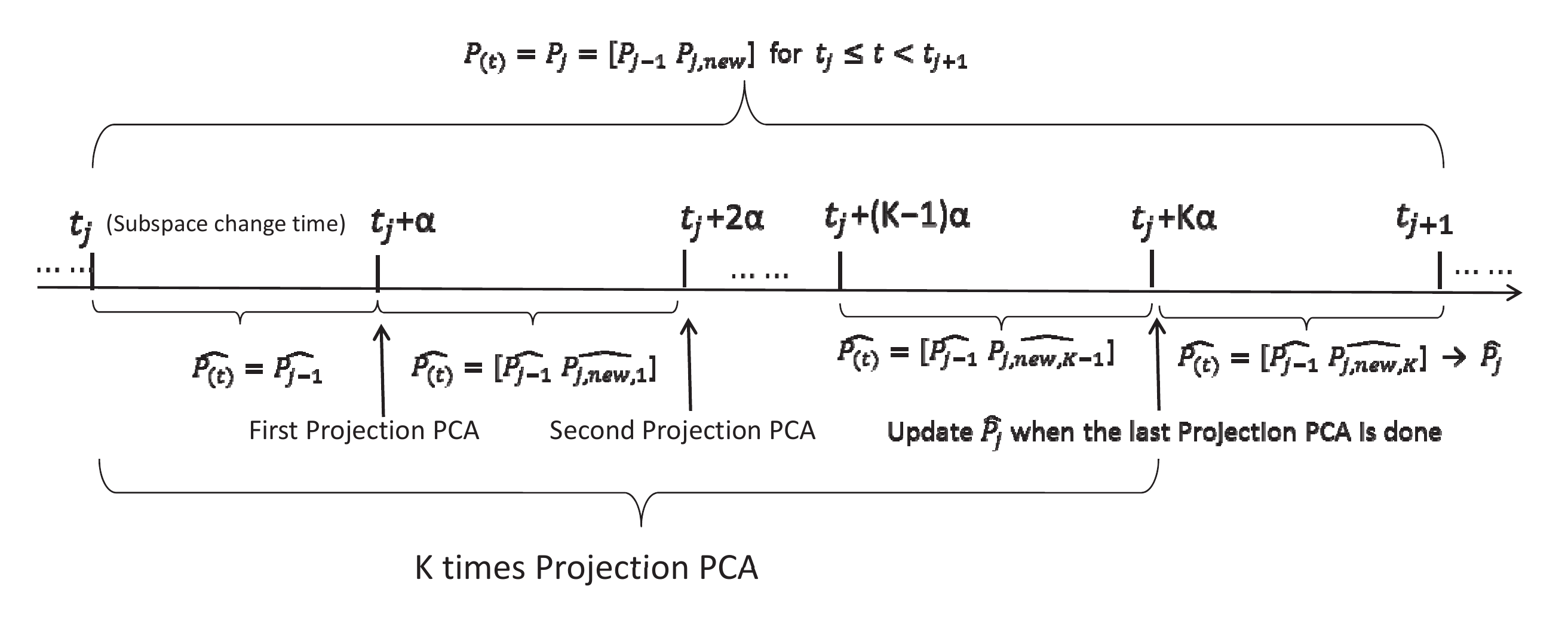}
}
\caption{\small{The K projection PCA steps.
}}
\label{algo_fig}
\end{figure*}


We summarize the Recursive Projected CS (ReProCS) algorithm in Algorithm \ref{reprocs}. It uses the following definition.

\begin{definition}\label{Ijk}
Define the time interval $\mathcal{I}_{j,k}: = [t_j + (k-1) \alpha, t_j + k \alpha - 1]$ for $k=1, \dots K$ and $\mathcal{I}_{j,K+1}:= [t_j + K \alpha, t_{j+1}-1]$. 
\end{definition}

The key idea of ReProCS is as follows.
First, consider a time $t$ when the current basis matrix $P_{(t)}=P_{(t-1)}$ and this has been accurately predicted using past estimates of $L_t$, i.e. we have $\Phat_{(t-1)}$ with $\|(I -  \Phat_{(t-1)} \Phat_{(t-1)}') P_{(t)}\|_2$ small. We project the measurement vector, $M_t$, into the space perpendicular to $\Phat_{(t-1)}$ to get the projected measurement vector $y_t:= \Phi_{(t)} M_t$ where $\Phi_{(t)} = I -\Phat_{(t-1)} \Phat_{(t-1)}'$ (step 1a). Since the $n \times n$ projection matrix, $\Phi_{(t)}$ has rank $n- r_*$ where $r_*= \rank(\Phat_{(t-1)})$, therefore $y_t$ has only $n-r_*$ ``effective" measurements\footnote{i.e. some $r_*$ entries of $y_t$ are linear combinations of the other $n-r_*$ entries}, even though its length is $n$. Notice that $y_t$ can be rewritten as $y_t = \Phi_{(t)} S_t + \beta_t$ where $\beta_t: = \Phi_{(t)} L_t$. Since 
$\|(I -  \Phat_{(t-1)} \Phat_{(t-1)}') P_{(t-1)}\|_2$ is small, the projection nullifies most of the contribution of $L_t$ and so the projected noise $\beta_t$ is small.
Recovering the $n$ dimensional sparse vector $S_t$ from $y_t$ now becomes a traditional sparse recovery or CS problem in small noise \cite{feng_bresler,gorod_rao,bpdn,decodinglp,candes,donoho}. We use $\ell_1$ minimization to recover it (step 1b). If the current basis matrix $P_{(t)}$, and hence its estimate, $\Phat_{(t-1)}$, is dense enough, then, by Lemma \ref{delta_kappa}, the RIC of $\Phi_{(t)}$ is small enough. Using Theorem \ref{candes_csbound}, this  ensures that $S_t$ can be accurately recovered from $y_t$.
%
By thresholding on the recovered $S_t$, one gets an estimate of its support (step 1c). By computing a least squares (LS) estimate of $S_t$ on the estimated support and setting it to zero everywhere else (step 1d), we can get a more accurate final estimate, $\Shat_t$, as first suggested in \cite{dantzig}. This $\Shat_t$ is used to estimate $L_t$ as $\Lhat_t = M_t-\Shat_t$.  As we explain in the proof of Lemma \ref{cslem}, if $S_{\min}$ is large enough and the support estimation threshold, $\omega$, is chosen appropriately, we can get exact support recovery, i.e. $\That_t = T_t$. In this case, the error $e_t: = \Shat_t - S_t = L_t - \Lhat_t$ has the following simple expression:
\beq
e_t = I_{T_t} {(\Phi_{(t)})_{T_t}}^{\dag} \beta_t = I_{T_t} [ (\Phi_{(t)})_{T_t}'(\Phi_{(t)})_{T_t}]^{-1}  {I_{T_t}}' \Phi_{(t)} L_t
\label{etdef0}
\eeq
The second equality follows because ${(\Phi_{(t)})_T}' \Phi_{(t)} ={(\Phi_{(t)} I_T)}' \Phi_{(t)} = {I_T}' \Phi_{(t)}$ for any set $T$.

Now consider a time $t$ when $P_{(t)} = P_j = [P_{j-1}, P_{j,\new}]$ and $P_{j-1}$ has been accurately estimated but $P_{j,\new}$ has not been estimated, i.e. consider a $t \in \mathcal{I}_{j,1}$.  At this time, $\Phat_{(t-1)} = \Phat_{j-1}$ and so $\Phi_{(t)} = \Phi_{j,0}:=I - \Phat_{j-1} \Phat_{j-1}'$.  Let $r_{*}:=r_0 + (j-1)c_{\max}$ (We remove subscript $j$ for ease of notation.)
, and $c:= c_{\max}$. Assume that the delay between change times is large enough so that by $t=t_j$, $\Phat_{j-1}$ is an accurate enough estimate of $P_{j-1}$, i.e. $\|\Phi_{j,0} P_{j-1}\|_2 \le r_{*} \zeta \ll 1$. 
It is easy to see using Lemma \ref{hatswitch} that $\kappa_s(\Phi_{0} P_\new) \le  \kappa_s(P_\new) + r_* \zeta $, i.e. $\Phi_{0} P_\new$ is dense because $P_\new$ is dense and because $\Phat_{j-1}$ is an accurate estimate of $P_{j-1}$ (which is perpendicular to $P_\new$).
Moreover, using Lemma \ref{delta_kappa}, it can be shown that $\phi_0: = \max_{|T| \le s} \|[ (\Phi_{0})_{T}'(\Phi_{0})_{T}]^{-1}\|_2 \le \frac{1}{1-\delta_s(\Phi_0)} \le  \frac{1}{1- (\kappa_s(P_{j-1}) + r_* \zeta)^2}$. The error $e_t$ still satisfies (\ref{etdef0}) although its magnitude is not as small.
Using the above facts in (\ref{etdef0}), we get that
\[
\|e_t\|_2 \le \frac{\kappa_{s}(P_\new) \sqrt{c} \gamma_\new +  r_* \zeta(\sqrt{r_*} \gamma_*  + \sqrt{c} \gamma_\new)}{1- (\kappa_s(P_{j-1}) + r \zeta)^2}
\]
If $\sqrt\zeta < 1/\gamma_*$, all terms containing $\zeta$ can be ignored and we get that the above is approximately upper bounded by $\frac{\kappa_{s}(P_\new)}{1- \kappa_s^2(P_{j-1})}  \sqrt{c} \gamma_\new$. Using the denseness assumption, this quantity is a small constant times $\sqrt{c} \gamma_\new$, e.g. with the numbers assumed in Theorem \ref{thm1} we get a bound of $0.18 \sqrt{c} \gamma_\new$.
Since $\gamma_\new \ll S_{\min}$ and $c$ is assumed to be small, thus, $\|e_t\|_2 =  \|S_t - \Shat_t\|_2$ is small compared with $\|S_t\|_2$, i.e. $S_t$ is recovered accurately. With each projection PCA step, as we explain below, the error $e_t$ becomes even smaller.

Since $\Lhat_t = M_t - \Shat_t$ (step 2), $e_t$ also satisfies $e_t = L_t  - \Lhat_t$. Thus, a small $e_t$ means that $L_t$ is also recovered accurately. The estimated $\Lhat_t$'s are used to obtain new estimates of $P_{j,\new}$ every $\alpha$ frames for a total of $K \alpha$ frames via a modification of the standard PCA procedure, which we call projection PCA (step 3). We illustrate the projection PCA algorithm in Figure \ref{algo_fig}.
In the first projection PCA step, we get the first estimate of $P_{j,\new}$, $\Phat_{j,\new,1}$. For the next $\alpha$ frame interval, $\Phat_{(t-1)} = [\Phat_{j-1}, \Phat_{j,\new,1}]$  and so $\Phi_{(t)} = \Phi_{j,1} = I - \Phat_{j-1} \Phat_{j-1}' - \Phat_{\new,1} \Phat_{\new,1}'$. Using this in the projected CS step reduces the projection noise, $\beta_t$, and hence the reconstruction error, $e_t$, for this interval, as long as $\gamma_{\new,k}$ increases slowly enough.
Smaller $e_t$ makes the perturbation seen by the second projection PCA step even smaller, thus resulting in an improved second estimate $\Phat_{j,\new,2}$. Within $K$ updates ($K$ chosen as given in Theorem \ref{thm1}), it can be shown that both $\|e_t\|_2$ and the subspace error drop down to a constant times $\sqrt{\zeta}$. At this time, we update $\Phat_{j}$ as $\Phat_j = [\Phat_{j-1}, \Phat_{j,\new,K}]$.

\subsection{Discussion}
\label{discuss_add}
First consider the choices of $\alpha$ and of $K$. Notice that $K = K(\zeta)$ is larger if $\zeta$ is smaller. Also, $\alpha_\add$ is inversely proportional to $\zeta$. Thus, if we want to achieve a smaller lowest error level, $\zeta$, we need to compute projection PCA over larger durations $\alpha$ and we need more number of projection PCA steps $K$. This means that we also require a larger delay between subspace change times, i.e. larger $t_{j+1}-t_j$.


Now consider the assumptions used in the result.  
We assume slow subspace change, i.e. the delay between change times is large enough,  $\|a_{t,\new}\|_\infty$ is initially below $\gamma_\new$ and increases gradually, and $14 \xi_0 \le S_{\min}$ which holds if $c_{\max}$ and $\gamma_\new$ are small enough. Small $c_{\max}$, small initial $a_{t,\new}$ (i.e. small $\gamma_\new$) and its gradual increase are verified for real video data in Section \ref{model_verify}. As explained there, one cannot estimate the delay between change times unless one has access to an ensemble of videos of a given type and hence the first assumption cannot be verified. 

We also assume denseness of $P_{j-1}$ and $P_{j,\new}$. 
This is a subset of the denseness assumptions used in earlier work \cite{rpca}. 
As explained there, this is valid for the video application because typically the changes of the background sequence are global, e.g. due to illumination variation affecting the entire image or due to textural changes such as water motion or tree leaves' motion etc. 
We quantify this denseness using the parameter $\kappa_s$. 
The way it is defined, bounds on $\kappa_s$ simultaneously place restrictions on denseness of $L_t$, 
$r = \rank(P_J)$, and $s$ (the maximum sparsity of any $S_t$).
To compare our assumptions with those of Cand\`{e}s et. al. in \cite{rpca}, we could assume 
$\kappa_1(P_{J}) \leq \sqrt{\frac{\mu r}{n}}$,
where $\mu$ is any value between $1$ and $\frac{n}{r}$.
Using the bound $\kappa_s(P) \leq \sqrt{s}\kappa_1(P)$, we see that if
$\frac{2 s r}{n}\leq \mu^{-1}(0.3)^2$,
then our assumption of $\kappa_{2s}(P_{J}) \leq 0.3$ will be satisfied.
Up to differences in the constants, this is the same requirement found in \cite{hsu2011robust},
even though \cite{hsu2011robust} studies a batch approach (PCP) while we study an online algorithm.
From this we can see that if $s$ grows linearly with $n$, then $r$ must be constant.  Similarly, if $r$ grows linearly with $n$, then $s$ must be constant.
This is a stronger assumption than required by \cite{rpca} where $s$ is allowed to grow linearly with $n$, and $r$ is simultaneously allowed to grow as $\frac{n}{\log(n)^2}$.
However, the comparison with \cite{rpca} is not direct because we do not need denseness of the right singular vectors or a bound on the vector infinity norm of $UV'$.
The reason for the stronger requirement on the product $sr$ is because we study an online algorithm that recovers the sparse vector $S_t$ at each time $t$  rather than in a batch or a piecewise batch fashion. 
Because of this the sparse recovery step does not use the low dimensional structure of the new (and still unestimated) subspace.


We assume the independence of $a_t$'s, and hence of $L_t$'s, over time. This is typically not valid in practice; however, it allows us to simplify the problem and hence the derivation of the performance guarantees. In particular it allows us to use the matrix Hoeffding inequality to bound the terms in the subspace error bound. In ongoing work by Zhan and Vaswani \cite{reprocs_cor}, we are seeing that, with some more work, this can be replaced by a more realistic assumption: an autoregressive model on the $a_t$'s, i.e. assume $a_t = b a_{t-1} + \nu_t$ where $\nu_t$'s are independent over time and $b<1$. We can work with this model in two ways. If we assume $b$ is known, then a simple change to the algorithm (in the subspace update step, replace $\Lhat_t$ by $\Lhat_t - b \Lhat_{t-1}$ everywhere) allows us to get a result that is almost the same as the current one using exactly the same approach. Alternatively if $b$ is unknown, as long as $b$ is bounded by a $b_* <1$, we can use the matrix Azuma inequality to still get a result similar to the current one. It will require a larger $\alpha$ though and some other changes.

The most limiting assumption is the assumption on $D_{j,\new,k}$ and $Q_{j,\new,k}$ because these are functions of algorithm estimates. The denseness assumption on $Q_{j,\new,k}$ is actually not essential, it is possible to prove a slightly more complicated version of Theorem \ref{thm1} without it.
We use this assumption only in Lemma \ref{RIC_bnd}. However, if we use tighter bounds on other quantities such as $g$ and $\kappa_s(P_{j,\new})$, and if we analyze the first projection-PCA step differently from the others, we can get a tighter bound on $\zeta_{j,1}$ (and hence $\zeta_{j,k}$ for $k \ge 1$) and then we will not need this assumption. 
%

Consider denseness of $D_{j,\new,k}$. Our proof actually only needs smallness of $\max_{t \in \mathcal{I}_{j,k+1}} d_t$ where $d_t = \|{I_{T_t}}' D_{j,\new,k}\|_2 / \|D_{j,\new,k}\|_2$ for $t \in \mathcal{I}_{j,k+1}$ for $k=1,2 \dots K$. Since this quantity is upper bounded by $\kappa_s(D_{j,\new,k})$, we have just assumed a bound on this for simplicity. Note also that densenss of $D_{j,\new,0}$ does not need to be assumed, this follows from denseness of $P_{j,\new}$ conditioned on the fact that $P_{j-1}$ has been accurately estimated.
We attempted to verify the smallness of $d_t$ in simulations done with a dense $P_j$ and $P_{j,\new}$ and involving correlated support change of $S_t$'s. We observed that, as long as there was a support change every few frames, this quantity was small. 
For example, with $n=2048$, $s=20$, $r_0=36$, $c_\new=1$, support change by one index every 2 frames was sufficient to ensure a small $d_t$ at all times (see Sec \ref{sims}). Even one index change every 50 frames was enough to ensure that the errors decayed down to small enough values, although in this case $d_t$ was large at certain times and the decay of the subspace error was not exponential. It should be possible to use a similar idea to modify our result as well.
The first thing to point out is that the max of $d_t$ can be replaced by its average over $t \in \mathcal{I}_{j,k}$ with a minor change to the proof of Lemma \ref{termbnds}. Moreover, if we try to show linear decay of the subspace error (instead of exponential decay), and if we analyze the first projection-PCA interval differently from the others, we will need a looser bound on the $d_t$'s, which will be easier to obtain under a certain support change assumption. In the first interval, the subspace error is large since $P_\new$ has not been estimated but  $D_{\new,0}$ is dense (see Remark \ref{Dnew0_rem}). In the later intervals, the subspace error is lower but $D_{\new,k}$ may not be as dense.%

Finally, Algorithm \ref{reprocs} assumes knowledge of certain model parameters and these may not always be available. 
It needs to know $c_{j,\new}$, which is the number of new directions added at subspace change time $j$, and it needs knowledge of $\gamma_\new$ (in order to set $\xi$ and $\omega$), which is the bound on the infinity norm of the projection of $a_t$ along the new directions for the first $\alpha$ frames. 
It also needs to know the subspace change times $t_j$, and this is the most restrictive. 

A practical version of Algorithm \ref{reprocs} (that provides reasonable heuristics for setting its parameters without model knowledge) is given in \cite{han_tsp}. As explained there, $\hat{t}_j + \alpha-1$ can be estimated by taking the last set of $\alpha$ estimates $\Lhat_t$, projecting them perpendicular to $\Phat_{j-1}$ and checking if any of the singular values of the resulting matrix is above $\sqrt{\hat\lambda^-}$. It should be possible to prove in future work that this happens only after an actual change and within a short delay of it.


Lastly, note that, because the subspace change model only allows new additions to the subspace, the rank of the subspace basis matrix $P_j$ can only grow over time. The same is true for its ReProCS estimate. Thus, $\max_j \kappa_{2s}(P_j) = \kappa_{2s}(P_J)$ and a bound on this imposes a bound on the number of allowed subspace change times, $J$, or equivalently on the maximum rank of ${\cal L}_t$ for any $t$. A similar bound is also needed by PCP \cite{rpca} and all batch approaches. In Sec \ref{Del_section}, we explain how we can remove the bound on $J$ and hence on the rank of ${\cal L}_t$ if an extra clustering assumption holds. 

\section{Definitions needed for proving Theorem \ref{thm1}}
\label{detailed} 


A few quantities are already defined in the model (Section \ref{model}), Definition \ref{Ijk}, Algorithm \ref{reprocs}, and Theorem \ref{thm1}. Here we define more quantities needed for the proofs.

\begin{definition}
\label{kappaplus}
In the sequel, we let
\ben
\item $r := r_{\max}= r_0 + Jc_{\max}$ and $c: = c_{\max} = \max_j c_{j,\new}$,
\item $\kappa_{s,*} := \max_j \kappa_s(P_{j-1})$, $\kappa_{s,\new} := \max_j \kappa_s(P_{j,\new})$, $\kappa_{s,k}:= \max_j \kappa_s (D_{j,\new,k})$, $\tilde{\kappa}_{s,k} := \max_j \kappa_s((I-P_{j,\new}{P_{j,\new}}') \Phat_{j,\new,k})$,
\item $\kappa_{2s,*}^+ := 0.3$, $\kappa_{2s,\new}^+ := 0.15$, ${\kappa}_{s}^+ := 0.152$, $\tilde{\kappa}_{2s}^+ := 0.15$ and $g^+ := \sqrt{2}$ are the upper bounds assumed in Theorem \ref{thm1} on $\max_j \kappa_{2s}(P_j)$, $\max_j \kappa_{2s}(P_{j,\new})$, $\max_j \max_k \kappa_{s}(D_{j,\new,k})$, $\max_j \kappa_{2s}(Q_{j,\new,k})$ and $g$ respectively.
\item $\phi^+ := 1.1735$ 
\item $\gamma_{\new,k} := \min (1.2^{k-1} \gamma_{\new}, \gamma_*)$ (recall that this is defined in Sec \ref{slowss}).
\een
\end{definition}

\begin{definition}
\label{zetakplus}
Define the following:
\begin{enumerate}
\item $\zeta_{j,*}^+ := (r_0 + (j-1)c)\zeta$

\item Define the sequence $\{{\zeta_{j,k}}^+\}_{k=0,1,2,\dots, K}$ recursively as follows:
\begin{align}
\zeta_{j,0}^+ & := 1 \nn  \\
 \zeta_{j,k}^+ & :=\frac{b + 0.125 c \zeta}{1 - (\zeta_{j,*}^+)^2 - (\zeta_{j,*}^+)^2 f - 0.125 c \zeta - b} \; \text{ for} \ k \geq 1,
\end{align}
\end{enumerate}
where
\begin{align*}
& b :=  C \kappa_s^+ g^+  \zeta_{j,k-1}^+ + \tilde{C} (\kappa_s^+)^2 g^+ (\zeta_{k-1}^+)^2 + C' f (\zeta_{j,*}^+)^2 \\
& C := \frac{2\kappa_s^+ \phi^+}{\sqrt{1-(\zeta_{j,*}^+)^2}} + \phi^+ ,  \\
& C' :=  (\phi^+)^2 +  \frac{2\phi^+ }{\sqrt{1-(\zeta_{j,*}^+)^2}}
+ 1 + \\
&\hspace{.8in} \phi^+ + \frac{\kappa_s^+ \phi^+}{\sqrt{1-(\zeta_{j,*}^+)^2}} + \frac{\kappa_s^+(\phi^+)^2 }{\sqrt{1-(\zeta_{j,*}^+)^2}} , \\
& \tilde{C} :=  (\phi^+)^2  +  \frac{\kappa_s^+ (\phi^+)^2 }{\sqrt{1-(\zeta_{j,*}^+)^2}} .
\end{align*}
As we will see, $\zeta_{j,*}^+$ and $\zeta_{j,k}^+$ are the high probability upper bounds on $\zeta_{j,*}$ and $\zeta_{j,k}$ (defined in Definition \ref{def_SEt}) under the assumptions of Theorem \ref{thm1}.
\end{definition}

\begin{definition}
We define the noise seen by the sparse recovery step at time $t$ as
$$\beta_t: = (I - \Phat_{(t-1)} \Phat_{(t-1)}') L_t.$$
Also define the reconstruction error of $S_t$ as
$$e_t:= \Shat_t - S_t.$$
Here $\Shat_t$ is the final estimate of $S_t$ after the LS step in Algorithm \ref{reprocs}. Notice that $e_t$ also satisfies $e_t = L_t - \Lhat_t$.
\end{definition}

\begin{definition}
\label{def_SEt}
We define the subspace estimation errors as follows. Recall that $\Phat_{j,\new,0}=[.]$ (empty matrix).%
\bea  
&& \SE_{(t)} := \|(I - \Phat_{(t)} \Phat_{(t)}') P_{(t)} \|_2, \nn \\ 
&& \zeta_{j,*} := \|(I - \Phat_{j-1} \Phat_{j-1}') P_{j-1}\|_2 \nn \\
&& \zeta_{j,k} := \|(I - \Phat_{j-1} \Phat_{j-1}' - \Phat_{j,\new,k} \Phat_{j,\new,k}') P_{j,\new}\|_2 \nn 
\eea
\end{definition}

\begin{remark}\label{zetastar}
Recall from the model given in Sec \ref{model} and from Algorithm \ref{reprocs} that
\begin{enumerate}
\item $\Phat_{j,\new,k}$ is orthogonal to $\Phat_{j-1}$, i.e. $\Phat_{j,\new,k}'\Phat_{j-1}=0$
\item $\Phat_{j-1} := [\Phat_{0}, \Phat_{1,\new,K}, \dots \Phat_{j-1,\new,K}]$ and $P_{j-1}: = [P_0, P_{1,\new}, \dots P_{j-1,\new}]$
\item  for $t \in \mathcal{I}_{j,k+1}$, $\Phat_{(t)} = [\Phat_{j-1}, \Phat_{j,\new,k}]$ and $P_{(t)} = P_j = [P_{j-1}, P_{j,\new}]$. 
\item  $\Phi_{(t)} := I - \Phat_{(t-1)} \Phat_{(t-1)}'$
\end{enumerate}
Then it is easy to see that
\begin{enumerate}
\item $\zeta_{j,*} \le \zeta_{j-1,*} + \zeta_{j,K} =
\zeta_{1,*} + \sum_{j'=1}^{j-1} \zeta_{j',K}$
\item $\SE_{(t)}  \le \zeta_{j,*} + \zeta_{j,k} \le \zeta_{1,*} + \sum_{j'=1}^{j-1} \zeta_{j',K} + \zeta_{j,k}$ \; for \ $t \in \mathcal{I}_{j,k+1}$.
\end{enumerate}
\end{remark}

\begin{definition}\label{defn_Phi}
Define the following
\ben
\item $\Phi_{j,k}$, $\Phi_{j,0}$ and $\phi_k$ 
\ben
\item $\Phi_{j,k} := I-\Phat_{j-1} {\Phat_{j-1}}' - \Phat_{j,\new,k} {\Phat_{j,\new,k}}'$ is the CS matrix for $t \in \mathcal{I}_{j,k+1}$, i.e. $\Phi_{(t)} = \Phi_{j,k}$ for this duration.

\item $\Phi_{j,0} := I-\Phat_{j-1} {\Phat_{j-1}}'$ is the CS matrix for $t \in \mathcal{I}_{j,1}$, i.e. $\Phi_{(t)} = \Phi_{j,0}$ for this duration. $\Phi_{j,0}$ is also the projection matrix used in all of the projection PCA steps for $t \in [t_j, t_{j+1}-1]$.

\item $\phi_k := \max_j \max_{T:|T|\leq s}\|({(\Phi_{j,k})_T}'(\Phi_{j,k})_T)^{-1}\|_2$. It is easy to see that $\phi_k \le \frac{1}{1-\max_j \delta_s(\Phi_{j,k})}$ \cite{decodinglp}.

\een
\item $D_{j,\new,k}$, $D_{j,\new}$, $D_{j,*,k}$ and $D_{j,*}$
\ben
\item $D_{j,\new,k} := \Phi_{j,k} P_{j,\new}$. $\Span(D_{j,\new,k})$ is the unestimated part of the newly added subspace for any $t \in \mathcal{I}_{j,k+1}$. 

\item $D_{j,\new} := D_{j,\new,0} = \Phi_{j,0} P_{j,\new}$. $\Span(D_{j,\new})$ is interpreted similarly for any $t \in \mathcal{I}_{j,1}$.

\item $D_{j,*,k} := \Phi_{j,k} P_{j-1}$. $\Span(D_{j,*,k})$ is the unestimated part of the existing subspace for any $t \in \mathcal{I}_{j,k}$
\item $D_{j,*} := D_{j,*,0} = \Phi_{j,0} P_{j-1}$. $\Span(D_{j,*,k})$ is interpreted similarly for any $t \in \mathcal{I}_{j,1}$
\item Notice that $\zeta_{j,0} = \|D_{j,\new}\|_2$, $\zeta_{j,k} = \|D_{j,\new,k}\|_2$, $\zeta_{j,*} = \|D_{j,*}\|_2$. Also, clearly, $\|D_{j,*,k}\|_2 \le \zeta_{j,*}$. 
\een
\een
\end{definition}

\begin{definition}
\label{defHk}\
\begin{enumerate}
\item Let $D_{j,\new} \overset{QR}{=} E_{j,\new} R_{j,\new}$ denote its reduced QR decomposition, i.e. let $E_{j,\new}$ be a basis matrix for $\Span(D_{j,\new})$ and let $R_{j,\new} = E_{j,\new}'D_{j,\new}$. 

\item Let $E_{j,\new,\perp}$ be a basis matrix for the orthogonal complement of $\Span(E_{j,\new})=\Span(D_{j,\new})$. To be precise, $E_{j,\new,\perp}$ is a $n \times (n-c_{j,\new})$ basis matrix that satisfies $E_{j,\new,\perp}'E_{j,\new}=0$. 

\item Using $E_{j,\new}$ and $E_{j,\new,\perp}$, define $A_{j,k}$, $A_{j,k,\perp}$, $H_{j,k}$, $H_{j,k,\perp}$ and $B_{j,k}$ as
\bea
A_{j,k} &:=& \frac{1}{\alpha} \sum_{t \in \mathcal{I}_{j,k}} {E_{j,\new}}' \Phi_{j,0} L_t {L_t}' \Phi_{j,0} E_{j,\new} \nn \\
A_{j,k,\perp} &:=& \frac{1}{\alpha} \sum_{t \in \mathcal{I}_{j,k}} {E_{j,\new,\perp}}' \Phi_{j,0} L_t {L_t}' \Phi_{j,0} E_{j,\new,\perp} \nn \\
H_{j,k} &:=& \frac{1}{\alpha}\sum_{t \in \mathcal{I}_{j,k}} {E_{j,\new}}' \Phi_{j,0} \nn \\
&&\hspace{.5in}(e_t {e_t}' -L_t {e_t}' - e_t {L_t}') \Phi_{j,0} E_{j,\new} \nn\\
H_{j,k,\perp} &:=& \frac{1}{\alpha} \sum_{t \in \mathcal{I}_{j,k}} {E_{j,\new,\perp}}'\Phi_{j,0} \nn \\
&&\hspace{.4in} (e_t {e_t}' - L_t {e_t}' - e_t {L_t}') \Phi_{j,0} E_{j,\new,\perp} \nn \\
B_{j,k} &:=& \frac{1}{\alpha}\sum_{t \in \mathcal{I}_{j,k}} {E_{j,\new,\perp}}'\Phi_{j,0} \Lhat_t \Lhat_t' \Phi_{j,0} E_{j,\new}\nn \\
&=& \frac{1}{\alpha}\sum_{t \in \mathcal{I}_{j,k}} {E_{j,\new,\perp}}'\Phi_{j,0} (L_t-e_t) \nn \\ &&\hspace{1.2in}({L_t}'-{e_t}')\Phi_{j,0} E_{j,\new} \nn
\eea

\item Define
\bea
&&\mathcal{A}_{j,k} := \left[ \begin{array}{cc} E_{j,\new} & E_{j,\new,\perp} \\ \end{array} \right]
\left[\begin{array}{cc} A_{j,k} \ & 0 \ \\ 0 \ & A_{j,k,\perp}  \\ \end{array} \right]
\left[ \begin{array}{c} {E_{j,\new}}' \\ {E_{j,\new,\perp}}' \\ \end{array} \right]\nn\\
&&\mathcal{H}_{j,k} := \left[ \begin{array}{cc} E_{j,\new} & E_{j,\new,\perp} \\ \end{array} \right]
\left[\begin{array}{cc} H_{j,k} \ & {B_{j,k}}' \ \\ B_{j,k} \ &  H_{j,k,\perp} \\ \end{array} \right]
\left[ \begin{array}{c} {E_{j,\new}}' \\ {E_{j,\new,\perp}}' \\ \end{array} \right] \nn
\eea
\end{enumerate}
\end{definition}

\begin{remark}
\begin{enumerate}
\item From the above, it is easy to see that $$\mathcal{A}_{j,k} + \mathcal{H}_{j,k} =\frac{1}{\alpha} \sum_{t \in \mathcal{I}_{j,k}} \Phi_{j,0} \hat{L}_t {\hat{L}_t}' \Phi_{j,0}.$$

\item Recall from Algorithm \ref{reprocs} that
\begin{align*}
\mathcal{A}_{j,k} +& \mathcal{H}_{j,k} \overset{EVD}{=}\\
 &\left[ \begin{array}{cc} \Phat_{j,\new,k} & \Phat_{j,\new,k,\perp} \\ \end{array} \right]
\left[\begin{array}{cc} \Lambda_k \ & 0 \ \\ 0 \ & \ \Lambda_{k,\perp} \\ \end{array} \right]
\left[ \begin{array}{c} \Phat_{j,\new,k}' \\ \Phat_{j,\new,k,\perp}' \\ \end{array} \right]
\end{align*}
is the EVD of $\mathcal{A}_{j,k} + \mathcal{H}_{j,k}$.

\item Using the above, $\mathcal{A}_{j,k} + \mathcal{H}_{j,k}$ can be decomposed in two ways as follows:
\begin{align*}
&\mathcal{A}_{j,k} + \mathcal{H}_{j,k} \\
&= \left[ \begin{array}{cc} \Phat_{j,\new,k} & \Phat_{j,\new,k,\perp} \\ \end{array} \right]
\left[\begin{array}{cc} \Lambda_k \ & 0 \ \\ 0 \ & \ \Lambda_{k,\perp} \\ \end{array} \right] \left[ \begin{array}{c} \Phat_{j,\new,k}' \\ \Phat_{j,\new,k,\perp}' \\ \end{array} \right]
\\
&= \left[ \begin{array}{cc} E_{j,\new} & E_{j,\new,\perp} \\ \end{array} \right] \\
&\hspace{.4in}
\left[\begin{array}{cc} A_{j,k} + H_{j,k} \ & B_{j,k}' \ \\ B_{j,k} \ & A_{j,k,\perp} + H_{j,k,\perp}  \\ \end{array} \right]
\left[ \begin{array}{c} {E_{j,\new}}' \\ {E_{j,\new,\perp}}' \\ \end{array} \right]
\end{align*}
\end{enumerate}

\end{remark}


\begin{definition}
Define the random variable $X_{j,k} := \{a_1,a_2,\cdots,a_{t_j+k\alpha-1}\}$.
\end{definition}
Recall that the $a_t$'s are mutually independent over $t$, hence $X_{j,k}$ and $\{ a_{t_j+k\alpha}, \dots,a_{t_j+(k+1)\alpha -1} \}$ are mutually independent.

\begin{definition}
Define the set $\check{\Gamma}_{j,k}$ as follows:
\begin{align*}
\check{\Gamma}_{j,k} &:= \{ X_{j,k} : \zeta_{j,k} \leq \zeta_{k}^+ \text{ and } \hat{T}_t = T_t  \text{ for all } t \in \mathcal{I}_{j,k} \} \\
\check{\Gamma}_{j,K+1} &:= \{ X_{j+1,0} :  \hat{T}_t = T_t  \text{ for all } t \in \mathcal{I}_{j,K+1} \}
\end{align*}

\end{definition}

\begin{definition}\label{Gamma_def}
Recursively define the sets $\Gamma_{j,k}$ as follows:
%
\begin{align*}
\Gamma_{1,0} & := \{ X_{1,0}: \zeta_{1,*} \leq r\zeta \\ 
&\hspace{.5in} \text{and} \ \hat{T}_t = T_t  \ \text{for all} \ t\in [t_{\mathrm{train}}+1: t_1 -1]\} \\
\Gamma_{j,0} & :=  \{X_{j,0}: \zeta_{j',*} \le \zeta_{j',*}^+ \ \text{for all} \ j' = 1, 2, \dots, j \\
 &\hspace{1.2in} \text{and} \ \hat{T}_t = T_t \ \text{for all} \ t \le t_{j-1} \}\\
\Gamma_{j,k} &:=\Gamma_{j,k-1} \cap \check{\Gamma}_{j,k} \ k=1,2,\dots K+1
\end{align*}

\end{definition}

\begin{remark} \label{etdef_rem}
Whenever $\hat{T}_t = T_t$ we have an exact expression for $e_t$:
\begin{equation}\label{et expression}
e_t =  I_{T_t} [ (\Phi_{(t)})_{T_t}'(\Phi_{(t)})_{T_t}]^{-1}  {I_{T_t}}' \Phi_{(t)} L_t
\end{equation}
Recall that $L_t = P_j a_t = P_{j-1} a_{t,*} + P_{j,\new} a_{t,\new}$.
\end{remark}

\begin{definition}
Define $P_{j,*}: = P_{j-1}$ and $\Phat_{j,*}: = \Phat_{j-1}$.
\end{definition}

\begin{remark}
Notice that the subscript $j$ always appears as the first subscript, while $k$ is the last one. At many places in the rest of the paper, we remove the subscript $j$ for simplicity,  
e.g., $\Phi_0$ refers to $\Phi_{j,0}$, $\Phat_{\new,k}$ refers to $\Phat_{j,\new,k}$, $P_*$ refers to $P_{j,*}: = P_{j-1}$ and so on.  
\label{remove_j}
\end{remark}

\section{Proof of Theorem \ref{thm1}} \label{mainlemmas}

\subsection{Two Main Lemmas and Proof of Theorem \ref{thm1}}
The proof of Theorem \ref{thm1} essentially follows from two main lemmas that we state below.  Lemma \ref{expzeta} gives an exponentially decaying upper bound on $\zeta_k^+$ defined in Definition \ref{zetakplus}. $\zeta_k^+$ will be shown to be a high probability upper bound for $\zeta_k$ under the assumptions of the Theorem.  Lemma \ref{mainlem} says that conditioned on $X_{j,k-1}\in\Gamma_{j,k-1}$, $X_{j,k}$ will be in $\Gamma_{j,k}$ w.h.p..  In words this says that if, during the time interval $\mathcal{I}_{j,k-1}$, the algorithm has worked well (recovered the support of $S_t$ exactly and recovered the background subspace with subspace recovery error below $\zeta_{k-1}^+ + \zeta_*^+$), then it will also work well in $\mathcal{I}_{j,k}$ w.h.p..

\begin{lem}[Exponential decay of $\zeta_k^+$] \label{expzeta}
Assume that the bounds on $\zeta$ from Theorem \ref{thm1} hold. Define the sequence $\zeta_k^+$ as in Definition \ref{zetakplus}. Then
\begin{enumerate}
\item $\zeta_0^+ = 1$ and $\zeta_k^+ \leq 0.6^k + 0.4c\zeta$ for all $k = 1, 2, \dots, K,$
\item the denominator of $\zeta_k^+$ is positive for all $k = 1, 2, \dots, K$.
\end{enumerate}
\end{lem}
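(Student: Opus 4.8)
The plan is to prove both parts simultaneously by a single induction on $k$, running from $k=0$ to $k=K$ (the value of $K$ itself plays no role in this lemma; it only fixes the range over which the bound is asserted). The base case $k=0$ is immediate: $\zeta_0^+=1$ by definition, and part 2 is vacuous at $k=0$. For the inductive step the hypothesis is that $\zeta_{k-1}^+\le 0.6^{k-1}+0.4c\zeta$, which in particular gives $\zeta_{k-1}^+\le 1$ since $c\zeta\le r\zeta\le 10^{-4}$; the goal is to show the denominator defining $\zeta_k^+$ is positive and that $\zeta_k^+\le 0.6^k+0.4c\zeta$.

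The first step I would take is to use the hypotheses on $\zeta$ to dispose of all the ``small'' terms. From $\zeta\le 10^{-4}/r^2$ and $\zeta\le 1.5\times 10^{-4}/(r^2 f)$ together with $\zeta_{j,*}^+\le r\zeta$ we obtain $(\zeta_{j,*}^+)^2\le 10^{-4}\zeta$ and $(\zeta_{j,*}^+)^2 f\le 1.5\times 10^{-4}\zeta$; moreover $C$, $\tilde C$, $C'$ are bounded by absolute constants since $\sqrt{1-(\zeta_{j,*}^+)^2}\ge\sqrt{1-10^{-8}}$, so $C'f(\zeta_{j,*}^+)^2$ is also $O(\zeta)$. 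Writing $\theta:=C\kappa_s^+ g^+ +\tilde C(\kappa_s^+)^2 g^+$ (an absolute constant determined by $\kappa_s^+=0.152$, $g^+=\sqrt2$, $\phi^+=1.1735$) and using $(\zeta_{k-1}^+)^2\le\zeta_{k-1}^+$, the numerator $b+0.125c\zeta$ is at most $\theta\,\zeta_{k-1}^+ + \eta_1 c\zeta$ and the denominator is at least $1-\theta\,\zeta_{k-1}^+ -\eta_2 c\zeta$ for small explicit constants $\eta_1,\eta_2$ (both essentially $0.126$). Since $\zeta_{k-1}^+\le 1$ and $\theta<1$, the denominator is bounded away from $0$; this proves part 2.

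For part 1, these two estimates yield a one-step recursion $\zeta_k^+\le\rho\,\zeta_{k-1}^+ + \eta\,c\zeta$ with $\rho$ and $\eta$ explicit in $\theta$, $\eta_1$, $\eta_2$. The substance of the lemma is checking that the stated numerical constants force $\rho\le 0.6$ (with little room to spare); here it helps to note that the contraction is strictly better for $k\ge 2$, where the hypothesis already gives $\zeta_{k-1}^+\le 0.6+0.4c\zeta$ and hence $(\zeta_{k-1}^+)^2\le 0.61\,\zeta_{k-1}^+$, shrinking the coefficient of $\zeta_{k-1}^+$ in $b$, so that the only delicate case is $k=1$ with $\zeta_0^+=1$. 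Once $\zeta_k^+\le 0.6\,\zeta_{k-1}^+ + 0.16\,c\zeta$ is established, unrolling from $\zeta_0^+=1$ and summing the geometric series $\sum_{i\ge 0}0.6^i=2.5$ gives $\zeta_k^+\le 0.6^k+0.16c\zeta\cdot 2.5 = 0.6^k+0.4c\zeta$, as claimed.

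The main obstacle is precisely this numerical verification: computing $C$, $\tilde C$, $C'$ for $\kappa_s^+=0.152$, $g^+=\sqrt2$, $\phi^+=1.1735$, discarding the $O(\zeta)$ contributions, and confirming that the resulting per-step contraction factor does not exceed $0.6$ while the accumulated $c\zeta$ terms stay below $0.4c\zeta$. Everything else --- the joint induction on the two claims, the reduction to a one-step bound, the positivity of the denominator, and the geometric-series bookkeeping --- is routine once the constraints on $\zeta$ are invoked to make every $\zeta_{j,*}^+$-dependent term negligible.
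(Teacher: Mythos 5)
Your proposal is correct and follows essentially the same route as the paper: extract the $O(\zeta)$ small-term bounds from the hypothesis on $\zeta$, establish the one-step contraction $\zeta_k^+ \le 0.6\,\zeta_{k-1}^+ + 0.16\,c\zeta$ by numerical evaluation of the constants $C,\tilde C,C'$, and unroll the geometric series to get $0.6^k + 0.4c\zeta$. The only difference is organizational --- the paper first proves the auxiliary monotonicity claim $\zeta_k^+ \le \zeta_{k-1}^+ \le 0.6$ (using that $\zeta_k^+$ is an increasing function of $\zeta_{k-1}^+$, with a direct base-case check $\zeta_1^+ < 0.5985$) before deriving the contraction, whereas you fold everything into a single induction on the sharper bound and correctly flag $k=1$ with $\zeta_0^+=1$ as the delicate case requiring the separate numerical check.
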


We prove this lemma in Section \ref{pfoflem1}.

\begin{lem}\label{mainlem}
Assume that all the conditions of Theorem \ref{thm1} hold.  Also assume that $\mathbf{P}(\Gamma^e_{j,k-1})>0.$
Then
\[
\mathbf{P}(\Gamma^e_{j,k}|\Gamma^e_{j,k-1}) \geq p_k(\alpha,\zeta) \geq p_K(\alpha,\zeta) 
\]
for all $ k = 1, 2, \ldots, K$,
and
\[
\mathbf{P}(\Gamma^e_{j,K+1}|\Gamma^e_{j,K}) = 1
\]
where $p_k(\alpha,\zeta)$ is defined in equation \eqref{pk}.
\end{lem}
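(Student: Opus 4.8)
The plan is to prove Lemma \ref{mainlem} by unpacking $\Gamma^e_{j,k} = \Gamma^e_{j,k-1} \cap \check\Gamma^e_{j,k}$ and showing that, conditioned on \emph{any} fixed $X_{j,k-1} \in \Gamma_{j,k-1}$, the two requirements defining $\check\Gamma_{j,k}$ — exact support recovery $\That_t = T_t$ for all $t \in \mathcal{I}_{j,k}$, and the subspace error bound $\zeta_{j,k} \le \zeta_k^+$ — hold with probability at least $p_k(\alpha,\zeta)$. Since the $a_t$'s are mutually independent over $t$, the only new randomness on $\mathcal{I}_{j,k}$, namely $\{a_t : t \in \mathcal{I}_{j,k}\}$, is independent of $X_{j,k-1}$; this lets us prove a bound uniform over $X_{j,k-1} \in \Gamma_{j,k-1}$ and then convert it to a statement conditioned on the event $\Gamma^e_{j,k-1}$ via Lemma \ref{rem_prob}.

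\emph{Sparse recovery (essentially deterministic given the conditioning).} On $\Gamma_{j,k-1}$ we have $\zeta_{j,*} \le \zeta_*^+$ and $\zeta_{j,k-1} \le \zeta_{k-1}^+$, so Lemma \ref{hatswitch} controls the relevant denseness coefficients and Lemma \ref{delta_kappa} gives $\delta_{2s}(\Phi_{j,k-1}) < \sqrt2 - 1$ with room to spare (using $\kappa_{2s,*}^+ = 0.3$, $\tilde\kappa_{2s}^+ = 0.15$); this RIC bound and the companion bound $\phi_{k-1} \le \phi^+$ are what is needed and are established in Lemma \ref{RIC_bnd}. The projection noise $\beta_t = \Phi_{j,k-1}L_t$ satisfies $\|\beta_t\|_2 \le \xi_0(\zeta) = \xi$ deterministically, using the slow-subspace-change bound $\|a_{t,\new}\|_\infty \le \gamma_{\new,k}$, the bound $\|a_t\|_\infty \le \gamma_*$, the subspace-error bounds from the conditioning, and the smallness of $\zeta$. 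Then Theorem \ref{candes_csbound} gives $\|\Shat_{t,\cs} - S_t\|_2 \le C_1\xi \le 7\xi$, and the choices $7\xi \le \omega \le S_{\min} - 7\xi$ with $14\xi_0(\zeta) \le S_{\min}$ force $\That_t = T_t$; the LS step then produces the exact expression $e_t = I_{T_t}[(\Phi_{(t)})_{T_t}'(\Phi_{(t)})_{T_t}]^{-1}I_{T_t}'\Phi_{(t)}L_t$ of Remark \ref{etdef_rem} together with deterministic bounds on $\|e_t\|_2$ and on $\frac1\alpha\sum_{t\in\mathcal{I}_{j,k}}\|e_t\|_2^2$ in terms of $\zeta_{k-1}^+,\zeta_*^+,\phi^+,\kappa_s^+,\gamma_{\new,k}$. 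This is the content of Lemma \ref{cslem}.

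\emph{Subspace update via projection-PCA (the core).} On $\mathcal{I}_{j,k}$ the algorithm sets $\Phat_{j,\new,k}$ to the top $c$ eigenvectors of $\frac1\alpha\sum_{t\in\mathcal{I}_{j,k}}\Phi_{j,0}\Lhat_t\Lhat_t'\Phi_{j,0} = \mathcal{A}_{j,k}+\mathcal{H}_{j,k}$. I would apply the $\sin\theta$ theorem (Theorem \ref{sin_theta}) in the basis $[E_{j,\new}\ E_{j,\new,\perp}]$ with $\mathcal{A}=\mathcal{A}_{j,k}$, $\mathcal{H}=\mathcal{H}_{j,k}$, then use Lemma \ref{hatswitch} to pass between $\Span(E_{j,\new})=\Span(D_{j,\new})$ and $\Span(P_{j,\new})$ and to fold in the $\Phat_{j-1}$ part, obtaining
\[
\zeta_{j,k} \;\le\; \zeta_{j,*} + \frac{\|\mathcal{H}_{j,k}\|_2}{\lambda_{\min}(A_{j,k}) - \lambda_{\max}(A_{j,k,\perp}) - \|\mathcal{H}_{j,k}\|_2}.
\]
To reduce this to the recursion defining $\zeta_k^+$ I would: (a) lower-bound $\lambda_{\min}(A_{j,k})$ by Ostrowski's theorem (Theorem \ref{ost}), writing $A_{j,k}$ through $R_{j,\new}=E_{j,\new}'D_{j,\new}$ and using $\sigma_{\min}(D_{j,\new}) \ge \sqrt{1-(\zeta_*^+)^2}$ (Lemma \ref{hatswitch}) and $\lambda_{\min}((\Lambda_t)_\new)\ge\lambda_\new^-$; (b) upper-bound $\lambda_{\max}(A_{j,k,\perp})$ and the noise blocks $H_{j,k},H_{j,k,\perp},B_{j,k}$ using the $e_t$ expression and the $\|a_{t,*}\|_\infty,\|a_{t,\new}\|_\infty$ bounds, splitting each into a deterministic conditional-mean part and a fluctuation part. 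The fluctuation parts are controlled, conditioned on $X_{j,k-1}$ and uniformly over $X_{j,k-1}\in\Gamma_{j,k-1}$, by the matrix Hoeffding corollaries — Corollary \ref{hoeffding_nonzero} for the Hermitian blocks $A_{j,k},A_{j,k,\perp},H_{j,k},H_{j,k,\perp}$ and Corollary \ref{hoeffding_rec} for the off-diagonal $B_{j,k}$ — with Weyl's theorem (Theorem \ref{weyl}) converting the $\|\mathcal{H}_{j,k}\|_2$ bound into the needed eigenvalue-gap statement; these term bounds are Lemma \ref{termbnds}. Collecting the failure probabilities of the bounded number of Hoeffding applications gives the quantity $p_k(\alpha,\zeta)$ of the lemma statement; the only $k$-dependence is through $\gamma_{\new,k}=\min(1.2^{k-1}\gamma_\new,\gamma_*)$ inside the $(b_2-b_1)^2$ and $b_1^2$ constants, which is maximal at $k=K$, whence $p_k(\alpha,\zeta)\ge p_K(\alpha,\zeta)$, and $\alpha\ge\alpha_\add(\zeta)$ is exactly what makes each exponential small enough. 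Combining the two steps yields $\mathbf{P}(\check\Gamma^e_{j,k}\mid X_{j,k-1})\ge p_k(\alpha,\zeta)$ for all $X_{j,k-1}\in\Gamma_{j,k-1}$, and Lemma \ref{rem_prob} upgrades this to $\mathbf{P}(\Gamma^e_{j,k}\mid\Gamma^e_{j,k-1})\ge p_k(\alpha,\zeta)$.

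\emph{The interval $\mathcal{I}_{j,K+1}$, and the main obstacle.} After $K$ steps, Lemma \ref{expzeta} gives $\zeta_{j,K}\le\zeta_K^+\le 0.6^K+0.4c\zeta$, which by the choice $K=K(\zeta)$ is a small multiple of $c\zeta$; hence on $\mathcal{I}_{j,K+1}$ the CS matrix $\Phi_{j,K}$ is dense enough and $\|\beta_t\|_2\le\xi_0(\zeta)$ holds deterministically given only $\|a_t\|_\infty\le\gamma_*$. Since $\mathcal{I}_{j,K+1}$ contains no projection-PCA step, the sparse-recovery argument applies with no new randomness, giving $\That_t=T_t$ throughout and hence $\mathbf{P}(\Gamma^e_{j,K+1}\mid\Gamma^e_{j,K})=1$. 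The hard part throughout is the subspace update: because $e_t$ is \emph{correlated} with $L_t$, the perturbation $\mathcal{H}_{j,k}$ is neither zero-mean nor independent of $\mathcal{A}_{j,k}$, which is exactly why one projects by $\Phi_{j,0}$ (projection-PCA) rather than doing plain PCA; the delicate work is tracking the interplay of $\zeta_{k-1}^+,\zeta_*^+,f,g$ so that the $\sin\theta$ bound collapses to precisely the recursion for $\zeta_k^+$ with its denominator staying positive (Lemma \ref{expzeta}(2)), and assembling the various Hoeffding failure probabilities into a single clean $p_k(\alpha,\zeta)$ with $p_k\ge p_K$.
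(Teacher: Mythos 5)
Your plan reproduces the paper's proof architecture almost line for line: the reduction $\mathbf{P}(\Gamma^e_{j,k}\mid\Gamma^e_{j,k-1})=\mathbf{P}(\check\Gamma^e_{j,k}\mid\Gamma^e_{j,k-1})$, the split into a deterministic sparse-recovery part (item~2 of Lemma~\ref{cslem}) and a probabilistic subspace-estimation part (Lemma~\ref{zetak}, which in turn packages Lemma~\ref{zetakbnd} and Lemma~\ref{termbnds}), the use of Lemma~\ref{rem_prob} to pass from conditioning on $X_{j,k-1}$ to conditioning on the event $\Gamma^e_{j,k-1}$, the observation that the only $k$-dependence in the failure probability sits inside $\gamma_{\new,k}$ so $p_k\ge p_K$, and the fact that $\mathcal{I}_{j,K+1}$ involves no new projection-PCA step so exact support recovery (and hence $\Gamma^e_{j,K+1}$) follows deterministically. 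This is exactly the paper's route.

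There is, however, one concrete misstep in your sin-$\theta$ inequality. You write
\[
\zeta_{j,k}\ \le\ \zeta_{j,*}\ +\ \frac{\|\mathcal{H}_{j,k}\|_2}{\lambda_{\min}(A_{j,k})-\|A_{j,k,\perp}\|_2-\|\mathcal{H}_{j,k}\|_2},
\]
attributing the extra $\zeta_{j,*}$ to passing between $\Span(D_{j,\new})$ and $\Span(P_{j,\new})$. That additive term should not be there. Because $\Phat_{j,\new,k}'\Phat_{j-1}=0$, one has $(I-\Phat_{j-1}\Phat_{j-1}'-\Phat_{j,\new,k}\Phat_{j,\new,k}')P_{j,\new}=(I-\Phat_{j,\new,k}\Phat_{j,\new,k}')\Phi_{j,0}P_{j,\new}=(I-\Phat_{j,\new,k}\Phat_{j,\new,k}')D_{j,\new}$, and since $D_{j,\new}=E_{j,\new}R_{j,\new}$ with $\|R_{j,\new}\|_2\le 1$, the quantity $\zeta_{j,k}$ is bounded \emph{directly} by $\|(I-\Phat_{j,\new,k}\Phat_{j,\new,k}')E_{j,\new}\|_2$, which is what the sin-$\theta$ theorem controls. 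The effect of $\zeta_{j,*}$ enters only \emph{inside} $\mathcal{H}_{j,k}$ (through the $D_{j,*}a_{t,*}$ piece of $\Phi_{j,0}L_t$), not as an additive penalty outside the quotient. The recursion defining $\zeta_k^+$ in Definition~\ref{zetakplus} is calibrated to the quotient alone; with your extra term the target $\zeta_{j,k}\le\zeta_k^+$ would demand the strictly stronger $Q_k\le\zeta_k^+-\zeta_{j,*}^+$, which Lemma~\ref{termbnds} does not furnish, and at $k\approx K$ where $\zeta_K^+\approx c\zeta$ and $\zeta_{j,*}^+=(r_0+(j-1)c)\zeta$ are of the same order the slack in the constants would not absorb it. Drop the $\zeta_{j,*}$ term and the argument closes exactly as in the paper.
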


We prove this lemma in Section \ref{pfoflem2}.

%
%
%

\begin{remark}\label{Gamma_rem2}
Using Lemma \ref{expzeta} and Remark \ref{zetastar} and the value of $K$ given in the theorem, it is easy to see that, under the assumptions of Theorem \ref{thm1},
$$\Gamma_{j,0} \cap (\cap_{k=1}^{K+1} \check\Gamma_{j,k}) \subseteq \Gamma_{j+1,0}.$$
Thus $\mathbf{P}(\Gamma_{j+1,0}^e|\Gamma^e_{j,0}) \ge \mathbf{P}(\check\Gamma^e_{j,1}, \dots \check\Gamma^e_{j,K+1} | \Gamma^e_{j,0})$.
\end{remark}

\vspace{.4 cm}

\begin{proof}[\textbf{Proof of Theorem \ref{thm1}}]

The theorem is a direct consequence of Lemmas \ref{expzeta}, \ref{mainlem}, and Lemma \ref{subset_lem}.
From Remark \ref{Gamma_rem2}, $\mathbf{P}(\Gamma_{j+1,0}^e|\Gamma_{j,0}^e) \ge \mathbf{P}(\check\Gamma_{j,1}^e, \dots \check\Gamma_{j,K+1}^e | \Gamma_{j,0}^e) = \prod_{k=1}^{K+1} P(\check\Gamma_{j,k}^e|\Gamma_{j,k-1}^e)$. Also, since $\Gamma_{j+1,0} \subseteq \Gamma_{j,0}$, using Lemma \ref{subset_lem}, $\mathbf{P}(\Gamma_{J+1,0}^e | \Gamma_{1,0}^e) = \prod_{j=1}^{J}  \mathbf{P}(\Gamma_{j+1,0}^e | \Gamma_{j,0}^e)$.
Thus,
\[
\mathbf{P}(\Gamma_{J+1,0}^e | \Gamma_{1,0}^e) \ge \prod_{j=1}^{J}  \prod_{k=1}^{K+1} \mathbf{P}(\check\Gamma_{j,k}^e|\Gamma_{j,k-1}^e)
\]
Using Lemma \ref{mainlem}, and the fact that $p_k(\alpha,\zeta) \geq p_K(\alpha,\zeta)$ (see their respective definitions in Lemma \ref{termbnds} and equation \eqref{pk} and observe that $p_k(\alpha,\zeta)$ is decreasing in $k$), we get 
$$\mathbf{P}(\Gamma_{J+1,0}^e| \Gamma_{1,0}) \geq {p}_K(\alpha,\zeta)^{KJ}.$$
Also, $\mathbf{P}(\Gamma_{1,0}^e)=1$. This follows by the assumption on $\hat{P}_0$ and Lemma \ref{cslem}. Thus, $\mathbf{P}(\Gamma_{J+1,0}^e) \geq {p}_K(\alpha,\zeta)^{KJ}$.

Using the definition of $\alpha_\add$, and $\alpha \geq \alpha_{\add}$, we get that
$$\mathbf{P}(\Gamma_{J+1,0}^e) \geq {p}_K(\alpha,\zeta)^{KJ} \geq 1- n^{-10}$$

The event $\Gamma_{J+1,0}^e$ implies that $\That_t=T_t$ and $e_t$ satisfies (\ref{etdef0}) for all $t < t_{J+1}$. Using Remarks \ref{zetastar} and \ref{Gamma_rem2}, $\Gamma_{J+1,0}^e$ implies that all the bounds on the subspace error hold. Using these, $\|a_{t,\new}\|_2 \le \sqrt{c} \gamma_{\new,k}$, and $\|a_t\|_2 \le \sqrt{r} \gamma_*$, $\Gamma_{J+1,0}^e$ implies that all the bounds on $\|e_t\|_2$ hold (the bounds are obtained in Lemma \ref{cslem}).

Thus, all conclusions of the the result hold w.p. at least $1- n^{-10}$.
\end{proof}

\subsection{Proof of Lemma \ref{expzeta} } \label{pfoflem1}

\begin{proof}
First recall the definition of $\zeta_k^+$ (Definition \ref{zetakplus}).  Recall from Definition \ref{kappaplus} that $\kappa_s^+ := 0.15$ , $\phi^+ := 1.1735$, and $g^+ := \sqrt{2}$.  So we can make these substitutions directly.  Notice that $\zeta_k^+$ is an increasing function of $\zeta_*^+, \zeta, c$, and $f$. Therefore we can use upper bounds on each of these quantities to get an upper bound on $\zeta_k^+$.  From the definition of $\zeta$ in Theorem \ref{thm1} and $\zeta_{j,*}^+ := (r_0 + (j-1)c)\zeta$ we get
\begin{itemize}
\item $\zeta_{j,*}^+ \leq 10^{-4} $
\item $\zeta_{j,*}^+ f \leq 1.5 \times 10^{-4}$
\item $c\zeta \leq 10^{-4}$
\item $\ds \frac{\zeta_{j,*}^+}{c\zeta} = \frac{(r_0+(j-1)c)\zeta}{c\zeta} \leq \frac{r_0 +(J-1)c}{c} = \frac{r}{c}\leq r$ (Without loss of generality we can assume that $c = c_{\max} \geq 1$ because if $c=0$ then there is no subspace estimation problem to be solved. $c=0$ is the trivial case where all conclusions of Theorem \ref{thm1} will hold just using Lemma \ref{cslem}.) 
\item $\zeta_{j,*}^+ f r \leq r^2 f \zeta \leq 1.5 \times 10^{-4}$
\end{itemize}
First we prove by induction that $\zeta_k^+ \leq \zeta_{k-1}^+  \leq 0.6$ for all $k\geq 1$.  Notice that $\zeta_0^+ =1$ by definition.
\begin{itemize}
\item Base case ($k=1$):  Using the above bounds we get that $\zeta_1^+ < 0.5985 < 1 = \zeta_0^+$.
\item For the induction step, assume that $\zeta_{k-1}^+ \leq \zeta_{k-2}^+$.  Then because $\zeta_k^+$ is increasing in $\zeta_{k-1}^+$ (denote the increasing function by $f_{inc}$) we get that $\zeta_{k}^+ = f_{inc}(\zeta_{k-1}^+) \leq f_{inc}(\zeta_{k-2}^+) = \zeta_{k-1}^+$.
\end{itemize}
\begin{enumerate}
\item To prove the first claim, first rewrite $\zeta_k^+$ as
\begin{multline*}
\zeta_k^+ = \zeta_{k-1}^+ \frac{ C \kappa_s^+ g^+  + \tilde{C} (\kappa_s^+)^2 g^+ (\zeta_{k-1}^+) }{1 - (\zeta_*^+)^2 - (\zeta_*^+)^2 f - 0.125 c \zeta - b} + \\
c\zeta\frac{ C(\zeta_*^+ f)\frac{ (\zeta_*^+)}{c\zeta} + .125}{1 - (\zeta_*^+)^2 - (\zeta_*^+)^2 f - 0.125 c \zeta - b}
\end{multline*}
where $C, \tilde{C},$ and $b$ are as in Definition \ref{zetakplus}.
Using the above bounds including $\zeta_{k-1}^+ \leq .6$ we get that
\begin{align*}
\zeta_k^+  &\leq \zeta_{k-1}^+(0.6)  + c\zeta(0.16) \\
&= \zeta_0^+ (0.6)^{k} + \sum_{i = 0}^{k-1}(0.6)^k(0.16) c\zeta \\
&\leq \zeta_0^+ (0.6)^{k} + \sum_{i = 0}^{\infty}(0.6)^k(0.16) c\zeta \\
&\leq 0.6^k + 0.4 c\zeta
\end{align*}


\item To see that the denominator is positive, observe that the denominator is decreasing in all of its arguments: $\zeta_{j,*}^+, \zeta_{j,*}^+ f, c\zeta$, and $b$.  Using the same upper bounds as before, we get that the denominator is greater than or equal to $0.78 > 0$.

\end{enumerate}

\end{proof}

\subsection{Proof of Lemma \ref{mainlem} }\label{pfoflem2}
The proof of Lemma \ref{mainlem} follows from two lemmas. The first, Lemma \ref{cslem}, is the final conclusion for the projected CS step for $t\in \mathcal{I}_{j,k}$. Its proof follows using Lemmas \ref{expzeta}, \ref{delta_kappa}, \ref{hatswitch}, the CS error bound (Theorem \ref{candes_csbound}) and some straightforward steps.
The second, Lemma \ref{zetak}, is the final conclusion for one projection PCA step, i.e. for $t\in \mathcal{I}_{j,k}$. Its proof is much longer. It first uses a lemma based on the $\sin\theta$ and Weyl theorems (Theorems \ref{sin_theta} and \ref{weyl}) to get a bound on $\zeta_k$. This is Lemma \ref{zetakbnd}. Next we bound $\kappa_s(D_\new)$ in Lemma \ref{Dnew0_lem}. Finally in Lemma \ref{termbnds}, we use the expression for $e_t$ from Lemma \ref{cslem}, the matrix Hoeffding inequalities (Corollaries \ref{hoeffding_nonzero} and \ref{hoeffding_rec}) and the bound from Lemma \ref{Dnew0_lem} to bound each of the terms in the bound on $\zeta_k$ to finally show that, conditioned on $\Gamma_{j,k-1}^e$, $\zeta_k \le \zeta_k^+$ w.h.p..
We state the two lemmas first and then proceed to prove them in order.

\begin{lem}[Projected CS Lemma]\label{cslem}
Assume that all conditions of Theorem \ref{thm1} hold.
\ben
\item For all $t \in  \mathcal{I}_{j,k}$, for any $k=1,2,\dots K$, if $X_{j,k-1} \in \Gamma_{j,k-1}$,
\ben
\item the projection noise $\beta_t$ satisfies $\|\beta_t\|_2 \leq \zeta_{k-1}^+ \sqrt{c} \gamma_{\new,k} + \zeta_{*}^+ \sqrt{r} \gamma_* \le \sqrt{c} 0.72^{k-1} \gamma_{\new} + 1.06 \sqrt{\zeta} \le \xi_0$.
\item the CS error satisfies $\|\hat{S}_{t,\cs} - S_t\|_2 \le7 \xi_0$.
\item  $\hat{T}_t = T_t$
\item $e_t$ satisfies \eqref{etdef0} and $\|e_t\|_2 \leq \phi^+ [\kappa_s^+ \zeta_{k-1}^+ \sqrt{c} \gamma_{\new,k} + \zeta_{*}^+ \sqrt{r} \gamma_*] \le
 0.18 \cdot 0.72^{k-1} \sqrt{c}\gamma_{\new} + 1.17 \cdot 1.06 \sqrt{\zeta}$.  Recall that \eqref{etdef0} is
\[
 I_{T_t} {(\Phi_{(t)})_{T_t}}^{\dag} \beta_t = I_{T_t} [ (\Phi_{(t)})_{T_t}'(\Phi_{(t)})_{T_t}]^{-1}  {I_{T_t}}' \Phi_{(t)} L_t
\]

\een
\item For all $k=1,2,\dots K$, $\mathbf{P}(\That_t = T_t \ \text{and} \ e_t \ \text{satisfies (\ref{etdef0})}  \text{ for all } t \in \mathcal{{I}}_{j,k}  | X_{j,k-1} ) = 1$   for all $X_{j,k-1} \in \Gamma_{j,k-1}$.
\een
\end{lem}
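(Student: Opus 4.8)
The plan is to prove part~1 first, deterministically on the event $\{X_{j,k-1}\in\Gamma_{j,k-1}\}$, and then read off part~2 as its probability-one restatement: the only randomness not frozen by $X_{j,k-1}$ lives in $\{a_t:t\in\mathcal{I}_{j,k}\}$, and every bound used on those $a_t$'s holds surely under Signal Model~\ref{Ltmodel}, the boundedness hypothesis $\|a_t\|_\infty\le\gamma_*$, and the slow-change bound \eqref{atnew_inc}. For every $t\in\mathcal{I}_{j,k}$ the algorithm uses $\Phi_{(t)}=\Phi_{j,k-1}=I-\Phat_{j-1}\Phat_{j-1}'-\Phat_{j,\new,k-1}\Phat_{j,\new,k-1}'$ (with $\Phat_{j,\new,0}=[.]$), so $\beta_t=\Phi_{j,k-1}L_t=D_{j,*,k-1}a_{t,*}+D_{j,\new,k-1}a_{t,\new}$. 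Membership in $\Gamma_{j,k-1}$ gives $\|D_{j,*,k-1}\|_2\le\zeta_{j,*}\le\zeta_{j,*}^+$ and $\|D_{j,\new,k-1}\|_2=\zeta_{j,k-1}\le\zeta_{k-1}^+$, while surely $\|a_{t,*}\|_2\le\sqrt r\gamma_*$ and $\|a_{t,\new}\|_2\le\sqrt c\,\gamma_{\new,k}$; hence $\|\beta_t\|_2\le\zeta_{k-1}^+\sqrt c\,\gamma_{\new,k}+\zeta_{j,*}^+\sqrt r\gamma_*$. Substituting $\zeta_{k-1}^+\le 0.6^{k-1}+0.4c\zeta$ from Lemma~\ref{expzeta}, $\gamma_{\new,k}\le 1.2^{k-1}\gamma_\new$, and the bound $\zeta\le 1/(r^3\gamma_*^2)$ from Theorem~\ref{thm1} (so that $\zeta_{j,*}^+\sqrt r\gamma_*\le r^{3/2}\gamma_*\zeta\le\sqrt\zeta$) collapses this to $\|\beta_t\|_2\le\sqrt c\,0.72^{k-1}\gamma_\new+1.06\sqrt\zeta\le\xi_0=\xi$, which is 1(a); the last inequality uses $0.72^{k-1}\le 1$ and $\sqrt r+\sqrt c\ge 1.06$.

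For 1(b)--1(c), write $y_t=\Phi_{(t)}S_t+\beta_t$ with $S_t$ being $s$-sparse and $\|\beta_t\|_2\le\xi$; Theorem~\ref{candes_csbound} then yields $\|\hat S_{t,\cs}-S_t\|_2\le C_1\xi$ once $\delta_{2s}(\Phi_{(t)})$ is small enough. By Lemma~\ref{delta_kappa} this RIC equals $\kappa_{2s}^2(\Phat_{(t-1)})$, and I would bound $\kappa_{2s}(\Phat_{(t-1)})$ via Lemma~\ref{hatswitch} together with the splittings $\Phat_{j-1}\Phat_{j-1}'=P_{j-1}P_{j-1}'+\Delta$ with $\|\Delta\|_2\le 2\zeta_{j,*}^+$, and $\Phat_{j,\new,k-1}=P_{j,\new}(P_{j,\new}'\Phat_{j,\new,k-1})+Q_{j,\new,k-1}$, so that the denseness hypotheses $\kappa_{2s}(P_{j-1})\le 0.3$, $\kappa_{2s}(P_{j,\new})\le 0.15$, $\kappa_{2s}(Q_{j,\new,k-1})\le 0.15$ of Theorem~\ref{thm1} push $\delta_{2s}(\Phi_{(t)})$ below the level at which $C_1\le 7$ (cf.\ the remark after Theorem~\ref{candes_csbound}); hence $\|\hat S_{t,\cs}-S_t\|_2\le 7\xi_0$. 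Exact support recovery is then the usual thresholding argument: for $i\in T_t$, $|(\hat S_{t,\cs})_i|\ge S_{\min}-7\xi_0>\omega$, and for $i\notin T_t$, $|(\hat S_{t,\cs})_i|\le 7\xi_0\le\omega$, using $7\xi\le\omega\le S_{\min}-7\xi$ and $14\xi_0\le S_{\min}$; so $\hat T_t=T_t$.

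For 1(d), with $\hat T_t=T_t$ the LS step gives $(\hat S_t)_{T_t}=((\Phi_{(t)})_{T_t})^{\dag}y_t$ and $(\hat S_t)_{T_t^c}=0$; substituting $y_t=(\Phi_{(t)})_{T_t}(S_t)_{T_t}+\beta_t$ and using $((\Phi_{(t)})_{T_t})^{\dag}(\Phi_{(t)})_{T_t}=I$ (valid since $\delta_s(\Phi_{(t)})<1$) together with $\Phi_{(t)}'\Phi_{(t)}=\Phi_{(t)}$ yields \eqref{etdef0}. Then $\|e_t\|_2\le\phi_{k-1}\|{I_{T_t}}'\Phi_{(t)}L_t\|_2\le\phi^+\big[\kappa_s(D_{j,\new,k-1})\,\zeta_{j,k-1}\sqrt c\,\gamma_{\new,k}+\zeta_{j,*}\sqrt r\gamma_*\big]\le\phi^+\big[\kappa_s^+\zeta_{k-1}^+\sqrt c\,\gamma_{\new,k}+\zeta_{j,*}^+\sqrt r\gamma_*\big]$, where $\phi_{k-1}\le\phi^+$ comes from $\delta_s(\Phi_{j,k-1})\le 1-1/\phi^+$, the $\kappa_s$-factor on the first term is the assumed denseness of $D_{j,\new,k}$ (and for $k=1$ it is the analogous bound on $\kappa_s(D_{j,\new})$ from Remark~\ref{Dnew0_rem} and Lemma~\ref{Dnew0_lem}), and the second term just uses $\|{I_{T_t}}'D_{j,*,k-1}\|_2\le\|D_{j,*,k-1}\|_2\le\zeta_{j,*}$. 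Substituting the numbers exactly as in the first paragraph gives the stated bound. Part~2 is then immediate, since every inequality above is deterministic given $X_{j,k-1}\in\Gamma_{j,k-1}$ and the a.s.\ bounds on $\|a_t\|_\infty$ and $\|a_{t,\new}\|_\infty$ hold for all $t\in\mathcal{I}_{j,k}$.

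The one step with genuine content rather than constant-pushing is the RIC estimate in the second paragraph: showing that the partially estimated subspace $\Span(\Phat_{(t-1)})$ is dense enough that $\delta_{2s}(\Phi_{(t)})$ stays below the threshold giving $C_1\le 7$. This is precisely where the denseness hypotheses on $P_{j,\new}$ and on the algorithm-dependent matrix $Q_{j,\new,k}$ are used, and it is what forces the tight numerical choices $\kappa_{2s,*}^+=0.3$, $\kappa_{2s,\new}^+=0.15$, $\tilde\kappa_{2s}^+=0.15$; everything else is the CS bound of Theorem~\ref{candes_csbound} combined with algebra against the bounds of Lemma~\ref{expzeta} and Theorem~\ref{thm1}.
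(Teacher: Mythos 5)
Your argument is essentially the paper's own proof: you decompose $\beta_t=D_{*,k-1}a_{t,*}+D_{\new,k-1}a_{t,\new}$ and bound it using the $\Gamma_{j,k-1}$ membership plus the arithmetic of Fact~\ref{constants}; you bound the RIC of $\Phi_{(t)}$ by the same splittings of $\Phat_{j-1}\Phat_{j-1}'$ and $\Phat_{j,\new,k-1}$ that the paper encapsulates in Lemma~\ref{RIC_bnd} and Corollary~\ref{RICnumbnd}; you then invoke Theorem~\ref{candes_csbound}, threshold, and derive the LS expression and $\|e_t\|_2$ bound exactly as the paper does, reading part~2 off as the probability-one restatement of part~1. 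The only rough spot is your gloss on the intermediate constant $1.06$ --- the paper's own proof actually produces $\sqrt{\zeta}(\sqrt r+0.4\sqrt c)$ at that point, and neither version quite explains the $1.06$ in the statement --- but this is cosmetic since the governing inequality is $\le\xi_0$, which both you and the paper establish.
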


\begin{lem}[Projection PCA Lemma] \label{zetak} 
Assume that all the conditions of Theorem \ref{thm1} hold. Then, for all $k=1,2, \dots K$,
\[
\mathbf{P}(\zeta_{k} \le \zeta_k^+|\egam_{j,k-1}) \ge p_k(\alpha,\zeta)
\]
where $\zeta_k^+$ is defined in Definition \ref{zetakplus} and $p_k(\alpha,\zeta)$ is defined in \eqref{pk}.
\end{lem}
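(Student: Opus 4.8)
\noindent\textbf{Proof plan for Lemma \ref{zetak}.}
I would prove the lemma along the route indicated just above its statement: a deterministic bound on $\zeta_{j,k}$ from matrix perturbation theory, a denseness fact, high-probability bounds on the random quantities entering the deterministic bound, and an assembly step. For the deterministic bound (this is Lemma \ref{zetakbnd}), recall from Definition \ref{defHk} that $\Phat_{j,\new,k}$ consists of the top $c_{j,\new}$ eigenvectors of $\mathcal{A}_{j,k}+\mathcal{H}_{j,k}=\frac1\alpha\sum_{t\in\mathcal{I}_{j,k}}\Phi_{j,0}\Lhat_t\Lhat_t'\Phi_{j,0}$, written in the $[E_{j,\new}\ E_{j,\new,\perp}]$ basis. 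Using $\Phat_{j,\new,k}'\Phat_{j-1}=0$ and $\Phat_{j-1}'E_{j,\new}=0$ (the latter because $E_{j,\new}$ is a basis for $\Span(\Phi_{j,0}P_{j,\new})$, whose columns are already perpendicular to $\Span(\Phat_{j-1})$) one first shows $D_{j,\new,k}=(I-\Phat_{j,\new,k}\Phat_{j,\new,k}')E_{j,\new}R_{j,\new}$, hence $\zeta_{j,k}=\|D_{j,\new,k}\|_2\le\|(I-\Phat_{j,\new,k}\Phat_{j,\new,k}')E_{j,\new}\|_2$. Applying Theorem \ref{sin_theta} with $\mathcal{A}=\mathcal{A}_{j,k}$, $\mathcal{H}=\mathcal{H}_{j,k}$, $E=E_{j,\new}$, $F=\Phat_{j,\new,k}$, and Weyl's theorem (Theorem \ref{weyl}) to bound $\lambda_{\max}(\Lambda_{k,\perp})$, gives a bound of the form
\[
\zeta_{j,k}\ \le\ \frac{\|H_{j,k}\|_2+\|B_{j,k}\|_2}{\lambda_{\min}(A_{j,k})-\lambda_{\max}(A_{j,k,\perp})-\|\mathcal{H}_{j,k}\|_2},
\]
valid for every value of $X_{j,k-1}$ whenever the denominator is positive.

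Next I would condition on $X_{j,k-1}\in\Gamma_{j,k-1}$, which supplies $\zeta_{j,*}\le\zeta_{j,*}^+$ and, by Lemma \ref{cslem}, $\That_t=T_t$ together with the closed-form $e_t=I_{T_t}[(\Phi_{j,0})_{T_t}'(\Phi_{j,0})_{T_t}]^{-1}I_{T_t}'\Phi_{j,0}L_t$ and the norm bound of Lemma \ref{cslem}(1d) for all $t\in\mathcal{I}_{j,k}$. Using $P_{j,\new}'P_{j-1}=0$ one gets $\kappa_s(D_{j,\new})\le\kappa_s(P_{j,\new})+\zeta_{j,*}^+\le\kappa_s^+$ (Lemma \ref{Dnew0_lem}, cf.\ Remark \ref{Dnew0_rem}). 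Now each of $A_{j,k},A_{j,k,\perp},H_{j,k},H_{j,k,\perp},B_{j,k}$ is $\frac1\alpha\sum_{t\in\mathcal{I}_{j,k}}(\cdot)$ of matrices that, given $X_{j,k-1}$, are conditionally independent (the $a_t$ are mutually independent) and conditionally bounded (since $\|a_t\|_\infty\le\gamma_*$). For the signal terms $A_{j,k},A_{j,k,\perp}$ I would write $L_t=P_{j-1}a_{t,*}+P_{j,\new}a_{t,\new}$, compute $\frac1\alpha\sum_t\E[\,\cdot\mid X_{j,k-1}]$ from the block-diagonal $\Lambda_t$ and the near-orthogonality facts ($\|\Phi_{j,0}P_{j-1}\|_2\le\zeta_{j,*}^+$ and \eqref{kappa plus}--\eqref{kappa new plus}), bound $\lambda_{\min}$ below for $A_{j,k}$ and $\lambda_{\max}$ above for $A_{j,k,\perp}$, and apply Corollary \ref{hoeffding_nonzero}. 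For the Hermitian perturbation terms $H_{j,k},H_{j,k,\perp}$ I would bound the conditional expectations of the $e_te_t'$, $L_te_t'$, $e_tL_t'$ blocks — this is where the denseness bound on $D_{j,\new,k}$ controls $\|I_{T_t}'D_{j,\new,k}\|_2\le\kappa_s(D_{j,\new,k})\,\zeta_{j,k}$ and similar quantities — and again apply Corollary \ref{hoeffding_nonzero}; for the non-Hermitian $B_{j,k}$ I would apply Corollary \ref{hoeffding_rec}. Each application contributes an event of probability at least $1-(\text{const})\,n\exp(-\alpha\zeta^2(\lambda^-)^2/(\text{const}\cdot\max(\gamma_{\new,k}^4,\gamma_*^4,\dots)))$; intersecting the $O(1)$ such events with those of Lemma \ref{cslem} defines $p_k(\alpha,\zeta)$ of equation \eqref{pk}. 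This is the content of Lemma \ref{termbnds}.

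On that intersection I would substitute all the eigenvalue and norm bounds into the inequality above, replace $\zeta_{j,k-1}$ by $\zeta_{k-1}^+$ (legitimate since $X_{j,k-1}\in\Gamma_{j,k-1}$), factor out $\lambda^-$, and write $f=\lambda^+/\lambda^-$ and $g\le g^+=\sqrt2$; the numerator then collapses to $b+0.125c\zeta$ and the denominator to $1-(\zeta_{j,*}^+)^2-(\zeta_{j,*}^+)^2f-0.125c\zeta-b$ with $b,C,C',\tilde C$ exactly as in Definition \ref{zetakplus}, so $\zeta_{j,k}\le\zeta_k^+$. Positivity of this denominator (hence validity of the bound, and a posteriori the $\sin\theta$ gap condition $\lambda_{\min}(A_{j,k})>\lambda_{\max}(A_{j,k,\perp})+\|\mathcal{H}_{j,k}\|_2$) is Lemma \ref{expzeta}(2).

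The main obstacle is the expectation analysis for $H_{j,k}$ and $B_{j,k}$: because $e_t$ is correlated with $L_t$ — precisely the reason projection-PCA rather than ordinary PCA is needed — these conditional expectations do not vanish and must be expanded and bounded with care, tracking which contributions carry the small, slowly growing factor $\gamma_{\new,k}\sim1.2^{k-1}\gamma_\new$, which carry the possibly large $\gamma_*$, and which carry the tiny $\zeta$ or $\zeta_{j,*}^+$. The numerical constants ($\kappa_{2s,*}^+=0.3$, $\kappa_{2s,\new}^+=0.15$, $\kappa_s^+=0.152$, $g^+=\sqrt2$, and the bound on $\zeta$) are dictated by the requirement that the recursion $\zeta_{k-1}^+\mapsto\zeta_k^+$ be a genuine contraction, which is what ultimately yields the exponential decay of Lemma \ref{expzeta}; making these constants line up is where the real effort lies. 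A secondary but essential bookkeeping task is verifying the hypotheses of the conditional matrix Hoeffding corollaries (conditional independence and conditional boundedness given $X_{j,k-1}$), which rest on mutual independence of the $a_t$ and the uniform bound $\|a_t\|_\infty\le\gamma_*$.
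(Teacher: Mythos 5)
Your proposal correctly reconstructs the paper's own route: the deterministic $\sin\theta$/Weyl bound on $\zeta_{j,k}$ (Lemma~\ref{zetakbnd}), the denseness estimate for $D_{j,\new}$ (Lemma~\ref{Dnew0_lem}), the conditional matrix-Hoeffding bounds on $A_{j,k}$, $A_{j,k,\perp}$, and $\mathcal{H}_{j,k}$ after invoking Lemma~\ref{cslem} for the closed form of $e_t$ (Lemma~\ref{termbnds}), and assembly by the union bound into $p_k(\alpha,\zeta)$ of \eqref{pk}. The only cosmetic deviation is bounding $\|\mathcal{R}_k\|_2$ by $\|H_{j,k}\|_2+\|B_{j,k}\|_2$ directly rather than via $\|\mathcal{H}_{j,k}\|_2\le\max(\|H_{j,k}\|_2,\|H_{j,k,\perp}\|_2)+\|B_{j,k}\|_2$; both are valid and lead to the same bookkeeping since $\|\mathcal{H}_{j,k}\|_2$ must still be controlled in the denominator.
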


\vspace{.5 cm}

\begin{proof}[Proof of Lemma \ref{mainlem}]
Observe that $\mathbf{P}(\Gamma_{j,k}|\Gamma_{j,k-1}) = \mathbf{P}( \check{\Gamma}_{j,k} | \Gamma_{j,k-1})$. The lemma then follows by combining Lemma \ref{zetak} and item 2 of Lemma \ref{cslem} and Lemma \ref{rem_prob}.
\end{proof}

\vspace{.4 cm}

\subsection{Proof of Lemma \ref{cslem}}
We begin by first bounding the RIC of the CS matrix $\Phi_k.$

\begin{lem}[Bounding the RIC of $\Phi_k$] \label{RIC_bnd}
Recall that $\zeta_*:= \|(I-\Phat_*{\Phat_*}')P_*\|_2$.  The following hold.
\ben
\item Suppose that a basis matrix $P$ can be split as $P = [P_1, P_2]$ where $P_1$ and $P_2$ are also basis matrices. Then $\kappa_s^2 (P) = \max_{T: |T| \le s} \|I_T'P\|_2^2 \le \kappa_s^2 (P_1) + \kappa_s^2 (P_2)$.
\item $\kappa_s^2(\Phat_*) \leq \kappa_{s,*}^2 + 2\zeta_*$
\item $\kappa_s (\Phat_{\new,k}) \leq \kappa_{s,\new} + \tilde{\kappa}_{s,k} \zeta_k + \zeta_*$
\item $\delta_{s} (\Phi_0) = \kappa_s^2 (\Phat_*) \leq  \kappa_{s,*}^2 + 2 \zeta_*$
\item $\delta_{s}(\Phi_k)  = \kappa_s^2 ([\Phat_* \ \Phat_{\new,k}]) \leq \kappa_s^2 (\Phat_*) + \kappa_s^2 (\Phat_{\new,k})\leq \kappa_{s,*}^2 + 2\zeta_* + (\kappa_{s,\new} + \tilde{\kappa}_{s,k} \zeta_k + \zeta_*)^2$ for $k \ge 1$
\een
\end{lem}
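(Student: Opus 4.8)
The plan is to work through the six claims in order, since each builds on the previous ones. Claim 1 is the fundamental sub-additivity fact for $\kappa_s$ of a concatenated basis matrix: for any $T$ with $|T|\le s$, write $I_T'P = [\,I_T'P_1 \ \ I_T'P_2\,]$, so $\|I_T'P\|_2^2 = \|I_T'P(I_T'P)'\|_2 = \|I_T'P_1P_1'I_T + I_T'P_2P_2'I_T\|_2 \le \|I_T'P_1\|_2^2 + \|I_T'P_2\|_2^2$ by the triangle inequality for the operator norm of a sum of two p.s.d. matrices. Taking the max over $T$ gives the claim. Note this does \emph{not} require $P$ itself to be a basis matrix in the sense $P'P=I$ — only that $P_1,P_2$ are — but here $[P_1\ P_2]$ is a basis matrix because $P_2'P_1=0$ in all our applications.

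Claim 2 follows from Claim 1 applied with $P = \Phat_*$ thought of not as a concatenation but via a perturbation argument: we have $\kappa_s^2(\Phat_*) = \max_{|T|\le s}\|I_T'\Phat_*\Phat_*'I_T\|_2$, and since $\|\Phat_*\Phat_*' - P_*P_*'\|_2 \le 2\zeta_*$ by Lemma \ref{hatswitch} item 2, we get $\|I_T'\Phat_*\Phat_*'I_T\|_2 \le \|I_T'P_*P_*'I_T\|_2 + 2\zeta_* \le \kappa_{s,*}^2 + 2\zeta_*$. Claim 3 is the analogous bound for $\Phat_{\new,k}$: the key is to relate $\Phat_{\new,k}$ to $P_{\new}$. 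Using $Q_{j,\new,k} = (I - P_{\new}P_{\new}')\Phat_{\new,k}$ we can write $\Phat_{\new,k} = P_{\new}P_{\new}'\Phat_{\new,k} + Q_{j,\new,k}$, so $\|I_T'\Phat_{\new,k}\|_2 \le \|I_T'P_{\new}\|_2 \|P_{\new}'\Phat_{\new,k}\|_2 + \|I_T'Q_{j,\new,k}\|_2$; bounding $\|P_{\new}'\Phat_{\new,k}\|_2 \le 1$ and $\|I_T'Q_{j,\new,k}\|_2 \le \tilde\kappa_{s,k}\|Q_{j,\new,k}\|_2$ — wait, more carefully $Q_{j,\new,k}$ is not a basis matrix, so one should instead write $\|I_T' Q_{j,\new,k}\|_2 \le \kappa_s(Q_{j,\new,k})\|Q_{j,\new,k}\|_2$ and then use $\|Q_{j,\new,k}\|_2 = \|(I-P_{\new}P_{\new}')\Phat_{\new,k}\|_2 \le \zeta_k + \zeta_*$ (this last bound requires a short argument: $\Phat_{\new,k}$ lies in $\Span([\Phat_*\ \Phat_{\new,k}])$, decompose and use that $(I - \Phat_*\Phat_*' - \Phat_{\new,k}\Phat_{\new,k}')P_{\new}$ has norm $\zeta_k$ together with $\zeta_*$ for the $\Phat_*$ part). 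Assembling gives Claim 3; I would double-check the exact route the authors intend here, since $Q_{j,\new,k}$ being a basis matrix or not changes the constants slightly.

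Claims 4, 5 are then immediate: Lemma \ref{delta_kappa} gives $\delta_s(I-PP') = \kappa_s^2(P)$ for any basis matrix $P$, so $\delta_s(\Phi_0) = \delta_s(I - \Phat_*\Phat_*') = \kappa_s^2(\Phat_*)$, which combined with Claim 2 gives Claim 4. For Claim 5, $\Phi_k = I - \Phat_*\Phat_*' - \Phat_{\new,k}\Phat_{\new,k}' = I - [\Phat_*\ \Phat_{\new,k}][\Phat_*\ \Phat_{\new,k}]'$ since $\Phat_{\new,k}'\Phat_* = 0$, so $\delta_s(\Phi_k) = \kappa_s^2([\Phat_*\ \Phat_{\new,k}])$ by Lemma \ref{delta_kappa}, and then Claim 1 plus Claims 2 and 3 finish it. I expect the main obstacle to be bookkeeping in Claim 3 — getting the norm bound $\|(I-P_{\new}P_{\new}')\Phat_{\new,k}\|_2 \le \zeta_k + \zeta_*$ right, and being careful about whether $Q_{j,\new,k}$ should be treated via its denseness coefficient $\tilde\kappa_{2s}$ directly or via $\kappa_s(Q_{j,\new,k})$ times its norm — the rest is routine application of Lemma \ref{hatswitch}, Lemma \ref{delta_kappa}, and the triangle inequality.
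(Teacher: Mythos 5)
Your proof follows the same route as the paper's: Claim 1 from the block expansion of $I_T' P P' I_T$, Claim 2 from the $2\zeta_*$ perturbation bound of Lemma~\ref{hatswitch}, Claim 3 by splitting $\Phat_{\new,k}$ along and perpendicular to $\Span(P_\new)$ and invoking $\|(I-P_\new P_\new')\Phat_{\new,k}\|_2 = \|(I-\Phat_{\new,k}\Phat_{\new,k}')P_\new\|_2 \le \zeta_k + \zeta_*$, and Claims 4--5 from Lemma~\ref{delta_kappa}. Your self-corrected route for $Q_{j,\new,k}$ is exactly what the paper does, since by Definition~\ref{kappaplus} $\tilde\kappa_{s,k}=\max_j\kappa_s(Q_{j,\new,k})$ and the inequality $\|I_T'B\|_2 \le \kappa_s(B)\|B\|_2$ holds for arbitrary $B$, so the two alternatives you were weighing are the same bound.
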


\begin{proof}
\ben
\item Since $P$ is a basis matrix, $\kappa_s^2 (P) = \max_{|T| \leq s} \|{I_T}' P\|_2^2$. Also, $\|{I_T}' P\|_2^2 = \|{I_T}' [P_1, P_2] [P_1, P_2]' I_T \|_2 =  \|{I_T}' (P_1P_1' + P_2 P_2') I_T \|_2 \le \|{I_T}' P_1 P_1' I_T\|_2 + \|{I_T}' P_2 P_2' I_T\|_2$. Thus, the inequality follows.

\item For any set $T$ with $|T| \le s$, $\|{I_T}' \Phat_*\|_2^2  = \|{I_T}' \Phat_* {\Phat_*}'I_T\|_2 =\|{I_T}'( \Phat_* {\Phat_*}' -P_* {P_*}' + P_*{P_*}')I_T\|_2 \leq \|{I_T}'( \Phat_* {\Phat_*}' -P_* {P_*}')I_T\|_2 + \|{I_T}'P_* {P_*}' I_T\|_2 \leq 2\zeta_* + \kappa_{s,*}^2$. The last inequality follows using Lemma \ref{lemma0} with $P=P_*$ and $\hat{P} = \hat{P}_*$.

\item By Lemma \ref{lemma0} with $P = P_*$, $\Phat = \Phat_*$ and $Q = P_\new$, $\|{P_{\new}}' \Phat_*\|_2 \leq \zeta_*$. By Lemma \ref{lemma0} with $P = P_\new$ and $\hat{P} = \hat{P}_{\new,k}$, $\|(I-P_\new P_\new')\Phat_{\new,k}\|_2  = \|(I-\Phat_{\new,k}{\Phat_{\new,k}}')P_{\new}\|_2$.
For any set $T$ with $|T| \leq s$, $\|{I_T}'\Phat_{\new,k}\|_2 \leq \|{I_T}'(I-P_{\new}P_{\new}') \Phat_{\new,k}\|_2 + \|{I_T}'P_{\new}P_{\new}' \Phat_{\new,k}\|_2 \leq \tilde{\kappa}_{s,k} \|(I- P_{\new}{P_{\new}}')\Phat_{\new,k}\|_2 + \|{I_T}'P_{\new}\|_2 = \tilde{\kappa}_{s,k} \|(I-\Phat_{\new,k}{\Phat_{\new,k}}')P_{\new}\|_2 + \|{I_T}'P_{\new}\|_2 \leq \tilde{\kappa}_{s,k} \|D_{\new,k}\|_2 + \tilde{\kappa}_{s,k}\| \Phat_* {\Phat_*}'P_{\new}\|_2 + \|{I_T}'P_{\new}\|_2 \le \tilde{\kappa}_{s,k}\zeta_{k} + \tilde{\kappa}_{s,k} \zeta_* + \kappa_{s,\new} \leq  \tilde{\kappa}_{s,k}\zeta_{k} + \zeta_* + \kappa_{s,\new}$. Taking $\max$ over $|T| \le s$ the claim follows.

\item This follows using Lemma \ref{delta_kappa} and the second claim of this lemma.

\item This follows using Lemma \ref{delta_kappa} and the first three claims of this lemma.
\een

\end{proof}

\begin{corollary}\label{RICnumbnd}
If the conditions of Theorem \ref{thm1} are satisfied, and $X_{j,k-1}\in \Gamma_{j,k-1}$,  then
\ben
\item $\delta_s(\Phi_0) \leq \delta_{2s}(\Phi_0)  \leq {\kappa_{2s,*}^+}^2 + 2\zeta_*^+ <0.1 < 0.1479$
\item $\delta_s(\Phi_{k-1}) \leq \delta_{2s}(\Phi_{k-1}) \leq {\kappa_{2s,*}^+}^2 + 2\zeta_*^+ +(\kappa_{2s,\new}^+ + \tilde{\kappa}_{2s,k-1}^+ \zeta_{k-1}^+ + \zeta_*^+)^2 <0.1479$
\item $\phi_{k-1} \le \frac{1}{1-\delta_s(\Phi_{k-1})} < \phi^+$
\een
\end{corollary}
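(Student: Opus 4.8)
The plan is to obtain Corollary \ref{RICnumbnd} by feeding the numerical constants assumed in Theorem \ref{thm1}, together with the bounds encoded in the conditioning event $\egam_{j,k-1}$, into Lemma \ref{RIC_bnd}. The first point is that Lemma \ref{RIC_bnd} is stated for the sparsity level $s$, but nothing in its proof uses the specific value of that parameter (it only splits basis matrices and invokes Lemmas \ref{hatswitch} and \ref{delta_kappa}, which hold at any sparsity level). Hence the whole chain of inequalities in Lemma \ref{RIC_bnd} holds verbatim with $s$ replaced by $2s$; in particular $\delta_{2s}(\Phi_0) = \kappa_{2s}^2(\Phat_*) \le \kappa_{2s,*}^2 + 2\zeta_*$ and, for $k \ge 2$, $\delta_{2s}(\Phi_{k-1}) \le \kappa_{2s,*}^2 + 2\zeta_* + (\kappa_{2s,\new} + \tilde\kappa_{2s,k-1}\zeta_{k-1} + \zeta_*)^2$ by items 4 and 5 of that lemma. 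For $k=1$ one has $\Phi_{k-1} = \Phi_0$ (as $\Phat_{j,\new,0}$ is the empty matrix), so claim 2 of the corollary reduces to claim 1.

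Next I would extract the inputs. Since the sets $\Gamma_{j,k}$ are nested decreasingly in $k$, the hypothesis $X_{j,k-1} \in \Gamma_{j,k-1}$ gives both $X_{j,k-1} \in \Gamma_{j,0}$ and $X_{j,k-1} \in \check\Gamma_{j,k-1}$; unfolding the definitions of these sets yields $\zeta_* := \zeta_{j,*} \le \zeta_{j,*}^+ =: \zeta_*^+$ and $\zeta_{k-1} := \zeta_{j,k-1} \le \zeta_{k-1}^+$. From the denseness hypotheses \eqref{kappa plus}, \eqref{kappa new plus} and the assumption on $Q_{j,\new,k}$ in Theorem \ref{thm1}, together with Definition \ref{kappaplus}, we get $\kappa_{2s,*} \le \kappa_{2s,*}^+ = 0.3$, $\kappa_{2s,\new} \le \kappa_{2s,\new}^+ = 0.15$ and $\tilde\kappa_{2s,k-1} \le \tilde\kappa_{2s}^+ = 0.15$. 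Finally, from the bound on $\zeta$ in Theorem \ref{thm1} one has $\zeta_*^+ = (r_0 + (j-1)c)\zeta \le 10^{-4}$ (in fact $\le 10^{-4}/r$, which gives a little slack below), $c\zeta \le 10^{-4}$, and Lemma \ref{expzeta} gives $\zeta_{k-1}^+ \le 0.6^{k-1} + 0.4 c\zeta \le 0.6 + 0.4 c\zeta$ for $k \ge 2$.

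Then I would substitute and use $\delta_s(\cdot) \le \delta_{2s}(\cdot)$ (true because the $2s$-RIC condition ranges over a larger family of index sets). This yields $\delta_s(\Phi_0) \le \delta_{2s}(\Phi_0) \le 0.3^2 + 2\cdot 10^{-4} = 0.0902 < 0.1$, which is claim 1; plugging the constants into the $2s$-version of item 5 of Lemma \ref{RIC_bnd} gives, for $k \ge 2$, $\delta_s(\Phi_{k-1}) \le \delta_{2s}(\Phi_{k-1}) \le 0.3^2 + 2\cdot 10^{-4} + (0.15 + 0.15(0.6 + 0.4 c\zeta) + 10^{-4})^2 < 0.1479$, and for $k=1$ claim 1 gives the bound; this is claim 2. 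For claim 3 I would quote $\phi_{k-1} \le \frac{1}{1 - \max_j \delta_s(\Phi_{j,k-1})}$ recorded in Definition \ref{defn_Phi} (the elementary RIC fact after Definition \ref{defn_delta}) and combine it with claim 2 to conclude $\phi_{k-1} < \frac{1}{1 - (1 - 1/\phi^+)} = \phi^+$.

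The one place where care is needed — and the closest thing to an obstacle — is that claims 2 and 3 are numerically tight: the computed bound on $\delta_{2s}(\Phi_{k-1})$ lies within roughly $10^{-5}$ of $1 - 1/\phi^+ = 1 - 1/1.1735$, so the conclusion $\phi_{k-1} < \phi^+$ relies on using $\zeta_{k-1}^+ \le 0.6^{k-1} + 0.4 c\zeta$ rather than a looser bound such as $\zeta_{k-1}^+ \le 1$ (which only holds at $k=0$), and on $\zeta_*^+$ actually being a good deal smaller than $10^{-4}$. Everything else is routine substitution; the only bookkeeping is to keep straight which bounds come from the conditioning event $\egam_{j,k-1}$ (the $\zeta$'s) and which from the hypotheses of Theorem \ref{thm1} (the $\kappa$'s), and that the $k=1$ instance of claim 2 coincides with claim 1.
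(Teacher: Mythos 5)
Your plan is essentially the paper's proof: the paper's own justification of Corollary \ref{RICnumbnd} is the one-line remark ``This follows using Lemma \ref{RIC_bnd}, the definition of $\Gamma_{j,k-1}$, and the bound on $\zeta_{k-1}^+$ from Lemma \ref{expzeta},'' and your substitution route is exactly that. You also correctly flag the two points that a careless reader would miss: that Lemma \ref{RIC_bnd} is sparsity-level agnostic and so applies at $2s$, and that the $k=1$ instance of claim 2 must be read as claim 1 (since $\hat P_{j,\new,0}$ is empty, the quadratic term in item 5 of Lemma \ref{RIC_bnd} is not what you want for $\Phi_0$).

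The one place where I'd push back is the final arithmetic for claim 3. As you yourself note the numbers are tight, but you stop short of seeing that the chain ``claim 2 $\Rightarrow$ claim 3'' is not literally valid as stated: $1 - 1/\phi^+ = 1 - 1/1.1735 \approx 0.147848$, which is \emph{strictly less than} the $0.1479$ appearing in claim 2. So $\delta_s(\Phi_{k-1}) < 0.1479$ alone does not give $\phi_{k-1} \le 1/(1-\delta_s(\Phi_{k-1})) < \phi^+$; the proof has to carry the actual computed bound, not the rounded display constant. Moreover, the bounds you propose to substitute ($\zeta_{k-1}^+ \le 0.6^{k-1} + 0.4c\zeta$, i.e.\ $\zeta_1^+ \le 0.60004$ at $k=2$, and $\zeta_*^+ \le 10^{-4}$, $c\zeta\le10^{-4}$) give, at the worst case $k=2$ and $r=1$, a value of about $0.09 + 0.0002 + (0.240106)^2 \approx 0.14785$, which actually exceeds $1 - 1/\phi^+ \approx 0.147848$ by a few parts in $10^{6}$. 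To close the gap you either need to use the sharper $\zeta_1^+ < 0.5985$ established inside the proof of Lemma \ref{expzeta} (which brings the total down to $\approx 0.14774$), or observe that $\zeta_*^+ \le 10^{-4}/r$ and $c\zeta \le 10^{-4}/r$ so the slack you mention actually only materializes when $r\ge 2$. This is a paper-level rounding artifact in the displayed constant $0.1479$ rather than a flaw in your approach, but since you are asserting the implication explicitly you should either carry the unrounded $\delta_{2s}$ bound forward or invoke the tighter $\zeta_1^+$.
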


\begin{proof}
This follows using Lemma \ref{RIC_bnd}, the definition of $\Gamma_{j,k-1}$, and the bound on $\zeta_{k-1}^+$ from Lemma \ref{expzeta}.
\end{proof}

The following are straightforward bounds that will be useful for the proof of Lemma \ref{cslem} and later.

\begin{fact}\label{constants}
Under the assumptions of Theorem \ref{thm1}:
\ben
\item $\zeta \gamma_* \le \frac{\sqrt{\zeta}}{(r_0 + (J-1) c)^{3/2}} \le  \sqrt{\zeta}$
\item $\zeta_{j,*}^+ \leq  \frac{10^{-4}}{(r_0 + (J-1) c)} \le 10^{-4}$
\item $ \zeta_{j,*}^+ \gamma_*^2 \leq \frac{1}{(r_0 + (J-1)c)^2} \le 1$
\item $ \zeta_{j,*}^+ \gamma_* \le \frac{\sqrt{\zeta}}{\sqrt{r_0 + (J-1)c}} \le \sqrt{\zeta}$
\item $\zeta_{j,*}^+ f \leq \frac{1.5 \times 10^{-4}}{r_0 + (J-1)c} \le 1.5 \times 10^{-4}$
\item $\zeta_{k-1}^+ \leq 0.6^{k-1} + 0.4 c\zeta$ (from Lemma \ref{expzeta})
\item $\zeta_{k-1}^+ \gamma_{\new,k} \leq (0.6 \cdot 1.2)^{k-1} \gamma_{\new} + 0.4c\zeta \gamma_*  \le 0.72^{k-1}\gamma_{\new} + \frac{0.4\sqrt{\zeta}}{\sqrt{r_0 + (J-1)c}} \le 0.72^{k-1}\gamma_{\new} + 0.4\sqrt{\zeta}$
\item $\zeta_{k-1}^+ \gamma_{\new,k}^2 \leq (0.6 \cdot 1.2^2)^{k-1} \gamma_{\new}^2 + 0.4 c\zeta \gamma_*^2 \le 0.864^{k-1}\gamma_{\new}^2 + \frac{0.4}{{(r_0 + (J-1)c})^2} \le 0.864^{k-1}\gamma_{\new}^2 + 0.4$
\een
\end{fact}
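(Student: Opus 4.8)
The plan is to derive all eight bounds mechanically from four ingredients already in hand: (a) the three upper bounds on $\zeta$ from Theorem \ref{thm1}, namely $\zeta \le 10^{-4}/r^2$, $\zeta \le 1.5\times 10^{-4}/(r^2 f)$, and $\zeta \le 1/(r^3\gamma_*^2)$, where throughout $r = r_0 + (J-1)c$; (b) the definition $\zeta_{j,*}^+ := (r_0 + (j-1)c)\zeta$, which gives $\zeta_{j,*}^+ \le r\zeta$; (c) the first conclusion of Lemma \ref{expzeta}, $\zeta_{k-1}^+ \le 0.6^{k-1} + 0.4 c\zeta$; and (d) the definition $\gamma_{\new,k} = \min(1.2^{k-1}\gamma_\new, \gamma_*)$, so that $\gamma_{\new,k} \le 1.2^{k-1}\gamma_\new$ and $\gamma_{\new,k} \le \gamma_*$. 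I will also use the trivial facts $r \ge 1$ and $c \le r$ to pass from the sharp bounds (with denominators in $r$) to the stated loose ones.

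First I would dispose of item (1) by squaring: from $\zeta \le 1/(r^3\gamma_*^2)$ we get $\zeta^2\gamma_*^2 \le \zeta/r^3$, hence $\zeta\gamma_* \le \sqrt\zeta/r^{3/2} \le \sqrt\zeta$. Items (2), (3), (5) then follow immediately by multiplying $\zeta_{j,*}^+ \le r\zeta$ by $1$, $\gamma_*^2$, and $f$ respectively and inserting the matching $\zeta$-bound (the first, third, and second), each time discarding a surplus factor of $r \ge 1$. Item (4) is $r$ times item (1): $\zeta_{j,*}^+\gamma_* \le r\zeta\gamma_* \le \sqrt\zeta/\sqrt r \le \sqrt\zeta$. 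Item (6) is just a restatement of Lemma \ref{expzeta}.

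The only items carrying any algebra are (7) and (8), and even there it is one line each. For (7), I would write $\zeta_{k-1}^+\gamma_{\new,k} \le (0.6^{k-1} + 0.4c\zeta)\gamma_{\new,k}$, bound $\gamma_{\new,k}$ by $1.2^{k-1}\gamma_\new$ in the first term (so $0.6^{k-1}\gamma_{\new,k} \le 0.72^{k-1}\gamma_\new$, since $0.6 \cdot 1.2 = 0.72$) and by $\gamma_*$ in the second, then bound $0.4c\zeta\gamma_*$ by $0.4\sqrt\zeta/\sqrt r$ using item (1) together with $c \le r$, and finally by $0.4\sqrt\zeta$. Item (8) is the same computation with squares: $\gamma_{\new,k}^2 \le 1.2^{2(k-1)}\gamma_\new^2$ gives $0.6^{k-1}\gamma_{\new,k}^2 \le 0.864^{k-1}\gamma_\new^2$ (since $0.6 \cdot 1.44 = 0.864$), while $0.4c\zeta\gamma_*^2 \le 0.4/r^2 \le 0.4$ by the third $\zeta$-bound and $c \le r$. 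There is essentially no obstacle here; the one thing to be careful about is matching each target quantity to the correct one of the three $\zeta$-bounds, and recognizing that the decay rates $0.72$ and $0.864$ are exactly the product of the $0.6^{k-1}$ subspace-error decay with the $1.2^{k-1}$ (resp.\ $1.44^{k-1}$) growth of $\gamma_{\new,k}$.
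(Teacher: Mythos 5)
Your proof is correct, and since the paper states Fact~\ref{constants} without an explicit proof (treating it as a collection of straightforward consequences of the $\zeta$-bound in Theorem~\ref{thm1}, the definition $\zeta_{j,*}^+ = (r_0+(j-1)c)\zeta$, Lemma~\ref{expzeta}, and $\gamma_{\new,k}=\min(1.2^{k-1}\gamma_\new,\gamma_*)$), yours is simply the verification the authors had in mind. Each step matches: squaring the third $\zeta$-bound for item (1), pairing each of items (2)--(5) with the right one of the three $\zeta$-bounds and discarding a factor of $r\ge 1$, and using $c\le r$ (which holds because $c_{\max}\ll r_0\le r$) to absorb the extra $c$ in items (7) and (8).
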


\begin{proof}[Proof of Lemma \ref{cslem}]
Recall that $X_{j,k-1} \in \Gamma_{j,k-1}$ implies that $\zeta_{j,*} \leq \zeta_{j,*}^+$ and $\zeta_{k-1}\leq \zeta_{k-1}^+$.
\ben
\item
\begin{enumerate}
\item For $t \in \mathcal{I}_{j,k}$, $\beta_t := (I-\Phat_{(t-1)} {\Phat_{(t-1)} }') L_t = D_{*,k-1} a_{t,*} + D_{\new,k-1} a_{t,\new} $. Thus, using Fact \ref{constants}
\begin{align*}
\|\beta_t\|_2 & \leq \zeta_{j,*} \sqrt{r} \gamma_* + \zeta_{k-1} \sqrt{c} \gamma_{\new,k} \\
& \leq \sqrt{\zeta}\sqrt{r} + (0.72^{k-1}\gamma_{\new} + .4\sqrt{\zeta})\sqrt{c} \\
& = \sqrt{c} 0.72^{k-1} \gamma_{\new} + \sqrt{\zeta} (\sqrt{r} + 0.4\sqrt{c}) \leq \xi_0.
\end{align*}

\item By Corollary \ref{RICnumbnd}, $\delta_{2s} (\Phi_{k-1}) < 0.15< \sqrt{2}-1$. Given $|T_t| \leq s$, $\|\beta_t\|_2 \leq \xi_0 = \xi$, by Theorem \ref{candes_csbound}, the CS error satisfies
\[
\|\hat{S}_{t,\cs} - S_t\|_2 \leq  \frac{4\sqrt{1+\delta_{2s}(\Phi_{k-1})}}{1-(\sqrt{2}+1)\delta_{2s}(\Phi_{k-1})} \xi_0 < 7 \xi_0.
\]

\item Using the above, $\|\hat{S}_{t,\cs} - S_t\|_{\infty} \leq 7  \xi_0$. Since $\min_{i\in T_t} |(S_t)_{i}| \geq S_{\min}$ and $(S_t)_{T_t^c} = 0$, $\min_{i\in T_t} |(\hat{S}_{t,cs})_i| \geq S_{\min} - 7 \xi_0$ and $\min_{i\in T_t^c} |(\hat{S}_{t,\cs})_i| \leq 7 \xi_0$. If $\omega < S_{\min} - 7 \xi_0$, then $\hat{T}_t \supseteq T_t$. On the other hand, if $\omega > 7 \xi_0$, then $\hat{T}_t \subseteq T_t$. Since $S_{\min} > 14 \xi_0$ (condition 3 of the theorem) and $\omega$ satisfies $7 \xi_0 \leq \omega \leq S_{\min} -7 \xi_0$ (condition 1 of the theorem), then the support of $S_t$ is exactly recovered, i.e. $\hat{T}_t = T_t$.

\item Given $\hat{T}_t = T_t$, the LS estimate of $S_t$ satisfies $(\hat{S}_t)_{T_t} = [(\Phi_{k-1})_{T_t}]^{\dag} y_t =[(\Phi_{k-1})_{T_t}]^{\dag} (\Phi_{k-1} S_t + \Phi_{k-1}L_t)$ and $(\hat{S}_t)_{T_t^c} = 0$ for $t \in \mathcal{I}_{j,k}$. Also,  ${(\Phi_{k-1})_{T_t}}' \Phi_{k-1} = {I_{T_t}}' \Phi_{k-1}$ (this follows since $(\Phi_{k-1})_{T_t} = \Phi_{k-1} I_{T_t}$ and $\Phi_{k-1}'\Phi_{k-1} = \Phi_{k-1}$).  Using this, the LS error $e_t := \hat{S}_t - S_t$ satisfies (\ref{etdef0}).
Thus, using Fact \ref{constants} and condition 2 of the theorem,
\begin{align*}
\|e_t\|_2 & \le \phi^+ (\zeta_{j,*}^+ \sqrt{r}\gamma_* + \kappa_{s,k-1} \zeta_{k-1}^+ \sqrt{c}\gamma_{\new,k}) \\
& \le 1.2 \left(\sqrt{r}\sqrt{\zeta}+ \sqrt{c} 0.15 (0.72)^{k-1} + \right. \\
& \hspace{1.5in} \left. \sqrt{c} 0.06\sqrt{\zeta}\right) \\
& = 0.18 \sqrt{c}0.72^{k-1}\gamma_{\new} + \\
&\hspace{.8in}  1.2 \sqrt{\zeta}(\sqrt{r} + 0.06 \sqrt{c}).
\end{align*}
\end{enumerate}

\item The second claim is just a restatement of the first.

\een

\end{proof}

\subsection{Proof of Lemma \ref{zetak}}

The proof of Lemma \ref{zetak} will use the next three lemmas 

\begin{lem}\label{zetakbnd}
If $\lambda_{\min}(A_k) - \|A_{k,\perp}\|_2 - \|\mathcal{H}_k\|_2 >0$, then
\begin{align*} \label{zetakbound}
\zeta_k &\leq  \frac{\|\mathcal{R}_k\|_2}{\lambda_{\min} (A_k) - \|A_{k,\perp}\|_2 - \|\mathcal{H}_k\|_2} \\
& \leq  \frac{\|\mathcal{H}_k\|_2}{\lambda_{\min} (A_k) - \|A_{k,\perp}\|_2 - \|\mathcal{H}_k\|_2}
\end{align*}
where $\mathcal{R}_k := \mathcal{H}_k E_\new$ and $A_k$, $A_{k,\perp}$, $\mathcal{H}_k$ are defined in Definition \ref{defHk}.
\end{lem}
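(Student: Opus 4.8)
The plan is to apply the $\sin\theta$ theorem (Theorem \ref{sin_theta}) with $\mathcal{A} = \mathcal{A}_{j,k}$ and $\mathcal{H} = \mathcal{H}_{j,k}$, using the orthonormal matrix $[E_{j,\new}\ E_{j,\new,\perp}]$ as the common basis in the decomposition \eqref{sindecomp}. From Definition \ref{defHk}, $\mathcal{A}_{j,k}$ is already block-diagonal in this basis with blocks $A_{j,k}$ and $A_{j,k,\perp}$, and $\mathcal{H}_{j,k}$ has the required block form with off-diagonal block $B_{j,k}$; and by the Remark following Definition \ref{defHk}, $\mathcal{A}_{j,k}+\mathcal{H}_{j,k} = \frac{1}{\alpha}\sum_{t\in\mathcal{I}_{j,k}}\Phi_{j,0}\hat L_t\hat L_t'\Phi_{j,0}$ has EVD with top $c_{j,\new}$ eigenvectors $\hat P_{j,\new,k}$, i.e.\ $F := \hat P_{j,\new,k}$, $F_\perp := \hat P_{j,\new,k,\perp}$. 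Thus, identifying $E := E_{j,\new}$, $A := A_{j,k}$, $A_\perp := A_{j,k,\perp}$, $\Lambda_\perp := \Lambda_{k,\perp}$ in the notation of Theorem \ref{sin_theta}, and noting $\mathcal{R} = \mathcal{H}E = \mathcal{H}_{j,k}E_{j,\new} =: \mathcal{R}_k$, the conclusion of Theorem \ref{sin_theta} gives
\[
\|(I - \hat P_{j,\new,k}\hat P_{j,\new,k}')E_{j,\new}\|_2 \le \frac{\|\mathcal{R}_k\|_2}{\lambda_{\min}(A_{j,k}) - \lambda_{\max}(\Lambda_{k,\perp})}
\]
provided the eigenvalue gap $\lambda_{\min}(A_{j,k}) > \lambda_{\max}(\Lambda_{k,\perp})$ holds.

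Next I would convert the left-hand side into $\zeta_k = \|(I - \hat P_{j-1}\hat P_{j-1}' - \hat P_{j,\new,k}\hat P_{j,\new,k}')P_{j,\new}\|_2 = \|D_{j,\new,k}\|_2$. Since $D_{j,\new} = \Phi_{j,0}P_{j,\new} = E_{j,\new}R_{j,\new}$ is the reduced QR decomposition (Definition \ref{defHk}), $\Span(E_{j,\new}) = \Span(\Phi_{j,0}P_{j,\new})$, and because $\hat P_{j,\new,k}$ is by construction (Algorithm \ref{reprocs}, step 3a) orthogonal to $\hat P_{j-1}$, one has $\Phi_{j,k} = \Phi_{j,0} - \hat P_{j,\new,k}\hat P_{j,\new,k}'$ and hence $D_{j,\new,k} = \Phi_{j,k}P_{j,\new} = (I - \hat P_{j,\new,k}\hat P_{j,\new,k}')\Phi_{j,0}P_{j,\new} = (I - \hat P_{j,\new,k}\hat P_{j,\new,k}')E_{j,\new}R_{j,\new}$. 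Since $\|R_{j,\new}\|_2 = \|E_{j,\new}'D_{j,\new}\|_2 = \|D_{j,\new}\|_2 = \zeta_{j,0} \le 1$ (as $P_{j,\new}$ is a basis matrix and $\|\Phi_{j,0}\|_2 \le 1$), we get $\zeta_k = \|D_{j,\new,k}\|_2 \le \|(I - \hat P_{j,\new,k}\hat P_{j,\new,k}')E_{j,\new}\|_2$, so the $\sin\theta$ bound transfers to $\zeta_k$.

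The remaining pieces are (i) to replace $\lambda_{\max}(\Lambda_{k,\perp})$ by the more usable upper bound $\|A_{j,k,\perp}\|_2 + \|\mathcal{H}_{j,k}\|_2$, and (ii) to replace $\|\mathcal{R}_k\|_2$ by $\|\mathcal{H}_{j,k}\|_2$. For (i): from the second decomposition of $\mathcal{A}_{j,k}+\mathcal{H}_{j,k}$ in the Remark, $\Lambda_{k,\perp}$ is the trailing block of the EVD, so $\lambda_{\max}(\Lambda_{k,\perp}) = \lambda_{\max}(\mathcal{A}_{j,k}+\mathcal{H}_{j,k}) $ restricted to the complementary subspace; applying Weyl's theorem (Theorem \ref{weyl}) to $\mathcal{A}_{j,k}+\mathcal{H}_{j,k} = \mathcal{A}_{j,k}+\mathcal{H}_{j,k}$ with the observation that the $(c_{j,\new}+1)$-th eigenvalue of $\mathcal{A}_{j,k}$ is $\lambda_{\max}(A_{j,k,\perp})$ (since $\mathcal{A}_{j,k}$ is block diagonal and $\lambda_{\min}(A_{j,k}) > \lambda_{\max}(\Lambda_{k,\perp})$ together with the gap condition ensures the ordering), we get $\lambda_{\max}(\Lambda_{k,\perp}) = \lambda_{c_{j,\new}+1}(\mathcal{A}_{j,k}+\mathcal{H}_{j,k}) \le \lambda_{c_{j,\new}+1}(\mathcal{A}_{j,k}) + \lambda_{\max}(\mathcal{H}_{j,k}) = \lambda_{\max}(A_{j,k,\perp}) + \lambda_{\max}(\mathcal{H}_{j,k}) \le \|A_{j,k,\perp}\|_2 + \|\mathcal{H}_{j,k}\|_2$. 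Hence $\lambda_{\min}(A_{j,k}) - \lambda_{\max}(\Lambda_{k,\perp}) \ge \lambda_{\min}(A_{j,k}) - \|A_{j,k,\perp}\|_2 - \|\mathcal{H}_{j,k}\|_2$, which is positive by hypothesis, and in particular the gap condition $\lambda_{\min}(A_{j,k}) > \lambda_{\max}(\Lambda_{k,\perp})$ of Theorem \ref{sin_theta} is automatically satisfied. For (ii): $\|\mathcal{R}_k\|_2 = \|\mathcal{H}_{j,k}E_{j,\new}\|_2 \le \|\mathcal{H}_{j,k}\|_2\|E_{j,\new}\|_2 = \|\mathcal{H}_{j,k}\|_2$. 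Combining gives both displayed inequalities. The main obstacle is the careful bookkeeping in step (i) — making sure the eigenvalue ordering is such that Weyl applies to the right index and that $\lambda_{\max}(\Lambda_{k,\perp})$ really is the $(c_{j,\new}+1)$-th eigenvalue of the perturbed matrix; this needs the gap hypothesis to rule out interleaving, and is the one place where the argument is more than a one-line norm estimate.
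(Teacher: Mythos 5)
Your proposal is correct and follows essentially the same route as the paper's proof: the $\sin\theta$ theorem applied with $E=E_\new$ and $F=\Phat_{j,\new,k}$, Weyl's theorem to obtain $\lambda_{\max}(\Lambda_{k,\perp})\le\|A_{k,\perp}\|_2+\|\mathcal{H}_k\|_2$, and the passage from the $\sin\theta$ bound on $\|(I-\Phat_{\new,k}\Phat_{\new,k}')E_\new\|_2$ to $\zeta_k$ via $D_{\new,k}=(I-\Phat_{\new,k}\Phat_{\new,k}')E_\new R_\new$ together with $\|R_\new\|_2\le1$. The one small slip is in the parenthetical justifying $\lambda_{c_\new+1}(\mathcal{A}_k)=\lambda_{\max}(A_{k,\perp})$: the correct reason is simply that the lemma's hypothesis gives $\lambda_{\min}(A_k)>\|A_{k,\perp}\|_2$ and $A_{k,\perp}\succeq 0$, so in the block-diagonal $\mathcal{A}_k$ the $c_\new$ eigenvalues coming from $A_k$ all lie strictly above every eigenvalue of $A_{k,\perp}$ --- appealing to the $\sin\theta$ gap condition $\lambda_{\min}(A_k)>\lambda_{\max}(\Lambda_{k,\perp})$ at that point would be circular, since that inequality is precisely the conclusion of the Weyl step.
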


\begin{proof}
Since $\lambda_{\min}(A_k) - \|A_{k,\perp}\|_2 - \|\mathcal{H}_k\|_2 >0$, so $\lambda_{\min}(A_k) > \|A_{k,\perp}\|_2$. Since $A_k$ is of size $c_\new \times c_\new$ and $\lambda_{\min}(A_k) > \|A_{k,\perp}\|_2$, $\lambda_{c_\new+1} (\mathcal{A}_k) = \|A_{k,\perp}\|_2$. By definition of EVD, and since $\Lambda_k$ is a $c_\new \times c_\new$ matrix, $\lambda_{\max}(\Lambda_{k,\perp}) = \lambda_{c_\new+1}(\mathcal{A}_k + \mathcal{H}_k)$. By Weyl's theorem (Theorem \ref{weyl}), $\lambda_{c_\new+1}(\mathcal{A}_k + \mathcal{H}_k) \leq \lambda_{c_\new+1} (\mathcal{A}_k) + \|\mathcal{H}_k\|_2 = \|A_{k,\perp}\|_2 + \|\mathcal{H}_k\|_2$. Therefore, $\lambda_{\max}(\Lambda_{k,\perp})\leq \|A_{k,\perp}\|_2 + \|\mathcal{H}_k\|_2$ and hence $\lambda_{\min}(A_k) - \lambda_{\max}(\Lambda_{k,\perp})\geq \lambda_{\min}(A_k) - \|A_{k,\perp}\|_2 - \|\mathcal{H}_k\|_2 > 0$. Apply the $\sin \theta$ theorem (Theorem \ref{sin_theta}) with $\lambda_{\min}(A_k) - \lambda_{\max}(\Lambda_{k,\perp})> 0$, we get
\begin{align*}
\|(I- \Phat_{\new,k} {\Phat_{\new,k}}') E_{\new} \|_2 &\leq \frac{\|\mathcal{R}_k\|_2}{ \lambda_{\min}(A_k) - \lambda_{\max}(\Lambda_{k,\perp})}\\
& \leq \frac{\|\mathcal{H}_k\|_2}{\lambda_{\min}(A_k) - \|A_{k,\perp}\|_2 - \|\mathcal{H}_k\|_2}
\end{align*}
Since $\zeta_k  = \|(I- \Phat_{\new,k} {\Phat_{\new,k}}') D_{\new}\|_2 = \|(I- \Phat_{\new,k} {\Phat_{\new,k}}') E_{\new} R_{\new} \|_2 \leq \|(I- \Phat_{\new,k} {\Phat_{\new,k}}') E_{\new}\|_2$, the result follows.
The last inequality follows because  $\|R_\new\|_2 = \|E_\new' D_\new\|_2 \le 1$.
\end{proof}

\begin{lem} \label{Dnew0_lem}
Assume that the assumptions of Theorem \ref{thm1} hold.
Conditioned on $\Gamma_{j,k-1}^e$, 
\begin{align*}
\kappa_s(D_{\new}) &\le \frac{\kappa_s(P_{\new}) +  \zeta_*^+}{\sqrt{1-\zeta_*^+}} \\
& \le \frac{\kappa_{2s,\new}^+ + 0.0015}{\sqrt{1 - 0.0015}} \approx 0.1516 \le \kappa_s^+.
\end{align*}
\end{lem}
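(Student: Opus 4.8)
The plan is to estimate $\kappa_s$ of the subspace $\Span(D_{\new})$ directly from Definition \ref{subspace_kappa}, which first requires a concrete orthonormal basis matrix for that subspace. The natural choice is $E_{\new}$ from the reduced QR decomposition $D_{\new} \overset{QR}{=} E_{\new} R_{\new}$ introduced in Definition \ref{defHk}; then $\kappa_s(D_{\new}) = \max_{|T|\le s}\|{I_T}'E_{\new}\|_2$, and once $R_{\new}$ is known to be invertible one may write $E_{\new} = D_{\new}R_{\new}^{-1}$, so that for every index set $T$ with $|T|\le s$ we have $\|{I_T}'E_{\new}\|_2 \le \|{I_T}'D_{\new}\|_2\,\|R_{\new}^{-1}\|_2$. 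It then suffices to bound the two factors separately.

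For $\|R_{\new}^{-1}\|_2$, recall that $D_{\new} = (I-\Phat_{j-1}\Phat_{j-1}')P_{j,\new}$ and that $P_{j,\new}'P_{j-1}=0$ by Signal Model \ref{Ltmodel}. Applying Lemma \ref{lemma0}, item 4, with $P = P_{j-1}$, $\Phat = \Phat_{j-1}$, $Q = P_{j,\new}$ and $\zeta_* = \zeta_{j,*}$ gives $\sqrt{1-\zeta_{j,*}^2}\le \sigma_i(D_{\new})\le 1$ for all $i$. Since conditioning on $\egam_{j,k-1}$ forces $\zeta_{j,*}\le\zeta_{j,*}^+ = \zeta_*^+ <1$ (this inequality is part of the definition of $\Gamma_{j,0}\supseteq\Gamma_{j,k-1}$), the matrix $D_{\new}$ has full column rank, so $R_{\new}$ is invertible with $\|R_{\new}^{-1}\|_2 = 1/\sigma_{\min}(R_{\new}) = 1/\sigma_{\min}(D_{\new}) \le 1/\sqrt{1-\zeta_{j,*}^2}$. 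For $\|{I_T}'D_{\new}\|_2$, split $D_{\new} = P_{j,\new} - \Phat_{j-1}(\Phat_{j-1}'P_{j,\new})$ and use the triangle inequality together with $\|{I_T}'P_{j,\new}\|_2\le\kappa_s(P_{j,\new})$ (valid since $P_{j,\new}$ is itself a basis matrix), $\|{I_T}'\Phat_{j-1}\|_2\le\|\Phat_{j-1}\|_2 = 1$, and $\|\Phat_{j-1}'P_{j,\new}\|_2\le\zeta_{j,*}$ (Lemma \ref{lemma0}, item 3, same substitution); this yields $\|{I_T}'D_{\new}\|_2\le\kappa_s(P_{j,\new}) + \zeta_{j,*}$.

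Combining the two estimates and taking the maximum over $|T|\le s$ gives $\kappa_s(D_{\new})\le (\kappa_s(P_{j,\new})+\zeta_{j,*})/\sqrt{1-\zeta_{j,*}^2}$. Then, using $\zeta_{j,*}\le\zeta_*^+$ in the numerator and $\zeta_{j,*}^2\le\zeta_{j,*}\le\zeta_*^+$ in the denominator, I obtain the first claimed bound $\kappa_s(D_{\new})\le(\kappa_s(P_{\new})+\zeta_*^+)/\sqrt{1-\zeta_*^+}$. To finish, substitute $\kappa_s(P_{j,\new})\le\kappa_{2s}(P_{j,\new})\le\kappa_{2s,\new}^+ = 0.15$ (Remark \ref{kapparemark}, item 1, together with \eqref{kappa new plus}) and $\zeta_*^+\le 10^{-4}\le 0.0015$ (Fact \ref{constants}, item 2), which gives $\kappa_s(D_{\new})\le (0.15+0.0015)/\sqrt{1-0.0015}\approx 0.1516\le 0.152 = \kappa_s^+$.

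The argument is essentially bookkeeping, with no genuine analytic difficulty. The only two points needing care are verifying that $D_{\new}$ has full column rank so that the QR factor $R_{\new}$ is invertible — which is exactly where the smallness hypothesis $\zeta_*^+ < 1$ enters, through Lemma \ref{lemma0}, item 4 — and recalling that the conditioning event $\egam_{j,k-1}$ already encodes $\zeta_{j,*}\le\zeta_{j,*}^+$ via the nested definitions of the sets $\Gamma_{j,k}$, so that the deterministic bound on $\zeta_{j,*}$ is available throughout.
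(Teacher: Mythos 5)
Your proof is correct and follows essentially the same route as the paper's: both factor $\kappa_s(D_{\new}) = \max_{|T|\le s}\|I_T'D_{\new}R_{\new}^{-1}\|_2 \le \max_{|T|\le s}\|I_T'D_{\new}\|_2\,\|R_{\new}^{-1}\|_2$ via the QR decomposition, bound $\|R_{\new}^{-1}\|_2$ using Lemma~\ref{lemma0}, and bound $\|I_T'D_{\new}\|_2$ by $\kappa_s(P_{\new})+\zeta_*$. You have merely spelled out the triangle-inequality decomposition $D_{\new}=P_{\new}-\Phat_{j-1}(\Phat_{j-1}'P_{\new})$ and the invertibility of $R_{\new}$, steps the paper leaves implicit.
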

\begin{proof}
Recall that $D_\new = D_{\new,0} = (I - \Phat_{j-1}\Phat_{j-1}') P_\new$. Also $D_{\new} \overset{\mathrm{QR}}{=} E_{\new}R_{\new}$.   By lemma \ref{hatswitch} $\|{R_{\new}}^{-1}\|_2 \leq \frac{1}{\sqrt{1-\zeta_*^+}}$. 
$\kappa_s(D_{\new}) = \kappa_s(E_{\new}) = \max_{|T|\leq s} \|I_T'D_{\new}{R_{\new}}^{-1}\|_2 \leq \max_{|T|\leq s} \|I_T'D_{\new}\|_2\|{R_{\new}}^{-1}\|_2 \leq \frac{\kappa_s(P_{\new}) + \zeta_*}{\sqrt{1-\zeta_*^+}}$.
The event $\Gamma_{j,k-1}^e$ implies that $\zeta_* \le \zeta_*^+ \le 0.0015$. Thus, the lemma follows.
\end{proof}

\begin{lem}[High probability bounds for each of the terms in the $\zeta_k$ bound (\ref{zetakbnd})]\label{termbnds}
Assume the conditions of Theorem \ref{thm1} hold.  Also assume that $\mathbf{P}(\Gamma_{j,k-1}^e)>0$ for all $1\leq k \leq K+1$. Then, for all $1 \le k \le K$
\begin{enumerate}

\item $\mathbf{P} \left(\lambda_{\min} (A_{k}) \geq  \lambda_{\new,k}^- \left(1 -(\zeta_{j,*}^+)^2  - \frac{c \zeta}{12}\right) \big|\egam_{j,k-1}\right) > 1- p_{a,k}(\alpha,\zeta)$
where
\begin{align*}
p_{a,k}(\alpha,\zeta) &:= c \exp \left(\frac{-\alpha \zeta^2 (\lambda^-)^2}{8 \cdot 24^2 \cdot \min(1.2^{4k} \gamma_{\new}^4,\gamma_*^4)}\right) \\ 
&\quad + c \exp \left( \frac{-\alpha c^2 \zeta^2(\lambda^-)^2} {8 \cdot 24^2 \cdot 4^2}\right)
\end{align*}

\item $\mathbf{P}\left(\lambda_{\max}(A_{k,\perp}) \leq \lambda_{\new,k}^- \left( (\zeta_{j,*}^+)^2 f + \frac{c \zeta}{24}\right) \big| \egam_{j,k-1}\right) > 1- p_b(\alpha,\zeta)$
where
\[
p_b (\alpha,\zeta) := (n-c) \exp \left(\frac{-\alpha c^2 \zeta (\lambda^-)^2}{8 \cdot 24^2}\right)
\]

\item $\mathbf{P}\left(\|\mathcal{H}_{k}\|_2 \leq \lambda_{\new,k}^- (b + 0.125c\zeta) \ \big|\egam_{j,k-1}\right) \geq 1 - p_c(\alpha,\zeta)$
where $b$ is as defined in Definition \ref{zetakplus} and
\begin{align*}
& p_c(\alpha,\zeta) := \\
 & n \exp\left(\frac{-\alpha \zeta^2 (\lambda^-)^2}{8 \cdot 24^2 (.0324 \gamma_\new^2 + .0072 \gamma_\new + .0004)^2}\right) + \\
& n \exp\left(\frac{-\alpha \zeta^2 (\lambda^-)^2}{32\cdot 24^2 (.06  \gamma_\new^2 + .0006  \gamma_\new + .4)^2}\right)+ \\
&  n \exp\left(\frac{-\alpha  \zeta^2 (\lambda^-)^2 \epsilon^2}{32 \cdot 24^2 ( .186 \gamma_\new^2 + .00034 \gamma_\new + 2.3)^2}\right).
\end{align*}

\end{enumerate}

\end{lem}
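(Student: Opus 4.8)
Part~3 is a long, technical verification; here is the structure I would follow, with the same conditioning device used for Parts~1--3.

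First, condition on an arbitrary fixed value $X_{j,k-1}=X\in\Gamma_{j,k-1}$. Then $\Phat_{j-1}$ and $\Phat_{j,\new,k-1}$---and hence $\Phi_0$, $\Phi_{k-1}$, $E_{\new}$, $E_{\new,\perp}$, $D_{\new}=E_{\new}R_{\new}$ and $D_*$---are all deterministic, $\zeta_{j,*}\le\zeta_{j,*}^+$ and $\zeta_{k-1}\le\zeta_{k-1}^+$, and, by Lemma~\ref{cslem}, $\That_t=T_t$ and $e_t=I_{T_t}[(\Phi_{k-1})_{T_t}'(\Phi_{k-1})_{T_t}]^{-1}I_{T_t}'\Phi_{k-1}L_t$ for every $t\in\mathcal{I}_{j,k}$. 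Since $X_{j,k-1}$ is a function only of $\{a_\tau:\tau<t_j+(k-1)\alpha\}$ and the $a_t$ are independent over $t$, the $t$-indexed summands defining $A_k$, $A_{k,\perp}$, $H_k$, $H_{k,\perp}$ and $B_k$ (Definition~\ref{defHk}) are conditionally independent given $X$ and almost surely bounded given $X$, so the conditional matrix Hoeffding inequalities of Corollaries~\ref{hoeffding_nonzero} and~\ref{hoeffding_rec} apply. Each per-event estimate obtained this way holds for every $X\in\Gamma_{j,k-1}$, so Lemma~\ref{rem_prob} upgrades it to an estimate conditioned on $\egam_{j,k-1}$. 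Throughout, the residual terms in $\zeta$, $\zeta_{j,*}^+$ and $\gamma_*$ are collapsed into multiples of $c\zeta$ using Fact~\ref{constants}.

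For Part~1, split $L_t=P_{j-1}a_{t,*}+P_{j,\new}a_{t,\new}$, so that $E_{\new}'\Phi_0L_t=R_{\new}a_{t,\new}+E_{\new}'D_*a_{t,*}$ (using $E_{\new}'D_{\new}=R_{\new}$). Expanding $A_k$ gives a dominant term $\tfrac1\alpha\sum_t R_{\new}a_{t,\new}a_{t,\new}'R_{\new}'$ plus cross and $a_{t,*}$ terms that are deterministically $O(\zeta_{j,*}^+\sqrt r\,\gamma_*\cdot\sqrt c\,\gamma_{\new,k})$ and, by Fact~\ref{constants}, fit inside a $\tfrac{c\zeta}{12}$ slack. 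The dominant term has conditional mean $R_{\new}\bigl(\tfrac1\alpha\sum_t(\Lambda_t)_{\new}\bigr)R_{\new}'\succeq\sigma_{\min}(R_{\new})^2\lambda_{\new,k}^-I\succeq(1-(\zeta_{j,*}^+)^2)\lambda_{\new,k}^-I$ by Ostrowski's theorem (Theorem~\ref{ost}) and Lemma~\ref{hatswitch}(4) (applicable since $P_{j,\new}'P_{j-1}=0$, giving $\sigma_{\min}(D_{\new})^2\ge1-\zeta_{j,*}^2$), and each summand is $\preceq\|R_{\new}\|_2^2\,c\,\gamma_{\new,k}^2I\preceq c\,\gamma_{\new,k}^2I$; one application of Corollary~\ref{hoeffding_nonzero} with $\epsilon\sim\lambda^- c\zeta$, together with $\gamma_{\new,k}^4\le\min(1.2^{4k}\gamma_{\new}^4,\gamma_*^4)$, yields the first summand of $p_{a,k}$ and the Hoeffding step on the cross terms yields the second. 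Part~2 is the same but easier: because $E_{\new,\perp}'D_{\new}=E_{\new,\perp}'E_{\new}R_{\new}=0$, one gets $E_{\new,\perp}'\Phi_0L_t=E_{\new,\perp}'D_*a_{t,*}$, hence $A_{k,\perp}=\tfrac1\alpha\sum_t(E_{\new,\perp}'D_*a_{t,*})(E_{\new,\perp}'D_*a_{t,*})'$ with conditional mean $\preceq\|D_*\|_2^2\lambda^+I\preceq(\zeta_{j,*}^+)^2 f\,\lambda_{\new,k}^-I$ (using $\lambda^-\le\lambda_{\new,k}^-$) and summands $\preceq(\zeta_{j,*}^+)^2 r\,\gamma_*^2I$; Corollary~\ref{hoeffding_nonzero} with $b_3=0$ and $\epsilon\sim\lambda_{\new,k}^- c\zeta$ gives Part~2 after simplifying $(\zeta_{j,*}^+)^2 r\,\gamma_*^2=O(\zeta)$ via Fact~\ref{constants}.

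Part~3 is the crux and the main obstacle. Bound $\|\mathcal{H}_k\|_2\le\|H_k\|_2+\|H_{k,\perp}\|_2+\|B_k\|_2$ (using that $[E_{\new}\ E_{\new,\perp}]$ is orthonormal) and treat $H_k$, $H_{k,\perp}$, $B_k$ separately; each is of the form $\tfrac1\alpha\sum_{t\in\mathcal{I}_{j,k}}$ of terms built from $e_t$ and $L_t$ sandwiched between $E'\Phi_0$ and $\Phi_0E''$ with $E,E''\in\{E_{\new},E_{\new,\perp}\}$. The work is: (i)~substitute $e_t=G_tL_t$, where $G_t:=I_{T_t}[(\Phi_{k-1})_{T_t}'(\Phi_{k-1})_{T_t}]^{-1}I_{T_t}'\Phi_{k-1}$ is deterministic given $X$ (Lemma~\ref{cslem}), and expand every product in $a_{t,*}$ and $a_{t,\new}$; (ii)~bound each summand using $\|e_t\|_2\le\phi^+(\kappa_s^+\zeta_{k-1}^+\sqrt c\,\gamma_{\new,k}+\zeta_{j,*}^+\sqrt r\,\gamma_*)$ from Lemma~\ref{cslem}, $\|L_t\|_2\le\sqrt r\,\gamma_*$, $\phi_{k-1}\le\phi^+$ from Corollary~\ref{RICnumbnd} and $\kappa_s(D_{\new})\le\kappa_s^+$ from Lemma~\ref{Dnew0_lem}, which produces the numerical constants in the denominators of $p_c$; (iii)~bound the conditional means $\bigl\|\tfrac1\alpha\sum_t\E[E'\Phi_0(\cdots)\Phi_0E''\mid X]\bigr\|_2$, where $\E[L_te_t'\mid X]=P_j\Lambda_tP_j'G_t'$ and $\E[e_te_t'\mid X]=G_tP_j\Lambda_tP_j'G_t'$, using the diagonal structure of $\Lambda_t$ together with the smallness of $\|\Phi_0P_{j-1}\|_2=\zeta_{j,*}$, $\|\Phat_{j,\new,k-1}'P_{j,\new}\|_2$ and $\kappa_s(D_{\new})$ to drive the bound down to exactly the quantity $b$ of Definition~\ref{zetakplus}; (iv)~apply Corollary~\ref{hoeffding_nonzero} to the Hermitian $e_te_t'$-part and to the symmetrized $(L_te_t'+e_tL_t')$-part, and Corollary~\ref{hoeffding_rec} to the non-Hermitian $B_k$, each with $\epsilon$ chosen so the three failure probabilities sum to $p_c$ and the total excess over $b$ is $0.125c\zeta$. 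The genuinely delicate step is (iii): one must carry out the expectation computations for the cross terms $L_te_t'$ and $e_te_t'$ carefully enough to see that every surviving contribution carries a factor of $\zeta_{j,*}$, of $\kappa_s^+\zeta_{k-1}^+$, or of $\gamma_{\new,k}$, and that their sum is precisely $b$; this requires tracking how the near-orthogonality of $\Span(\Phat_{j-1})$ to $\Span(P_{j,\new})$ and of $\Phat_{j,\new,k-1}$ to $P_{j,\new}$ propagates through each term.
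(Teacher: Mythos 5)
Your proposal follows essentially the same route as the paper: condition on $X_{j,k-1}\in\Gamma_{j,k-1}$ so that $\Phi_0$, $\Phi_{k-1}$, $E_{\new}$, $D_{\new}=E_{\new}R_{\new}$, $D_*$ and (via Lemma~\ref{cslem}) the maps $L_t\mapsto e_t$ are deterministic; note conditional independence of the $a_t$-indexed summands; apply Corollaries~\ref{hoeffding_nonzero} and~\ref{hoeffding_rec}; and convert to a conditional probability via Lemma~\ref{rem_prob}. Parts~1 and~2 match the paper closely, including the use of Ostrowski's theorem (Theorem~\ref{ost}) and Lemma~\ref{hatswitch}(4) for the lower bound $\sigma_{\min}(R_{\new})^2\ge 1-(\zeta_{j,*}^+)^2$, dropping the p.s.d.\ $E_{\new}'D_*a_{t,*}a_{t,*}'D_*'E_{\new}$ block, and a separate Hoeffding application to the zero-mean cross block. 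Part~3's outline (substitute $e_t=G_tL_t$, bound norms via $\phi^+$ and $\kappa_s(D_{\new})\le\kappa_s^+$, bound conditional means, then three Hoeffding applications) is also the paper's strategy.

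Two points to fix. First, in Part~1 your claim that the cross terms are ``deterministically $O(\zeta_{j,*}^+\sqrt r\,\gamma_*\sqrt c\,\gamma_{\new,k})$ and \dots fit inside a $\tfrac{c\zeta}{12}$ slack'' is incorrect: the per-term bound $\|Y_t\|_2\le 2\sqrt{cr}\,\zeta_{j,*}^+\gamma_*\gamma_{\new,k}$ can only be pushed down to an absolute constant (about $2$) via Fact~\ref{constants}, and $2$ is far larger than $c\zeta\lambda^-/12$ since $\lambda^-$ can be tiny (Remark~\ref{large_f}); the cross block genuinely needs the matrix Hoeffding step, as your next sentence correctly says, so strike the ``deterministically \dots slack'' clause. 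Second, in Part~3 you bound $\|\mathcal{H}_k\|_2\le\|H_k\|_2+\|H_{k,\perp}\|_2+\|B_k\|_2$. This is valid but too loose to yield the lemma's exact constant $b+0.125c\zeta$: after splitting $H_k$ and $H_{k,\perp}$ into their $e_te_t'$ and $(L_te_t'+e_tL_t')$ pieces, the sum bound carries two copies of the $e_te_t'$ contribution and both cross contributions, which exceeds the quantity $b$ of Definition~\ref{zetakplus}. You need the sharper block-matrix inequality $\|\mathcal{H}_k\|_2\le\max\bigl(\|H_k\|_2,\|H_{k,\perp}\|_2\bigr)+\|B_k\|_2$ (which holds because $[E_{\new}\ E_{\new,\perp}]$ is orthonormal and the off-diagonal block can be separated from the block-diagonal part); then a single Hoeffding application whose bound dominates both cross pieces, together with one each for the $e_te_t'$ part and for $B_k$, gives the three exponentials of $p_c$.
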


\begin{proof}
The proof is quite long and hence is given in Appendix \ref{appendix termbnds}.  The first two claims are obtained by simplifying the terms and then appropriately  applying the Hoeffding corollaries. The third claim first uses Lemma \ref{cslem} to argue that conditioned on $X_{j,k-1} \in \Gamma_{j,k-1}$, $e_t$ satisfies (\ref{etdef0}). It then simplifies the resulting expressions and eventually uses the Hoeffding corollaries. The simplification also uses the bound on $\kappa_s(D_\new)$ from Lemma \ref{Dnew0_lem}.
\end{proof}

\begin{proof}[Proof of Lemma \ref{zetak}]
Lemma \ref{zetak} now follows by combining Lemmas \ref{zetakbnd} and \ref{termbnds} and defining
\begin{equation}
p_k(\alpha,\zeta) := 1 - p_{a,k}(\alpha,\zeta) - p_b(\alpha,\zeta) - p_c(\alpha,\zeta). \label{pk}
\end{equation}
\end{proof}



\section{ReProCS with Cluster PCA}
\label{Del_section}

The ReProCS approach studied so far is designed under the assumption that the subspace in which $L_t$ lies can only grow over time.  In practice, usually, the dimension of this subspace typically remains roughly constant. A simple way to model this is to assume that at every change time, $t_j$, some new directions can get added and some directions from the existing subspace can get deleted and to assume an upper bound on the difference between the total number of added and deleted directions. We specify this model next.
\begin{sigmodel}
\label{del model}
Assume that $L_t =P_{(t)} a_t$ where $P_{(t)} = P_j$ for all $t_j \leq t <t_{j+1}$, $j=0,1,2 \cdots J$, $P_j$ is an $n \times r_j$ basis matrix with $r_j  \ll \min(n,(t_{j+1} - t_j))$. We let $t_0=0$ and $t_{J+1}$ equal the sequence length. This can be infinity also.
\begin{enumerate}
\item  At the change times, $t_j$, $P_j$ changes as
\[
P_j = [(P_{j-1}R_j\setminus P_{j,\old})  \quad P_{j,\new}]
\]
Here, $R_j$ is a rotation matrix, $P_{j,\new}$ is an $n \times c_{j,\new}$ basis matrix with $P_{j,\new}'P_{j-1} = 0$ and $P_{j,\old}$ contains $c_{j,\old}$ columns of $P_{j-1}R_j$. Thus $r_j = r_{j-1} + c_{j,\new} - c_{j,\old}$. Also, $0 < t_{\train} \le t_1$. This model is illustrated in Figure \ref{add_del_model}.

\item There exist constants $c_{\max}$ and $c_{\text{dif}}$ such that $0 \le c_{j,\new} \leq c_{\max}$ and $\sum_{i=1}^{j} (c_{i,\new} - c_{i,\old}) \leq c_{\text{dif}}$ for all $j$.
    Thus, $r_j = r_0+\sum_{i=1}^{j} (c_{i,\new} - c_{i,\old}) \leq r_{\max}: = r_0 + c_{\text{dif}}$, i.e., the rank of $P_j$ is upper bounded by $r_{\max}$.
\end{enumerate}
\end{sigmodel}

\begin{figure}[t!]
\centerline{
\epsfig{file = 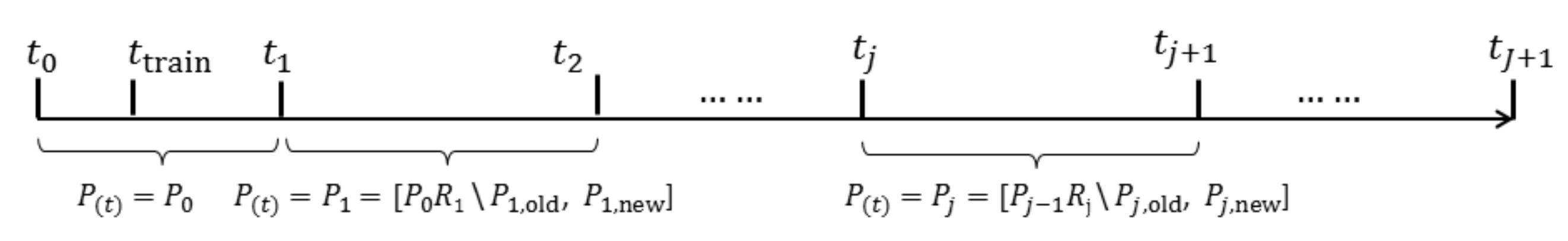, width =\columnwidth}
}
\caption{\small{The subspace change model given in Signal Model \ref{del model}. Here $t_0=0$. 
}}
\label{add_del_model}
\end{figure}

The ReProCS algorithm (Algorithm \ref{reprocs}) still applies for the above more general model. We can conclude the following for it.
\begin{corollary} \label{cor_rep}
Consider Algorithm \ref{reprocs} for the model given above. The result of Theorem \ref{thm1} applies with the following change: we also need $\kappa_{2s}([P_0, P_{1,\new}, \dots , P_{J-1,\new}]) \le 0.3$.
\end{corollary}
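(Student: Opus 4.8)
\medskip
\noindent\textbf{Proof plan for Corollary \ref{cor_rep}.}
The plan is to reduce Signal Model \ref{del model} to the growing-subspace Signal Model \ref{Ltmodel} by passing to the \emph{accumulated} subspace, and then invoke the proof of Theorem \ref{thm1} essentially verbatim. Write $G_j := [P_0\ P_{1,\new}\ \cdots\ P_{j,\new}]$, so $G_0 = P_0$ and $G_j = [G_{j-1}\ P_{j,\new}]$. First I would prove, by induction on $j$, the span inclusion $\Span(P_j) \subseteq \Span(G_j)$: since $R_j$ is a rotation, $\Span(P_{j-1}R_j) = \Span(P_{j-1}) \subseteq \Span(G_{j-1})$; deleting the columns $P_{j,\old}$ only shrinks a span; hence $\Span(P_j) = \Span(P_{j-1}R_j\setminus P_{j,\old}) + \Span(P_{j,\new}) \subseteq \Span(G_{j-1}) + \Span(P_{j,\new}) = \Span(G_j)$. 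The relevance is that Algorithm \ref{reprocs} never uses $R_j$, $P_{j,\old}$ or $c_{j,\old}$; its estimate $\Phat_{(t)}$ only grows, and at the $j$-th stage it is designed to estimate $\Span(G_j) \supseteq \Span(P_j) = \Span(P_{(t)})$, which is exactly what the projected-CS step needs.

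Next I would rerun Section \ref{mainlemmas} and the appendices with the systematic substitution ``$P_{j-1}\mapsto G_{j-1}$'': set $P_{j,*} := \mathrm{basis}(G_{j-1})$, $\zeta_{j,*} := \|(I-\Phat_{j-1}\Phat_{j-1}')\,\mathrm{basis}(G_{j-1})\|_2$, $D_{j,*,k} := \Phi_{j,k}\,\mathrm{basis}(G_{j-1})$, and so on. This is legitimate because for $t \in [t_j,t_{j+1})$ one can write $L_t = P_j a_t = \mathrm{basis}(G_{j-1})\,\tilde a_{t,*} + P_{j,\new}\,a_{t,\new}$, where $a_{t,\new} = {P_{j,\new}}'L_t$ is the same new-direction coefficient as in the model and $\tilde a_{t,*}$ is a deterministic function of $a_t$; hence the slow-subspace-change bound \eqref{atnew_inc}, boundedness of $a_t$, and $g\le\sqrt2$ are untouched. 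Since $\Span(P_{j-1})\subseteq\Span(G_{j-1})$, the old quantity $\|(I-\Phat_{j-1}\Phat_{j-1}')P_{j-1}\|_2$ is dominated by the new $\zeta_{j,*}$, so every downstream bound still holds; the recursion $\zeta_{j,*}\le\zeta_{j-1,*}+\zeta_{j-1,K}$ of Remark \ref{zetastar} closes over the $G_j$'s exactly as before; and because $\mathrm{rank}(G_j) = \mathrm{rank}(\Phat_j) \le r_0 + Jc_{\max} = r$, the definitions of $\zeta_{j,*}^+$, $K$, $\alpha_\add$ and the final error bounds are unchanged. The $\SE_{(t)}$-bound also transfers, since $\SE_{(t)} = \|(I-\Phat_{(t)}\Phat_{(t)}')P_{(t)}\|_2 \le \|(I-\Phat_{(t)}\Phat_{(t)}')\,\mathrm{basis}(G_{(t)})\|_2$.

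The only hypothesis that genuinely changes is the denseness bound $\kappa_{2s}(P_{j-1})\le0.3$. In the proof it enters only through Lemma \ref{delta_kappa} and Lemma \ref{RIC_bnd} / Corollary \ref{RICnumbnd}, to bound the RIC of $\Phi_{(t)} = I-\Phat_{(t-1)}\Phat_{(t-1)}'$ via $\kappa_s^2(\Phat_{j-1}) \le \kappa_s^2(G_{j-1}) + 2\zeta_{j,*}$; so now I would require $\kappa_{2s}(G_{j-1})\le0.3$ for every $1\le j\le J$. Because $G_{j-1}$ is a sub-collection of the columns of $G_{J-1}$ and $\kappa_{2s}$ cannot decrease when columns are appended (Remark \ref{kapparemark}), it suffices to assume $\kappa_{2s}([P_0\ P_{1,\new}\ \cdots\ P_{J-1,\new}])\le0.3$ --- exactly the extra condition in the statement. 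The bound $\kappa_{2s}(P_{j,\new})\le0.15$ and the assumptions on $D_{j,\new,k}$, $Q_{j,\new,k}$ are carried over unchanged.

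The hard part, I expect, is not any single estimate but the bookkeeping: one must check that \emph{every} lemma in Section \ref{mainlemmas} and in the appendices that refers to $P_{j-1}$, $P_j$ or $\Phat_{j-1}$ uses it only through $\Span(\Phat_{j-1})\approx\Span(G_{j-1})$, $\Span(P_j)\subseteq\Span(G_j)$, and the denseness of $G_{j-1}$ and $P_{j,\new}$. The one spot needing a small extra remark is the $j$-th projection-PCA stage: for it to return a genuinely $c_{j,\new}$-dimensional estimate, $D_{j,\new} = \Phi_{j,0}P_{j,\new}$ must have essentially full column rank, i.e. $\Span(P_{j,\new})$ should be (nearly) independent of the accumulated subspace --- which is why the denseness hypothesis is phrased for $[P_0\ P_{1,\new}\ \cdots\ P_{J-1,\new}]$ as a whole, implicitly with those columns linearly independent. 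Granted the span inclusion, all of this introduces no new analytical content, so Theorem \ref{thm1} and its conclusions carry over as claimed.
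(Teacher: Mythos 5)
Your reduction to the accumulated basis $G_j := [P_0, P_{1,\new}, \dots, P_{j,\new}]$, with the span inclusion $\Span(P_j) \subseteq \Span(G_j)$, the decomposition $L_t = \mathrm{basis}(G_{j-1})\tilde a_{t,*} + P_{j,\new}a_{t,\new}$ (with $a_{t,\new} = P_{j,\new}'L_t$ unchanged), and the observation that the denseness hypothesis must now be aimed at $G_{J-1}$, is exactly the route the paper intends (the paper states the corollary without proof, but its discussion of $\Phat_{(t)}$'s ever-growing rank is this argument in sketch form). The monotonicity $\kappa_{2s}(G_{j-1})\le\kappa_{2s}(G_{J-1})$ and the span inclusion both check out.

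However, there is a real gap, which you flag at the end but then understate. Signal Model \ref{del model} only guarantees $P_{j,\new}'P_{j-1} = 0$, where $P_{j-1}$ is the \emph{current} (post-deletion, post-rotation) basis --- a possibly strict subspace of $\Span(G_{j-1})$. Once you substitute $P_{j-1}\mapsto\mathrm{basis}(G_{j-1})$, every step of the Theorem \ref{thm1} proof that invokes $P_{j,\new}'P_{j-1}=0$ now needs the stronger $P_{j,\new}\perp G_{j-1}$. Concretely, Lemma \ref{lemma0} items 3--4 need $Q'P=0$ with $Q=P_\new$, $P=P_*$; these feed Lemma \ref{RIC_bnd} item 3 (bounding $\kappa_s(\Phat_{\new,k})$), Corollary \ref{RICnumbnd} (the RIC of $\Phi_{k-1}$), and Fact \ref{keyfacts} item 2 (invertibility and norm of $R_{j,\new}^{-1}$, hence the very definition of $E_{j,\new}$). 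If $P_{j,\new}$ has a component in $\Span(G_{j-1})$ --- e.g., a previously deleted direction is re-introduced --- then $\|\Phat_{j-1}'P_{j,\new}\|$ is no longer of order $\zeta_*$, $D_{j,\new}=\Phi_{j,0}P_{j,\new}$ may be rank-deficient, and the projected-CS and projection-PCA analyses both break. Your remark that the columns of $G_{J-1}$ are ``implicitly linearly independent'' does not rescue this: linear independence is strictly weaker than the orthogonality the lemmas use, and neither follows from $\kappa_{2s}(G_{J-1})\le 0.3$. To make the corollary rigorous one must add the hypothesis $P_{j,\new}'G_{j-1}=0$ for all $j$, in which case the $L_t$ sequence literally satisfies Signal Model \ref{Ltmodel} with $\tilde P_j := G_j$ and Theorem \ref{thm1} applies verbatim; this condition is satisfied in the paper's simulations but is not stated in Signal Model \ref{del model}, so the omission is the paper's as much as yours --- but it should be surfaced, not waved off.
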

Because Algorithm \ref{reprocs} never deletes directions, the rank of $\Phat_{(t)}$ keeps increasing with every subspace change time (even though the rank of $P_{(t)}$ is now bounded by $r_0 + c_{\text{dif}}$). As a result, the performance guarantee above still requires a bound on $J$ that is imposed by the denseness assumption.  In this section, we address this limitation by  re-estimating the current subspace after the newly added directions have been accurately estimated. This helps to ``delete" $\Span(P_\old)$ from the subspaces estimate.  For the resulting algorithm, as we will see, we do not need a bound on the number of changes, $J$, as long as the separation between the subspace change times is allowed to grow logarithmically with $J$.


One simple way to re-estimate the current subspace would be by standard PCA: at $t=\tilde{t}_j + \tilde{\alpha}-1$, compute $\Phat_j \leftarrow \text{proj-PCA}([\Lhat_t;  \tilde{\mathcal{I}}_{j,1}], [.], r_j)$ and let $\Phat_{(t)} \leftarrow \Phat_j$. Using the $\sin \theta$ theorem \cite{davis_kahan} and the matrix Hoeffding inequality \cite{tail_bound}, and using the procedure used earlier to analyze projection PCA, it can be shown that, as long as $f$, a bound on the maximum condition number of $\operatorname{Cov}[L_t]$, is small enough, doing this is guaranteed to give an accurate estimate of $\Span(P_j)$. However as explained in Remark \ref{large_f}, $f$ cannot be small because our problem definition allows large noise, $L_t$, but assumes slow subspace change. 
%
In other works that analyze standard PCA, e.g. \cite{nadler} and references therein, the large condition number does not cause a problem because they assume that the error ($e_t$ in our case) in the observed data vector ($\Lhat_t$) is uncorrelated with the true data vector ($L_t$). Under this assumption, one only needs to increase the PCA data length $\alpha$ to deal with larger condition numbers. However, in our case, because $e_t$ is correlated with $L_t$, this strategy does not work. This issue is explained in detail in Appendix \ref{projpca}.

In this section, we introduce a generalization of the above strategy called cluster-PCA that removes the requirement that $f$ be small, but instead only requires that the eigenvalues of $\text{Cov}(L_t)$ be clustered for the times when the changed subspace has stabilized. Under this assumption, cluster-PCA recovers one cluster of entries of $P_j$ at a time by using an approach that generalizes the projection PCA step developed earlier. We first explain the clustering assumption in Sec \ref{eigencluster} below and then give the cluster-PCA algorithm.


\subsection{Clustering assumption}\label{eigencluster}

\newcommand{\group}{\mathcal{G}}
For positive integers $K$ and $\alpha$, let $\tilde{t}_j := t_j + K \alpha$. We set their values in, Theorem \ref{thm2}.  
Recall from the model on $L_t$ and the slow subspace change assumption that new directions, $P_{j,\new}$, get added at $t=t_j$ and initially, for the first $\alpha$ frames,  the projection of $L_t$ along these directions is small (and thus their variances are small), but can increase gradually. It is fair to assume that within $K \alpha$ frames, i.e. by $t=\tilde{t}_j$, the variances along these new directions have stabilized and do not change much for $t \in [\tilde{t}_j, t_{j+1}-1]$. It is also fair to assume that the same is true for the variances along the existing directions, $P_{j-1}$. In other words, we assume that the matrix $\Lambda_t$ is either constant or does not change much during this period. Under this assumption, we assume that we can cluster its eigenvalues (diagonal entries) into a few clusters such that the distance between consecutive clusters is large and the distance between the smallest and largest element of each cluster is small. 
We make this precise below.

\begin{ass} \label{clusterass}
Assume the following.
\ben
\item 
Either $\Lambda_t = \Lambda_{\tilde{t}_j}$ for all ${t \in [\tilde{t}_j, t_{j+1}-1]}$ or $\Lambda_t$ changes very little during this period so that for each $i=1,2,\cdots,r_j$, $\min_{t \in [\tilde{t}_j, t_{j+1}-1]} \lambda_i(\Lambda_t) \ge \max_{t \in [\tilde{t}_j, t_{j+1}-1]} \lambda_{i+1}(\Lambda_t)$. 

\item  Let $ \group_{j,(1)}, \group_{j,(2)}, \cdots, \group_{j,(\vartheta_j)} $ be a partition of the index set $\{1,2, \dots r_j\}$ so that $\min_{i \in \group_{j,(k)}} \min_{t \in [\tilde{t}_j, t_{j+1}-1]} \lambda_i(\Lambda_t)  > \max_{i \in \group_{j,(k+1)}} \max_{t \in [\tilde{t}_j, t_{j+1}-1]} \lambda_i(\Lambda_t)$, i.e. the first group/cluster contains the largest set of eigenvalues, the second one the next smallest set and so on (see Figure \ref{clustering_diag}). Let
\ben
\item $G_{j,k} := (P_j)_{ \group_{j,(k)} }$ be the corresponding cluster of eigenvectors, then $\operatorname{span}(P_j) = \operatorname{span}([G_{j,1},G_{j,2},\cdots,G_{j,\vartheta_j}])$;
\item $\tilde{c}_{j,k} := |\group_{j,(k)}|$ be the number of elements in $\group_{j,(k)}$, then $\sum_{k=1}^{\vartheta_j} \tilde{c}_{j,k} = r_j$; \\
$\tilde{c}_{\min} : = \min_j \min_{k = 1,2,\cdots,\vartheta_j} \tilde{c}_{j,k}$

\item  ${\lambda_{j,k}}^- := \min_{i\in \group_{j,(k)} } \min_{t\in [\tilde{t}_j, t_{j+1}-1]}  \lambda_i (\Lambda_t)$,  ${\lambda_{j,k}}^+ := \max_{i \in \group_{j,(k)} } \max_{t\in [\tilde{t}_j, t_{j+1}-1]}  \lambda_i (\Lambda_t)$ and ${\lambda_{j,\vartheta_j+1} }^+:= 0$;

\item $\tilde{g}_{j,k} := {\lambda_{j,k}}^+ / {\lambda_{j,k}}^- $ (notice that $\tilde{g}_{j,k} \ge 1$);

\item $\tilde{h}_{j,k} := {\lambda_{j,k+1}}^+ / {\lambda_{j,k}}^-$ (notice that $\tilde{h}_{j,k} < 1$);
\item $\tilde{g}_{\max} := \max_j  \max_{k = 1,2,\cdots,\vartheta_j} \tilde{g}_{j,k}$, \\ $\tilde{h}_{\max} := \max_j  \max_{k = 1,2,\cdots,\vartheta_j} \tilde{h}_{j,k}$,
\item  $\vartheta_{\max}: = \max_j \vartheta_j$
\een

 We assume that $\tilde{g}_{\max}$ is small enough (the distance between the smallest and largest eigenvalues of a cluster is small)
and $\tilde{h}_{\max}$ is small enough (distance between consecutive clusters is large). We quantify this in Theorem \ref{thm2}.
\een
\end{ass}

\begin{remark}
In order to address a reviewer's concern, we should clarify the following point. The above assumption still allows the newly added eigenvalues to become large and hence still allows the subspace of $L_t$ to change significantly over time. The above requires the covariance matrix of $L_t$ to be constant or nearly constant only for the time between $\tilde{t}_j:= t_j + K \alpha$ and the next change time, $t_{j+1}$ and not for the first $K \alpha$ frames. Slow subspace change assumes that the projection of $L_t$ along the new directions is initially small for the first $\alpha$ frames but then can increase gradually over the next $K-1$ intervals of duration $\alpha$. The variance along the new directions can increase by as much as $1.2^{2K}$ times the initial variance. Thus by $t=\tilde{t}_j=t_j+K \alpha$, the variances along the new directions can have already increased to large enough values. 
\\
We can allow the variances to increase for even longer with the following simple change: re-define $\tilde{t}_j$ as $\tilde{t}_j:= t_{j+1} - \vartheta_j \tilde\alpha$ in both the clustering assumption and the algorithm. With this redefinition, we will be doing cluster-PCA at the very end of the current subspace interval.
\\
Lastly, note that the projection along the new directions can further increase in the later subspace change periods also.
\end{remark}

\begin{figure}[t!]
\centerline{
\epsfig{file = 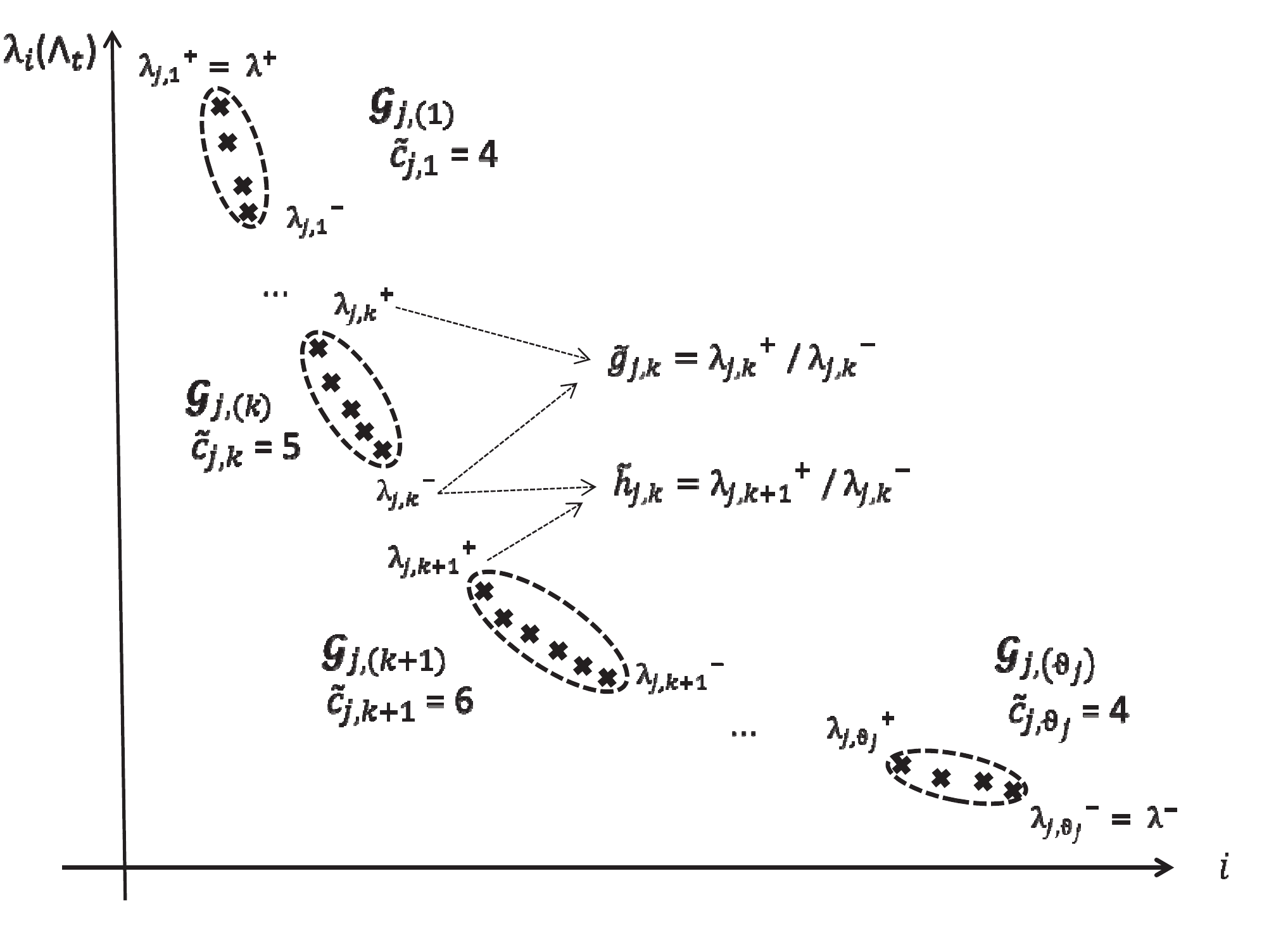, width = \columnwidth}
}
\caption{\small{We illustrate the clustering assumption. Assume $\Lambda_t = \Lambda_{\tilde{t}_j}$.
}}
\label{clustering_diag}
\end{figure}

\subsection{The ReProCS with Cluster PCA Algorithm}
ReProCS-cPCA is summarized in Algorithm \ref{ReProCS_del}. It uses the following definition.
\begin{definition}\label{defn_intervals}
Let  $\tilde{t}_j := t_j + K\alpha$. Define the following time intervals
\ben
\item $\mathcal{I}_{j,k}:= [t_j + (k-1)\alpha, t_j + k\alpha-1]$ for $k=1,2,\cdots,K$.
\item $\tilde{\mathcal{I}}_{j,k} := [\tilde{t}_j + (k-1) \tilde{\alpha}, \tilde{t}_j + k \tilde{\alpha}-1]$ for $k = 1,2,\cdots, \vartheta_j$.
\item $\tilde{\mathcal{I}}_{j,\vartheta_j+1} := [\tilde{t}_j + \vartheta_j \tilde{\alpha}, t_{j+1}-1]$.
\een
\end{definition}

\begin{algorithm*}[t]
\caption{Recursive Projected CS with cluster-PCA (ReProCS-cPCA)}\label{ReProCS_del}
{\bf Parameters: } algorithm parameters: $\xi$, $\omega$, $\alpha$, $\tilde{\alpha}$, $K$, model parameters: $t_j$, $c_{j,\new}$, $\vartheta_j$ and $\tilde{c}_{j,i}$ 
\\
{\bf Input: } $n \times 1$ vector, $M_t$, and $n \times r_0$ basis matrix $\hat{P}_0$.
{\bf Output: } $n \times 1$ vectors $\Shat_t$ and $\Lhat_t$, and $n \times r_{(t)}$ basis matrix $\Phat_{(t)}$. 
\\
{\bf  Initialization: } 
Let $\Phat_{(t_\train)} \leftarrow \Phat_0$.
Let $j \leftarrow 1$, $k\leftarrow 1$. 
For $t > t_{\train}$, do the following:
\ben
\item {\bf Estimate $T_t$ and $S_t$ via Projected CS: }
\ben
\item \label{othoproj} Nullify most of $L_t$: compute $\Phi_{(t)} \leftarrow I-\Phat_{(t-1)} {\Phat_{(t-1)}}'$, $y_t \leftarrow \Phi_{(t)} M_t$
\item \label{Shatcs} Sparse Recovery: compute $\hat{S}_{t,\cs}$ as the solution of $\min_{x} \|x\|_1 \ s.t. \ \|y_t - \Phi_{(t)} x\|_2 \leq \xi$
\item \label{That} Support Estimate: compute $\hat{T}_t = \{i: \ |(\hat{S}_{t,\cs})_i| > \omega\}$
\item \label{LS} LS Estimate of $S_t$: compute $(\hat{S}_t)_{\hat{T}_t}= ((\Phi_t)_{\hat{T}_t})^{\dag} y_t, \ (\hat{S}_t)_{\hat{T}_t^{c}} = 0$
\een
\item {\bf Estimate $L_t$. } $\hat{L}_t = M_t - \hat{S}_t$.
\item \label{PCA} 
{\bf Update $\Phat_{(t)}$}: 
\ben
\item If $t \neq t_j + q\alpha-1$ for any $q=1,2, \dots K$ and $t \neq t_j + K \alpha + \vartheta_j \tilde{\alpha} -1$,
\ben
\item set $\Phat_{(t)} \leftarrow \Phat_{(t-1)}$
\een
\item {\bf Addition: Estimate $\Span(P_{j,\new})$ iteratively using proj-PCA: } If $t = t_j + k\alpha-1$
\ben

\item $\Phat_{j,\new,k} \leftarrow \text{proj-PCA} ([\hat{L}_{t_j+(k-1)\alpha}, \dots , \hat{L}_{t_j + k\alpha - 1}], \Phat_{j-1}, c_{j,\new})$

\item set $\Phat_{(t)} \leftarrow [\Phat_{j-1} \ \Phat_{j,\new,k}]$.

\item If $k=K$, reset $k \leftarrow 1$; else increment $k \leftarrow k+1$.
\een

\item {\bf Deletion: Estimate $\Span(P_j)$ by cluster-PCA:} If $t= t_j + K \alpha + \vartheta_j \tilde{\alpha} -1$,

\ben
\item set $\hat{G}_{j,0} \leftarrow [.]$
\item For $i = 1,2,\cdots, \vartheta_j$,
  \bi
  \item $\hat{G}_{j,i} \leftarrow \text{proj-PCA}( [\hat{L}_{\tilde{t}_j+(i-1)\tilde\alpha}, \dots , \hat{L}_{\tilde{t}_j+i\tilde\alpha -1}], [\hat{G}_{j,1},\hat{G}_{j,2}, \dots \hat{G}_{j,i-1}], \tilde{c}_{j,i})$
  \ei
  End for
\item set $\Phat_j \leftarrow [\hat{G}_{j,1},\cdots, \hat{G}_{j,\vartheta_j}]$ and set $\Phat_{(t)} \leftarrow \Phat_{j}$.
\item increment $ j \leftarrow j+1$.
\een

\een
\een
\end{algorithm*}

Steps 1, 2, 3a and 3b of ReProCS-cPCA are the same as Algorithm \ref{reprocs}.  As shown earlier, within $K$ proj-PCA updates ($K$ chosen as given in Theorem \ref{thm2}) $\|e_t\|_2$ and the subspace error,  $\SE_{(t)}$, drop down to a constant times ${\zeta}$. In particular, if at $t=t_{j}-1$,  $\SE_{(t)} \le r \zeta$, then at $t= \tilde{t}_j:=t_j + K \alpha$, we can show that $\SE_{(t)} \le (r + c_{\max}) \zeta$. Here $r:=r_{\max} = r_0 + c_{\text{dif}}$.
To bring $\SE_{(t)}$ down to $r \zeta$ before $t_{j+1}$, we proceed as follows. 
%
%
The main idea is to recover one cluster of entries of $P_j$ at a time. For each batch we use a new set of $\tilde\alpha$ frames. The entire procedure is done at $t=\tilde{t}_j + \vartheta_j \tilde\alpha -1$ (since we cannot update $\Phat_{(t)}$ until all clusters are recovered). We proceed as follows. In the first iteration, we use standard PCA to estimate the first cluster, $\Span(G_{j,1})$. In the $k^{th}$ iteration, we apply proj-PCA on $[\hat{L}_{\tilde{t}_j+(k-1)\tilde\alpha}, \dots , \hat{L}_{\tilde{t}_j+k\tilde\alpha -1}]$ with $P \leftarrow [\hat{G}_{j,1}, \hat{G}_{j,2}, \dots \hat{G}_{j,k-1}]$ to estimate $\Span(G_{j,k})$. By modifying the approach used to prove Theorem \ref{thm1}, we can show that since $\tilde{g}_{j,k}$ and $\tilde{h}_{j,k}$ are small enough, $\Span(G_{j,k})$ will be accurately recovered, i.e. $\|(I - \sum_{i=1}^{k} \hat{G}_{j,i} \hat{G}_{j,i}')G_{j,k}\|_2 \le \tilde{c}_{j,k} \zeta$. We do this $\vartheta_j$ times and finally we set $\Phat_j \leftarrow [\hat{G}_{j,1}, \hat{G}_{j,2} \dots \hat{G}_{j,\vartheta_j}]$ and $\Phat_{(t)} \leftarrow \Phat_j$.
Thus, at $t=\tilde{t}_j + \vartheta_j \tilde{\alpha}-1$, $\SE_{(t)} \le \sum_{k=1}^{\vartheta_j}  \|(I - \sum_{i=1}^{k} \hat{G}_{j,i} \hat{G}_{j,i}') G_{j,k} \|_2 \le  \sum_{k=1}^{\vartheta_j} \tilde{c}_{j,k} \zeta =  r_j \zeta \le r \zeta$. Under the assumption that $t_{j+1} - t_j \ge K \alpha + \vartheta_{\max} \tilde{\alpha}$, this means that before the next subspace change time, $t_{j+1}$, $\SE_{(t)}$ is below $r \zeta$.

\begin{figure*}[t!]
\centerline{
\epsfig{file = 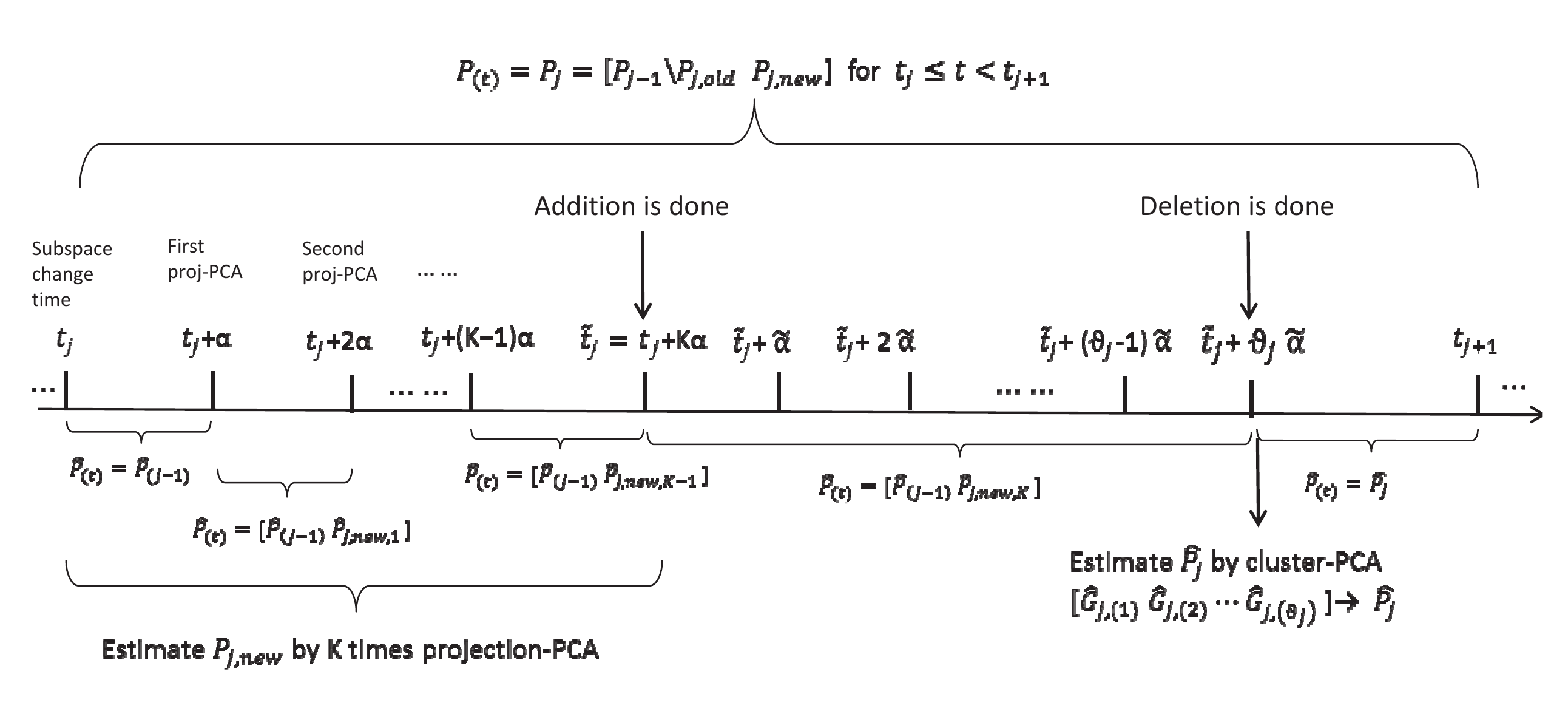, width =16cm, height = 5cm}
}
\caption{\small{A diagram illustrating subspace estimation by ReProCS-cPCA
}}
\label{add_del_proj_pca_diag2}
\end{figure*}

We illustrate the ideas of subspace estimation by addition proj-PCA and cluster-PCA in Fig. \ref{add_del_proj_pca_diag2}. The connection between proj-PCA done in the addition step and for the cluster-PCA (in deletion) step is given in Table \ref{tab_diff}.

\subsection{Performance Guarantees} \label{perf_g_del}



\begin{definition} \label{def alpha del}
We need the following definitions for stating the main result.
\ben
\item
We define $\alpha_{\del}(\zeta)$ as 
\begin{multline*}
\alpha_\del(\zeta) : = \left\lceil (\log 6 \vartheta_{\max} J + 11 \log n)\cdot \right.\\
\left.\frac{8 \cdot 10^2}{( \zeta \lambda^-)^2} \max( 4.2^2, 4 b_7^2 ) \right\rceil \nn 
\end{multline*}
where $b_7 := (\sqrt{r} \gamma_* + \phi^+ \sqrt{\zeta})^2$ and $\phi^+=1.1732$.
We choose $\alpha_{\text{del}}$ so that if , $\tilde\alpha \geq \alpha_{\text{del}}$, then the conclusions of the theorem will hold wth probability at least $(1 -   2n^{-10})$.
\item Define
%
\begin{align*}
&f_{inc}(\tilde{g},\tilde{h},\kappa_{s,e}^+,\kappa_{s,D}^+) : =  \\
&(r+c) \zeta \Bigg[ \max( 3\kappa_{s,e}^+ \kappa_{s,D}^+  \phi^+ \tilde{g}, \kappa_{s,e}^+ \phi^+ \tilde{h})  \\
& +  \big[\kappa_{s,e}^+ \phi^+ + \kappa_{s,e}^+ (1+ 2\phi^+)\frac{r^2\zeta^2}{\sqrt{1-r^2\zeta^2}} \big] \tilde{h} \\
& + \big [\frac{r^2}{r+c}\zeta + 4r \zeta \kappa_{s,e}^+ \phi^+ + 2(r+c) \zeta(1+ {\kappa_{s,e}^+}^2) {\phi^+}^2\big] f \\
& \hspace{2in}+ 0.2 \frac{1}{r+c} \Bigg],
\end{align*}
\begin{align*}
f_{dec}(\tilde{g},\tilde{h},\kappa_{s,e}^+,\kappa_{s,D}^+) & := 1- \tilde{h} - 0.2 \zeta - r^2 \zeta^2 f - r^2 \zeta^2 \\
&\hspace{.5in} -  f_{inc}(\tilde{g},\tilde{h},\kappa_{s,e}^+,\kappa_{s,D}^+)
\end{align*}
Notice that $f_{inc}(.)$ is an increasing function of $\tilde{g},\tilde{h}$ and $f_{dec}(.)$ is a decreasing function of $\tilde{g},\tilde{h}$.
\een
\end{definition}

\begin{theorem} \label{thm2}
Consider Algorithm \ref{ReProCS_del}.  Let $c:= c_{\max}$ and $r:= r_{\max} = r_0 + c_{\text{dif}}$.
Pick a $\zeta$ that satisfies
\[
\zeta  \leq  \min\left(\frac{10^{-4}}{r^2},\frac{1.5 \times 10^{-4}}{r^2 f},\frac{1}{r^{3}\gamma_*^2}\right)
\]
Assume that the initial subspace estimate is accurate enough, i.e. $\|(I - \Phat_0 \Phat_0') P_0\| \le r_0 \zeta$.
%
If the following conditions hold:
\ben
\item All of the conditions of Theorem \ref{thm1} hold with $L_t$ satisfying Signal model \ref{del model},
\item $\tilde{\alpha} \ge \alpha_{\del}(\zeta)$,
\item  $\min_{j} (t_{j+1} -t_j) > K  \alpha + \vartheta_{\max} \tilde{\alpha}$

\item algorithm estimates $\Phat_{j-1}$ and $\Phat_{j,\new,K}$ satisfy
\[
\max_j \kappa_s ((I-\hat{P}_{j-1} {\hat{P}_{j-1}}' - \hat{P}_{j,\new,K} {\hat{P}_{j,\new,K}}')P_j) \leq \kappa_{s,e}^+
\]

\item {\em (clustered eigenvalues) } Assumption \ref{clusterass} holds with $\tilde{g}_{\max},\tilde{h}_{\max}, \tilde{c}_{\min}$ satisfying $f_{dec}(\tilde{g}_{\max},\tilde{h}_{\max}, \kappa_{s,e}^+,\kappa_{s,*}^+ + r \zeta) - \frac{f_{inc}(\tilde{g}_{\max},\tilde{h}_{\max}, \kappa_{s,e}^+,\kappa_{s,*}^+ + r \zeta)}{\tilde{c}_{\min} \zeta} > 0$. 

\een

then, with probability at least $1 -  2 n^{-10}$, at all times, $t$, 
\ben
\item 
$
\That_t = T_t \  \text{and} \  \|e_t\|_2 = \|L_t - \hat{L}_t\|_2 = \|\hat{S}_t - S_t\|_2 \le 0.18 \sqrt{c} \gamma_{\new} + 1.24 \sqrt{\zeta}.
$ 

\item the subspace error, $\SE_{(t)}$ satisfies
\bea
&&\SE_{(t)} \leq \nn  \\
&&\left\{  \begin{array}{ll}
0.6^{k-1} + r \zeta + 0.4 c \zeta  &  \ \text{if}  \    t \in \mathcal{I}_{j,k}, \ k=1,2,\cdots,K\\  
(r+c) \zeta                        & \  \text{if} \   t \in \tilde{\mathcal{I}}_{j,k}, \ k=1,2,\cdots,\vartheta_j  \\
r \zeta                            & \  \text{if} \  t \in \tilde{\mathcal{I}}_{j,\vartheta_j+1}
\end{array} \right. \nn
\eea

\item the error $e_t = \hat{S}_t - S_t = L_t - \hat{L}_t$ satisfies the following at various times
\[
\|e_t\|_2   \le  
\begin{cases}
 1.17 [ 0.15  \cdot 0.72^{k-1} \sqrt{c}\gamma_{\new} + \\
 \hspace{.5in}  0.15 \cdot 0.4 c \zeta \sqrt{c} \gamma_* + r \zeta \sqrt{r} \gamma_*]  \\
 \hspace{.5in}   \text{if}  \   t \in \mathcal{I}_{j,k}, \ k=1,2,\cdots,K \\
1.17(r+c) \zeta \sqrt{r} \gamma_*   \\
 \hspace{.5in} \text{if} \ \ t \in \tilde{\mathcal{I}}_{j,k}, \ k=1,2,\cdots,\vartheta_j \\
 1.17 r\zeta \sqrt{r} \gamma_*   \ \ \text{if} \ \ t \in \tilde{\mathcal{I}}_{j,\vartheta_j+1}
\end{cases} \nn
\]

\een
\end{theorem}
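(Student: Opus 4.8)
The plan is to mimic the proof architecture of Theorem \ref{thm1}, extending it with a second family of ``main lemmas'' that govern the cluster-PCA (deletion) steps, and then chaining all the high-probability events together in a way that tolerates an unbounded number of changes $J$. For the addition phase — $t\in\mathcal{I}_{j,k}$, $k=1,\dots,K$ — I would reuse the analysis of Theorem \ref{thm1} essentially verbatim: Signal Model \ref{del model} differs from Signal Model \ref{Ltmodel} only in that $P_{j-1}$ is replaced by $(P_{j-1}R_j\setminus P_{j,\old})$, which is still a basis matrix orthogonal to $P_{j,\new}$, so Lemmas \ref{cslem}, \ref{RIC_bnd}, \ref{zetakbnd}, \ref{Dnew0_lem}, \ref{termbnds} and \ref{expzeta} all carry over with the denseness of $[P_0,P_{1,\new},\dots,P_{J-1,\new}]$ (the extra hypothesis flagged in Corollary \ref{cor_rep}) playing the role that denseness of $P_J$ played before. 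The upshot: if at $t=t_j-1$ we have $\SE_{(t)}\le r\zeta$, then w.h.p. by $t=\tilde t_j:=t_j+K\alpha$ we have $\zeta_{j,K}\le\zeta_K^+\le 0.6^K+0.4c\zeta$, hence $\SE_{(t)}\le(r+c)\zeta$ on $\tilde{\mathcal{I}}_{j,1}$, with $\hat T_t=T_t$ and $e_t$ given by \eqref{etdef0} throughout $[t_j,\tilde t_j-1]$.

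The genuinely new part is the cluster-PCA recursion. For the $i$-th cluster I would define $\tilde\zeta_{j,i}:=\|(I-\sum_{\ell=1}^{i}\hat G_{j,\ell}\hat G_{j,\ell}')G_{j,i}\|_2$ together with a deterministic high-probability bound $\tilde\zeta_{j,i}^+$, exactly as $\zeta_{j,k}^+$ was built in Definition \ref{zetakplus}, and show $\tilde\zeta_{j,i}\le\tilde c_{j,i}\zeta$ w.h.p., conditioned on all earlier clusters and the addition phase having succeeded; summing over $i=1,\dots,\vartheta_j$ then gives $\SE_{(\tilde t_j+\vartheta_j\tilde\alpha-1)}\le\sum_i\tilde c_{j,i}\zeta=r_j\zeta\le r\zeta$, which by condition 3 ($t_{j+1}-t_j>K\alpha+\vartheta_{\max}\tilde\alpha$) is attained before the next change. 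To prove the per-cluster claim I would again invoke the $\sin\theta$ theorem (Theorem \ref{sin_theta}), Weyl (Theorem \ref{weyl}) and Ostrowski (Theorem \ref{ost}): writing $\Psi_i:=I-\sum_{\ell<i}\hat G_{j,\ell}\hat G_{j,\ell}'$ and building $\mathcal{A}$ from $\frac{1}{\tilde\alpha}\sum_{t\in\tilde{\mathcal{I}}_{j,i}}\Psi_i L_t L_t'\Psi_i$, the ``signal'' block restricted to the $G_{j,i}$-directions has smallest eigenvalue $\gtrsim\lambda_{j,i}^-(1-(\text{small}))$ while the perpendicular block's top eigenvalue is $\lesssim\lambda_{j,i+1}^+ +(\text{leakage})$; dividing by $\lambda_{j,i}^-$ converts the eigen-gap into $1-\tilde h_{j,i}-\dots$, which is precisely the structure of $f_{dec}$ in Definition \ref{def alpha del}. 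The perturbation $\mathcal{H}$ — the $e_t$-dependent terms — is bounded exactly as in Lemma \ref{termbnds} via Corollaries \ref{hoeffding_nonzero} and \ref{hoeffding_rec}, now using the expression for $e_t$ from Lemma \ref{cslem} on $\tilde{\mathcal{I}}_{j,i}$ together with condition 4 (the bound $\kappa_{s,e}^+$ on the denseness of the currently-unestimated part). This yields $\tilde\zeta_{j,i}^+\le f_{inc}(\tilde g_{\max},\tilde h_{\max},\kappa_{s,e}^+,\kappa_{s,*}^++r\zeta)/f_{dec}(\tilde g_{\max},\tilde h_{\max},\kappa_{s,e}^+,\kappa_{s,*}^++r\zeta)$, and condition 5 is exactly what forces $\tilde\zeta_{j,i}^+\le\tilde c_{j,i}\zeta$, closing the recursion.

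Finally I would chain the events. Introducing events $\Gamma_{j,\add}$ and $\Gamma_{j,\del}$ for the two phases of interval $j$, with $\Gamma_{j,\del}$ implying $\SE_{(t)}\le r\zeta$ at $t_{j+1}-1$ (so the induction hypothesis feeding interval $j+1$ is restored), Lemmas \ref{subset_lem} and \ref{rem_prob} give $\mathbf{P}(\bigcap_{j=1}^J\Gamma_{j,\add}\cap\Gamma_{j,\del})\ge\prod_{j=1}^J p_K(\alpha,\zeta)^K\,\tilde p(\tilde\alpha,\zeta)^{\vartheta_{\max}}$ where $\tilde p$ is the per-cluster success probability from the deletion analysis. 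Plugging in $\alpha\ge\alpha_{\add}(\zeta)$ (whose bound carries a $\log(6KJ)+11\log n$ factor) makes the addition contribution fail with probability at most $n^{-10}$, and $\tilde\alpha\ge\alpha_{\del}(\zeta)$ (with its $\log(6\vartheta_{\max}J)+11\log n$ factor) makes the deletion contribution fail with probability at most $n^{-10}$; a union bound gives the claimed $1-2n^{-10}$ for any $J$ as long as the separation $t_{j+1}-t_j$ is allowed to grow like $K(\zeta)\alpha+\vartheta_{\max}\tilde\alpha=O(\log J)$ — this is the logarithmic-separation relaxation. The three conclusions then follow: the support/error bound from Lemma \ref{cslem} extended to the deletion intervals (using $\|a_t\|_2\le\sqrt r\gamma_*$ and the $\SE_{(t)}$ bounds in \eqref{etdef0}); the $\SE_{(t)}$ bound by collecting the three cases (addition interval $\mathcal{I}_{j,k}$: $0.6^{k-1}+r\zeta+0.4c\zeta$; a cluster-PCA interval $\tilde{\mathcal{I}}_{j,k}$, $k\le\vartheta_j$: $(r+c)\zeta$; after completion: $r\zeta$); and the $\|e_t\|_2$ bound by substituting these into \eqref{etdef0}.

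The main obstacle I expect is the per-cluster $\sin\theta$ bound in the deletion phase. Unlike the addition phase, where the noise seen by projection-PCA came only from the small $e_t$ and from leakage along already-accurately-estimated old directions, here when estimating a cluster $G_{j,i}$ with \emph{small} eigenvalues one must control leakage from the earlier, \emph{large}-eigenvalue clusters $G_{j,1},\dots,G_{j,i-1}$, which are known only to accuracy $O(r\zeta)$ yet carry variance up to $\lambda^+$; this produces terms of order $r^2\zeta^2 f$, which is exactly why the theorem needs $\zeta\le 1.5\times10^{-4}/(r^2 f)$ and why $f_{inc}$ and $f_{dec}$ have their particular forms. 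Getting the bookkeeping of all these cross terms right, and verifying that the smallness of $\tilde g_{\max},\tilde h_{\max},\tilde c_{\min}$ in condition 5 indeed closes $\tilde\zeta_{j,i}^+\le\tilde c_{j,i}\zeta$, is the delicate, calculation-heavy core of the argument; the rest is a faithful re-run of the Theorem \ref{thm1} machinery.
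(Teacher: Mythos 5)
Your overall architecture matches the paper's own proof closely: reuse the Theorem~\ref{thm1} machinery (Lemmas~\ref{expzeta} and \ref{mainlem}, with $\zeta_{j,*}^+$ redefined to $r\zeta$) for the addition phase; build a per-cluster recursion $\tilde\zeta_{j,i}\le\tilde c_{j,i}\zeta$ via $\sin\theta$/Weyl/Ostrowski and matrix Hoeffding for the cluster-PCA phase, with condition~5 closing the recursion; sum over clusters to get $\SE\le r\zeta$ before $t_{j+1}$; then chain the nested events $\Gamma_{j,0}\supseteq\Gamma_{j,1}\supseteq\dots$ and $\tilde\Gamma_{j,k}$ via Lemma~\ref{subset_lem}, take $\alpha\ge\alpha_\add$ and $\tilde\alpha\ge\alpha_\del$, and union-bound to $1-2n^{-10}$. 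Your diagnosis of where the hard work lives — leakage from the already-estimated, large-variance clusters $G_{j,1},\dots,G_{j,i-1}$, known only to accuracy $O(r\zeta)$, producing $r^2\zeta^2 f$ cross terms that force $\zeta\lesssim 1/(r^2f)$ — is exactly what Lemmas~\ref{bound_R} and~\ref{lem_bound_terms} are built to handle.

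There is, however, a conceptual misstep in your treatment of the addition phase. You write that Lemmas \ref{cslem}, \ref{RIC_bnd}, etc.\ ``carry over with the denseness of $[P_0,P_{1,\new},\dots,P_{J-1,\new}]$ (the extra hypothesis flagged in Corollary~\ref{cor_rep}) playing the role that denseness of $P_J$ played before.'' That is not a hypothesis of Theorem~\ref{thm2}, and importing it defeats the main point of cluster-PCA. Corollary~\ref{cor_rep} needs that cumulative-stack denseness precisely because Algorithm~\ref{reprocs} never deletes, so $\hat P_{j-1}=[\hat P_0,\hat P_{1,\new,K},\dots,\hat P_{j-1,\new,K}]$ and the rank of the projection $\Phi_{j,0}=I-\hat P_{j-1}\hat P_{j-1}'$ grows with $j$. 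In Algorithm~\ref{ReProCS_del}, by contrast, the deletion step replaces this with $\hat P_{j-1}=[\hat G_{j-1,1},\dots,\hat G_{j-1,\vartheta_{j-1}}]$ of rank $r_{j-1}\le r_{\max}$, so the CS matrix in the addition phase only needs denseness of the bounded-rank $P_{j-1}$ — which is exactly what $\max_j\kappa_{2s}(P_{j-1})\le 0.3$ under Signal Model~\ref{del model} provides, with no $J$-dependence. This is why the theorem can tolerate unbounded $J$ (with only $\log J$ in the $\alpha$'s); it is also why the paper redefines $\zeta_{j,*}^+:=r\zeta$ rather than $(r_0+(j-1)c)\zeta$. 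You do later invoke this bounded-rank property (``the induction hypothesis feeding interval $j+1$ is restored''), so your proposal is internally inconsistent on this point — but the intended proof, and the one the paper gives, does not need the cumulative denseness, and noticing that is half the value of the theorem.
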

%

\subsection{Special Case when $f$ is small} \label{f_small_sec} 
If in a problem, $L_t$ has small magnitude for all times $t$ or if its subspace does not change, then $f$ can be small. In this case, the clustering assumption is not needed, or in fact it trivially holds with $\vartheta_j=1$, $\tilde{c}_{j,1} = r_j$, $\tilde{g}_{\max} =\tilde{g}_{j,1} = f$ and $\tilde{h}_{\max}={h}_{j,1} = 0$. Thus, $\vartheta_{\max} =1$. With this, the following corollary holds.
\begin{corollary} \label{f_small}
Assume that all conditions of Theorem \ref{thm2} hold except the last one (clustering assumption).  If $f$ is small enough so that $f_{inc}(f,0,\kappa_{s,e}^+,\kappa_{s,*}^+ + r\zeta) \le f_{dec}(f,0,\kappa_{s,e}^+,\kappa_{s,*}^+ + r\zeta) r_j \zeta$, then, all conclusions of Theorem \ref{thm2} hold.
\end{corollary}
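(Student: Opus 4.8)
The plan is to obtain Corollary~\ref{f_small} as an immediate specialization of Theorem~\ref{thm2}: I will exhibit the trivial one-cluster partition of each $\group_{j,(\cdot)}$ and check that, when $f$ is small, this partition satisfies Assumption~\ref{clusterass} and reduces the fifth (clustering) hypothesis of Theorem~\ref{thm2} to exactly the inequality assumed in the corollary. Concretely, for every $j$ I take $\vartheta_j = 1$ and $\group_{j,(1)} := \{1,2,\dots,r_j\}$, so $G_{j,1} = P_j$, $\Span([G_{j,1}]) = \Span(P_j)$, and $\tilde{c}_{j,1} = r_j$. With only one cluster, the between-cluster ordering in item~2 of Assumption~\ref{clusterass} is vacuous, and since $\lambda_{j,\vartheta_j+1}^+ := 0$ one has $\tilde{h}_{j,1} = \lambda_{j,2}^+/\lambda_{j,1}^- = 0$, hence $\tilde{h}_{\max} = 0$. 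Also $\lambda_{j,1}^+ \le \lambda^+$ and $\lambda_{j,1}^- \ge \lambda^-$, so $\tilde{g}_{j,1} = \lambda_{j,1}^+/\lambda_{j,1}^- \le f$, giving $\tilde{g}_{\max} \le f$; moreover $\vartheta_{\max} = 1$ and $\tilde{c}_{\min} = \min_j r_j$. Item~1 of Assumption~\ref{clusterass} holds automatically in the two situations the corollary has in mind: if the subspace of $L_t$ does not change after $\tilde{t}_j$ then $\Lambda_t$ is constant on $[\tilde{t}_j,t_{j+1}-1]$, and if $L_t$ always has small magnitude the near-constancy/ordering requirement is a mild regularity condition I may assume.

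Next I would plug this partition into Theorem~\ref{thm2}. Condition~3 there, $\min_j(t_{j+1}-t_j) > K\alpha + \vartheta_{\max}\tilde{\alpha}$, becomes $\min_j(t_{j+1}-t_j) > K\alpha + \tilde{\alpha}$, a special case of the separation already inherited; conditions~1, 2 ($\tilde{\alpha}\ge\alpha_{\del}(\zeta)$) and~4 are inherited verbatim. It then remains only to verify condition~5, i.e.\ $f_{dec}(\tilde{g}_{\max},\tilde{h}_{\max},\kappa_{s,e}^+,\kappa_{s,*}^+ + r\zeta) - f_{inc}(\tilde{g}_{\max},\tilde{h}_{\max},\kappa_{s,e}^+,\kappa_{s,*}^+ + r\zeta)/(\tilde{c}_{\min}\zeta) > 0$. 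Since $f_{inc}(\cdot)$ is non-decreasing and $f_{dec}(\cdot)$ non-increasing in their first two arguments, and $\tilde{g}_{\max}\le f$, $\tilde{h}_{\max}=0$, I get $f_{inc}(\tilde{g}_{\max},0,\cdot,\cdot)\le f_{inc}(f,0,\cdot,\cdot)$ and $f_{dec}(\tilde{g}_{\max},0,\cdot,\cdot)\ge f_{dec}(f,0,\cdot,\cdot)$; combining with the corollary's hypothesis $f_{inc}(f,0,\kappa_{s,e}^+,\kappa_{s,*}^+ + r\zeta)\le f_{dec}(f,0,\kappa_{s,e}^+,\kappa_{s,*}^+ + r\zeta)\,r_j\zeta$, taken at the $j$ with $r_j=\tilde{c}_{\min}$, yields $f_{inc}\le f_{dec}\,\tilde{c}_{\min}\zeta$, which is condition~5. (The strict-versus-non-strict gap is harmless: it is absorbed by the slack $\tilde{g}_{\max}\le f$ together with the monotonicity just used, or by shrinking $f$ infinitesimally.) Hence all hypotheses of Theorem~\ref{thm2} hold and its conclusions hold with probability at least $1-2n^{-10}$, which is exactly the claim; note that with $\vartheta_j=1$ the deletion step of Algorithm~\ref{ReProCS_del} collapses to a single standard-PCA re-estimation $\Phat_j \leftarrow \text{proj-PCA}([\Lhat_{\tilde{t}_j},\dots,\Lhat_{\tilde{t}_j+\tilde{\alpha}-1}],[.],r_j)$ on a fresh batch of $\tilde{\alpha}$ frames.

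There is essentially no genuine obstacle — this is a specialization, not a new argument — so the only things needing care are the monotonicity bookkeeping above and the status of item~1 of Assumption~\ref{clusterass} in the small-$f$ regime. If one prefers a fully self-contained proof that never invokes the clustering assumption, the alternative would be to re-run the subspace-update analysis behind Theorem~\ref{thm2} with the cluster-PCA step replaced by one standard-PCA step (projection onto the empty matrix): the Davis--Kahan and matrix-Hoeffding estimates (Theorem~\ref{sin_theta}, Corollaries~\ref{hoeffding_nonzero} and~\ref{hoeffding_rec}) used to control the addition step in the Lemma~\ref{zetak}-type arguments carry over directly, with $f$ now playing the role of the condition number; but this merely reproduces known calculations, and routing through Theorem~\ref{thm2} is the economical choice.
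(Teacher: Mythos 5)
Your proof matches the paper's own reasoning: the corollary is obtained by specializing Theorem~\ref{thm2} to the trivial one-cluster partition with $\vartheta_j=1$, $\tilde{c}_{j,1}=r_j$, $\tilde{g}_{\max}=f$, $\tilde{h}_{\max}=0$, $\vartheta_{\max}=1$, at which the clustering hypothesis reduces to exactly the inequality assumed in the corollary. Your extra care (using $\tilde{g}_{\max}\le f$ together with monotonicity of $f_{inc}$ and $f_{dec}$, and noting the strict/non-strict mismatch) is a harmless refinement of the same argument, and your concern about item~1 of Assumption~\ref{clusterass} is moot since that ordering condition has no content when $\vartheta_j=1$.
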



\begin{table*}
\caption{Comparing and contrasting the addition proj-PCA step and proj-PCA used in the deletion step (cluster-PCA)}
\begin{center}
\begin{tabular}{|l||l|}
  \hline  
 {\bf $k^\text{th}$ iteration of addition proj-PCA} & {\bf $k^\text{th}$ iteration of cluster-PCA in the deletion step}  \\ \hline
  done at $t= t_j+k \alpha-1$              &  done at $t=t_j + K \alpha + \vartheta_j \tilde\alpha-1$ \\ \hline
  goal: keep improving estimates of $\Span(P_{j,\new})$ &   goal: re-estimate $\Span(P_{j})$ and thus ``delete" $\Span(P_{j,\old})$ \\ \hline
  compute $\Phat_{j,\new,k}$ by proj-PCA on $[\hat{L}_t: t\in \mathcal{I}_{j,k}]$    &  compute $\hat{G}_{j,k}$ by proj-PCA on $[\hat{L}_t: t\in \tilde{\mathcal{I}}_{j,k}]$  \\
    with $P = \hat{P}_{j-1}$                                                         & with $P = \hat{G}_{j,\text{det},k} = [\hat{G}_{j,1}, \cdots, \hat{G}_{j,k-1}]$   \\ \hline
   start with $\|(I - \Phat_{j-1} {\Phat_{j-1}}')P_{j-1}\|_2 \leq r\zeta$  and $\zeta_{j,k-1} \leq \zeta_{k-1}^+ \le 0.6^{k-1} + 0.4 c \zeta $       & start with $\|(I - \hat{G}_{j,\text{det},k}{\hat{G}_{j,\text{det},k}}')G_{j,\text{det},k}\|_2 \leq r\zeta$ and $\zeta_{j,K} \leq c \zeta$     \\ \hline
%
   need small $g$ which is the                                                   & need small $\tilde{g}_{\max}$ which is the \\
maximum condition number of $\text{Cov}(P_{j,\new}'L_t)$ &  maximum of the maximum condition number of $\text{Cov}(G_{j,k}'L_t)$ \\ \hline
no undetected subspace & extra issue: ensure perturbation due to $\Span(G_{j,\text{undet},k})$ is small; \\
                      & need small $\tilde{h}_{j,k}$ to ensure the above   \\ \hline
  $\zeta_{j,k}$ is the subspace error in estimating $\text{span}(P_{j,\new})$ after the $k^{th}$ step & $\tilde{\zeta}_{j,k}$ is the subspace error in estimating $\text{span}(G_{j,k})$  after the $k^{th}$ step  \\ \hline
%
end with $\zeta_{j,k} \leq  \zeta_k^+ \leq 0.6^k + 0.4 c\zeta$ w.h.p. & end with $\tilde{\zeta}_{j,k} \leq \tilde{c}_{j,k} \zeta$ w.h.p. \\   \hline
  stop when $k=K$ with $K$ chosen so that $\zeta_{j,K} \leq  c\zeta$  & stop when $k = \vartheta_j$ and $\tilde{\zeta}_{j,k} \leq \tilde{c}_{j,k}\zeta$ for all $k=1,2,\cdots,\vartheta_j$ \\ \hline
  after $K^{th}$ iteration: $\Phat_{(t)} \leftarrow [\Phat_{j-1} \ \Phat_{j,\new,K}]$ and $SE_{(t)} \leq (r+c)\zeta$ & after $\vartheta_j^{th}$ iteration: $\Phat_{(t)} \leftarrow  [\hat{G}_{j,1},\cdots, \hat{G}_{j,\vartheta_j}]$ and $SE_{(t)} \leq r\zeta$  \\ \hline
\end{tabular}
\end{center}
 \label{tab_diff}
\end{table*}

\subsection{Discussion} \label{discuss_del}
Notice from Definition \ref{defn_alpha} that $K = K(\zeta)$ is larger if $\zeta$ is smaller. Also, both $\alpha_\add(\zeta)$ and $\alpha_\del(\zeta)$ are inversely proportional to $\zeta$. Thus, if we want to achieve a smaller lowest error level, $\zeta$, we need to compute both addition proj-PCA and cluster-PCA's over larger durations, $\alpha$ and $\tilde\alpha$ respectively, and we will need more number of addition proj-PCA steps $K$. This means that we also require a larger delay between subspace change times, i.e. larger $t_{j+1}-t_j$.

Let us first compare the above result with that for ReProCS for the same subspace change model, i.e. the result from Corollary \ref{cor_rep}.  The most important difference is that ReProCS requires $\kappa_{2s}([P_0, P_{1,\new}, \dots P_{J,\new}]) \le 0.3$ whereas ReProCS-cPCA only requires $\max_j \kappa_{2s}(P_j) \le 0.3$. Moreover in case of ReProCS, the denominator in the bound on $\zeta$ also depends on $J$ whereas in case of ReProCS-cPCA, it only depends on $r_{max} + c_{\max}$.
Because of this, in Theorem \ref{thm2} for ReProCS-cPCA, the only place where $J$ appears is in the definitions of $\alpha_\add$ and $\alpha_\del$. These govern the delay  between subspace change times, $t_{j+1}-t_j$. Thus, with ReProCS-cPCA, $J$ can keep increasing, as long as $\min_j (t_{j+1}-t_j)$ also increases accordingly. Moreover, notice that the dependence of $\alpha_\add$ and $\alpha_\del$ on $J$ is only logarithmic and thus $\min_j (t_{j+1}-t_j)$ needs to only increase in proportion to $\log J$.
%
%
The main extra assumptions that ReProCS-cPCA needs are the clustering assumption;  a longer delay between subspace change times; and  a denseness assumption similar to that on $D_{j,\new,k}$. We verify the clustering assumption in Sec \ref{model_verify}. The ReProCS-cPCA algorithm also needs to know the cluster sizes of the eigenvalues. These can, however, be estimated by computing the eigenvalues of the estimated covariance matrix at $t= \tilde{t}_j + \tilde\alpha$ and clustering them.

{\em Comparison with the PCP result from \cite{rpca}. }
Our results need many more assumptions compared with the PCP result \cite{rpca} which only assumes independent support change of the sparse part and a denseness assumption on the low-rank part. The most important limitation of our work is that both our results need an assumption on the algorithm estimates, thus neither can be called a correctness result. Moreover, both the results assume that the algorithms know the model parameters while the result for PCP does not. The key limiting aspect here is the knowledge of the subspace change times. The advantages of our results w.r.t. that for PCP are as follows.
(a) Both results are for online algorithms; and (b) both need weaker denseness assumptions on the singular vectors of ${\cal L}_t$ as compared to PCP.  PCP \cite{rpca} requires denseness of both the left and right singular vectors of ${\cal L}_t$ and it requires a bound on $\|UV'\|_{\infty}$ where $U$ and $V$ denote the left and right singular vectors. Denseness of only the left singular vectors is needed in our case (notice that $U=[P_{j-1}, P_{j,\new}]$). (c) Finally, the most important advantage of the ReProCS-cPCA result is that it does not need a bound on $J$ (number of subspace change times) as long as $\min_j (t_{j+1}-t_j)$ increases in proportion to $\log J$, and equivalently, does not need a bound on the rank of ${\cal L}_t$. However PCP needs a tight bound on the rank of  ${\cal L}_t$.

\section{Proof of Theorem \ref{thm2}} \label{thmproof}
We first give some new definitions next. We then give the key lemmas leading to the proof of the theorem and the proof itself. Finally we prove these lemmas.



\subsection{Some New Definitions} \label{defs}
Unless redefined here, all previous definitions still apply.

\begin{definition}
\label{def_zeta}
Define the following:
\begin{enumerate}
\item $r = r_{\max} = r_0 + c_{\text{dif}}$ (Note that this is a redefinition from Definition \ref{kappaplus})
\item $\zeta_{j,*}^+ := r \zeta$ (Note that this is a redefinition from Definition \ref{zetakplus})

\item define the sequence $\{{\tilde\zeta_{k}}^+\}_{k=1,2,\cdots,\vartheta_j}$ as follows
\bea
{\tilde\zeta_{k}}^+: = \frac{f_{inc}(\tilde{g}_k,\tilde{h}_k, \kappa_{s,e}^+,\kappa_{s,*}^+ + r \zeta)}{f_{dec}(\tilde{g}_k,\tilde{h}_k, \kappa_{s,e}^+,\kappa_{s,*}^+ + r \zeta)} \nn
\eea
where $f_{inc}(.)$ and $f_{dec}(.)$ are defined in Definition \ref{def alpha del}.


\end{enumerate}
\end{definition}

\begin{definition}
Define
\ben
\item $\Psi_{j,k} : =  I - \sum_{i=0}^{k} \hat{G}_{j,i} \hat{G}_{j,i}'$. 

\item $G_{j,\text{det},k} := [G_{j,1} \cdots, G_{j,k-1}]$ and $\hat{G}_{j,\text{det},k} := [\hat{G}_{j,1} \cdots, \hat{G}_{j,k-1}]$. Notice that $\Psi_{j,k} =  I - \hat{G}_{j,\text{det},k+1}\hat{G}_{j,\text{det},k+1}'$.
\item $G_{j,\text{undet},k} := [G_{j,k+1} \cdots, G_{j,\vartheta_j}]$.

\item $D_{j,k} := \Psi_{j,k-1} G_{j,k}$,  $D_{j,\text{det},k} :=  \Psi_{j,k-1} G_{j,\text{det},k}$ and $D_{j,\text{undet},k} := \Psi_{j,k-1}G_{j,\text{undet},k} $. 

\een
\end{definition}

\begin{definition}\label{defHk_del} \
\ben
\item Let $D_{j,k} \overset{QR}{=} E_{j,k} R_{j,k}$ denote its reduced QR decomposition, i.e. let $E_{j,k}$ be a basis matrix for $\Span(D_{j,k})$ and let $R_{j,k}:=E_{j,k}'D_{j,k}$.

\item Let $E_{j,k,\perp}$ be a basis matrix for the orthogonal complement of $\Span(E_{j,k}) = \Span(D_{j,k})$. To be precise, $E_{j,k,\perp}$ is a $n \times (n-\tilde{c}_{j,k})$ basis matrix that satisfies ${E_{j,k,\perp}}' E_{j,k} = 0$.

\item Using $E_{j,k}$ and $E_{j,k,\perp}$, define $\tilde{A}_{j,k}$, $\tilde{A}_{j,k,\perp}$, $\tilde{H}_{j,k}$, $\tilde{H}_{j,k,\perp}$ and $\tilde{B}_{j,k}$ as
    \bea
    \tilde{A}_{j,k} &:=& \frac{1}{\tilde{\alpha}} \sum_{t \in \tilde{I}_{j,k}} {E_{j,k}}' \Psi_{j,k-1} L_t{L_t}' \Psi_{j,k-1} E_{j,k} \nn \\
    \tilde{A}_{j,k,\perp} &:=& \frac{1}{\tilde{\alpha}} \sum_{t \in \tilde{I}_{j,k}} {E_{j,k,\perp}}' \Psi_{j,k-1} L_t{L_t}' \Psi_{j,k-1} E_{j,k,\perp} \nn \\
    \tilde{H}_{j,k} &:=& \frac{1}{\tilde{\alpha}} \sum_{t \in \tilde{I}_{j,k}} {E_{j,k}}'\Psi_{j,k-1} (e_t{e_t}' - L_t{e_t}' - e_t {L_t}') \Psi_{j,k-1} E_{j,k} \nn \\
    \tilde{H}_{j,k,\perp} &:=& \frac{1}{\tilde{\alpha}} \sum_{t \in \tilde{I}_{j,k}} {E_{j,k,\perp}}' \Psi_{j,k-1} \nn \\
&& \hspace{1in} (e_t{e_t}' - L_t{e_t}' - e_t {L_t}') \Psi_{j,k-1} E_{j,k,\perp} \nn \\
    \tilde{B}_{j,k} &:=& \frac{1}{\tilde{\alpha}} \sum_{t \in \tilde{I}_{j,k}} {E_{j,k,\perp}}' \Psi_{j,k-1} \hat{L}_t{\hat{L}_t}' \Psi_{j,k-1} E_{j,k} \nn \\
&=& \frac{1}{\tilde{\alpha}} \sum_{t \in \tilde{I}_{j,k}} {E_{j,k,\perp}}'\Psi_{j,k-1} (L_t-e_t)({L_t}' - {e_t}') \Psi_{j,k-1} E_{j,k} \nn
    \eea
\item Define
\bea
&&\tilde{\mathcal{A}}_{j,k} := \left[ \begin{array}{cc} E_{j,k} & E_{j,k,\perp} \\ \end{array} \right]
\left[\begin{array}{cc}\tilde{A}_{j,k} \ & 0 \ \\ 0 \ & \tilde{A}_{j,k,\perp}  \\ \end{array} \right]
\left[ \begin{array}{c} {E_{j,k}}' \\ {E_{j,k,\perp}}' \\ \end{array} \right]\nn\\
&&\tilde{\mathcal{H}}_{j,k} := \left[ \begin{array}{cc} E_{j,k} & E_{j,k,\perp} \\ \end{array} \right]
\left[\begin{array}{cc} \tilde{H}_{j,k} \ & {\tilde{B}_{j,k}}' \ \\ \tilde{B}_{j,k} \ &  \tilde{H}_{j,k,\perp} \\ \end{array} \right]
\left[ \begin{array}{c} {E_{j,k}}' \\ {E_{j,k,\perp}}' \\ \end{array} \right]\nn
\label{defn_tilde_Hk}
\eea

\item From the above, it is easy to see that $$\tilde{\mathcal{A}}_{j,k} + \tilde{\mathcal{H}}_{j,k} =\frac{1}{\tilde{\alpha}} \sum_{t \in \tilde{\mathcal{I}}_{j,k}} \Psi_{j,k-1} \hat{L}_t {\hat{L}_t}' \Psi_{j,k-1}.$$

\item Recall from Algorithm \ref{ReProCS_del} that
\begin{align*}
\tilde{\mathcal{A}}_{j,k} +& \tilde{\mathcal{H}}_{j,k} = \frac{1}{\tilde{\alpha}} \sum_{t \in \tilde{\mathcal{I}}_{j,k}} \Psi_{j,k-1} \hat{L}_t {\hat{L}_t}' \Psi_{j,k-1} \\
&\overset{EVD}{=} \left[ \begin{array}{cc} \hat{G}_{j,k} & \hat{G}_{j,k,\perp} \\ \end{array} \right]
\left[\begin{array}{cc} \Lambda_{j,k} \ & 0 \ \\ 0 \ & \ \Lambda_{j,k,\perp} \\ \end{array} \right]
\left[ \begin{array}{c} \hat{G}_{j,k}' \\ \hat{G}_{j,k,\perp}' \\ \end{array} \right]
\end{align*}
is the EVD of $\tilde{\mathcal{A}}_{j,k} + \tilde{\mathcal{H}}_{j,k}$. 
Here $\Lambda_k$ is a $\tilde{c}_{j,k} \times \tilde{c}_{j,k}$ diagonal matrix.


\een
\end{definition}

\begin{definition}

For $k=1,2,\cdots,\vartheta_j$, define
\[
\tilde{\zeta}_{j,k} : = \bigg\|\Big(I - \sum_{i=1}^{k} \hat{G}_{j,i} \hat{G}_{j,i}'\Big)G_{j,k}\bigg\|_2
\]
This is the error in estimating $\Span(G_{j,k})$ after the $k^{th}$ iteration of the cluster-PCA step. 

\end{definition}

\begin{remark}\label{SE_rem} \  
\ben

\item Notice that $\zeta_{j,0} = \|D_{j,\new}\|_2$, $\zeta_{j,k} = \|D_{j,\new,k}\|_2$ and $\tilde{\zeta}_{j,k} = \|(I - \hat{G}_k \hat{G}_k') D_{j,k}\|_2 = \|\Psi_{j,k}G_{j,k}\|_2$.

\item Notice from the algorithm that (i) $\Phat_{j,\new,k}$ is perpendicular to $\Phat_{j,*}=\Phat_{j-1}$; and (ii) $\hat{G}_{j,k}$ is perpendicular to $[\hat{G}_{j,1},\hat{G}_{j,2},\dots \hat{G}_{j,k-1}]$.

\item For $t\in \mathcal{I}_{j,k}$, $P_{(t)} = P_j = [(P_{j-1}R_j \setminus P_{j,\old}), \ P_{j,\new}]$, $\hat{P}_{(t)} = [\hat{P}_{j-1} \ \hat{P}_{j,\new,k}]$ and
\begin{align*}
SE_{(t)} &= \|(I - \hat{P}_{j-1} {\hat{P}_{j-1}}' - \hat{P}_{j,\new,k}{\hat{P}_{j,\new,k}}')P_j\|_2\\
& \leq  \|(I - \hat{P}_{j-1} {\hat{P}_{j-1}}' - \hat{P}_{j,\new,k}{\hat{P}_{j,\new,k}}')\\
&\hspace{1.6in} [ P_{j-1} \ P_{j,\new}]\|_2 \\
&\leq \zeta_{j,*} + \zeta_{j,k}
\end{align*}
for $k=1,2 \dots K$. The last inequality uses the first item of this remark.

\item For $t \in \tilde{\mathcal{I}}_{j,k}$, $P_{(t)} = P_j$, $\hat{P}_{(t)} = [\hat{P}_{j-1} \ \hat{P}_{j,\new,K}]$ and
$$SE_{(t)} = SE_{(t_j + K\alpha-1)} \leq \zeta_{j,*} + \zeta_{j,K}$$

\item For $t \in \tilde{\mathcal{I}}_{j,\vartheta_j+1}$, $P_{(t)} = P_j$, $\operatorname{span}(P_j) = \operatorname{span}([G_{j,1},\cdots,G_{j,\vartheta_j}])$, $\hat{P}_{(t)} = \hat{P}_j = [\hat{G}_{j,1},\cdots,\hat{G}_{j,\vartheta_j}]$, and
$$SE_{(t)} = \zeta_{j+1,*} \leq \sum_{k=1}^{\vartheta_j} \tilde{\zeta}_{j,k}$$
The last inequality uses the first item of this remark.

\een
\end{remark}

\begin{definition}
Recall the definition of $\Phi_{j,k}$ from Definition \ref{defn_Phi}. Define $\Phi_{(t)}$ as
\bea
\Phi_{(t)} := \left\{
\begin{array}{ll}
\Phi_{j,k-1} \ & \ \ t \in \mathcal{I}_{j,k}, \ k=1,2 \dots K \\
\Phi_{j,K} \ & \ \ t \in \mathcal{\tilde{I}}_{j,k}, \ k=1,2 \dots \vartheta_j \\
\Phi_{j+1,0} \ & \ \ t \in \mathcal{\tilde{I}}_{j,\vartheta_j+1}
\end{array}
\right. \nn
\eea

\end{definition}

\begin{definition}
Define the random variable
\begin{align*}
\tilde{X}_{j,k} &:= \{ a_1,a_2,\cdots, a_{t_j + K \alpha +k\tilde{\alpha}- 1}\}
\end{align*}
\end{definition}

\begin{definition}
Define the sets 
\begin{align*}
\tilde{\check{\Gamma}}_{j,k}& :=
\{ \tilde{X}_{j,k} :\tilde{\zeta}_{j,k} \leq \tilde{c}_{j,k}\zeta, \text{and} \  \hat{T}_t = T_t \ \text{for all} \ t \in \mathcal{\tilde{I}}_{j,k}\}, \\
&\hspace{.5in}  k=1,2,\dots \vartheta_j, \ j=1,2,3,\dots J  \\
\tilde{\check{\Gamma}}_{j, \vartheta_j+1}& :=
\{X_{j+1,0}: \hat{T}_t = T_t \  \text{for all} \  t \in \mathcal{\tilde{I}}_{j,\vartheta_j+1}\}, \\
&\hspace{1.2in} j=1,2,3,\dots J \nn
\end{align*}
Define the sets 
\begin{align*}
\tilde{\Gamma}_{j,0} &:= \Gamma_{j,K} \\
\tilde{\Gamma}_{j,k}& :=  \tilde{\Gamma}_{j,k-1} \cap \tilde{\check{\Gamma}}_{j,k}, \ k=1,2,\dots \vartheta_j, \ j=1,2,3,\dots J
\end{align*}

\end{definition}

\begin{definition} \label{def_kappa_D}
Define $\kappa_{s,D}: = \max_j \max_k \kappa_s( D_{j,k})$  
\end{definition}

\begin{remark} \label{rem_kappa_D}
Conditioned on $\tilde\Gamma_{j,k-1}^e$, it is easy to see that
\begin{align*}
\kappa_{s,D} & :=\max_j \max_k \kappa_s( D_{j,k}) \\
&\le \max_j \max_k (\kappa_s(G_{j,k}) + r \zeta) \\
& \le \max_j \kappa_s(P_j) + r \zeta \le \kappa_{s,D}^+: = \kappa_{s,*}^+ + r \zeta.
\end{align*}
 In the above we have used $ \kappa_s(G_{j,k}) \le \kappa_s(P_j)$ and the same idea as in Lemma \ref{Dnew0_lem}.
\end{remark}

\subsection{Two Main Lemmas}
In this and the following subsections we remove the subscript $j$ at most places. Also recall from earlier that $P_{*} = P_{j-1}$.

The theorem is a direct consequence of Lemmas \ref{lem_add} and \ref{lem_del} given below.
Lemma \ref{lem_add} is a restatement of Lemmas \ref{expzeta} and \ref{mainlem} with using the new definition of $\zeta_*^+$ and the new bound on $\zeta$ from Theorem \ref{thm2}. It summarizes the final conclusions of the addition step for ReProCS-cPCA.

\begin{lem}[Final lemma for addition step]\label{lem_add}
Assume that all the conditions in Theorem \ref{thm2} holds. Also assume that $\mathbf{P}(\Gamma_{j,k-1}^e ) > 0$.
Then
\ben
\item $\zeta_0^+=1$, $\zeta_k^+ \leq  0.6^{k}  + 0.4 c\zeta$ for all $k=1,2,\dots K$;
\item $\mathbf{P}(\Gamma_{j,k}^e \ | \Gamma_{j,k-1}^e ) \geq p_k(\alpha,\zeta) \ge p_K(\alpha,\zeta)$  for all $k=1,2,\dots K$.
\een
where $\zeta_k^+$ is defined in Definition \ref{zetakplus} and $p_k(\alpha,\zeta)$ is defined in equation \eqref{pk}.
\end{lem}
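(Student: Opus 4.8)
The plan is to observe that Lemma~\ref{lem_add} is nothing but Lemmas~\ref{expzeta} and~\ref{mainlem} reread under two substitutions: the redefinition $\zeta_{j,*}^+ := r\zeta$ with $r := r_{\max} = r_0 + c_{\text{dif}}$ (Definition~\ref{def_zeta}), and the bound on $\zeta$ from Theorem~\ref{thm2}, which has the same algebraic form as the one in Theorem~\ref{thm1} but with this new $r$. So I would not redo those proofs from scratch; instead I would walk through every step of the proofs of Lemmas~\ref{expzeta}, \ref{cslem}, \ref{Dnew0_lem}, \ref{zetak}, \ref{termbnds}, \ref{mainlem}, of Corollary~\ref{RICnumbnd}, and of Fact~\ref{constants}, and check that each inequality that invoked the old $\zeta_{j,*}^+$ or the old $\zeta$ bound still holds verbatim with the new ones.

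First I would re-establish the numerical facts driving Lemma~\ref{expzeta}. Using $\zeta_{j,*}^+ = r\zeta$, the bound $\zeta \le \min(10^{-4}/r^2,\ 1.5\times 10^{-4}/(r^2 f),\ 1/(r^3\gamma_*^2))$, and $c = c_{\max} \le r_{\max} = r$, one recovers exactly the list used there: $\zeta_{j,*}^+ \le 10^{-4}$, $\zeta_{j,*}^+ f \le 1.5\times 10^{-4}$, $c\zeta \le 10^{-4}$, $\zeta_{j,*}^+/(c\zeta) = r/c \le r$, and $\zeta_{j,*}^+ f r = r^2 f\zeta \le 1.5\times 10^{-4}$. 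Since $\zeta_k^+$ has the same defining recursion and is still increasing in $\zeta_{j,*}^+$, $\zeta$, $c$, and $f$, the induction $\zeta_k^+ \le \zeta_{k-1}^+ \le 0.6$, the bound $\zeta_k^+ \le 0.6^k + 0.4c\zeta$, and the positivity of the denominator carry over word for word; this gives item~1.

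For item~2 I would check that the chain Lemma~\ref{cslem} $\to$ Lemma~\ref{zetak} $\to$ Lemma~\ref{mainlem} survives. Corollary~\ref{RICnumbnd} only uses $\zeta_*^+ \le 10^{-4}$ and $\zeta_{k-1}^+ \le 0.6^{k-1} + 0.4c\zeta$, both just re-verified, so $\delta_{2s}(\Phi_{k-1}) < 0.1479 < \sqrt{2}-1$ and $\phi_{k-1} < \phi^+$ still hold. Fact~\ref{constants} holds with $r_0 + (J-1)c$ replaced by $r$ because its inequalities reduce to $\zeta\gamma_* \le \sqrt{\zeta}/r^{3/2}$, $\zeta_{j,*}^+\gamma_*^2 \le 1/r^2$, $\zeta_{j,*}^+\gamma_* \le \sqrt{\zeta/r}$, and $\zeta_{j,*}^+ f \le 1.5\times 10^{-4}/r$, all immediate from $\zeta_{j,*}^+ = r\zeta$ and the new $\zeta$ bound; the bound $\zeta_*^+ \le 0.0015$ needed in Lemma~\ref{Dnew0_lem} likewise persists. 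I would also note that the addition step only ever touches $\Phat_{j-1}$ (perpendicular to $P_{j,\new}$) and the surviving part of the old subspace, and that $\|(I-\Phat_{j-1}\Phat_{j-1}'-\Phat_{j,\new,k}\Phat_{j,\new,k}')(P_{j-1}R_j \setminus P_{j,\old})\|_2 \le \|(I-\Phat_{j-1}\Phat_{j-1}')P_{j-1}\|_2 = \zeta_{j,*}$, so the rotation $R_j$ and the deletions $P_{j,\old}$ of Signal Model~\ref{del model} do not enter this analysis, while the conditioning event $\Gamma_{j,k-1}^e$ supplies $\zeta_{j,*} \le \zeta_{j,*}^+ = r\zeta$, which is exactly the hypothesis the old proof used. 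With every input unchanged, Lemma~\ref{termbnds} and hence Lemma~\ref{zetak} give $\mathbf{P}(\zeta_k \le \zeta_k^+ \mid \Gamma_{j,k-1}^e) \ge p_k(\alpha,\zeta)$ with the same $p_k$ of~\eqref{pk}; combining this with item~2 of Lemma~\ref{cslem} (exact support recovery on $\mathcal{I}_{j,k}$) through Lemma~\ref{rem_prob} yields $\mathbf{P}(\Gamma_{j,k}^e \mid \Gamma_{j,k-1}^e) \ge p_k(\alpha,\zeta) \ge p_K(\alpha,\zeta)$ since $p_k$ is decreasing in $k$.

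The only real obstacle is the bookkeeping: making sure that no inequality in the supporting lemmas secretly exploited $\zeta_{j,*}^+ = (r_0+(j-1)c)\zeta$ being \emph{strictly} smaller than $r\zeta$ for $j<J$ (none do --- all use only the upper bound $\zeta_{j,*}^+ \le r\zeta$), and that the absolute constants $10^{-4}$, $1.5\times10^{-4}$, $0.0015$, $0.1479$, $\phi^+$ are still respected once $r = r_{\max}$ sits in the denominator of the $\zeta$ bound. All these dependencies are monotone in the right direction, so the verification is routine and item~2 follows, completing the proof.
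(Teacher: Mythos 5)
Your proposal is correct and takes the same approach as the paper, which itself offers no separate proof of this lemma and simply states that it is ``a restatement of Lemmas \ref{expzeta} and \ref{mainlem} with the new definition of $\zeta_*^+$ and the new bound on $\zeta$ from Theorem \ref{thm2}.'' Your walkthrough supplies the bookkeeping the paper leaves implicit — in particular the observation that $\Span(P_{j-1}R_j\setminus P_{j,\old})\subseteq\Span(P_{j-1})$ so that $\zeta_{j,*}$ is still controlled by $\zeta_{j,*}^+ = r\zeta$ on the conditioning event, and that every inequality in Fact \ref{constants}, Corollary \ref{RICnumbnd}, and Lemmas \ref{cslem}--\ref{termbnds} uses $\zeta_{j,*}^+$ only through an upper bound, so the substitution is monotone in the right direction.
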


The lemma below summarizes the final conclusions for the cluster-PCA step. 
\begin{lem}[Final lemma for deletion (cluster-PCA) step]\label{lem_del}
Assume that all the conditions in Theorem \ref{thm2} hold. Also assume that $\mathbf{P}(\tilde{\Gamma}_{j,k-1}^e ) > 0$. Then,
\ben
\item for all $k=1,2,\dots \vartheta_j$, $\mathbf{P} ( \tilde\Gamma_{j,k}^e \ | \ \tilde\Gamma_{j,k-1}^e) \geq \tilde{p}(\tilde{\alpha},\zeta)$ where $\tilde{p}(\tilde{\alpha},\zeta)$ is defined in Lemma \ref{lem_bound_terms}.
\item $\mathbf{P} ( \Gamma_{j+1,0}^e \ | \ \tilde\Gamma_{j,\vartheta_j}^e) = 1$.
\een
\end{lem}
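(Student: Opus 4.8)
The plan is to mirror, step for step, the proof of Lemma \ref{mainlem}. Part 1 will be obtained by combining two sub-lemmas. The first is a \emph{projected-CS lemma for the cluster-PCA intervals}: conditioned on any $\tilde X_{j,k-1}\in\tilde\Gamma_{j,k-1}$, for every $t\in\tilde{\mathcal{I}}_{j,k}$ we have exact support recovery $\hat T_t=T_t$, the error $e_t$ satisfies the explicit expression \eqref{et expression} with $\Phi_{(t)}=\Phi_{j,K}$, and $\|e_t\|_2\le\phi^+\kappa_{s,e}^+[\zeta_{j,K}\sqrt c\,\gamma_{\new,K}+\zeta_{j,*}\sqrt r\,\gamma_*]$; this holds with conditional probability one. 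The second is a \emph{cluster-PCA lemma}, the analog of Lemma \ref{zetak}: $\mathbf{P}(\tilde\zeta_{j,k}\le\tilde c_{j,k}\zeta\mid\tilde\Gamma_{j,k-1}^e)\ge\tilde p(\tilde\alpha,\zeta)$. Given these two, Part 1 follows exactly as Lemma \ref{mainlem} followed from Lemma \ref{zetak} and item 2 of Lemma \ref{cslem}, using Lemma \ref{rem_prob} and $\mathbf{P}(\tilde\Gamma_{j,k}^e\mid\tilde\Gamma_{j,k-1}^e)=\mathbf{P}(\tilde{\check{\Gamma}}_{j,k}^e\mid\tilde\Gamma_{j,k-1}^e)$. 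Part 2 is the deterministic (probability-one) step: conditioned on $\tilde\Gamma_{j,\vartheta_j}^e$, Remark \ref{SE_rem} gives $\zeta_{j+1,*}=\SE_{(t)}\le\sum_{k=1}^{\vartheta_j}\tilde\zeta_{j,k}\le\sum_k\tilde c_{j,k}\zeta=r_j\zeta\le r\zeta=\zeta_{j+1,*}^+$ for $t\in\tilde{\mathcal{I}}_{j,\vartheta_j+1}$; the remaining support-recovery requirements of $\Gamma_{j+1,0}^e$ are already contained in $\tilde\Gamma_{j,\vartheta_j}^e$ except those for $t\in\tilde{\mathcal{I}}_{j,\vartheta_j+1}$, which follow deterministically by the projected-CS argument (as in item 2 of Lemma \ref{cslem}) now that the subspace error has dropped to $r\zeta$.

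For the projected-CS sub-lemma I would first bound the RIC of $\Phi_{(t)}=\Phi_{j,K}=I-\hat P_{j-1}\hat P_{j-1}'-\hat P_{j,\new,K}\hat P_{j,\new,K}'$ for $t\in\tilde{\mathcal{I}}_{j,k}$. Since $\tilde\Gamma_{j,k-1}^e\subseteq\Gamma_{j,K}^e$, we have $\zeta_{j,*}\le r\zeta$ and $\zeta_{j,K}\le c\zeta$, so a splitting argument in the spirit of Lemma \ref{RIC_bnd}, combined with Lemma \ref{delta_kappa} and the denseness bounds on $P_{j-1}$ and $P_{j,\new}$ assumed in Theorem \ref{thm1}, gives $\delta_{2s}(\Phi_{j,K})<\sqrt2-1$. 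With $\|\beta_t\|_2=\|\Phi_{j,K}L_t\|_2\le(r+c)\zeta\sqrt r\,\gamma_*\le\xi_0$ (by the bound on $\zeta$), Theorem \ref{candes_csbound} and the threshold choice $7\xi_0\le\omega\le S_{\min}-7\xi_0$ then yield $\hat T_t=T_t$ and the expression for $e_t$, exactly as in the proof of Lemma \ref{cslem}; the bound on $\|e_t\|_2$ uses condition 4 of Theorem \ref{thm2} (the denseness bound $\kappa_{s,e}^+$ on the unestimated part $(I-\hat P_{j-1}\hat P_{j-1}'-\hat P_{j,\new,K}\hat P_{j,\new,K}')P_j$) applied to $\|I_{T_t}'\Phi_{j,K}L_t\|_2$, and this is what produces the $\|e_t\|_2$ estimates in conclusion 3 of Theorem \ref{thm2}.

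For the cluster-PCA sub-lemma I would follow the three-step recipe of Lemma \ref{zetak}. First, a $\sin\theta$-plus-Weyl argument identical in form to Lemma \ref{zetakbnd}, applied to $\tilde{\mathcal{A}}_{j,k}+\tilde{\mathcal{H}}_{j,k}$ (whose top eigenspace is $\hat G_{j,k}$) using $D_{j,k}\overset{QR}{=}E_{j,k}R_{j,k}$ with $\|R_{j,k}\|_2\le1$, gives, whenever the denominator is positive,
\[
\tilde\zeta_{j,k}\le\frac{\|\tilde{\mathcal{H}}_{j,k}\|_2}{\lambda_{\min}(\tilde A_{j,k})-\|\tilde A_{j,k,\perp}\|_2-\|\tilde{\mathcal{H}}_{j,k}\|_2}.
\]
Second, the three quantities $\lambda_{\min}(\tilde A_{j,k})$, $\|\tilde A_{j,k,\perp}\|_2$, $\|\tilde{\mathcal{H}}_{j,k}\|_2$ would be bounded with high probability, conditioned on $\tilde X_{j,k-1}\in\tilde\Gamma_{j,k-1}$, by Corollaries \ref{hoeffding_nonzero} and \ref{hoeffding_rec} (the analog of Lemma \ref{termbnds}, which the statement calls Lemma \ref{lem_bound_terms}); simplifying the summands uses the $e_t$ expression from the projected-CS sub-lemma, the properties of $\Psi_{j,k-1}$ collected in Remark \ref{SE_rem}, and the denseness bound $\kappa_{s,D}\le\kappa_{s,D}^+=\kappa_{s,*}^++r\zeta$ from Remark \ref{rem_kappa_D}. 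Third, substituting $\lambda_{j,k}^+=\tilde g_{j,k}\lambda_{j,k}^-$ and $\lambda_{j,k+1}^+=\tilde h_{j,k}\lambda_{j,k}^-$ and collecting terms turns the bound into $\tilde\zeta_{j,k}\le\tilde\zeta_k^+=f_{inc}(\tilde g_k,\tilde h_k,\kappa_{s,e}^+,\kappa_{s,*}^++r\zeta)/f_{dec}(\tilde g_k,\tilde h_k,\kappa_{s,e}^+,\kappa_{s,*}^++r\zeta)$ with $f_{inc},f_{dec}$ as in Definition \ref{def alpha del}, and condition 5 of Theorem \ref{thm2} (namely $f_{dec}-f_{inc}/(\tilde c_{\min}\zeta)>0$) forces $\tilde\zeta_k^+\le\tilde c_{j,k}\zeta$. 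Here $\tilde p(\tilde\alpha,\zeta)$ is one minus the union bound of the three Hoeffding failure probabilities, and the choice $\tilde\alpha\ge\alpha_{\del}(\zeta)$ makes $\tilde p(\tilde\alpha,\zeta)$ as close to one as required, so that over all $j\le J$ and $k\le\vartheta_{\max}$ the total failure probability stays below $n^{-10}$.

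The main obstacle is the \emph{undetected} cluster subspace $\Span(G_{j,\mathrm{undet},k})=\Span([G_{j,k+1},\dots,G_{j,\vartheta_j}])$, which has no analog in the addition-step analysis: projecting perpendicular to $\hat G_{j,1},\dots,\hat G_{j,k-1}$ removes the already-estimated clusters but leaves $D_{j,\mathrm{undet},k}=\Psi_{j,k-1}G_{j,\mathrm{undet},k}$ essentially intact. The delicate bookkeeping is to show that (i) $D_{j,\mathrm{undet},k}$ is nearly orthogonal to $D_{j,k}$ (since $G_{j,k}\perp G_{j,\mathrm{undet},k}$ and the $\hat G_{j,l}$ are accurate, $D_{j,k}'D_{j,\mathrm{undet},k}=O(\zeta^2)$), so its contribution to $\tilde A_{j,k}=E_{j,k}'(\cdot)E_{j,k}$ is only $O(\zeta^2\lambda^+)$ while its contribution to $\tilde A_{j,k,\perp}=E_{j,k,\perp}'(\cdot)E_{j,k,\perp}$ can be as large as $\lambda_{j,k+1}^+$ --- which is exactly why $\|\tilde A_{j,k,\perp}\|_2$ carries the $\tilde h_{j,k}\lambda_{j,k}^-$ term and why small $\tilde h_{\max}$ is needed for the $\sin\theta$ denominator to stay positive; and (ii) because the compound projection $E_{j,k}'\Psi_{j,k-1}$ annihilates the contributions of \emph{all} clusters except the $k$-th, the leading terms of $\|\tilde{\mathcal{H}}_{j,k}\|_2/\lambda_{j,k}^-$ involve only the within-cluster condition number $\tilde g_{j,k}$, while the global $f$ enters only multiplied by $\zeta$ (hence negligibly, since $\zeta\le1.5\times10^{-4}/(r^2f)$). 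Getting these two points right --- i.e.\ tracking precisely which blocks of $P_j\Lambda_tP_j'$ are killed, which are $O(\zeta)$ or $O(\zeta^2)$ small, and which are genuinely of order $\tilde h$ or $\tilde g$ --- is the heart of the argument; everything else is a faithful replay of the Section \ref{mainlemmas} machinery.
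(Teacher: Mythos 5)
Your proposal is correct and mirrors the paper's proof exactly: Part 1 combines a cluster-PCA bound on $\tilde\zeta_{j,k}$ (the paper's Lemma \ref{tilde_zeta}, built from the $\sin\theta$/Weyl bound in Lemma \ref{defnPCA} and the Hoeffding bounds in Lemma \ref{lem_bound_terms}) with a probability-one projected-CS statement, and Part 2 is the probability-one CS step once $\SE_{(t)}$ has dropped to $r\zeta$. You are in fact slightly more careful than the paper, which simply cites ``the last claim of Lemma \ref{cslem}'' for the $\tilde{\mathcal{I}}_{j,k}$ intervals even though that lemma is stated only for $\mathcal{I}_{j,k}$; your explicit sketch of the RIC bound for $\Phi_{j,K}$ and the $\|\beta_t\|_2\le\xi_0$ estimate (both using $\zeta_{j,*}\le r\zeta$ and $\zeta_{j,K}\le c\zeta$) supplies exactly the bridge that citation tacitly assumes.
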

\begin{proof} 
Notice that $\mathbf{P} ( \tilde\Gamma_{j,k}^e \ | \ \tilde\Gamma_{j,k-1}^e) = \mathbf{P} (\tilde{\zeta}_k \leq \tilde{c}_k \zeta \ \text{and} \ \That_t = T_t \ \text{for all} \ t \in \tilde{\mathcal{I}}_{j,k} \ | \ \tilde\Gamma_{j,k-1}^e)$ and $\mathbf{P} ( \Gamma_{j+1,0}^e \ | \ \tilde\Gamma_{j,\vartheta_j}^e)  = \mathbf{P} ( \hat{T}_t = T_t \ \text{for all} \ t \in \mathcal{I}_{j,\vartheta_j+1} )$.
The first claim of the lemma follows by combining Lemma \ref{tilde_zeta} and the last claim of Lemma \ref{cslem}. The second claim follows using the last claim of Lemma \ref{cslem}.
\end{proof}

\begin{remark}\label{Gamma_rem2_del}
Under the assumptions of Theorem \ref{thm2},
\[
\Gamma_{j,0} \cap (\cap_{k=1}^{K} \check{\Gamma}_{j,k}) \cap (\cap_{k=1}^{\vartheta_j} \tilde{\check{\Gamma}}_{j,k}) \subseteq \Gamma_{j+1,0}
\]
This follows easily using Remark  \ref{SE_rem} and the fact that $\sum_k \tilde{c}_k = r_j \le r$.
\end{remark}

\begin{remark}\label{Gamma_rem}
Under the assumptions of Theorem \ref{thm2}, the following hold.
\ben

\item For any $k=1,2 \dots \vartheta_j+1$, $\tilde{\Gamma}_{j,k}^e$ implies (i) $\zeta_{j,K} \le c \zeta$, (ii) $\|\Phi_{j,K} P_j\|_2 \le (r+c) \zeta$.
\bi
\item (i) follows from the first claim of Lemma \ref{lem_add} and the definition of $K$, (ii) follows using $\|\Phi_{j,K} P_j\|_2 \le \|\Phi_{j,K} [P_{*}, P_{\new}]\|_2 \le \zeta_{*} + \zeta_{K} \leq \zeta_*^+ +\zeta_{K}^+ \leq (r + c)\zeta$. 
\ei

\item $\Gamma_{J+1,0}^e$ implies (i) $\zeta_{j,*} \le \zeta_{*}^+$ for all $j$, (ii) $\zeta_{j,k} \le 0.6^{k} + 0.4 c \zeta$ for all $k=1,\cdots,K$ and all $j$, (iii) $\zeta_{j,K} \le c \zeta$ for all $j$.
\een
\end{remark}

\subsection{Proof of Theorem \ref{thm2}}
\begin{proof}
From Remark \ref{Gamma_rem2_del}, 
\begin{align*}
\mathbf{P}(\Gamma_{j+1,0}^e | \Gamma_{j,0}^e) &\geq \mathbf{P}(\check{\Gamma}_{j,1}^e,\dots,\check{\Gamma}_{j,K}^e, \tilde{\check{\Gamma}}^e_{j,1},\dots,\tilde{\check{\Gamma}}^e_{j,\vartheta_j} | \Gamma_{j,0}) \\
& = \prod_{k=1}^{K}\mathbf{P}(\check{\Gamma}_{j,k}^e | {\Gamma}_{j,k-1}^e)\prod_{k=1}^{\vartheta_j}\mathbf{P}(\tilde{\check{\Gamma}}_{j,k}^e | \tilde{\Gamma}_{j,k-1}^e)
\end{align*}
Also, since $\Gamma_{j+1,0} \subseteq \Gamma_{j,0}$ using Lemma \ref{subset_lem}, $\mathbf{P}(\Gamma_{J+1,0}^e|\Gamma_{1,0}^e) = \prod_{j=1}^{J} \mathbf{P}(\Gamma_{j+1,0}^e|\Gamma_{j,0}^e)$.
Thus
\begin{align*}
\mathbf{P}(\Gamma_{J+1,0}^e&|\Gamma_{1,0}^e)\geq \\
& \prod_{j=1}^J \left[ \prod_{k=1}^{K}\mathbf{P}(\check{\Gamma}_{j,k}^e | {\Gamma}_{j,k-1}^e)\prod_{k=1}^{\vartheta_j}\mathbf{P}(\tilde{\check{\Gamma}}_{j,k}^e | \tilde{\Gamma}_{j,k-1}^e) \right]
\end{align*}

Using Lemmas \ref{lem_add} and \ref{lem_del}, and the fact that $p_k(\alpha,\zeta) \ge p_K(\alpha,\zeta)$, we get $\mathbf{P}(\Gamma_{J+1,0}^e| \Gamma_{1,0}) \ge {p}_K(\alpha,\zeta)^{KJ} \tilde{p}(\tilde{\alpha},\zeta)^{\vartheta_{\max} J}$.
Also, $\mathbf{P}(\Gamma_{1,0}^e)=1$. This follows by the assumption on $\hat{P}_0$ and Lemma \ref{cslem}. Thus, $\mathbf{P}(\Gamma_{J+1,0}^e) \ge {p}_K(\alpha,\zeta)^{KJ} \tilde{p}(\tilde{\alpha},\zeta)^{\vartheta_{\max} J}$.

Using the definitions of $\alpha_\add(\zeta)$ and $\alpha_\del(\zeta)$ and $\alpha \ge \alpha_\add$ and $\tilde{\alpha} \ge \alpha_\del$, 
\begin{align*}
\mathbf{P}(\Gamma_{J+1,0}^e) &\ge {p}_K(\alpha,\zeta)^{KJ} \tilde{p}(\tilde{\alpha},\zeta)^{\vartheta_{\max} J} \\
& \ge (1-n^{-10})^2 \ge 1- 2n^{-10}
\end{align*}


The event $\Gamma_{J+1,0}^e$ implies that $\That_t=T_t$ for all $t < t_{J+1}$. Using Remark \ref{SE_rem} and the last claim of Remark \ref{Gamma_rem}, $\Gamma_{J+1,0}^e$ implies that all the bounds on the subspace error hold. Using these, Remark \ref{etdef_rem}, $\|a_{t,\new}\|_2 \le \sqrt{c} \gamma_{\new,k}$ and $\|a_t\|_2 \le \sqrt{r} \gamma_*$, $\Gamma_{J+1,0}^e$ implies that all the bounds on $\|e_t\|_2$ hold (the bounds are obtained in Lemma \ref{cslem}).

Thus, all conclusions of the the result hold w.p. at least $1- 2n^{-10}$.
\end{proof}

\subsection{A lemma needed for getting high probability bounds on the subspace error}

The following lemma is needed for bounding the subspace error, $\tilde\zeta_k$


\begin{lem}\label{bound_R}
Assume that $\tilde{\zeta}_{k'} \leq \tilde{c}_{k'} \zeta$ for $k'=1,\cdots, k-1$. Then
\ben
\item $\|D_{\text{det},k}\|_2 = \|\Psi_{k-1} G_{\text{det},k}\|_2 \leq r \zeta$.
\item $\|G_{\text{det},k} {G_{\text{det},k}}' - \hat{G}_{\text{det},k}{\hat{G}_{\text{det},k}}'\|_2 \leq 2 r\zeta$.
\item $0< \sqrt{1-r^2 \zeta^2} \leq \sigma_i(D_k) = \sigma_i(R_k) \leq 1$. Thus, $\|D_k\|_2 = \|R_k\|_2 \le 1$ and $\|D_k^{-1}\|_2 = \|R_k^{-1}\|_2 \le 1/\sqrt{1-r^2 \zeta^2} $.
\item $\|{D_{\text{undet},k}}'E_k \|_2 = \|{G_{\text{undet},k}}'E_k \|_2 \leq \frac{r^2 \zeta^2}{\sqrt{1-r^2\zeta^2}}$. 
\een
\end{lem}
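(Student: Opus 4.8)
\textbf{Proof proposal for Lemma \ref{bound_R}.}

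The plan is to establish the four claims in order, since each one feeds into the next. Throughout we use the hypothesis $\tilde{\zeta}_{k'} \le \tilde{c}_{k'} \zeta$ for $k' = 1, \dots, k-1$, together with the orthonormality structure recorded in Remark \ref{SE_rem} (the $\hat{G}_{j,i}$'s are mutually orthonormal, so $\hat{G}_{\text{det},k}$ is a basis matrix and $\Psi_{k-1} = I - \hat{G}_{\text{det},k}\hat{G}_{\text{det},k}'$ is a projector). For claim 1, I would write $D_{\text{det},k} = \Psi_{k-1} G_{\text{det},k}$ and bound its norm column-block by column-block: for each cluster index $i \le k-1$, $\|\Psi_{k-1} G_{i}\|_2 \le \|(I - \sum_{\ell=1}^{i} \hat{G}_\ell \hat{G}_\ell') G_i\|_2 = \tilde{\zeta}_i \le \tilde{c}_i \zeta$, since $\Span(\hat{G}_{\text{det},k}) \supseteq \Span([\hat{G}_1,\dots,\hat{G}_i])$ means projecting further orthogonal to more directions only decreases the norm. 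Summing over $i$ and using $\sum_{i=1}^{k-1} \tilde{c}_i \le r_j \le r$ (from Signal Model \ref{del model} and the redefinition $r = r_{\max}$) gives $\|D_{\text{det},k}\|_2 \le \sum_{i=1}^{k-1}\tilde{\zeta}_i \le r\zeta$. Actually a cleaner route: note $\|\Psi_{k-1} G_{\text{det},k}\|_2 \le \|(I - \hat{G}_{\text{det},k}\hat{G}_{\text{det},k}') G_{\text{det},k}\|_2$ and bound the latter directly by $\sum_{i=1}^{k-1} \tilde\zeta_i$ using a telescoping/union argument on the columns.

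For claim 2, I would invoke Lemma \ref{lemma0} item 2 (the $\|PP' - \hat P\hat P'\|_2 \le 2\|(I-\hat P\hat P')P\|_2$ inequality) with $P = G_{\text{det},k}$ and $\hat P = \hat{G}_{\text{det},k}$, provided both are basis matrices of the same size; this needs $\tilde{c}_{\text{det}} := \sum_{i<k}\tilde c_i$ columns on each side, which holds by construction, and the RHS is exactly $\le r\zeta$ by claim 1 (or its proof), giving the factor $2r\zeta$. Claim 3 is the key quantitative step: $D_k = \Psi_{k-1} G_k$ where $G_k$ is a basis matrix orthogonal to $G_{\text{det},k}$ (columns of the true $P_j$ from disjoint clusters). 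Since $R_k = E_k' D_k$ with $E_k = \text{basis}(D_k)$, we have $\sigma_i(D_k) = \sigma_i(R_k)$, and $\|D_k\|_2 \le \|G_k\|_2 = 1$ since $\Psi_{k-1}$ is a projector. The lower bound $\sigma_{\min}(D_k) \ge \sqrt{1 - r^2\zeta^2}$ is the analogue of Lemma \ref{lemma0} item 4: write $\|D_k x\|_2^2 = \|\Psi_{k-1} G_k x\|_2^2 = \|G_k x\|_2^2 - \|\hat{G}_{\text{det},k}\hat{G}_{\text{det},k}' G_k x\|_2^2 \ge \|x\|_2^2(1 - \|\hat{G}_{\text{det},k}' G_k\|_2^2)$, and then bound $\|\hat{G}_{\text{det},k}' G_k\|_2$. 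The obstacle is that $\hat{G}_{\text{det},k}$ is the \emph{estimate}, not $G_{\text{det},k}$ which is genuinely orthogonal to $G_k$; so I would write $\hat{G}_{\text{det},k}' G_k = \hat{G}_{\text{det},k}'(I - G_{\text{det},k}G_{\text{det},k}')G_k + \hat{G}_{\text{det},k}'G_{\text{det},k}G_{\text{det},k}'G_k$, where the second term vanishes ($G_{\text{det},k}'G_k = 0$), and bound the first by $\|(I - G_{\text{det},k}G_{\text{det},k}')\hat{G}_{\text{det},k}\|_2 \le r\zeta$ — this last bound is again Lemma \ref{lemma0} item 1 applied with the roles of $P$ and $\hat P$ swapped, combined with claim 2 (or claim 1 directly). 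This yields $\|\hat G_{\text{det},k}' G_k\|_2 \le r\zeta$, hence $\sigma_{\min}(D_k) \ge \sqrt{1-r^2\zeta^2}$, and the inverse norm bound $\|R_k^{-1}\|_2 = 1/\sigma_{\min}(D_k) \le 1/\sqrt{1-r^2\zeta^2}$ follows; note $r^2\zeta^2 < 1$ by the bound on $\zeta$ in Theorem \ref{thm2}.

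For claim 4, $D_{\text{undet},k} = \Psi_{k-1} G_{\text{undet},k}$ with $G_{\text{undet},k}$ the true later clusters, orthogonal to both $G_{\text{det},k}$ and $G_k$. I would compute $\|D_{\text{undet},k}' E_k\|_2 = \|G_{\text{undet},k}' \Psi_{k-1} E_k\|_2$; since $E_k$ spans $\Span(D_k) = \Span(\Psi_{k-1}G_k)$, write $E_k = \Psi_{k-1}G_k R_k^{-1}$ so that $\Psi_{k-1}E_k = \Psi_{k-1}G_k R_k^{-1}$ (as $\Psi_{k-1}^2 = \Psi_{k-1}$), giving $\|D_{\text{undet},k}'E_k\|_2 = \|G_{\text{undet},k}' \Psi_{k-1} G_k R_k^{-1}\|_2 \le \|G_{\text{undet},k}'\Psi_{k-1}G_k\|_2 \cdot \|R_k^{-1}\|_2$. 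For the first factor, $G_{\text{undet},k}'\Psi_{k-1}G_k = G_{\text{undet},k}'G_k - G_{\text{undet},k}'\hat{G}_{\text{det},k}\hat{G}_{\text{det},k}'G_k = -G_{\text{undet},k}'\hat{G}_{\text{det},k}\hat{G}_{\text{det},k}'G_k$ (using $G_{\text{undet},k}'G_k = 0$), and both $\|\hat G_{\text{det},k}'G_k\|_2 \le r\zeta$ (from claim 3's proof) and $\|G_{\text{undet},k}'\hat G_{\text{det},k}\|_2 = \|G_{\text{undet},k}'(I - G_{\text{det},k}G_{\text{det},k}')\hat G_{\text{det},k}\|_2 \le r\zeta$ (same argument as above, using $G_{\text{undet},k}'G_{\text{det},k}=0$). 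Hence $\|G_{\text{undet},k}'\Psi_{k-1}G_k\|_2 \le r^2\zeta^2$, and multiplying by $\|R_k^{-1}\|_2 \le 1/\sqrt{1-r^2\zeta^2}$ gives the claimed bound $r^2\zeta^2/\sqrt{1-r^2\zeta^2}$; the equality $\|D_{\text{undet},k}'E_k\|_2 = \|G_{\text{undet},k}'E_k\|_2$ follows since $\Psi_{k-1}E_k = E_k$ (because $\Span(E_k) \perp \Span(\hat G_{\text{det},k})$). The main obstacle throughout is carefully tracking that the $\hat G$-estimates are only \emph{approximately} orthogonal to the relevant true subspaces and routing every such cross term through the already-established $\tilde\zeta$-bounds via the switching identities of Lemma \ref{lemma0}; once that bookkeeping is set up the inequalities are routine.
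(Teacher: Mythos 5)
Your proof is correct and follows essentially the same route as the paper's: claim 1 by triangle inequality plus the fact that further projections can only shrink the norm, claim 2 by item 2 of Lemma \ref{lemma0}, claim 3 by bounding $\|\hat G_{\text{det},k}'G_k\|_2 \le r\zeta$ (which the paper gets from items 3--4 of Lemma \ref{lemma0}), and claim 4 by factoring $E_k = D_k R_k^{-1}$ and bounding the two cross terms $\|\hat G_{\text{det},k}'G_k\|_2$ and $\|G_{\text{undet},k}'\hat G_{\text{det},k}\|_2$ separately by $r\zeta$. The only cosmetic difference is that you re-derive the content of Lemma \ref{lemma0} items 3 and 4 inline (via routing each cross term through $I - G_{\text{det},k}G_{\text{det},k}'$) where the paper simply cites the lemma; the underlying argument is identical.
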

\begin{proof} The proof is given in Appendix \ref{proof_lem_bound_R}. \end{proof}

\subsection{Bounding the subspace error, $\tilde\zeta_k$}

\begin{lem}[High probability bound on $\tilde{\zeta_k}$]\label{tilde_zeta}
Assume that the conditions of Theorem \ref{thm2} hold. Then,
\beq
\mathbf{P} (\tilde{\zeta}_k \leq \tilde{c}_k \zeta \ | \tilde\Gamma_{j,k-1}^e) \geq  \tilde{p}(\tilde{\alpha},\zeta)
\nn
\eeq
 where $\tilde{p}(.)$ is defined in Lemma \ref{lem_bound_terms}.
\end{lem}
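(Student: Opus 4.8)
The plan is to mirror the structure used for the addition step (Lemma~\ref{zetak}) but adapted to the cluster-PCA setting. Conditioned on $\tilde\Gamma_{j,k-1}^e$, we know that $\tilde\zeta_{k'} \le \tilde c_{k'}\zeta$ for all $k' < k$, that $\zeta_{j,*} \le \zeta_*^+ = r\zeta$, that $\zeta_{j,K} \le c\zeta$, and that $\That_t = T_t$ (hence $e_t$ has the exact expression \eqref{et expression}) for all $t$ up to the start of $\tilde{\mathcal I}_{j,k}$. The first step is to obtain a deterministic bound on $\tilde\zeta_k$ of $\sin\theta$-theorem type. Since $\tilde\zeta_k = \|(I - \hat G_k\hat G_k')D_k\|_2 \le \|(I-\hat G_k\hat G_k')E_k\|_2$ (using $\|R_k^{-1}\|_2 \le 1/\sqrt{1-r^2\zeta^2}$ and $\|R_k\|_2 \le 1$ from Lemma~\ref{bound_R}, exactly as in Lemma~\ref{zetakbnd}), I would apply Davis--Kahan to $\tilde{\mathcal A}_{j,k} + \tilde{\mathcal H}_{j,k}$ with the block decomposition of Definition~\ref{defHk_del}, together with Weyl's theorem, to get
\[
\tilde\zeta_k \le \frac{\|\tilde{\mathcal H}_k\|_2}{\lambda_{\min}(\tilde A_k) - \|\tilde A_{k,\perp}\|_2 - \|\tilde{\mathcal H}_k\|_2},
\]
valid whenever the denominator is positive. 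This is the analogue of Lemma~\ref{zetakbnd}; a short auxiliary lemma establishing it would be stated first.

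The second step is to bound each of the three quantities $\lambda_{\min}(\tilde A_k)$, $\lambda_{\max}(\tilde A_{k,\perp})$, and $\|\tilde{\mathcal H}_k\|_2$ with high probability conditioned on $\tilde\Gamma_{j,k-1}^e$, using the matrix Hoeffding corollaries (Corollaries~\ref{hoeffding_nonzero} and~\ref{hoeffding_rec}) with the conditioning variable $X = \tilde X_{j,k-1} \in \tilde\Gamma_{j,k-1}$. This is the content of the as-yet-unstated Lemma~\ref{lem_bound_terms} (which defines $\tilde p(\tilde\alpha,\zeta)$). Here the key structural facts are: $L_t = P_j a_t$ with $P_j = [G_1,\dots,G_{\vartheta_j}]$, so on the interval $\tilde{\mathcal I}_{j,k}$ (where, by the clustering assumption, $\Lambda_t$ is essentially constant), $\Psi_{k-1}L_t = D_k a_{t,(k)} + D_{\det,k}a_{t,\det} + D_{\undet,k}a_{t,\undet}$ where $a_{t,(k)}$, $a_{t,\det}$, $a_{t,\undet}$ are the coordinates of $a_t$ along $G_k$, $G_{\det,k}$, $G_{\undet,k}$. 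Using Lemma~\ref{bound_R}: $\|D_{\det,k}\|_2 \le r\zeta$, $\|G_{\undet,k}'E_k\|_2 \le r^2\zeta^2/\sqrt{1-r^2\zeta^2}$, and $\sigma_{\min}(D_k) \ge \sqrt{1-r^2\zeta^2}$. Then $A_k = \frac1{\tilde\alpha}\sum E_k'\Psi_{k-1}L_tL_t'\Psi_{k-1}E_k$ concentrates around roughly $\frac1{\tilde\alpha}\sum R_k(\Lambda_t)_{(k)}R_k'$ plus small cross-terms: the dominant part has $\lambda_{\min} \ge (1-r^2\zeta^2)\lambda_{j,k}^-$ (via Ostrowski, Theorem~\ref{ost}), and the cross-terms contribute $O(r^2\zeta^2 f \lambda_{j,k}^-)$; similarly $\|\tilde A_{k,\perp}\|_2$ is dominated by the $G_{\undet,k}$ contribution, bounded by $\tilde h_{j,k}\lambda_{j,k}^-$ plus $O(r^2\zeta^2 f)\lambda_{j,k}^-$, since the undetected cluster eigenvalues are at most $\lambda_{j,k+1}^+ = \tilde h_{j,k}\lambda_{j,k}^-$. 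For $\|\tilde{\mathcal H}_k\|_2$ I would use the exact expression for $e_t$ (from Remark~\ref{etdef_rem} / Lemma~\ref{cslem}), the bound $\|e_t\|_2 \le \phi^+(\zeta_*^+\sqrt r\gamma_* + \kappa_{s,k-1}\zeta_{k-1}^+\sqrt c\gamma_{\new,k})$ — but note that on $\tilde{\mathcal I}_{j,k}$ we are past the $K$ addition steps, so $\zeta_{j,K}\le c\zeta$ and $\|e_t\|_2 \le \phi^+(\sqrt r\gamma_* + \phi^+\sqrt\zeta)r\zeta$ roughly — together with $\kappa_s(D_k)\le \kappa_{s,D}^+$ (Remark~\ref{rem_kappa_D}) to bound $\|I_{T_t}'D_k\|$-type terms, and then invoke the Hoeffding corollaries. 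Dividing numerator by denominator and collecting terms, the bound takes the form $\tilde\zeta_k \le f_{inc}(\tilde g_k,\tilde h_k,\kappa_{s,e}^+,\kappa_{s,*}^+ + r\zeta)/f_{dec}(\cdots) = \tilde\zeta_k^+$, and the last condition of Theorem~\ref{thm2} ($f_{dec} - f_{inc}/(\tilde c_{\min}\zeta) > 0$) is exactly what guarantees $\tilde\zeta_k^+ \le \tilde c_k\zeta$ and $f_{dec} > 0$.

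The final step is bookkeeping: combining the high-probability bounds on the three terms via a union bound gives $\mathbf P(\tilde\zeta_k \le \tilde c_k\zeta \mid \tilde X_{j,k-1}) \ge \tilde p(\tilde\alpha,\zeta)$ for every $\tilde X_{j,k-1} \in \tilde\Gamma_{j,k-1}$, and then Lemma~\ref{rem_prob} upgrades this to the conditional-on-$\tilde\Gamma_{j,k-1}^e$ statement claimed. I expect the main obstacle to be the careful accounting of the cross-terms $D_{\det,k}a_{t,\det}$ and $D_{\undet,k}a_{t,\undet}$ in $\tilde A_k$, $\tilde A_{k,\perp}$, and $\tilde{\mathcal H}_k$: unlike the addition step, where the ``already-estimated'' part $\Phat_{j-1}$ is orthogonal to $P_{j,\new}$ by construction, here $G_{\det,k}$ and $G_{\undet,k}$ are only \emph{approximately} removed by $\Psi_{k-1}$, so one must track both the $O(r\zeta)$ leakage from the detected clusters and the $O(\tilde h_k)$ leakage from the undetected clusters (whose eigenvalues, crucially, are smaller by the factor $\tilde h_k < 1$), and verify that the condition-number obstruction that forced us to abandon plain PCA (Remark~\ref{large_f}) is now resolved because within a single cluster the condition number is only $\tilde g_k$, which is assumed small. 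Getting the constants to line up with the definitions of $f_{inc}$ and $f_{dec}$ in Definition~\ref{def alpha del} is where the bulk of the (routine but lengthy) work lies, and I would defer that to an appendix.
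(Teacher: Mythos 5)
Your proposal matches the paper's proof structure exactly: it establishes the same $\sin\theta$-type deterministic bound (Lemma~\ref{defnPCA}), bounds $\lambda_{\min}(\tilde A_k)$, $\lambda_{\max}(\tilde A_{k,\perp})$, $\|\tilde{\mathcal H}_k\|_2$ via the matrix Hoeffding corollaries conditioned on $\tilde X_{j,k-1} \in \tilde\Gamma_{j,k-1}$ (Lemma~\ref{lem_bound_terms}), and then invokes the last condition of Theorem~\ref{thm2} to conclude $\tilde\zeta_k^+ \le \tilde c_k\zeta$ with $f_{dec}>0$ (Lemma~\ref{bnd_tzetakp}), closing via Lemma~\ref{rem_prob}. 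You also correctly identify the key structural points — the cross-term leakage from $D_{\text{det},k}$ and $D_{\text{undet},k}$ controlled by Lemma~\ref{bound_R}, and the resolution of the condition-number obstruction within a cluster via $\tilde g_k$ — so this is the same argument as in the paper.
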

\begin{proof} This follows by combining Lemma \ref{bnd_tzetakp} and the last claim of Lemma \ref{lem_bound_terms}, both of which are given below. 
\end{proof}

\begin{lem}[Bounding $\tilde{\zeta_k}^+$] \label{bnd_tzetakp}
If
\begin{align*}
f_{dec}(\tilde{g}_{\max},\tilde{h}_{\max}, &\kappa_{s,e}^+,\kappa_{s,*}^+ + r \zeta) - \\ &\frac{f_{inc}(\tilde{g}_{\max},\tilde{h}_{\max},\kappa_{s,e}^+,\kappa_{s,*}^+ + r \zeta)}{\tilde{c}_{\min} \zeta} > 0  \label{Func_del}
\end{align*}
then $f_{dec}(\tilde{g}_k,\tilde{h}_k, \kappa_{s,e}^+,\kappa_{s,*}^+ + r \zeta) >0$ and $\tilde\zeta_k^+ \le \tilde{c}_k \zeta$.
\end{lem}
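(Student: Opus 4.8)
The plan is to recall the definition $\tilde\zeta_k^+ := f_{inc}(\tilde g_k,\tilde h_k,\kappa_{s,e}^+,\kappa_{s,*}^+ + r\zeta) / f_{dec}(\tilde g_k,\tilde h_k,\kappa_{s,e}^+,\kappa_{s,*}^+ + r\zeta)$ from Definition \ref{def_zeta}, and to show two things: first that the denominator is strictly positive, and second that the whole ratio is at most $\tilde c_k \zeta$. Both follow from the monotonicity properties of $f_{inc}$ and $f_{dec}$ already recorded in Definition \ref{def alpha del}, namely that $f_{inc}(\cdot)$ is increasing and $f_{dec}(\cdot)$ decreasing in $\tilde g$ and $\tilde h$, together with the hypothesis that the analogous inequality holds for the worst-case cluster parameters $\tilde g_{\max},\tilde h_{\max}$.

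First I would handle positivity of the denominator. Since $\tilde g_k \le \tilde g_{\max}$ and $\tilde h_k \le \tilde h_{\max}$ (these are maxima over $j,k$ by the definitions in Assumption \ref{clusterass}), and since $f_{inc}$ is increasing in both arguments while $f_{dec}$ is decreasing, we have $f_{inc}(\tilde g_k,\tilde h_k,\cdot) \le f_{inc}(\tilde g_{\max},\tilde h_{\max},\cdot)$ and $f_{dec}(\tilde g_k,\tilde h_k,\cdot) \ge f_{dec}(\tilde g_{\max},\tilde h_{\max},\cdot)$. The hypothesis
\[
f_{dec}(\tilde g_{\max},\tilde h_{\max},\kappa_{s,e}^+,\kappa_{s,*}^+ + r\zeta) - \frac{f_{inc}(\tilde g_{\max},\tilde h_{\max},\kappa_{s,e}^+,\kappa_{s,*}^+ + r\zeta)}{\tilde c_{\min}\zeta} > 0
\]
together with the fact that $f_{inc} \ge 0$ (all its terms are nonnegative multiples of $(r+c)\zeta$, which is positive once $c \ge 1$) forces $f_{dec}(\tilde g_{\max},\tilde h_{\max},\cdot) > 0$, hence $f_{dec}(\tilde g_k,\tilde h_k,\cdot) > 0$ as well. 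This is the step that makes the ratio well-defined.

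Next I would bound the ratio. From the displayed hypothesis, multiplying through by $\tilde c_{\min}\zeta > 0$, we get $f_{inc}(\tilde g_{\max},\tilde h_{\max},\cdot) < \tilde c_{\min}\zeta\, f_{dec}(\tilde g_{\max},\tilde h_{\max},\cdot)$. Using the two monotonicity bounds again and $\tilde c_{\min} \le \tilde c_k$, this gives
\[
f_{inc}(\tilde g_k,\tilde h_k,\cdot) \le f_{inc}(\tilde g_{\max},\tilde h_{\max},\cdot) < \tilde c_{\min}\zeta\, f_{dec}(\tilde g_{\max},\tilde h_{\max},\cdot) \le \tilde c_k \zeta\, f_{dec}(\tilde g_k,\tilde h_k,\cdot).
\]
Dividing by the (now known to be positive) quantity $f_{dec}(\tilde g_k,\tilde h_k,\cdot)$ yields $\tilde\zeta_k^+ \le \tilde c_k \zeta$, which is exactly the claim. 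I do not anticipate a genuine obstacle here; the only care needed is to verify that every term appearing inside $f_{inc}$ is indeed nonnegative under the standing bounds on $\zeta$ from Theorem \ref{thm2} (so that $f_{inc} \ge 0$ can legitimately be used in the positivity argument), and that $\kappa_{s,D}^+ = \kappa_{s,*}^+ + r\zeta$ is the correct fourth argument, which is fixed by Remark \ref{rem_kappa_D} and Definition \ref{def_zeta}. The substantive content — that the clustering assumption with its quantitative bound $f_{dec} - f_{inc}/(\tilde c_{\min}\zeta) > 0$ suffices — has been pushed entirely into the hypothesis, so this lemma is essentially a bookkeeping consequence of monotonicity.
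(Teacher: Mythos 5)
Your argument is correct and follows exactly the same route as the paper's own (very terse) proof: recall the definition of $\tilde\zeta_k^+$, invoke the monotonicity of $f_{inc}$ and $f_{dec}$ in $\tilde g,\tilde h$, and use the definitions of $\tilde g_{\max}$, $\tilde h_{\max}$, $\tilde c_{\min}$ from Assumption \ref{clusterass}. You have merely filled in the intermediate chain of inequalities and the observation that $f_{inc}\ge 0$ (which is immediate since every summand is nonnegative and the constant $0.2/(r+c)$ term makes it strictly positive), both of which the paper leaves implicit.
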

\begin{proof} Recall from Definition \ref{def_zeta} that $\tilde{\zeta_k}^+ := \frac{f_{inc}(\tilde{g}_k,\tilde{h}_k,\kappa_{s,e}^+,\kappa_{s,*}^+ + r \zeta)}{f_{dec}(\tilde{g}_k,\tilde{h}_k,\kappa_{s,e}^+,\kappa_{s,*}^+ + r \zeta)}$. 
Notice that $f_{inc}(.)$ is an increasing function of $\tilde{g},\tilde{h}$, and $f_{dec}(.)$ is a decreasing function. Using the definition of $\tilde{g}_{\max},\tilde{h}_{\max}, \tilde{c}_{\min}$ given in Assumption \ref{clusterass}, the result follows.
\end{proof}


\begin{lem}[Bounding $\tilde{\zeta_k}$] \label{defnPCA}
If $\lambda_{\min}(\tilde{A}_k) - \lambda_{\max}(\tilde{A}_{k,\perp}) - \|\tilde{\mathcal{H}}_k\|_2 >0$, then
\beq
\tilde{\zeta_k} \leq  \frac{\|\tilde{\mathcal{H}}_k\|_2}{\lambda_{\min} (\tilde{A}_k) - \lambda_{\max} (\tilde{A}_{k,\perp}) -  \|\tilde{\mathcal{H}}_k\|_2}
\label{zetakbnd_del}
\eeq
\end{lem}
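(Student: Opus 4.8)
The plan is to prove Lemma \ref{defnPCA} by exactly the same argument that established Lemma \ref{zetakbnd}, since the matrices $\tilde{\mathcal{A}}_{j,k}$ and $\tilde{\mathcal{H}}_{j,k}$ are built in Definition \ref{defHk_del} in direct analogy with $\mathcal{A}_{j,k}$ and $\mathcal{H}_{j,k}$ from Definition \ref{defHk}, and $\hat{G}_{j,k}$ plays the role of $\hat{P}_{j,\new,k}$. First I would note that $\tilde{\mathcal{A}}_k + \tilde{\mathcal{H}}_k = \frac{1}{\tilde{\alpha}}\sum_{t\in\tilde{\mathcal{I}}_{j,k}} \Psi_{j,k-1}\hat{L}_t\hat{L}_t'\Psi_{j,k-1}$, and that by Definition \ref{defHk_del} its EVD has top eigenvectors $\hat{G}_{j,k}$ (a $\tilde{c}_{j,k}$-column basis matrix) with eigenvalue matrix $\Lambda_{j,k}$, and remaining eigenvalues in $\Lambda_{j,k,\perp}$. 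I would apply the $\sin\theta$ theorem (Theorem \ref{sin_theta}) with $\mathcal{A} \leftarrow \tilde{\mathcal{A}}_k$, $\mathcal{H}\leftarrow\tilde{\mathcal{H}}_k$, $E\leftarrow E_{j,k}$, $E_\perp \leftarrow E_{j,k,\perp}$, $A\leftarrow \tilde{A}_k$, $A_\perp\leftarrow\tilde{A}_{k,\perp}$, $F\leftarrow \hat{G}_{j,k}$, $F_\perp\leftarrow\hat{G}_{j,k,\perp}$, and $\mathcal{R}_k := \tilde{\mathcal{H}}_k E_{j,k}$.

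Next I would supply the eigenvalue-gap hypothesis that the $\sin\theta$ theorem requires, namely $\lambda_{\min}(\tilde{A}_k) > \lambda_{\max}(\Lambda_{k,\perp})$. This follows exactly as in the proof of Lemma \ref{zetakbnd}: from the hypothesis $\lambda_{\min}(\tilde{A}_k) - \lambda_{\max}(\tilde{A}_{k,\perp}) - \|\tilde{\mathcal{H}}_k\|_2 > 0$ we get $\lambda_{\min}(\tilde{A}_k) > \lambda_{\max}(\tilde{A}_{k,\perp})$, so that, since $\tilde{A}_k$ is $\tilde{c}_{j,k}\times\tilde{c}_{j,k}$, the $(\tilde{c}_{j,k}+1)$-th eigenvalue of $\tilde{\mathcal{A}}_k$ equals $\lambda_{\max}(\tilde{A}_{k,\perp})$; then Weyl's theorem (Theorem \ref{weyl}) gives $\lambda_{\max}(\Lambda_{k,\perp}) = \lambda_{\tilde{c}_{j,k}+1}(\tilde{\mathcal{A}}_k + \tilde{\mathcal{H}}_k) \le \lambda_{\tilde{c}_{j,k}+1}(\tilde{\mathcal{A}}_k) + \|\tilde{\mathcal{H}}_k\|_2 = \lambda_{\max}(\tilde{A}_{k,\perp}) + \|\tilde{\mathcal{H}}_k\|_2$, hence $\lambda_{\min}(\tilde{A}_k) - \lambda_{\max}(\Lambda_{k,\perp}) \ge \lambda_{\min}(\tilde{A}_k) - \lambda_{\max}(\tilde{A}_{k,\perp}) - \|\tilde{\mathcal{H}}_k\|_2 > 0$. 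The $\sin\theta$ theorem then yields $\|(I - \hat{G}_{j,k}\hat{G}_{j,k}')E_{j,k}\|_2 \le \|\mathcal{R}_k\|_2 / (\lambda_{\min}(\tilde{A}_k) - \lambda_{\max}(\Lambda_{k,\perp})) \le \|\tilde{\mathcal{H}}_k\|_2 / (\lambda_{\min}(\tilde{A}_k) - \lambda_{\max}(\tilde{A}_{k,\perp}) - \|\tilde{\mathcal{H}}_k\|_2)$, using $\|\mathcal{R}_k\|_2 = \|\tilde{\mathcal{H}}_k E_{j,k}\|_2 \le \|\tilde{\mathcal{H}}_k\|_2$.

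Finally I would translate this back to $\tilde{\zeta}_k$. By Remark \ref{SE_rem}, $\tilde{\zeta}_k = \|\Psi_{j,k}G_{j,k}\|_2 = \|(I - \sum_{i=1}^k \hat{G}_{j,i}\hat{G}_{j,i}')G_{j,k}\|_2 = \|(I - \hat{G}_{j,k}\hat{G}_{j,k}')D_{j,k}\|_2$ since $\hat{G}_{j,k}$ is perpendicular to $[\hat{G}_{j,1},\dots,\hat{G}_{j,k-1}]$ and $D_{j,k} = \Psi_{j,k-1}G_{j,k}$. Using the reduced QR factorization $D_{j,k} = E_{j,k}R_{j,k}$ from Definition \ref{defHk_del}, we have $\tilde{\zeta}_k = \|(I - \hat{G}_{j,k}\hat{G}_{j,k}')E_{j,k}R_{j,k}\|_2 \le \|(I - \hat{G}_{j,k}\hat{G}_{j,k}')E_{j,k}\|_2\,\|R_{j,k}\|_2 \le \|(I - \hat{G}_{j,k}\hat{G}_{j,k}')E_{j,k}\|_2$, where $\|R_{j,k}\|_2 \le 1$ because $R_{j,k} = E_{j,k}'D_{j,k}$ and $\|D_{j,k}\|_2 = \|\Psi_{j,k-1}G_{j,k}\|_2 \le 1$ (a projector times a basis matrix), which also is recorded in item 3 of Lemma \ref{bound_R}. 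Combining with the $\sin\theta$ bound above gives \eqref{zetakbnd_del}. There is no serious obstacle here: this lemma is a near-verbatim transcription of Lemma \ref{zetakbnd}, and the only care needed is bookkeeping — matching the right blocks of the perturbed matrix and confirming that the sizes ($\tilde{c}_{j,k}$ rather than $c_{j,\new}$) enter the Weyl step correctly. The genuine work (bounding $\lambda_{\min}(\tilde{A}_k)$, $\lambda_{\max}(\tilde{A}_{k,\perp})$, and $\|\tilde{\mathcal{H}}_k\|_2$ with high probability) is deferred to Lemma \ref{lem_bound_terms} and is not part of this statement.
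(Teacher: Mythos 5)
Your proof is correct and is exactly the argument the paper intends: the paper's own proof of this lemma consists of the single sentence ``The proof is the same as that of Lemma \ref{zetakbnd},'' and your write-up faithfully reproduces that argument with the tilde-decorated matrices and $\hat{G}_{j,k}$ in place of $\hat{P}_{j,\new,k}$. The bookkeeping you flag (sizes $\tilde{c}_{j,k}$ in the Weyl step, $\|R_{j,k}\|_2\le 1$, and the identity $\tilde\zeta_k = \|(I-\hat{G}_{j,k}\hat{G}_{j,k}')D_{j,k}\|_2$) is all handled correctly.
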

\begin{proof}
The proof is the same as that of Lemma \ref{zetakbnd}.

\end{proof}

\begin{lem}[High probability bounds for each of the terms in the $\tilde{\zeta}_k$ bound and for $\tilde\zeta_k$]\label{lem_bound_terms} 
 Assume that the conditions of Theorem \ref{thm2} hold. Also, assume that $\mathbf{P}(\tilde\Gamma_{j,k-1}^e)>0$. Then,  for all $1 \leq k \leq \vartheta_j$,
\ben
\item $\mathbf{P}(\lambda_{\min}(\tilde{A}_{k}) \geq \lambda_{k}^-(1-r^2 \zeta^2  - 0.1 \zeta) | \tilde\Gamma_{j,k-1}^e) >1- \tilde{p}_1(\tilde{\alpha},\zeta)$
with $\tilde{p}_1(\tilde{\alpha},\zeta)$ given in (\ref{prob1}).
\item $\mathbf{P}(\lambda_{\max}(\tilde{A}_{k,\perp}) \leq \lambda_k^- (\tilde{h}_k+r^2 \zeta^2 f+0.1 \zeta) | \tilde\Gamma_{j,k-1}^e) > 1-\tilde{p}_2(\tilde{\alpha},\zeta)$ with $\tilde{p}_2(\tilde{\alpha},\zeta)$ given in (\ref{prob2}).
\item $\mathbf{P}(\|\tilde{\mathcal{H}}_{k}\|_2 \leq \lambda_k^- f_{inc}(\tilde{g}_k,\tilde{h}_k,\kappa_{s,e}^+,\kappa_{s,*}^+ + r \zeta) \ |\tilde\Gamma_{j,k-1}^e) \geq 1 - \tilde{p}_3(\tilde{\alpha},\zeta)$ with $\tilde{p}_3(\tilde{\alpha},\zeta)$ given in (\ref{prob3}).
\item $\mathbf{P}( \lambda_{\min} (\tilde{A}_k) -   \lambda_{\max} (\tilde{A}_{k,\perp})  - \|\tilde{\mathcal{H}}_k\|_2 \ge \lambda_k^- f_{dec}(\tilde{g}_k,\tilde{h}_k,\kappa_{s,e}^+,\kappa_{s,*}^+ + r \zeta) \ |\tilde\Gamma_{j,k-1}^e) \geq \tilde{p}(\tilde{\alpha},\zeta) :=1- \tilde{p}_{1}(\tilde{\alpha},\zeta) - \tilde{p}_{2}(\tilde{\alpha},\zeta) - \tilde{p}_{3}(\tilde{\alpha},\zeta)$.
\item If $f_{dec}(\tilde{g}_k,\tilde{h}_k) >0$, then $\mathbf{P}(\tilde{\zeta}_k \leq \tilde\zeta_k^+ \ | \tilde\Gamma_{j,k-1}^e) \geq \tilde{p} (\tilde{\alpha},\zeta)$ \nn
\een
\end{lem}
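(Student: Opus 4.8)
The plan is to follow the template of Lemma \ref{termbnds} (the analogous high-probability bounds for the addition step), replacing the projector $\Phi_{j,0}$ and the pair $(D_{j,\new},E_{j,\new})$ by $\Psi_{j,k-1}$ and $(D_{j,k},E_{j,k})$, and letting the clustering structure of $\Lambda_t$ play the role previously played by the gap between $\lambda_\new$ and the rest of the spectrum. First I would record what conditioning on $\tilde\Gamma_{j,k-1}^e$ (which has positive probability by assumption) buys: the event forces $\tilde\zeta_{k'}\le\tilde c_{k'}\zeta$ for all $k'<k$, exact support recovery on all earlier intervals, and, by Remark \ref{Gamma_rem}, $\zeta_{j,*}\le r\zeta$ and $\zeta_{j,K}\le c\zeta$, so $\|\Phi_{j,K}P_j\|_2\le(r+c)\zeta$. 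From the last claim of Lemma \ref{cslem} we also get $\hat T_t=T_t$ for all $t\in\tilde{\mathcal I}_{j,k}$, hence the exact formula $e_t=I_{T_t}[(\Phi_{j,K})_{T_t}'(\Phi_{j,K})_{T_t}]^{-1}I_{T_t}'\Phi_{j,K}L_t$ holds on this interval. The crucial structural point is that $\Phi_{j,K}$ and $\Psi_{j,k-1}$ are deterministic functions of $\tilde X_{j,k-1}$, whereas $\{a_t\}_{t\in\tilde{\mathcal I}_{j,k}}$ is independent of $\tilde X_{j,k-1}$ --- exactly the setting of Corollaries \ref{hoeffding_nonzero} and \ref{hoeffding_rec}. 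Throughout I would use Lemma \ref{bound_R} (valid because $\tilde\zeta_{k'}\le\tilde c_{k'}\zeta$) for $\|D_{j,\text{det},k}\|_2\le r\zeta$, $\sqrt{1-r^2\zeta^2}\le\sigma_i(R_{j,k})\le1$ and $\|G_{j,\text{undet},k}'E_{j,k}\|_2\le r^2\zeta^2/\sqrt{1-r^2\zeta^2}$; Remark \ref{rem_kappa_D} for $\kappa_{s,D}\le\kappa_{s,*}^++r\zeta$; and a Corollary \ref{RICnumbnd}-type RIC bound for $\phi_K\le\phi^+$.

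For claims 1 and 2 I would compute $\E[\tilde A_k\mid\tilde X_{j,k-1}]$ and $\E[\tilde A_{k,\perp}\mid\tilde X_{j,k-1}]$ from $\E(L_tL_t'\mid\tilde X_{j,k-1})=P_j\Lambda_tP_j'$, which is block diagonal in the cluster basis: the $k$-th block contributes $R_{j,k}\bar\Lambda_{(k)}R_{j,k}'$ to $\E[\tilde A_k]$, with $\bar\Lambda_{(k)}:=\frac{1}{\tilde\alpha}\sum_t(\Lambda_t)_{\text{cluster }k}$, and nothing to $\E[\tilde A_{k,\perp}]$ since $E_{j,k,\perp}'\Psi_{j,k-1}G_{j,k}=0$. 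By Ostrowski's theorem (Theorem \ref{ost}), $\sigma_{\min}(R_{j,k})^2\ge1-r^2\zeta^2$, and the clustering assumption ($\bar\Lambda_{(k)}\succeq\lambda_{j,k}^-I$), we get $\lambda_{\min}(R_{j,k}\bar\Lambda_{(k)}R_{j,k}')\ge(1-r^2\zeta^2)\lambda_{j,k}^-$, while the remaining blocks of $\E[\tilde A_k]$ are p.s.d.\ and only help. For $\E[\tilde A_{k,\perp}]$ the blocks with index below $k$ are bounded by $r^2\zeta^2\lambda^+\le r^2\zeta^2 f\,\lambda_{j,k}^-$ (using $\|D_{j,\text{det},k}\|_2\le r\zeta$) and those with index above $k$ by $\lambda_{j,k+1}^+=\tilde h_k\lambda_{j,k}^-$. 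I would then bound each summand of $\tilde A_k$ and $\tilde A_{k,\perp}$ in operator norm by a constant times $r\lambda_{j,k}^+$ via $\|a_t\|_\infty\le\gamma_*$, apply Corollary \ref{hoeffding_nonzero} conditioned on $\tilde X_{j,k-1}$ with deviation $0.1\zeta\lambda^-\le0.1\zeta\lambda_{j,k}^-$, and finish with Lemma \ref{rem_prob}; the Hoeffding failure probabilities become $\tilde p_1(\tilde\alpha,\zeta)$ and $\tilde p_2(\tilde\alpha,\zeta)$.

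The main obstacle is claim 3, bounding $\|\tilde{\mathcal H}_k\|_2\le\|\tilde H_k\|_2+\|\tilde H_{k,\perp}\|_2+2\|\tilde B_k\|_2$, where each piece is an average over $t\in\tilde{\mathcal I}_{j,k}$ of a quadratic in $e_t$ and $L_t$, and $e_t$ is correlated with $L_t$ through $\Phi_{j,K}$. The plan is: substitute the exact expression for $e_t$ from Lemma \ref{cslem}; use the projector identity $(\Phi_{j,K})_{T_t}'\Phi_{j,K}=I_{T_t}'\Phi_{j,K}$ and $\phi_K\le\phi^+$; bound $\|\Psi_{j,k-1}e_t\|_2$ by a constant times $\kappa_{s,e}^+\,\|\Phi_{j,K}P_ja_t\|_2$, the factor $\kappa_{s,e}^+$ being precisely what the fourth hypothesis of Theorem \ref{thm2} provides (namely $\kappa_s$ of $(I-\hat P_{j-1}\hat P_{j-1}'-\hat P_{j,\new,K}\hat P_{j,\new,K}')P_j$); and control the residual $\|E_{j,k}'\Psi_{j,k-1}P_j\|_2$-type factors via $\kappa_{s,D}\le\kappa_{s,*}^++r\zeta$. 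For each of the three pieces I would then bound both $\big\|\frac{1}{\tilde\alpha}\sum_t\E(\,\cdot\mid\tilde X_{j,k-1})\big\|_2$ and the operator norm of each summand so that, after collecting the terms in $\tilde g_k,\tilde h_k,\kappa_{s,e}^+,\kappa_{s,D}^+$ and $f$, their sum equals $\lambda_{j,k}^-f_{inc}(\tilde g_k,\tilde h_k,\kappa_{s,e}^+,\kappa_{s,*}^++r\zeta)$; apply Corollary \ref{hoeffding_rec} to $\tilde B_k$ and Corollary \ref{hoeffding_nonzero} to the Hermitian $\tilde H_k,\tilde H_{k,\perp}$, conditioned on $\tilde X_{j,k-1}$, and again Lemma \ref{rem_prob}. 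This step is where $\tilde\alpha\ge\alpha_\del(\zeta)$, with the constant $b_7=(\sqrt r\gamma_*+\phi^+\sqrt\zeta)^2$ of Definition \ref{def alpha del}, is used, to force each failure probability down to $n^{-10}$ times the polynomial prefactor. The genuinely laborious part is keeping track of the cross terms $L_te_t'$ and $e_te_t'$ and the many constants so the bound collapses exactly to $\lambda_{j,k}^-f_{inc}$; this is a longer but routine computation once the ingredients above are in place.

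Finally, claim 4 is a union bound over the three events of claims 1--3, giving $\tilde p(\tilde\alpha,\zeta):=1-\tilde p_1(\tilde\alpha,\zeta)-\tilde p_2(\tilde\alpha,\zeta)-\tilde p_3(\tilde\alpha,\zeta)$, because on that intersection $\lambda_{\min}(\tilde A_k)-\lambda_{\max}(\tilde A_{k,\perp})-\|\tilde{\mathcal H}_k\|_2\ge\lambda_{j,k}^-\big[(1-r^2\zeta^2-0.1\zeta)-(\tilde h_k+r^2\zeta^2 f+0.1\zeta)-f_{inc}(\tilde g_k,\tilde h_k,\kappa_{s,e}^+,\kappa_{s,*}^++r\zeta)\big]=\lambda_{j,k}^-f_{dec}(\tilde g_k,\tilde h_k,\kappa_{s,e}^+,\kappa_{s,*}^++r\zeta)$. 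For claim 5, under the additional hypothesis $f_{dec}(\tilde g_k,\tilde h_k)>0$ --- which, via Lemma \ref{bnd_tzetakp}, is the form in which the clustering hypothesis of Theorem \ref{thm2} enters --- this lower bound is positive, so Lemma \ref{defnPCA} applies on the same event and yields $\tilde\zeta_k\le\|\tilde{\mathcal H}_k\|_2/(\lambda_{\min}(\tilde A_k)-\lambda_{\max}(\tilde A_{k,\perp})-\|\tilde{\mathcal H}_k\|_2)\le\lambda_{j,k}^-f_{inc}/(\lambda_{j,k}^-f_{dec})=\tilde\zeta_k^+$; hence $\mathbf{P}(\tilde\zeta_k\le\tilde\zeta_k^+\mid\tilde\Gamma_{j,k-1}^e)\ge\tilde p(\tilde\alpha,\zeta)$, which is the last assertion.
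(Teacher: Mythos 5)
Your plan is, in broad strokes, exactly the paper's: decompose $a_t$ along $G_{\text{det},k}$, $G_k$, $G_{\text{undet},k}$; exploit that $\Phi_{j,K}$, $\Psi_{j,k-1}$, $D_{j,k}$, $E_{j,k}$, $R_{j,k}$ are $\tilde X_{j,k-1}$-measurable while $\{a_t\}_{t\in\tilde{\mathcal I}_{j,k}}$ is independent of $\tilde X_{j,k-1}$; feed Lemma \ref{bound_R}, Remark \ref{rem_kappa_D}, $\phi_K\le\phi^+$, and $\|\Phi_{j,K}P_j\|_2\le(r+c)\zeta$ into the Hoeffding corollaries; get claim~4 by a union bound; and get claim~5 from Lemma \ref{defnPCA}. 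That much is right.

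One concrete point would make the constants in claim~3 come out wrong as written. You propose to start from $\|\tilde{\mathcal H}_k\|_2\le\|\tilde H_k\|_2+\|\tilde H_{k,\perp}\|_2+2\|\tilde B_k\|_2$ and to apply the Hoeffding corollary directly to the Hermitian blocks $\tilde H_k$ and $\tilde H_{k,\perp}$. The paper instead uses the tighter inequality
\[
\|\tilde{\mathcal H}_k\|_2\le\max\bigl\{\|\tilde H_k\|_2,\|\tilde H_{k,\perp}\|_2\bigr\}+\|\tilde B_k\|_2\le\Bigl\|\tfrac{1}{\tilde\alpha}\textstyle\sum_t e_te_t'\Bigr\|_2+\max\bigl(\|T2\|_2,\|T4\|_2\bigr)+\|\tilde B_k\|_2,
\]
where $T2,T4$ are the cross-term averages $\tfrac{1}{\tilde\alpha}\sum_t E_{(\cdot)}'\Psi_{k-1}(L_te_t'+e_tL_t')\Psi_{k-1}E_{(\cdot)}$, and then applies Corollary \ref{hoeffding_nonzero} to the common p.s.d.\ piece $\tfrac{1}{\tilde\alpha}\sum_te_te_t'$ and Corollary \ref{hoeffding_rec} to $T2$, $T4$, and $\tilde B_k$ separately. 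The function $f_{inc}(\tilde g_k,\tilde h_k,\kappa_{s,e}^+,\kappa_{s,*}^++r\zeta)$ of Definition \ref{def alpha del} is calibrated to exactly that split (it collects $b_2+2b_4+b_8+0.2\zeta\lambda^-$ in the paper's notation). Your looser decomposition double-counts the $e_te_t'$ contribution (once in $\|\tilde H_k\|_2$ and once in $\|\tilde H_{k,\perp}\|_2$) and inflates the $\tilde B_k$ contribution by a factor of $2$, so it would not collapse to $\lambda_k^- f_{inc}$ as the lemma asserts; you would need to either tighten the starting inequality to the $\max$-plus-$\|\tilde B_k\|_2$ form, or re-derive $f_{inc}$ with correspondingly larger constants (which then has to be carried through Lemma \ref{bnd_tzetakp} and the clustering hypothesis of Theorem \ref{thm2}). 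Aside from this, the sketch is sound.
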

\begin{proof}
Recall that $f_{inc}(.)$, $f_{dec}(.)$ and $\tilde\zeta_k^+$ are defined in Definition \ref{def_zeta}. The proof of the first three claims is given in Appendix \ref{proof_lem_bound_terms}. This proof uses Lemmas \ref{bound_R} and \ref{cslem}, Remark \ref{rem_kappa_D}, and the Hoeffing corollaries. The fourth claim follows directly from the first three using the union bound on probabilities. The fifth claim follows from the fourth using Lemma \ref{defnPCA}.
\end{proof}

\section{Model Verification and Simulation Experiments}
\label{model_expts}
We first discuss model verification for real data in Sec \ref{model_verify}.  We then describe simulation experiments in Sec \ref{sims}.

\subsection{Model Verification for real data} 
\label{model_verify}

We experimented with two background image sequence datasets. The first was a video of lake water motion. The second was a video of window curtains moving due to the wind. The curtain sequence is available at \url{http://home.engineering.iastate.edu/~chenlu/ReProCS/Fig2.mp4}. For this sequence, the image size was $n=5120$ and the number of images, $t_{\max}=1755$.
The lake sequence is available at \url{http://home.engineering.iastate.edu/~chenlu/ReProCS/ReProCS.htm} (sequence 3). For this sequence, $n=6480$ and the number of images, $t_{\max}=1500$.
Any given background image sequence will never be exactly low rank, but only approximately so.
Let the data matrix with its empirical mean subtracted be ${\cal L}_{full}$. Thus ${\cal L}_{full}$ is a $n \times t_{\max}$ matrix.
We first ``low-rankified" this dataset by computing the EVD of $(1/t_{\max}) {\cal L}_{full} {\cal L}_{full}'$; retaining the 90\% eigenvectors' set (i.e. sorting eigenvalues in non-increasing order and retaining all eigenvectors until the sum of the corresponding eigenvalues exceeded 90\% of the sum of all eigenvalues); and projecting the dataset into this subspace. To be precise, we computed $P_{full}$ as the matrix containing these eigenvectors and we computed the low-rank matrix ${\cal L} = P_{full} P_{full}' {\cal L}_{full}$.
 Thus ${\cal L}$ is a $n \times t_{\max}$ matrix with $\rank({\cal L}) < \min(n, t_{\max})$.
The curtains dataset is of size $5120 \times 1755$, but 90\% of the energy is contained in only $34$ directions, i.e. $\rank({\cal L})=34$. The lake dataset is of size $6480 \times 1500$ but 90\% of the energy is contained in only $14$ directions, i.e. $\rank({\cal L})=14$. This indicates that both datasets are indeed approximately low rank.


In practical data, the subspace does not just change as simply as in the model given in Sec. \ref{model}. There are also rotations of the new and existing eigen-directions at each time which have not been modeled there. Moreover, with just one training sequence of a given type, it is not possible to compute $\text{Cov}(L_t)$ at each time $t$. Thus it is not possible to compute the delay between subspace change times. The only thing we can do is to assume that there may be a change every $d$ frames, and that during these $d$ frames the data is stationary and ergodic, and then estimate  $\text{Cov}(L_t)$ for this period using a time average. We proceeded as follows.
We took the first set of $d$ frames, ${\cal L}_{1:d} := [L_1, L_2 \dots L_d]$, estimated its covariance matrix as $(1/d) {\cal L}_{1:d} {\cal L}_{1:d}'$ and computed $P_0$ as the 99.99\% eigenvectors' set. Also, we stored the lowest retained eigenvalue and called it $\lambda^-$. It is assumed that all directions with eigenvalues below $\lambda^-$ are due to noise. Next, we picked the next set of $d$ frames, ${\cal L}_{d+1:2d}: = [L_{d+1}, L_{d+2}, \dots L_{2d}]$; projected them perpendicular to $P_0$, i.e. computed ${\cal L}_{1,p}=(I - P_0 P_0'){\cal L}_{d+1:2d}$; and computed $P_{1,\new}$ as the eigenvectors of $(1/d) {\cal L}_{1,p} {\cal L}_{1,p}'$ with eigenvalues equal to or above $\lambda^-$. Then,  $P_1 = [P_0, P_{1,\new}]$. For the third set of $d$ frames, we repeated the above procedure, but with $P_0$ replaced by $P_1$ and obtained $P_2$. A similar approach was repeated for each batch. 

 We used $d=150$ for both the datasets. In each case, we computed $r_0 := \rank(P_0)$, and $c_{\max} := \max_j \rank(P_{j,\new})$. For each batch of $d$ frames, we also computed $a_{t,\new}: = P_{j,\new}' L_t$, $a_{t,*}: = P_{j-1}' L_t$ and $\gamma_*: = \max_t \|a_{t}\|_\infty$. We got $c_{\max}=3$ and $r_0=8$ for the lake sequence and $c_{\max}=5$ and $r_0=29$ for the curtain sequence. Thus the ratio $c_{\max}/r_0$ is sufficiently small in both cases.
In Fig \ref{model_verification}, we plot $\|a_{t,\new}\|_{\infty}/\gamma_*$ for one 150-frame period of the curtain sequence and for three 150-frame change periods of the lake sequence. If we take  $\alpha=40$, we observe that $\gamma_\new: = \max_j \max_{t_j \le t < t_j+ \alpha} ||a_{t,\new}||_\infty = 0.125 \gamma_*$ for the curtain sequence and $\gamma_\new = 0.06 \gamma_*$ for the lake sequence, i.e. the projection along the new directions is small for the initial $\alpha$ frames. Also, clearly, it increases slowly. In fact $\|a_{t,\new}\|_{\infty} \le \max(v^{k-1} \gamma_\new,\gamma_*)$ for all $t \in \mathcal{I}_{j,k}$ also holds with $v=1.5$ for the curtain sequence and $v=1.8$ for the lake sequence.

{\em Verifying the clustering assumption. }
We verified the clustering assumption for the lake video as follows. We first ``low-rankified" it to 90\% energy as explained above. Note that, with one sequence, it is not possible to estimate $\Lambda_t$ (this would require an ensemble of sequences) and thus it is not possible to check if all $\Lambda_t$'s in $[\tilde{t}_j, t_{j+1}-1]$ are similar enough. However, by assuming that $\Lambda_t$ is the same for a long enough sequence, one can estimate it using a time average and then verify if its eigenvalues are sufficiently clustered. When this was done, we observed that the clustering assumption holds with $\tilde{g}_{\max} = 7.2$, $\tilde{h}_{\max} = 0.34$ and $\vartheta_{\max} = 7$



\begin{figure}
\centerline{
\includegraphics[height=5cm]{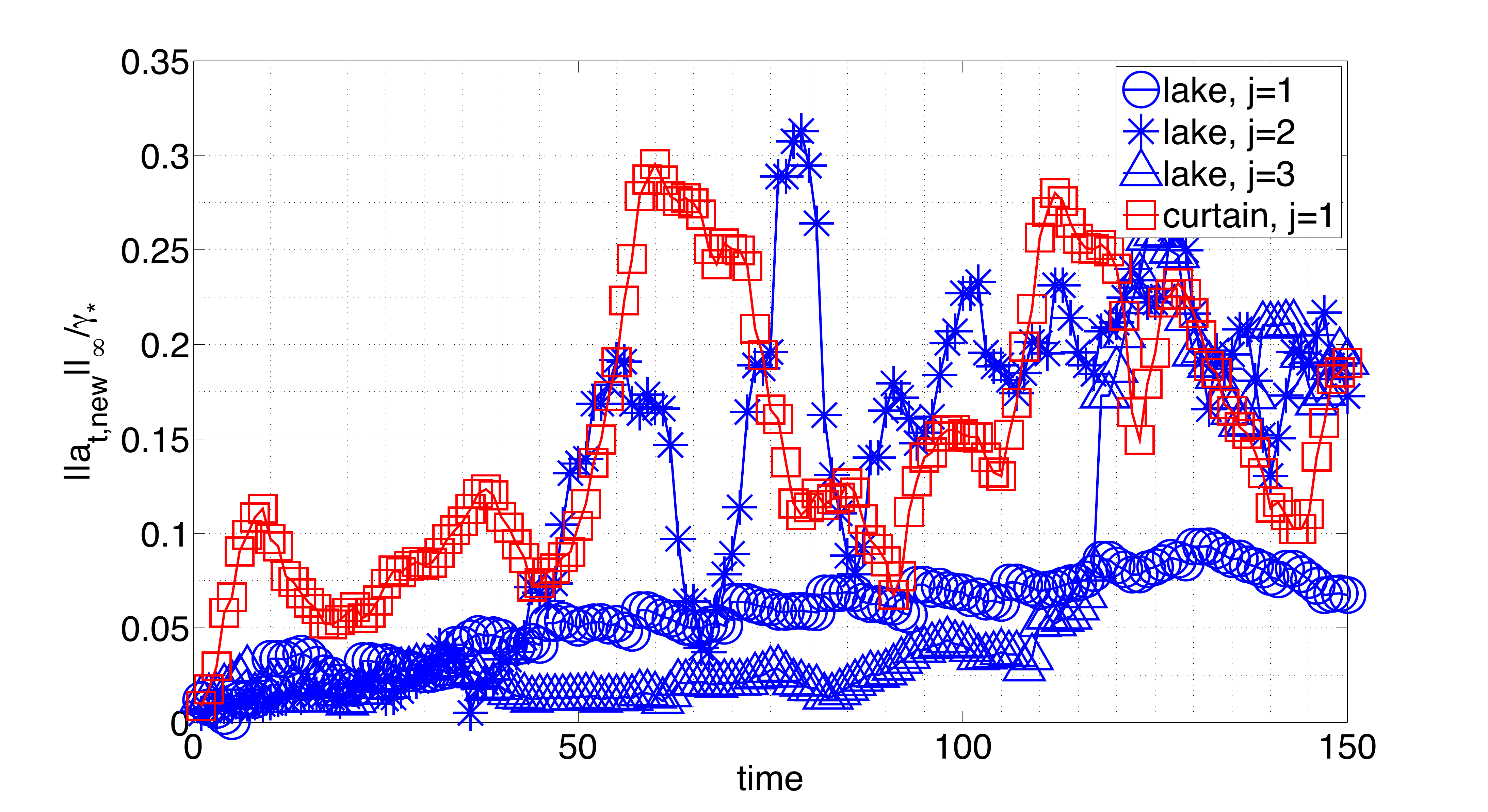} 
}
\caption{Verification of slow subspace change. The figure is discussed in Sec \ref{model_verify}.}
\label{model_verification}
\end{figure}

\subsection{Simulation Experiments} \label{sims}

The simulated data is generated as follows.
The measurement matrix $\mathcal{M}_t := [M_1, M_2,\cdots, M_t]$ is of size $2048 \times 4200$. It can be decomposed as a sparse matrix $\mathcal{S}_t:= [S_1, S_2,\cdots, S_t]$ plus a low rank matrix $\mathcal{L}_t:= [L_1, L_2,\cdots, L_t]$.

The sparse matrix $\mathcal{S}_t := [S_1,S_2,\cdots,S_t]$ is generated as follows.
\ben
\item For $1 \leq t \leq t_{\train} = 200$, $S_t = 0$.
\item For $t_{\train} < t \leq 5200$, $S_t$ has $s$ nonzero elements. The initial support $T_0 = \{1,2,\dots s\}$. Every $\Delta$ time instants we increment the support indices by 1. For example, for $t \in [t_\train +1, t_\train + \Delta-1]$, $T_t = T_0$, for $t \in [t_\train + \Delta, t_\train + 2\Delta-1]$. $T_t = \{2,3,\dots s+1\}$ and so on. Thus, the support set changes in a highly correlated fashion over time and this results in the matrix ${\cal S}_t$ being low rank. The larger the value of $\Delta$, the smaller will be the rank of ${\cal S}_t$ (for $t > t_\train+\Delta$).

\item The signs of the nonzero elements of $S_t$ are $\pm 1$ with equal probability  and the magnitudes are uniformly distributed between $2$ and $3$. Thus, $S_{\min} = 2$.
\een

The low rank matrix $\mathcal{L}_t := [L_1,L_2,\cdots,L_t]$ where $L_t := P_{(t)} a_t$ is generated as follows:
\ben
\item There are a total of $J=2$ subspace change times,  $t_1=301$ and $t_2= 2701$. Let $U$ be an $2048 \times (r_0 + c_{1,\new} + c_{2,\new})$ orthonormalized random Gaussian matrix.
\ben
\item For $1\leq t \leq t_1 -1$, $P_{(t)} = P_0$ has rank $r_0$ with  $P_0 = U_{[1,2,\cdots,r_0]}$. 
\item For $t_1  \leq t \leq t_2-1$, $P_{(t)} = P_1 = [P_0 \ P_{1,\new}]$ has rank $r_1 = r_0 + c_{1,\new}$ with $ P_{1,\new} = U_{[r_0+1,\cdots,r_0+c_{1,\new}]}$. 
\item For $t\geq t_2$, $P_{(t)} = P_2 = [P_1 \ P_{2,\new}]$  has rank $r_2 = r_1 + c_{2,\new}$ with $ P_{2,\new} = U_{[r_0+c_{1,\new}+1,\cdots,r_0+c_{1,\new}+c_{2,\new}]}$.
\een
\item $a_t$ is independent over $t$.  The various $(a_t)_i$'s are also mutually independent for different $i$. 
    \ben
    \item For $1\leq t < t_1$, we let $(a_t)_i$ be uniformly distributed between $-\gamma_{i,t}$ and $\gamma_{i,t}$, where  
\begin{multline*}
\gamma_{i,t}=\\
\begin{cases} 400 & \text{if $i=1,2,\cdots,r_0/4, \forall t$,}
\\
30 &\text{if $i=r_0/4+1,r_0/4+2,\cdots,r_0/2, \forall t$.}
\\
2 &\text{if $i=r_0/2+1,r_0/2+2,\cdots,3r_0/4, \forall t$.}
\\
1 &\text{if $i=3r_0/4+1,3r_0/4+2,\cdots,r_0, \forall t$.}
\end{cases} 
\end{multline*}

\item For $t_1  \leq t < t_2$, $a_{t,*}$ is an $r_0$ length vector, $a_{t,\new}$ is a $c_{1,\new}$ length vector and $L_t := P_{(t)} a_t = P_1 a_t = P_0 a_{t,*} + P_{1,\new} a_{t,\new}$.
$(a_{t,*})_i$ is uniformly distributed between $-\gamma_{i,t}$ and $\gamma_{i,t}$ and $a_{t,\new}$ is uniformly distributed between $-\gamma_{r_1,t}$ and $\gamma_{r_1,t}$, where
\[
        \gamma_{r_1,t}=
\begin{cases} 1.1^{k-1} \quad \text{ if } t_1 + (k-1) \alpha \leq t \leq \\ 
\hspace{1 in}t_1 +k\alpha-1 \\
\hspace{1in} k=1,2,3,4 \\
1.1^{4-1} = 1.331  \quad \text{if $t \geq t_1 + 4\alpha$.}
\end{cases}  
\]

\item For $t\geq t_2$, $a_{t,*}$ is an $r_1=r_0 + c_{1,\new}$ length vector,   $a_{t,\new}$ is a $c_{2,\new}$ length vector and $L_t := P_{(t)} a_t = P_2 a_t = [P_0 \ P_{1,\new}] a_{t,*} + P_{2,\new} a_{t,\new}$. Also, $(a_{t,*})_i$ is uniformly distributed between $-\gamma_{i,t}$ and $\gamma_{i,t}$ for $i=1,2,\cdots,r_0$ and is uniformly distributed between $-\gamma_{r_1,t}$ and $\gamma_{r_1,t}$ for $i=r_0+1, \dots r_1$. $a_{t,\new}$ is uniformly distributed between $-\gamma_{r_2,t}$ and $\gamma_{r_2,t}$, where
    \[
        \gamma_{r_2,t}=
\begin{cases} 1.1^{k-1}  \quad \text{if } t_2 + (k-1) \alpha \leq t \leq \\
\hspace{1 in} t_2 +k\alpha-1, \\
\hspace{1 in} k=1,2,\cdots,7
\\
1.1^{7-1} = 1.7716 \quad  \text{if $t \geq t_2 + 7\alpha$.}
\end{cases}  
\]
    \een
\een
Thus for the above model, $\gamma_* = 400$, $\gamma_{\new} =1$, $\lambda^+ = 53333$, $\lambda^- = 0.3333$ and $f:=\frac{\lambda^+}{\lambda^-} = 1.6 \times 10^{5}$.
Also, $S_{\min}=2$.

We used $\mathcal{L}_{t_{\train}} + \mathcal{N}_{t_{\train}}$ as the training sequence to estimate $\Phat_0$. Here  $\mathcal{N}_{t_{\train}} = [N_1, N_2 ,\cdots, N_{t_{\train}}]$ is i.i.d. random noise with each $(N_t)_i$ uniformly distributed between $-10^{-3}$ and $10^{-3}$. This is done to ensure that $\Span(\Phat_0) \neq \Span(P_0)$ but only approximates it.

\begin{figure*}
\subfigure[$\Delta = 2$]
{\includegraphics[width = 8 cm, height = 4 cm]{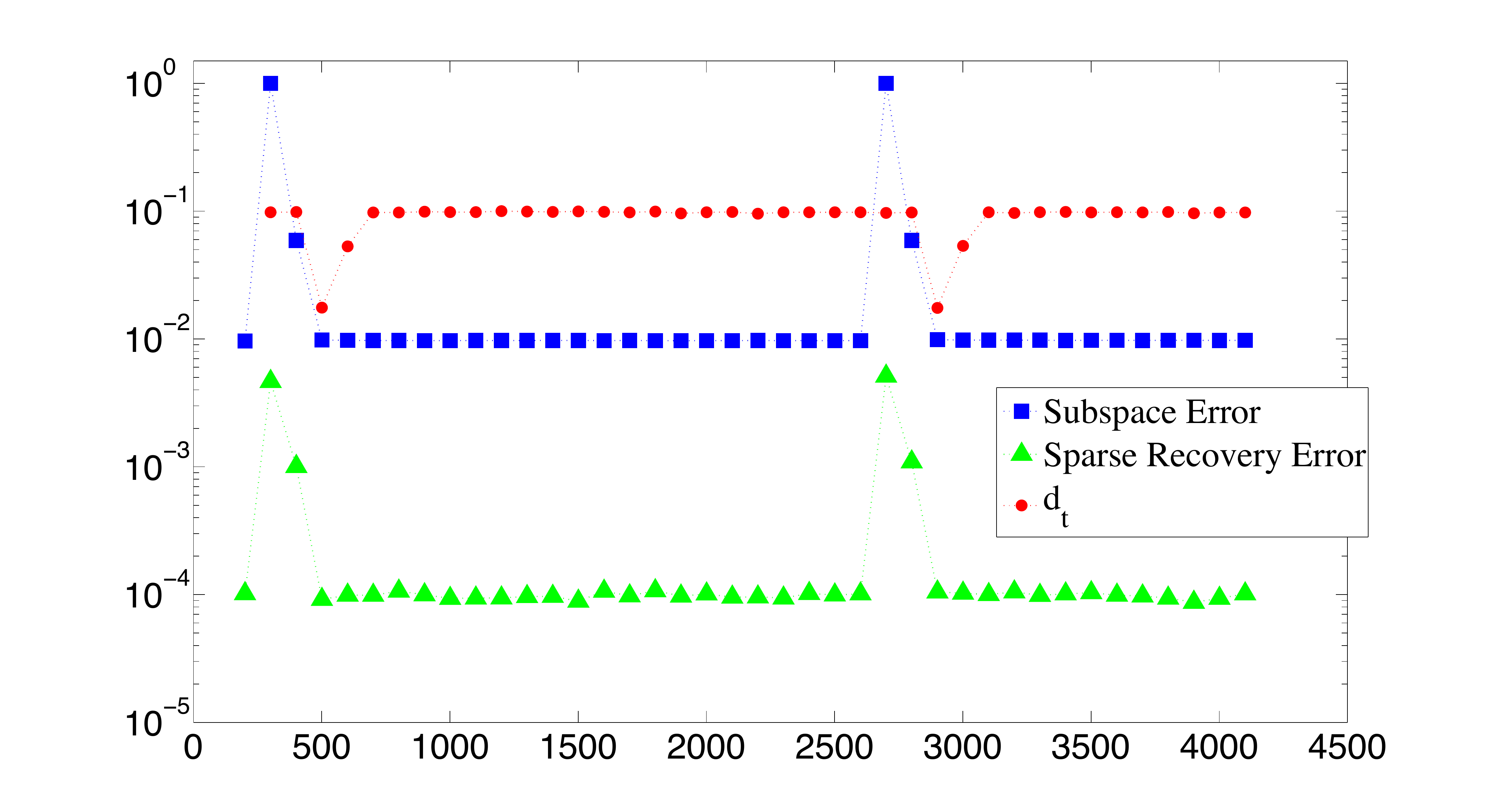}\label{D2}}
\subfigure[$\Delta = 10$]
{\includegraphics[width = 8 cm, height = 4cm]{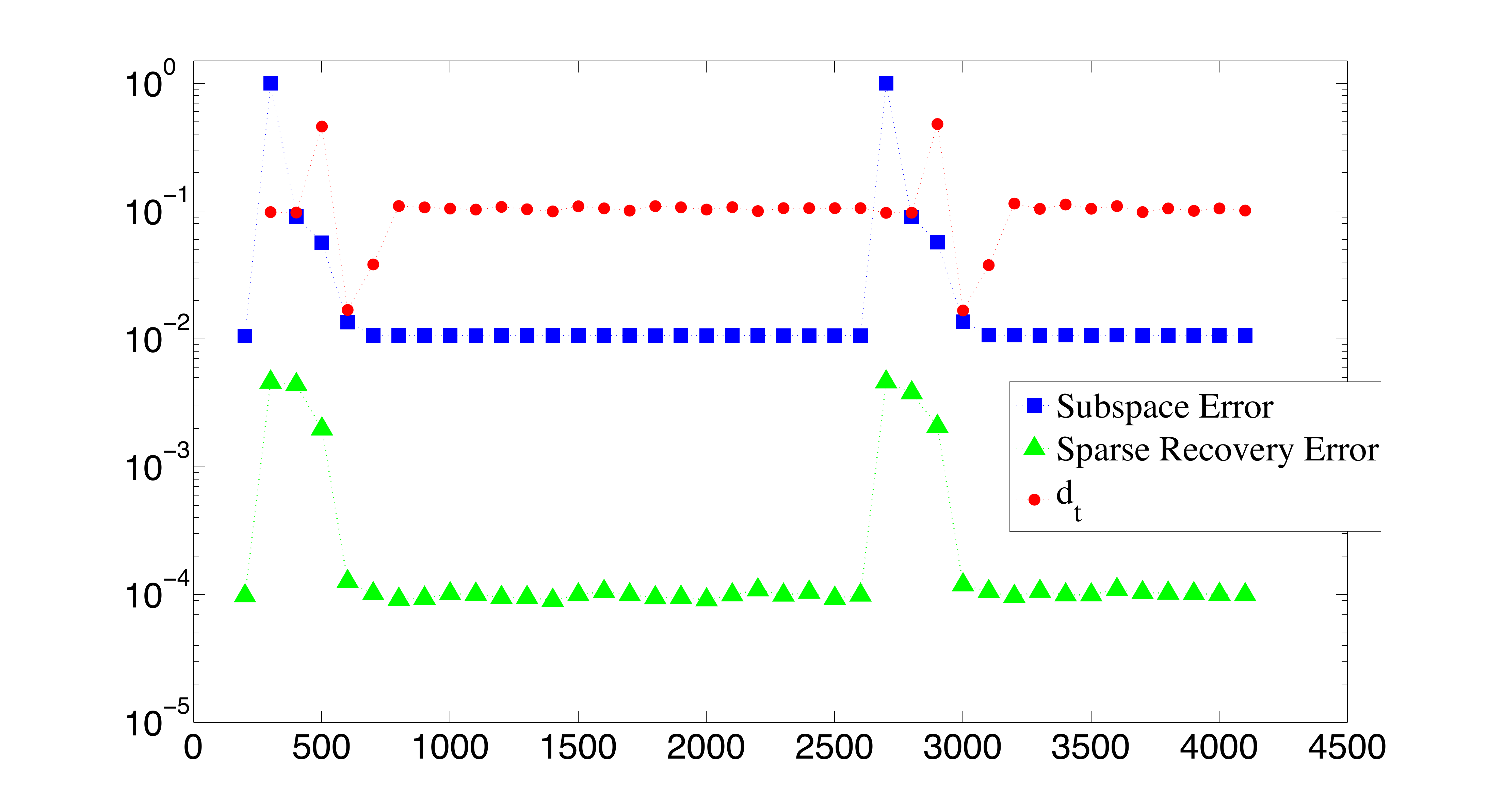}\label{D10}}\\
\subfigure[$\Delta = 50$]
{\includegraphics[width = 8 cm, height = 4 cm]{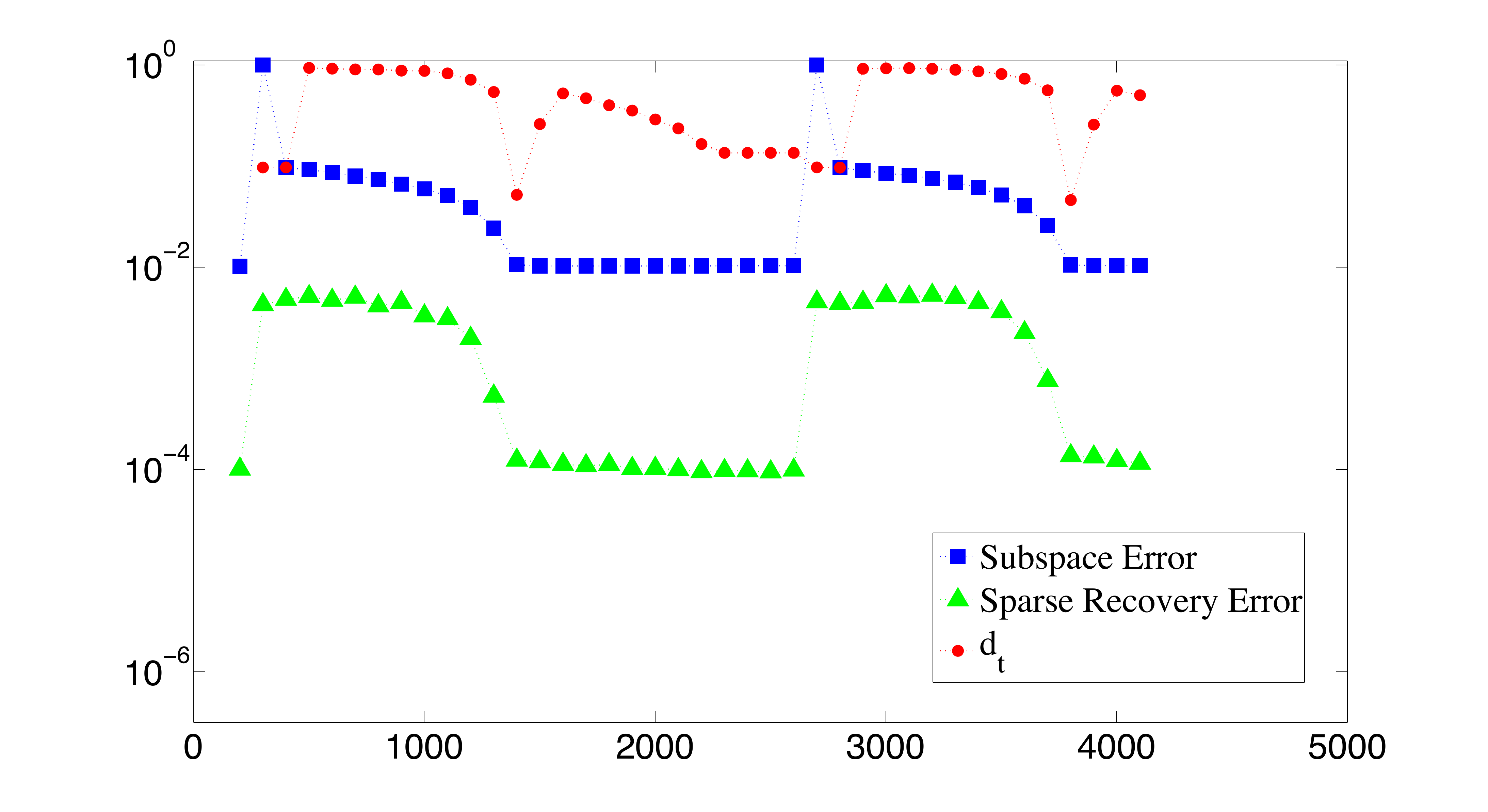}\label{D50}}
\subfigure[$\Delta = 100$]
{\includegraphics[width = 8 cm, height = 4 cm]{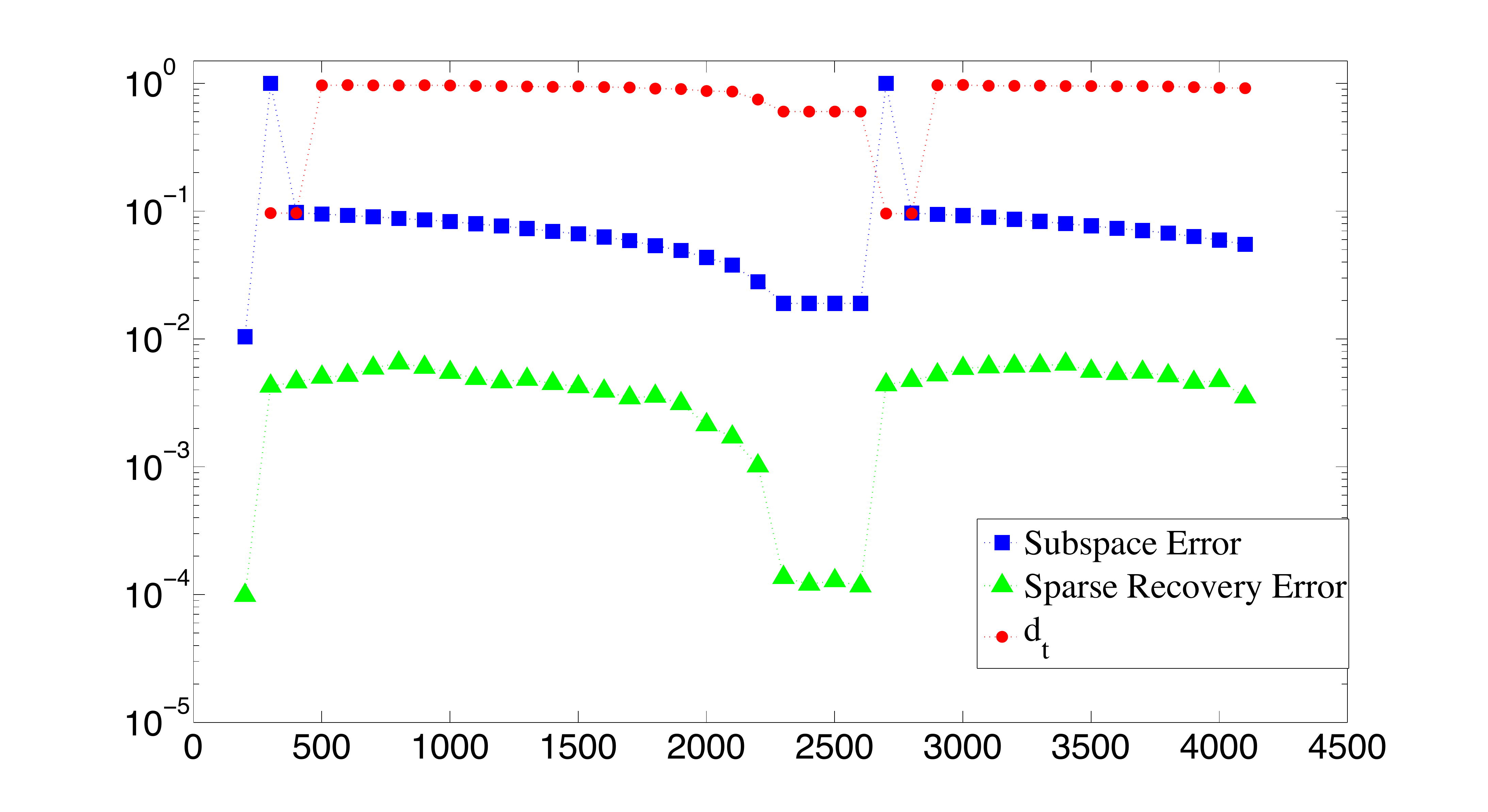}\label{D100}}
\caption{Plots of $d_t$, $SE$ and $e_t$ for simulated data with $r_0 =36$, $s = \max_t |T_t| = 20$ \label{add sim}}
\end{figure*}

\begin{figure}
\centerline{
{\label{del sim a}
\includegraphics[width =\columnwidth]{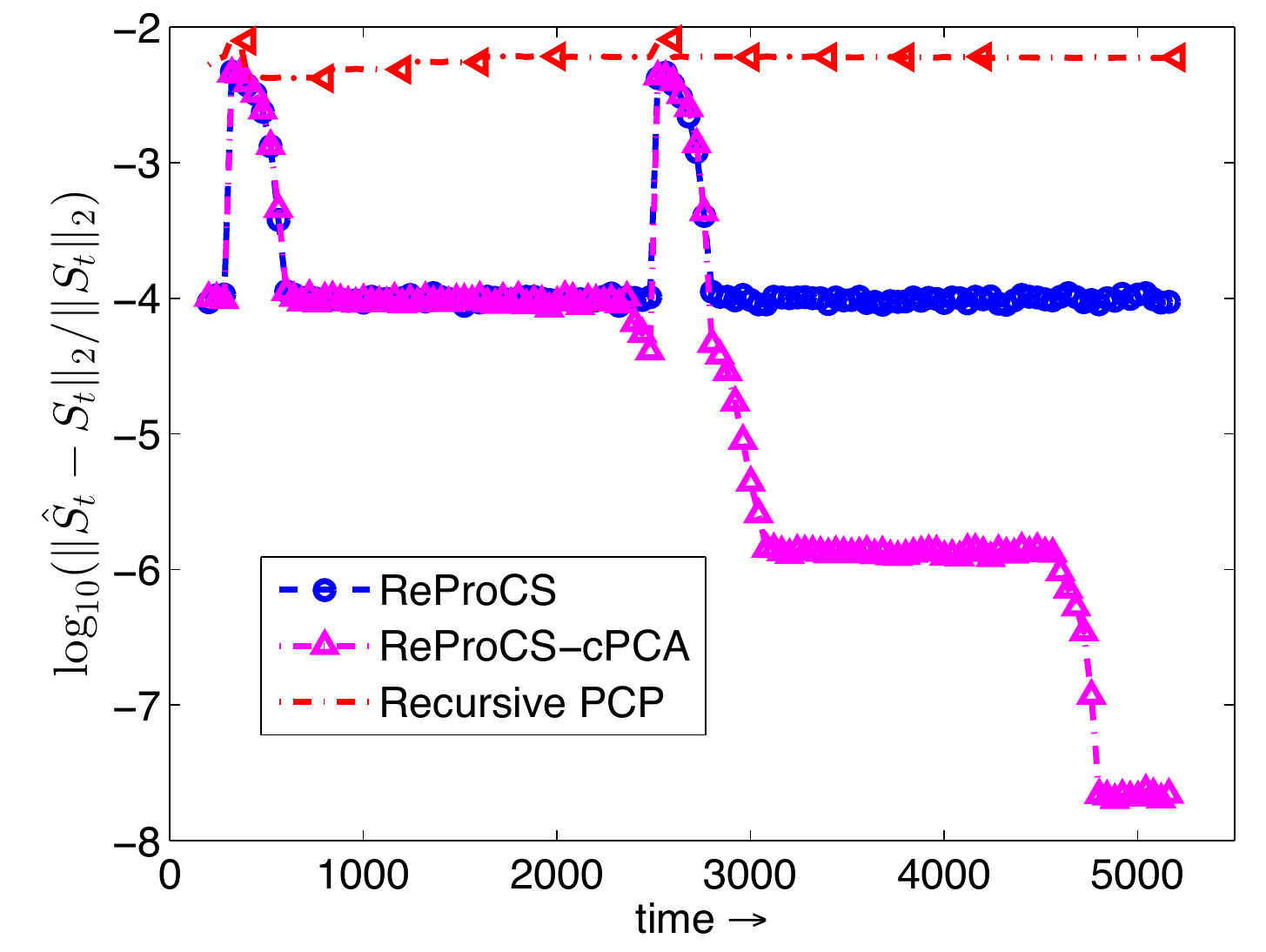}
}
}
\caption{Reconstruction errors of $S_t$ with $r_0 =36$, $s = \max_t |T_t| = 20$.  The times at which PCP is done are marked by red triangles. 
$\Delta:10$, comparing PCP with ReProCS and ReProCS-cPCA. 
\label{del sim}}
\end{figure}

 Figure \ref{add sim} shows the results of applying Algorithm \ref{reprocs} (ReProCS) to data generated according to the above model. 
The model parameters used were $s=20$, $r_0=36$ and $c_{1,\new}= c_{2,\new} = 1$, and each subfigure corresponds to a different value of $\Delta$. Because of the correlated support change, the $2048 \times t$ sparse matrix $\mathcal{S}_t = [S_1 ,S_2,\cdots,S_t]$ is rank deficient in either case, e.g. for Fig. \ref{D2}, $\mathcal{S}_t$ has rank $69,119,169,1219$ at $t=300,400,500,2600$; for Fig. \ref{D10}, $\mathcal{S}_t$ has rank $29,39,49,259$ at $t=300,400,500,2600$. We plot the subspace error $\SE_{(t)}$ and the normalized error for $S_t$, $\frac{\|\hat{S}_t-S_t\|_2}{\|S_t\|_2}$ averaged over 100 Monte Carlo simulations. We also plot the ratio $d_t :=\frac{\|{I_{T_t}}' D_{j,\new,k}\|_2}{\|D_{j,\new,k}\|_2}$. This serves as a proxy for $\kappa_s(D_{j,\new,k})$ (which has exponential computational complexity). In fact, in our proofs, we only need this ratio to be small. 


 As can be seen from Figs. \ref{D2} and \ref{D10}, the subspace error $\SE_{(t)}$ of ReProCS decreased exponentially and stabilized after about $4$ projection PCA update steps. The averaged normalized error for $S_t$ followed a similar trend. 
In Fig. \ref{D10} where $\Delta=10$, the subspace error $\SE_{(t)}$ also decreased but the decrease was a bit slower as compared to Fig. \ref{D2} where $\Delta =2$. 


In Fig. \ref{D100} we set $\Delta = 100.$  In this case $\mathcal{S}_t$ is very low rank.  The rank of $\mathcal{S}_t$ at $t=300,1000,2600$  is $20,27,43$.  We can see here that the subspace error decays rather slowly and does not return all the way to $.01$ within the $K\alpha$ frames.

Finally, if we set $\Delta = \infty$, the ratio $\frac{\|{I_{T_t}}' D_{j,\new,k}\|_2}{\|D_{j,\new,k}\|_2}$ was $1$ always. As a result, the subspace error and hence the reconstruction error of ReProCS did not decrease from its initial value at the subspace change time. 

We also did one experiment in which we generated $T_t$ of size $s=100$ uniformly at random from all possible $s$-size subsets of $\{1,2,\dots n\}$. $T_t$ at different times $t$ was also generated independently.  In this case, the reconstruction error of ReProCS is $\frac{1}{5000} \sum_{t=201}^{5200} \frac{\|\hat{S}_t - S_t\|_2}{\|S_t\|_2}=2.8472\times 10^{-4}$. The error for PCP was $3.5 \times 10^{-3}$ which is also quite small.

The data for figure \ref{del sim} was generated the same as above except that we use the more general subspace model that allows for deletion of directions.  Here, for $1\leq t \leq t_1 -1$, $P_{(t)} = P_0$ has rank $r_0$ with  $P_0 = U_{[1,2,\cdots,36]}$. 
 For $t_1  \leq t \leq t_2-1$, $P_{(t)} = P_1 = [P_0\setminus P_{1,\old} \ P_{1,\new}]$ has rank $r_1 = r_0 + c_{1,\new} - c_{1,\old} = 34$ with $ P_{1,\new} = U_{[37]}$ and $P_{1,\old} = U _{[9,18,36]}$.
For $t\geq t_2$, $P_{(t)} = P_2 = [P_1\setminus P_{2,\old} \ P_{2,\new}]$ has rank $r_2 = r_1 + c_{2,\new} - c_{2,\old} = 32$ with $ P_{2,\new} = U_{[38]}$ and $P_{1\old} = U _{[8,17,35]}$.  Again, we average over 100 Monte Carlo simulations.

As can be seen from Figure \ref{del sim}, the normalized sparse recovery error of ReProCS and ReProCS-cPCA decreased exponentially and stabilized. Furthermore, ReProCS-cPCA outperforms over ReProCS greatly when deletion steps are done.  

We also compared against PCP \cite{rpca}. At every $t = t_j + 4 k\alpha$, we solved (\ref{pcp_prob}) with $\lambda = 1/\sqrt{\max(n,t)}$ as suggested in \cite{rpca} to recover ${\cal S}_t$ and ${\cal L}_t$. We used the estimates of $S_t$ for the last $4 \alpha$ frames as the final estimates of $\Shat_t$. So, the $\Shat_t$ for $t=t_j+1, \dots t_j + 4 \alpha$ is obtained from PCP done at $t=t_j + 4 \alpha$, the $\Shat_t$ for $t=t_j+4\alpha + 1, \dots t_j + 8 \alpha$ is obtained from PCP done at $t=t_j + 8 \alpha$ and so on. Because of the correlated support change, the error of PCP was larger in both cases.

\section{Conclusions and Future Work} \label{conc}
In this work, we studied the recursive (online) robust PCA problem, which can also be interpreted as a problem of recursive sparse recovery in the presence of large but structured noise (noise that is dense and lies in a ``slowly changing" low dimensional subspace). We analyzed a novel solution approach called Recursive Projected CS or ReProCS that was introduced in our earlier work \cite{rrpcp_allerton,rrpcp_allerton11,han_tsp}. The ReProCS algorithm that we analyze assumes knowledge of the subspace change model on the $L_t$'s. We showed that, under mild assumptions and a denseness assumption on the currently unestimated subspace, $\Span(D_{j,\new,k})$ (this assumption depends on algorithm estimates), w.h.p., ReProCS can exactly recover the support set of $S_t$ at all times; the reconstruction errors of both $S_t$ and $L_t$ are upper bounded by a time-invariant and small value; and after every subspace change time, w.h.p., the subspace recovery error decays to a small enough value within a finite delay.
The most important open question that is being addressed in ongoing work is how to make our result a correctness result, i.e. how to remove the denseness assumption on $D_{j,\new,k}$ (see a forthcoming paper). Two other issues being studied are (i) how to get a result for the correlated $L_t$'s case \cite{zhan_ISIT}, and (ii) how to analyze the ReProCS algorithm when subspace change times are not known. Finally, an open question is how to to bound the sparse recovery error even when the support set is not exactly recovered. The undersampled measurements' case is also being studied \cite{rrpcp_globalsip}.

\appendices
\renewcommand\thetheorem{\thesection.\arabic{theorem}}

\section{Proofs of Preliminary Lemmas}
\label{appendix prelim}

\textbf{Proof of Lemma \ref{lemma0}}
\begin{proof} 
Because $P$, $Q$ and $\Phat$ are basis matrix, $P'P=I$, ${Q}'Q = I$ and $\Phat'\Phat=I$.
\ben
\item Using $P'P = I$ and $\|M\|_2^2 = \|MM'\|_2$, $\|(I-\Phat{\Phat}')P P'\|_2 =\|(I-\Phat{\Phat}')P\|_2$. Similarly, $\|(I-PP')\Phat {\Phat}'\|_2=\|(I-PP')\Phat\|_2$. Let $D_1 = (I-\Phat{\Phat}')P P'$ and let $D_2=(I-PP')\Phat {\Phat}'$. Notice that $\|D_1\|_2 = \sqrt{\lambda_{\max}(D_1'D_1)} = \sqrt{\|D_1'D_1\|_2}$ and $\|D_2\|_2 = \sqrt{\lambda_{\max}(D_2'D_2)} = \sqrt{\|D_2'D_2\|_2}$. So, in order to show $\|D_1\|_2 = \|D_2\|_2$, it suffices to show that $\|D_1' D_1\|_2 = \|D_2'D_2\|_2$.
    Let $P'\Phat\overset{SVD}{=} U\Sigma V'$. Then, $D_1'D_1 = P (I - P'\Phat{\Phat}'P)P' = P U (I-\Sigma^2) U' P' $ and $D_2'D_2 = \Phat ( I - {\Phat}'PP'\Phat){\Phat}' = \Phat V (I-\Sigma^2) V'{\Phat}'$ are the compact SVD's of $D_1' D_1$ and $D_2' D_2$ respectively. Therefore, $\|D_1' D_1\| = \|D_2'D_2\|_2 = \|I-\Sigma^2\|_2$ and hence $\|(I-\Phat{\Phat}')PP'\|_2 =\|(I - P{P}')\Phat{\Phat}'\|_2$.

\item $\|P{P}' -\Phat {\Phat}'\|_2 = \| PP' - \Phat{\Phat}'PP' + \Phat{\Phat}'PP'-\Phat {\Phat}'\|_2 \leq \| (I- \Phat{\Phat}')PP'\|_2 +
    \|(I-PP')\Phat {\Phat}'\|_2 = 2 \zeta_*$.

\item Since ${Q}'P = 0$, then $\|{Q}'\Phat\|_2 = \|{Q}'(I-P P')\Phat\|_2 \leq \|(I-P P')\Phat\|_2  = \zeta_*$.

\item 
Let $M = (I-\Phat {\Phat}') Q)$. Then $M'M = Q'(I-\Phat {\Phat}')Q$ and so $\sigma_i ((I-\Phat {\Phat}') Q) = \sqrt{\lambda_i (Q'(I-\Phat {\Phat}')Q)}$. Clearly, $\lambda_{\max} (Q'(I-\Phat {\Phat}')Q) \leq 1$. By Weyl's Theorem, $\lambda_{\min} (Q'(I-\Phat {\Phat}')Q) \geq 1 - \lambda_{\max} (Q'\Phat {\Phat}'Q) = 1- \|{Q}'\Phat\|_2^2 \geq 1-\zeta_*^2$. Therefore, $\sqrt{1-\zeta_*^2} \leq \sigma_{i}((I-\Phat {\Phat}') Q) \leq 1$.

\een

For the case when $P$ and $\Phat$ are not the same size, the proof of 1 is used, but $\Sigma^2$ becomes $\Sigma\Sigma'$ for $D_1$ and $\Sigma'\Sigma$ for $D_2$.  Since $\Sigma$ is of size $r_1 \times r_2$, $\Sigma\Sigma'$ will be of size $r_1\times r_1$ and $\Sigma'\Sigma$ will be of size $r_2\times r_2$.  Because $r_1 \leq r_2$, every singular value of $D_1'D_1$ will be a singualr value of $D_2'D_2$ (using the SVD as in the proof of 1 above ).  Using the characterization of the matrix 2-norm as the largest singluar value, $\|D_1'D_1\|_2 \leq \|D_2'D_2\|$.

\end{proof}

\textbf{Proof of Lemma \ref{rem_prob}}
\begin{proof}
It is easy to see that $\mathbf{P}(\ecalb, \ecalc) = \E[\mathbb{I}_\calb(X,Y) \mathbb{I}_\calc(X)].$
If $\E[\mathbb{I}_{\calb}(X,Y)|X] \ge p$ for all $X \in \calc$, this means that $\E[\mathbb{I}_{\calb}(X,Y)|X] \mathbb{I}_{\calc}(X) \ge p  \mathbb{I}_{\calc}(X) $. This, in turn, implies that
\begin{align*}
\mathbf{P}(\ecalb, \ecalc) = \E[\mathbb{I}_\calb(X,Y) \mathbb{I}_\calc(X)] &= \E[ \E[\mathbb{I}_{\calb}(X,Y)|X] \mathbb{I}_{\calc}(X) ] \\
&\ge p \E[\mathbb{I}_{\calc}(X) ].
\end{align*}
Recall from Definition \ref{probdefs} that $\mathbf{P}(\ecalb|X) = \E[\mathbb{I}_{\calb}(X,Y)|X]$ and $\mathbf{P}(\ecalc)= \E[\mathbb{I}_{\calc}(X) ]$.
Thus, we conclude that if $\mathbf{P}(\ecalb|X) \ge p$ for all $X \in \calc$, then $\mathbf{P}(\ecalb, \ecalc) \ge p \mathbf{P}(\ecalc)$. Using the definition of $\mathbf{P}(\ecalb|\ecalc)$, the claim follows. 
\end{proof}

\textbf{Proof of Corollary \ref{hoeffding_nonzero}}

\begin{proof}
\ben
\item Since, for any $X \in {\cal C}$, conditioned on $X$, the $Z_t$'s are independent, the same is also true for $Z_t - g(X)$ for any function of $X$.
Let $Y_t := Z_t - \E(Z_t|X)$. Thus, for any $X \in {\cal C}$, conditioned on $X$, the $Y_t$'s are independent.
Also, clearly $\E(Y_t|X) = 0$. Since for all $X \in \calc$, $\mathbf{P}(b_1 I \preceq Z_t \preceq b_2 I|X)=1$ and since $\lambda_{\max}(.)$ is a convex function, and $\lambda_{\min}(.)$ is a concave function, of a Hermitian matrix, thus $b_1 I \preceq \E(Z_t|X) \preceq b_2 I$ w.p. one for all $X \in \calc$. Therefore, $\mathbf{P}(Y_t^2 \preceq (b_2 -b_1)^2 I|X) = 1$ for all $X \in \calc$.
Thus, for Theorem \ref{hoeffding}, $\sigma^2 = \|\sum_t (b_2 - b_1)^2I\|_2 = \alpha (b_2-b_1)^2$. For any $X \in \calc$, applying Theorem \ref{hoeffding} for $\{Y_t\}$'s conditioned on $X$, we get that, for any $\eps > 0$,
\begin{multline*}  
\mathbf{P}\left( \lambda_{\max} \left(\frac{1}{\alpha} \sum_t Y_t \right) \leq \epsilon \Big | X \right) > \\1- n\exp\left(\frac{-\alpha \epsilon^2}{8  (b_2 -b_1)^2}\right) \ \text{for all} \ X \in \calc
\end{multline*}
    By Weyl's theorem, $\lambda_{\max} (\frac{1}{\alpha} \sum_t Y_t) = \lambda_{\max} (\frac{1}{\alpha} \sum_t (Z_t - \E(Z_t|X))
    \geq \lambda_{\max} (\frac{1}{\alpha} \sum_t Z_t) +  \lambda_{\min} (\frac{1}{\alpha} \sum_t -\E(Z_t|X))$.
    Since $\lambda_{\min} (\frac{1}{\alpha} \sum_t -\E(Z_t|X)) = - \lambda_{\max} (\frac{1}{\alpha} \sum_t \E(Z_t|X))\geq -b_4$, thus $ \lambda_{\max} (\frac{1}{\alpha} \sum_t Y_t) \geq \lambda_{\max} (\frac{1}{\alpha} \sum_t Z_t) - b_4$.
    Therefore,
 \begin{multline*}
\mathbf{P}\left( \lambda_{\max} \left(\frac{1}{\alpha} \sum_t Z_t \right) \leq b_4 + \epsilon\Big | X \right) > \\ 1- n\exp\left(\frac{-\alpha \epsilon^2}{8  (b_2 -b_1)^2}\right) \ \text{for all} \ X \in \calc
\end{multline*}

\item Let $Y_t = \E(Z_t|X) - Z_t$. As before, $\E(Y_t|X) = 0$ and conditioned on any $X \in {\cal C}$, the $Y_t$'s are independent and $\mathbf{P}(Y_t^2 \preceq (b_2 -b_1)^2 I|X) = 1$.  As before, applying Theorem \ref{hoeffding}, we get that for any $\epsilon >0$,
 \begin{multline*}
\mathbf{P}\left( \lambda_{\max} \left(\frac{1}{\alpha} \sum_t Y_t \right) \leq \epsilon \Big | X \right) > \\
1- n\exp\left(\frac{-\alpha \epsilon^2}{8  (b_2 -b_1)^2} \right) \ \text{for all} \ X \in \calc
\end{multline*}
    By Weyl's theorem, $\lambda_{\max}(\frac{1}{\alpha}\sum_t Y_t) = \lambda_{\max}(\frac{1}{\alpha} \sum_t(\E(Z_t|X) - Z_t)) \geq \lambda_{\min} (\frac{1}{\alpha} \sum_t \E(Z_t|X)) + \lambda_{\max} (\frac{1}{\alpha} \sum_t -Z_t) = \lambda_{\min} (\frac{1}{\alpha} \sum_t \E(Z_t|X)) - \lambda_{\min} (\frac{1}{\alpha} \sum_t Z_t) \ge b_3 - \lambda_{\min} (\frac{1}{\alpha} \sum_t Z_t)$
Therefore, for any $\epsilon >0$,
 \begin{multline*}
\mathbf{P} \left(\lambda_{\min}\left(\frac{1}{\alpha}\sum_t Z_t\right) \geq b_3 -\epsilon \Big| X \right) \\
\geq  1- n \exp\left(\frac{-\alpha \epsilon^2}{8(b_2-b_1)^2}\right) \ \text{for all} \ X \in \calc
\end{multline*}
\een
\end{proof}

\textbf{Proof of Corollary \ref{hoeffding_rec}}

\begin{proof}
Define the dilation of an $n_1 \times n_2$ matrix $M$ as $\text{dilation} (M) := \left[\begin{array}{cc}0 & {M}' \\ M & 0 \\\end{array} \right]$. Notice that this is an $(n_1+n_2) \times (n_1 +n_2)$ Hermitian matrix \cite{tail_bound}. As shown in \cite[equation 2.12]{tail_bound}, 
\bea
\lambda_{\max}(\text{dilation}(M)) = \|\text{dilation} (M)\|_2 = \|M\|_2
\label{dilM}
\eea
Thus, the corollary assumptions imply that $\mathbf{P}(\|\text{dilation} (Z_t)\|_2 \leq b_1 |X) = 1$ for all $X \in \calc$. Thus, $\mathbf{P}(-b_1 I \preceq \text{dilation} (Z_t) \preceq b_1 I | X) = 1$ for all $X \in \calc$.
%
Using (\ref{dilM}), the corollary assumptions also imply that $\frac{1}{\alpha}\sum_t \E( \text{dilation}(Z_t) |X) = \text{dilation} ( \frac{1}{\alpha}\sum_t \E(Z_t|X)) \preceq b_2 I$ for all $X \in \calc$.
Finally, $Z_t$'s conditionally independent given $X$,  for any $X \in \calc$, implies that the same thing also holds for $\text{dilation} (Z_t)$'s.
Thus, applying Corollary \ref{hoeffding_nonzero} for the sequence $\{\text{dilation} (Z_t)\}$, we get that, 
\begin{multline*}
\mathbf{P} \left(\lambda_{\max}\left(\frac{1}{\alpha}\sum_t \text{dilation}(Z_t)\right) \leq b_2 + \epsilon \Big| X \right) \geq \\ 1- (n_1+n_2) \exp\left(\frac{-\alpha \epsilon^2}{32 b_1^2}\right) \ \text{for all} \ X \in \calc
\end{multline*}
Using (\ref{dilM}), $\lambda_{\max}(\frac{1}{\alpha}\sum_t \text{dilation}(Z_t)) = \lambda_{\max}(\text{dilation}(\frac{1}{\alpha}\sum_t Z_t))  = \|\frac{1}{\alpha}\sum_t Z_t\|_2$ and this gives the final result. 
\end{proof}


\textbf{Proof of Lemma \ref{delta_kappa}}
\begin{proof}
Let $A = I - PP'$. By definition, $\delta_s(A) := \max\{ \max_{|T| \leq s}(\lambda_{\max}(A_T'A_T) -1),\max_{|T| \leq s} ( 1 - \lambda_{\min} (A_T' A_T))) \} $.
Notice that $A_T'A_T = I - I_T' PP'I_T$. Since $I_T' PP'I_T$ is p.s.d., by Weyl's theorem, $\lambda_{\max}(A_T'A_T) \leq1$.
Since $\lambda_{\max}(A_T'A_T)- 1\leq 0$ while $1 - \lambda_{\min}(A_T'A_T) \geq 0$, thus, 
\beq
\delta_s(I - PP') = \max_{|T| \leq s}\Big(1 - \lambda_{\min} ( I - I_T' PP'I_T)\Big) \label{defn_kappa_1}
\eeq
By Definition, $\kappa_s(P) = \max_{|T| \leq s} \frac{\|I_T' P\|_2}{\|P\|_2} =\max_{|T| \leq s} \|I_T' P\|_2$.
Notice that $\|I_T' P\|_2^2 = \lambda_{\max} (I_T' PP'I_T)  =  1-\lambda_{\min} (I - I_T'PP'I_T)$ \footnote{This follows because $B=I_T'PP'I_T$ is a Hermitian matrix.  Let $B = U \Sigma U'$ be its EVD. Since $UU'=I$, $\lambda_{\min}(I-B) = \lambda_{\min}(U(I - \Sigma)U') =\lambda_{\min}(I - \Sigma) = 1 - \lambda_{\max}(\Sigma) = 1-\lambda_{\max}(B)$.}, and so
\beq
\kappa_s^2(P) =\max_{|T| \leq s} \Big(1 - \lambda_{\min} (I - I_T'PP'I_T)\Big)\label{defn_kappa_2}
\eeq
From (\ref{defn_kappa_1}) and (\ref{defn_kappa_2}), we get $ \delta_s(I-PP') = \kappa_s^2 (P) $.
\end{proof}

\section{The Need for Projection PCA}
\label{projpca}



\subsection{Projection-PCA vs Standard PCA}
%
The reason that we cannot use standard PCA for subspace update in our work is because, in our case, the error $e_t= L_t - \Lhat_t$ in the observed data vector $\Lhat_t$ is correlated with the true data vector $L_t$; and the condition number of $\operatorname{Cov}[L_t]$ is large (see Remark \ref{large_f}). In other works that study finite sample PCA, e.g. \cite{nadler} and references therein, the large condition number does not cause a problem because they assume that the error/noise ($e_t$) is uncorrelated with the true data vector ($L_t$). Moreover, $e_t$ or $L_t$ or both are zero mean (which we have too). Thus, the dominant term in the perturbation of the estimated covariance matrix, $(1/\alpha) \sum_t \Lhat_t \Lhat_t'$ w.r.t. the true one is $(1/\alpha) \sum_t e_t e_t'$. For $\alpha$ large enough, the other two terms $(1/\alpha) \sum_t L_t e_t'$ and its transpose are close to zero w.h.p. due to law or large numbers. Thus, the subspace error bound obtained using the $\sin \theta$ theorem and the matrix Hoeffding inequality, will depend, w.h.p., only on the ratio of the maximum eigenvalue of $\operatorname{Cov}[e_t]$ to the smallest eigenvalue of $\operatorname{Cov}[L_t]$. The probability with which this bound holds depends on $f$, however the probability can be made large by increasing the number of data points $\alpha$.
However, in our case, because $e_t$ and $L_t$ are correlated, this strategy does not work. 
We explain this below.

In this discussion, we remove the subscript $j$. Also, let $P_*:= P_{j-1}$, $\Phat_*:= \Phat_{j-1}$, $r_* = \rank(P_*)$.
Consider $t=t_j + k\alpha-1$ when the $k^{th}$ projection PCA or PCA is done.
Since the error $e_t = L_t - \Lhat_t$ is correlated with $L_t$, the dominant terms in the perturbation matrix seen by PCA are $(1/ (t_j+ k\alpha)) \sum_{t=1}^{t_j+k\alpha-1} L_t e_t'$ and its transpose, while for projection PCA, they are  $(1/ \alpha) \Phi_0  \sum_{t \in \mathcal{I}_{j,k}} L_t e_t' \Phi_0$ and its transpose.
The magnitude of $L_t$ can be large. The magnitude of $e_t$ is smaller than a constant times that of $L_t$. The constant is less than one but, at $t=t_j+\alpha-1$, it is not negligible. Thus, the norm of the perturbation seen by PCA at this time may not be small. As a result, the bound on the subspace error, $\SE_{(t)}$, obtained by applying the $\sin \theta$ theorem may be more than one (and hence meaningless since by definition $\SE_{(t)} \le 1$). For projection PCA, because of $\Phi_0$, the perturbation is much smaller and hence so is the bound on $\SE_{(t)}$. 

Let $\SE_k: = \SE_{(t_j+k\alpha-1)}= \SE_{(t)}$ denote the subspace error for $t \in \mathcal{I}_{j,k}$. Consider $k=1$ first. For PCA, we can show that $\SE_1 \lesssim \check{C} \kappa_s^+ g^+  + \check{C}' f \zeta_*^+ $  for constants $\check{C}, \check{C}'$ that are more than one but not too large. Here $g^+$ is the upper bound on the condition number of $\text{Cov}(a_{t,\new}))$ and it is valid to assume that $g^+$ is small so that $\check{C} \kappa_s^+ g^+ < 1$. However, $f$ is a bound on the maximum condition number of $\text{Cov}(a_t)=\text{Cov}(L_t)$ and this can be large. When it is, the second term may not be less than one. On the other hand, for projection PCA, we have $\SE_k \le \zeta_k + \zeta_* \le \zeta_k^+ + \zeta_*^+$ with $\zeta_*^+  = r \zeta$, and  $\zeta_k^+ \approx \check{C} \kappa_s^+ g^+ \zeta_{k-1}^+  + \check{C}'  f  (\zeta_*^+)^2$ and $\zeta_0^+=1$. Thus $\SE_1 \lesssim \check{C} \kappa_s^+ g^+  + \check{C}' f (\zeta_*^+)^2 +  \zeta_*^+$. The first term in this bound is similar to that of PCA, but the second term is much smaller. The third term is negligibly small. Thus, in this case, it is easier to ensure that the bound is less than one.

Moreover, our goal is to show that within a finite delay after a subspace change time, the subspace error decays down from one to a value proportional to $\zeta$. For projection PCA, this can be done because we can separately bound the subspace error of the existing subspace, $\zeta_*$, and of the newly added one, $\zeta_k$, and then bound the total subspace error, $\SE_{(t)}$, by $\zeta_* + \zeta_k$ for $t \in \mathcal{I}_{j,k}$. Assuming that, by $t=t_j$, $\zeta_*$ is small enough, i.e. $\zeta_* \le r_* \zeta$ with $\zeta < 0.00015/r^2f$, we can show that within $K$ iterations, $\zeta_k$ also becomes small enough so that $\SE_{(t)} \le (r_*+c)\zeta$. However, for PCA, it is not possible to separate the subspace error in this fashion. For $k > 1$, all we can claim is that $\SE_k \lesssim \check{C} \kappa_s^+ f \ \SE_{k-1}$. Since $f$ can be large (larger than $1/\kappa_s^+$), this cannot be used to show that $\SE_k$ decreases with $k$.

%
%
%

\subsection{Why not use all $k \alpha$ frames at $t=t_j+ k \alpha-1$}
Another possible way to implement projection PCA is to use the past $k \alpha$ estimates $\Lhat_t$ at the $k^{th}$ projection PCA time, $t=t_j+ k \alpha-1$. This may actually result in an improved algorithm. We believe that it can also be analyzed using the approaches developed in this paper.  However, the analysis will be more complicated. We briefly try to explain why. The perturbation seen at $t=t_j+ k \alpha-1$, $\mathcal{H}_k$, will now satisfy $\mathcal{H}_k \approx (1/ (k \alpha) )\sum_{k'=1}^k \sum_{t \in \mathcal{I}_{j,k'}} \Phi_0 (- L_t e_t' - e_t L_t' + e_t e_t' ) \Phi_0$ instead of just being approximately equal to the last ($k'=k$) term. Bounds on each of these terms will hold with a different probability. Thus, proving a lemma similar to Lemma \ref{termbnds} will be more complicated.

\section{Proof of Lemma \ref{termbnds}}
\label{appendix termbnds}

For convenience, we will use $\frac{1}{\alpha}\sum_t$ to denote $\frac{1}{\alpha} \sum_{t \in \mathcal{I}_{j,k}}$. The proof follows using the following key facts and the Hoeffding corollaries.

\begin{fact}\label{keyfacts}
Under the assumptions of Theorem \ref{thm1} the following are true.

\begin{enumerate}

\item The matrices $D_{\new}$, $R_{\new}$, $E_{\new}$, $D_{*}, D_{\new,k-1}$, $\Phi_{k-1}$ are functions of the r.v. $X_{j,k-1}$.  Since $X_{j,k-1}$ is independent of any $a_{t}$ for $t \in  \mathcal{I}_{j,k}$ the same is true for the matrices  $D_{\new}$, $R_{\new}$, $E_{\new}$, $D_{*}, D_{\new,k-1}$, $\Phi_{k-1}$.
    \\
    All terms that we bound for the first two claims of the lemma are of the form $\frac{1}{\alpha} \sum_{t \in \mathcal{I}_{j,k}} Z_t$ where $Z_t= f_1(X_{j,k-1}) Y_t f_2(X_{j,k-1})$, $Y_t$ is a sub-matrix of $a_t a_t'$ and $f_1(.)$ and $f_2(.)$ are functions of $X_{j,k-1}$. Thus, conditioned on  $X_{j,k-1}$, the $Z_t$'s are mutually independent. (Recall that we assume independence of the $a_t$'s.  \label{X_at_indep} \label{zt_indep}
    \\
    All the terms that we bound for the third claim contain $e_t$. Using Lemma \ref{cslem}, conditioned on $X_{j,k-1}$, $e_t$ satisfies (\ref{etdef0}) w.p. one whenever $X_{j,k-1} \in \Gamma_{j,k-1}$. Using (\ref{etdef0}), it is easy to see that all these terms are also of the above form whenever $X_{j,k-1} \in \Gamma_{j,k-1}$.
    \\
    Thus, conditioned on $X_{j,k-1}$, the $Z_t$'s for all the above terms are mutually independent, whenever $X_{j,k-1} \in \Gamma_{j,k-1}$.


\item It is easy to see that $\|\Phi_{k-1} P_*\|_2 \le \zeta_*$, $\zeta_0 = \|D_\new\|_2 \le 1$,  $\Phi_0 D_\new = \Phi_0'D_\new = D_\new$,  $\|R_\new\| \le  1$, $\|(R_\new)^{-1}\| \le 1/\sqrt{1 - \zeta_*^2}$, ${E_{\new,\perp}}' D_\new = 0$, and $\|{E_{\new}}' \Phi_0 e_t\| =\|(R_\new')^{-1} D_\new' \Phi_0 e_t\| = \|(R_\new)^{-1} D_\new' e_t\| \le \|(R_\new')^{-1} D_\new' I_{T_t}\| \|e_t\| \le \frac{\kappa_s(D_\new)}{\sqrt{1 - \zeta_*^2}}\|e_t\|$. The bounds on $\|R_\new\|$ and $\|(R_\new)^{-1}\|$ follow using Lemma \ref{lemma0} and the fact that $\sigma_{i}(R_\new) = \sigma_{i}(D_\new)$.
 \label{rnew}

\item $X_{j,k-1} \in \Gamma_{j,k-1}$ implies that
\begin{enumerate}
\item $\zeta_{j,*} \le \zeta_*^+$ (By definition of $\Gamma_{j,k-1}$ (Definition \ref{Gamma_def}))
\item $\zeta_{k-1}\leq \zeta_{k-1}^+ \leq 0.6^{k-1} + 0.4c\zeta$  (This follows by the definition of $\Gamma_{j,k-1}$ and Lemma \ref{expzeta}.)
\end{enumerate}
\label{leqzeta}

\item Item \ref{leqzeta} implies that conditioned on $X_{j,k-1} \in \Gamma_{j,k-1}$
\ben
\item $\kappa_s(D_\new) \le \kappa_s^+$ (follows by Lemma \ref{Dnew0_lem}),
\item $\lambda_{\min}(R_{\new}{R_{\new}}') \geq 1-(\zeta_*^+)^2$ (follows from Lemma \ref{lemma0} and the fact that $\sigma_{\min}(R_\new) = \sigma_{\min}(D_\new)$),
\item $\|{I_{T_t}}' \Phi_{k-1} P_* \|_2 \leq \|\Phi_{k-1} P_* \|_2 \leq \zeta_{j,*} \leq \zeta_{j,*}^+$,
\item $\|{I_{T_t}}'D_{\new,k-1}\|_2 \leq \kappa_{s}(D_{\new,k-1}) \zeta_{k-1} \leq \kappa_s^+ \zeta_{k-1}^+$.
\een
\label{X_in_Gamma}

\item By Weyl's theorem (Theorem \ref{weyl}),  for a sequence of matrices $B_t$, $\lambda_{\min}(\sum_t B_t) \ge \sum_t \lambda_{\min}(B_t)$ and $\lambda_{\max}(\sum_t B_t) \le \sum_t \lambda_{\max}(B_t)$.

\end{enumerate}

\end{fact}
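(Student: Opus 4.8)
The plan is to establish the five items of Fact~\ref{keyfacts} in order; each is an elementary consequence of the definitions of Section~\ref{detailed} together with Lemmas~\ref{lemma0}, \ref{expzeta}, \ref{cslem} and~\ref{Dnew0_lem}, so the work is bookkeeping rather than new estimation. For item~1 I would first trace the measurability chain. Each $\Lhat_t = M_t - \Shat_t$ is a deterministic function of $a_1,\dots,a_t$ (the supports $T_t$ and the model matrices $P_j$ being fixed), so $\Phat_{j-1}$, which Algorithm~\ref{reprocs} produces from $\{\Lhat_t : t \le t_{j-1}+K\alpha-1\}$ with $t_{j-1}+K\alpha-1 < t_j$, is a function of $X_{j,k-1}$; hence so are $\Phi_0 = I-\Phat_{j-1}\Phat_{j-1}'$, $D_\new = \Phi_0 P_\new$, $D_* = \Phi_0 P_{j-1}$, and the reduced QR factors $E_\new, R_\new$ of $D_\new$. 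Similarly $\Phat_{j,\new,k-1}$ is computed over $\mathcal{I}_{j,k-1}$, so $\Phi_{k-1}$ and $D_{\new,k-1}$ are also functions of $X_{j,k-1}$. Since the $a_t$'s are mutually independent (condition~2(b) of Theorem~\ref{thm1}), $X_{j,k-1}$ is independent of $\{a_t : t \in \mathcal{I}_{j,k}\}$, and each matrix bounded in claims~1--2 of Lemma~\ref{termbnds} has the form $\frac1\alpha\sum_{t\in\mathcal{I}_{j,k}} f_1(X_{j,k-1})\,Y_t\,f_2(X_{j,k-1})$ with $Y_t$ a submatrix of $a_ta_t'$ (using $L_t = P_j a_t$), so conditioned on $X_{j,k-1}$ the summands are independent across $t$. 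For the terms in claim~3 the summands also involve $e_t$; there I would pass to the event $\{X_{j,k-1}\in\Gamma_{j,k-1}\}$ and invoke Lemma~\ref{cslem}, which gives $e_t = I_{T_t}[(\Phi_{k-1})_{T_t}'(\Phi_{k-1})_{T_t}]^{-1}I_{T_t}'\Phi_{k-1}P_j a_t$ with probability one there, again of the form $f(X_{j,k-1})a_t$, so the conditional independence persists on $\Gamma_{j,k-1}$.

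Item~2 is a chain of linear-algebra identities. Writing $\Phi_{k-1} = I-\hat Q\hat Q'$ with $\hat Q := [\Phat_{j-1}\ \Phat_{j,\new,k-1}]$ a basis matrix whose range contains $\Span(\Phat_{j-1})$ gives $\|\Phi_{k-1}P_*\|_2 \le \|(I-\Phat_{j-1}\Phat_{j-1}')P_*\|_2 = \zeta_*$; $\zeta_0 = \|\Phi_0 P_\new\|_2 \le 1$ since $\Phi_0$ is an orthogonal projection; idempotence and symmetry of $\Phi_0$ give $\Phi_0 D_\new = \Phi_0'D_\new = D_\new$; and $E_{\new,\perp}'D_\new = 0$ is immediate from the definition of $E_{\new,\perp}$. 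Applying Lemma~\ref{lemma0}(4) with $P = P_{j-1}$, $\Phat = \Phat_{j-1}$, $Q = P_\new$ (so $Q'P = 0$) yields $\sigma_i(D_\new) = \sigma_i((I-\Phat_{j-1}\Phat_{j-1}')P_\new) \in [\sqrt{1-\zeta_*^2},\,1]$, and since $\sigma_i(R_\new) = \sigma_i(D_\new)$ we get $\|R_\new\|_2 \le 1$ and $\|(R_\new)^{-1}\|_2 \le 1/\sqrt{1-\zeta_*^2}$; finally $E_\new'\Phi_0 e_t = (R_\new')^{-1}D_\new'e_t$ together with $\|I_{T_t}'D_\new\|_2 \le \|I_{T_t}'E_\new\|_2\|R_\new\|_2 \le \kappa_s(D_\new)$ gives the last bound. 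Items~3--4 are obtained by unfolding Definition~\ref{Gamma_def}: $\Gamma_{j,k-1}\subseteq\Gamma_{j,0}$ forces $\zeta_{j,*}\le\zeta_{j,*}^+$ and $\Gamma_{j,k-1}\subseteq\check\Gamma_{j,k-1}$ forces $\zeta_{j,k-1}\le\zeta_{k-1}^+$ (for $k=1$ use the trivial $\zeta_0\le 1 = \zeta_0^+$), which is at most $0.6^{k-1}+0.4c\zeta$ by Lemma~\ref{expzeta}; then $\kappa_s(D_\new)\le\kappa_s^+$ is Lemma~\ref{Dnew0_lem} (for $k\ge 2$ one may instead use the Theorem~\ref{thm1} bound on $\kappa_s(D_{j,\new,k-1})$), $\lambda_{\min}(R_\new R_\new') = \sigma_{\min}^2(R_\new)\ge 1-(\zeta_*^+)^2$, $\|I_{T_t}'\Phi_{k-1}P_*\|_2\le\|\Phi_{k-1}P_*\|_2\le\zeta_{j,*}^+$, and $\|I_{T_t}'D_{\new,k-1}\|_2\le\kappa_s(D_{\new,k-1})\zeta_{k-1}\le\kappa_s^+\zeta_{k-1}^+$. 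Item~5 is Weyl's theorem (Theorem~\ref{weyl}) applied inductively to the sum.

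The only point needing real care is the independence assertion for the $e_t$-terms in item~1: the clean expression for $e_t$ is available only once exact support recovery $\That_t = T_t$ holds, so it is valid with probability one only on the event $\Gamma_{j,k-1}$ and not unconditionally. Consequently the conditional independence used for claim~3 of Lemma~\ref{termbnds} must be stated conditioned on $X_{j,k-1}\in\Gamma_{j,k-1}$, and I would be deliberate about carrying this set through so that when the Hoeffding corollaries (Corollaries~\ref{hoeffding_nonzero}--\ref{hoeffding_rec}) are later applied, their conditioning set $\calc$ is taken to be $\Gamma_{j,k-1}$; everything else in the Fact is routine.
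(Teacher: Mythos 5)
Your proposal is correct and follows the same route as the paper: Fact~\ref{keyfacts} is a collection of bookkeeping observations (measurability of the algorithm matrices in $X_{j,k-1}$, applications of Lemma~\ref{lemma0} to the QR factor $R_\new$, and unfolding Definition~\ref{Gamma_def} plus Lemmas~\ref{expzeta} and~\ref{Dnew0_lem}), and the paper itself leaves them to the reader exactly as you fill them in. Your closing caveat about conditioning on $\Gamma_{j,k-1}$ when using the $e_t$ expression is precisely the point the paper flags in item~1 and implements via Lemma~\ref{rem_prob} when applying the Hoeffding corollaries.
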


\begin{proof}

Consider $A_k := \frac{1}{\alpha} \sum_t {E_{\new}}' \Phi_{0} L_t {L_t}' \Phi_{0} E_{\new}$. Notice that ${E_{\new}}' \Phi_{0} L_t = R_{\new} a_{t,\new} + {E_{\new}}' D_* a_{t,*}$. Let $Z_t = R_{\new} a_{t,\new} {a_{t,\new}}' {R_{\new}}'$ and let $Y_t = R_{\new} a_{t,\new}{a_{t,*}}' {D_*}' {E_{\new}}' +  {E_{\new}}' D_* a_{t,*}{a_{t,\new}}' {R_{\new}}'$, then
\beq
A_k  \succeq \frac{1}{\alpha} \sum_t Z_t + \frac{1}{\alpha} \sum_t Y_t \label{lemmabound_1}
\eeq

Consider $\sum_t Z_t = \sum_t R_{\new} a_{t,\new} {a_{t,\new}}' R_{\new}'$.
\begin{enumerate}
\item Using item \ref{zt_indep} of Fact \ref{keyfacts}, the $Z_t$'s are conditionally independent given $X_{j,k-1}$.

\item Using item \ref{X_at_indep}, Ostrowoski's theorem (Theorem \ref{ost}),  and item \ref{X_in_Gamma},  for all $X_{j,k-1} \in \Gamma_{j,k-1}$, $\lambda_{\min}\left( \E(\frac{1}{\alpha}\sum_t Z_t|X_{j,k-1})\right) = \lambda_{\min}\left( R_{\new} \frac{1}{\alpha}\sum_t \E(a_{t,\new}{a_{t,\new}}') {R_{\new}}'\right) \ge \lambda_{\min} \left(R_{\new} {R_{\new}}'\right)\lambda_{\min} \left(\frac{1}{\alpha}\sum_t \E(a_{t,\new}{a_{t,\new}}')\right) \geq (1-(\zeta_{j,*}^+)^2)\lambda_{\new,k}^-$.

\item Finally, using items \ref{rnew} and the bound on $\| a_t \|_{\infty}$ from the model, conditioned on $X_{j,k-1}$,  $0 \preceq Z_t  \preceq c \gamma_{\new,k}^2 I \preceq c  \max\left((1.2)^{2k} \gamma_{\new}^2, \gamma_*^2 \right) I$ holds w.p. one for all $X_{j,k-1} \in \Gamma_{j,k-1}$.
\end{enumerate}

Thus, applying Corollary \ref{hoeffding_nonzero} with $\epsilon = \frac{c\zeta\lambda^-}{24}$, we get
\begin{multline}
\mathbf{P}\left(\lambda_{\min} \left(\frac{1}{\alpha} \sum_t Z_t\right)
\geq   (1-(\zeta_*^+)^2)\lambda_{\new,k}^-  \right.\\
\left. - \frac{c\zeta\lambda^-}{24} \bigg| X_{j,k-1}\right) \geq  \\
1- c \exp \left(\frac{-\alpha \zeta^2 (\lambda^-)^2}{8 \cdot 24^2 \cdot \min(1.2^{4k} \gamma_{\new}^4, \gamma_*^4)}\right)
 \label{lemma_add_A1}
\end{multline}
 for all $X_{j,k-1} \in  \Gamma_{j,k-1}$.

Consider $Y_t = R_{\new} a_{t,\new}{a_{t,*}}' {D_*}' {E_{\new}} +  {E_{\new}}' D_* a_{t,*}{a_{t,\new}}' {R_{\new}}'$.
\begin{enumerate}
\item  Using item \ref{zt_indep}, the $Y_t$'s are conditionally independent given $X_{j,k-1}$.

\item Using item \ref{X_at_indep} and the fact that $a_{t,\new}$ and $a_{t,*}$ are mutually uncorrelated,  $\E\left(\frac{1}{\alpha}\sum_t Y_t|X_{j,k-1}\right) = 0$ for all $X_{j,k-1} \in \Gamma_{j,k-1}$.

\item Using  the bound on $\| a_t \|_{\infty}$, items \ref{rnew}, \ref{X_in_Gamma}, and Fact \ref{constants}, conditioned on $X_{j,k-1}$, $\|Y_t\| \le 2\sqrt{c r} \zeta_*^+ \gamma_* \gamma_{\new,k}  \leq 2\sqrt{c r} \zeta_*^+ \gamma_*^2 \le  2$ holds w.p. one for all $X_{j,k-1} \in \Gamma_{j,k-1}$.
\end{enumerate}

Thus, under the same conditioning, $-b I \preceq Y_t  \preceq b I$ with $b = 2$ w.p. one.

Thus, applying Corollary \ref{hoeffding_nonzero} with $\epsilon = \frac{c\zeta\lambda^-}{24}$, we get
\begin{multline}
\mathbf{P}\left(\lambda_{\min} \left(\frac{1}{\alpha} \sum_t Y_t \right) \geq  \frac{-c\zeta\lambda^-}{24} \Big| X_{j,k-1} \right) \geq \\
1- c \exp \left( \frac{-\alpha c^2 \zeta^2(\lambda^-)^2} {8 \cdot 24^2 \cdot (2b)^2}\right)  \ \text{for all $X_{j,k-1} \in  \Gamma_{j,k-1}$}
\label{lemma_add_A2}
\end{multline}

Combining (\ref{lemmabound_1}), (\ref{lemma_add_A1}) and (\ref{lemma_add_A2}) and using the union bound, $\mathbf{P} (\lambda_{\min}(A_k) \geq \lambda_{\new,k}^-(1 - (\zeta_*^+)^2) - \frac{c\zeta\lambda^-}{12}| X_{j,k-1}) \geq 1-p_a(\alpha,\zeta)  \ \text{for all $X_{j,k-1} \in  \Gamma_{j,k-1}$}$.
The first claim of the lemma follows by using $\lambda_{\new,k}^- \ge \lambda^-$ and then applying Lemma \ref{rem_prob} with $X \equiv X_{j,k-1}$ and $\calc \equiv \Gamma_{j,k-1}$.

Now consider $A_{k,\perp} := \frac{1}{\alpha} \sum_t {E_{\new,\perp}}' \Phi_{0} L_t {L_t}' \Phi_{0} E_{\new,\perp}$. Using item \ref{rnew}, ${E_{\new,\perp}}' \Phi_{0} L_t = {E_{\new,\perp}}' D_* a_{t,*}$. Thus, $A_{k,\perp} = \frac{1}{\alpha} \sum_t Z_t$ with  $Z_t={E_{\new,\perp}}' D_* a_{t,*} {a_{t,*}}' {D_*}' E_{\new,\perp}$ which is of size $(n-c)\times (n-c)$.
Using the same ideas as above we can show that $0 \preceq Z_t \preceq r (\zeta_*^+)^2 \gamma_*^2 I \preceq \zeta I$ and $\E\left(\frac{1}{\alpha}\sum_t Z_t|X_{j,k-1}\right) \preceq (\zeta_*^+)^2 \lambda^+ I$.  Thus by Corollary \ref{hoeffding_nonzero} with $\epsilon = \frac{c\zeta\lambda^-}{24}$ and Lemma \ref{rem_prob} the second claim follows.

Using the expression for $\mathcal{H}_k$ given in Definition \ref{defHk}, it is easy to see that
\begin{align}
\|\mathcal{H}_k \|_2 &\leq  \max\{ \|H_k\|_2, \|H_{k,\perp}\|_2 \} + \|B_k\|_2 \nn \\
&\leq \Big\|\frac{1}{\alpha} \sum_t e_t {e_t}'\Big\|_2 +  \max(\|T2\|_2, \|T4\|_2) + \|B_k\|_2
\label{add_calH1}
\end{align}
where $T2:= \frac{1}{\alpha} \sum_t  {E_{\new}}' \Phi_{0}( L_t {e_t}' + e_t {L_t}')\Phi_{0} E_{\new}$ and $T4 :=\frac{1}{\alpha} \sum_t {E_{\new,\perp}}'\Phi_{0} (L_t {e_t}' + {e_t}'L_t)\Phi_{0} E_{\new,\perp}$. The second inequality follows by using the facts that (i) $H_k = T1 - T2$ where $T1 := \frac{1}{\alpha} \sum_t {E_{\new}}' \Phi_{0} e_t {e_t}'\Phi_{0} E_{\new}$, (ii) $H_{k,\perp} = T3 - T4$ where $T3 := \frac{1}{\alpha} \sum_t {E_{\new,\perp}}'\Phi_0 e_t {e_t}'\Phi_0  E_{\new,\perp}$, and (iii) $\max(\|T1\|_2, \|T3\|_2) \le \|\frac{1}{\alpha} \sum_t e_t {e_t}'\|_2$.
Next, we obtain high probability bounds on each of the terms on the RHS of (\ref{add_calH1}) using the Hoeffding corollaries.

Consider $\|\frac{1}{\alpha} \sum_t e_t {e_t}'\|_2$. Let $Z_t = e_t {e_t}'$. 
\begin{enumerate}
\item Using item \ref{zt_indep}, conditioned on $X_{j,k-1}$, the various $Z_t$'s in the summation are independent, for all $X_{j,k-1} \in \Gamma_{j,k-1}$.
\item Using item \ref{X_in_Gamma}, and the bound on $\| a_t \|_{\infty}$,
conditioned on $X_{j,k-1}$, $0 \preceq Z_t \preceq b_1 I$ w.p. one for all $X_{j,k-1} \in  \Gamma_{j,k-1}$. Here $b_1:=(\kappa_s^+ \zeta_{k-1}^+ \phi^+ \sqrt{c} \gamma_{\new,k} + \zeta_*^+ \phi^+ \sqrt{r} \gamma_*)^2$.
\item Also using item \ref{X_in_Gamma}, $0 \preceq \frac{1}{\alpha} \sum_t \E(Z_t|X_{j,k-1}) \preceq b_2I$, with  $b_2:= (\kappa_s^+)^2 (\zeta_{k-1}^+)^2 (\phi^+)^2 \lambda_{\new,k}^+ + (\zeta_*^+)^2 (\phi^+)^2 \lambda^+$ for all $X_{j,k-1} \in  \Gamma_{j,k-1}$.
\end{enumerate}
Thus, applying Corollary \ref{hoeffding_nonzero} with $\epsilon = \frac{c\zeta\lambda^-}{24}$,
\begin{multline}
\mathbf{P} \left( \Big\|\frac{1}{\alpha} \sum_t  e_t {e_t}' \Big\|_2 \leq b_2  + \frac{c\zeta\lambda^-}{24} \Big| X_{j,k-1} \right) \geq \\
1- n \exp\left(\frac{-\alpha c^2 \zeta^2 (\lambda^-)^2}{ 8 \cdot 24^2 b_1^2}\right)  \ \text{for all $X_{j,k-1} \in  \Gamma_{j,k-1}$}
\label{add_etet}
\end{multline}

Consider $T2$. Let $Z_t: = {E_{\new}}' \Phi_{0} (L_t {e_t}' + e_t{L_t}')\Phi_{0} E_{\new}$ which is of size $c \times c$. Then $T2 = \frac{1}{\alpha} \sum_t Z_t$.
\begin{enumerate}
\item Using item \ref{zt_indep}, conditioned on $X_{j,k-1}$, the various $Z_t$'s used in the summation are mutually independent, for all $X_{j,k-1} \in \Gamma_{j,k-1}$.
Using item \ref{rnew}, ${E_{\new}}'\Phi_{0} L_t  = R_{\new} a_{t,\new} + {E_\new}' D_* a_{t,*}$ and ${E_{\new}}'\Phi_{0} e_t = ({R_{\new}}')^{-1} {D_{\new}}' e_t$.

\item Thus, using items \ref{rnew}, \ref{X_in_Gamma}, and the bound on $\| a_t \|_{\infty}$, it follows that conditioned on $X_{j,k-1}$, $\|Z_t\|_2 \leq 2 \tilde{b}_3 \leq 2 b_3$ w.p. one for all $X_{j,k-1} \in  \Gamma_{j,k-1}$. Here, $\tilde{b}_3:=  \frac{\kappa_s^+}{\sqrt{1-(\zeta_*^+)^2}} \phi^+( \kappa_s^+ \zeta_{k-1}^+ \sqrt{c} \gamma_{\new,k} +   \sqrt{r} \zeta_*^+ \gamma_*)(\sqrt{c} \gamma_{\new,k} + \sqrt{r} \zeta_*^+ \gamma_*)$ and
$b_3:= \frac{1}{\sqrt{1-(\zeta_*^+)^2}}(
\phi^+  c {\kappa_s^+}^2 \zeta_{k-1}^+  \gamma_{\new,k}^2
+ \phi^+ \sqrt{rc} {\kappa_s^+}^2   \zeta_{k-1}^+ \zeta_*^+  \gamma_{\new,k} \gamma_*
+ \phi^+ \sqrt{rc} \kappa_s^+  \zeta_*^+ \gamma_* \gamma_{\new,k}
+ \phi^+  r {\zeta_*^+}^2 \gamma_*^2)$.

\item Also, $\|\frac{1}{\alpha} \sum_t \E(Z_t|X_{j,k-1})\|_2 \leq 2 \tilde{b}_4 \leq 2 b_4$ where $\tilde{b}_4: = \frac{\kappa_s^+}{\sqrt{1-(\zeta_*^+)^2}}\phi^+ \kappa_s^+ \zeta_{k-1}^+ \lambda_{\new,k}^+ +  \frac{\kappa_s^+}{\sqrt{1-(\zeta_*^+)^2}} \phi^+ (\zeta_*^+)^2  \lambda^+$ and $b_4:=\frac{\kappa_s^+}{\sqrt{1-(\zeta_*^+)^2}}\phi^+ \kappa_s^+ \zeta_{k-1}^+ \lambda_{\new,k}^+ +  \frac{1}{\sqrt{1-(\zeta_*^+)^2}} \phi^+ (\zeta_*^+)^2  \lambda^+$.
\end{enumerate}
Thus, applying Corollary \ref{hoeffding_rec} with $\epsilon = \frac{c\zeta\lambda^-}{24}$,
\begin{multline}
\mathbf{P}\left( \|T2\|_2 \leq 2 b_4  + \frac{c\zeta\lambda^-}{24}\Big|X_{j,k-1}\right) \\
\geq 1- c \exp\left(\frac{-\alpha c^2 \zeta^2 (\lambda^-)^2}{32 \cdot 24^2 \cdot 4 b_3^2}\right) \ \text{for all $X_{j,k-1} \in   \Gamma_{j,k-1}$} \nn
\end{multline}

Consider $T4$. Let $Z_t: = {E_{\new,\perp}}'\Phi_{0} (L_t {e_t}' + e_t{L_t}')\Phi_{0} E_{\new,\perp}$ which is of size $(n-c)\times (n-c)$. Then $T4 = \frac{1}{\alpha} \sum_t Z_t$.
\begin{enumerate}
\item Using item \ref{zt_indep}, conditioned on $X_{j,k-1}$, the various $Z_t$'s used in the summation are mutually independent, for all $X_{j,k-1} \in  \Gamma_{j,k-1}$.
 Using item \ref{rnew}, ${E_{\new,\perp}}'\Phi_{0} L_t ={E_{\new,\perp}}' D_* a_{t,*}$.
\item Thus, conditioned on $X_{j,k-1}$, $\|Z_t\|_2 \leq  2b_5$ w.p. one for all $X_{j,k-1} \in  \Gamma_{j,k-1}$. Here $b_5:=   \phi^+ r(\zeta_*^+)^2 \gamma_*^2    +  \phi^+ \sqrt{rc} \kappa_s^+ \zeta_*^+  \zeta_{k-1}^+ \gamma_* \gamma_{\new,k}$
This follows using items \ref{X_in_Gamma} and the bound on $\| a_t \|_{\infty}$.
\item  Also, $\|\frac{1}{\alpha} \sum_t \E(Z_t|X_{j,k-1})\|_2 \leq 2 b_6, \ b_6:= \phi^+ (\zeta_*^+)^2  \lambda^+$.
\end{enumerate}

Applying Corollary \ref{hoeffding_rec} with $\epsilon = \frac{c\zeta\lambda^-}{24}$,
\begin{multline}
\mathbf{P} \left( \|T4 \|_2 \leq 2b_6  + \frac{c\zeta\lambda^-}{24} \Big| X_{j,k-1}\right) \geq \\
1- (n-c) \exp\left(\frac{-\alpha c^2 \zeta^2 (\lambda^-)^2}{32 \cdot 24^2 \cdot 4 b_5^2}\right) \ \text{for all $X_{j,k-1} \in   \Gamma_{j,k-1}$} \nn
\end{multline}

Consider $\max(\|T2 \|_2,\|T4 \|_2)$. Since $b_3 > b_5$ (follows because $\zeta_{k-1}^+ \le 1$) and $b_4 > b_6$, so $2b_6  + \frac{c\zeta\lambda^-}{24} < 2b_4  + \frac{c\zeta\lambda^-}{24}$ and $1- (n-c) \exp\left(\frac{-\alpha c^2 \zeta^2 (\lambda^-)^2}{8\cdot 24^2 \cdot 4 b_5^2}\right) > 1- (n-c) \exp\left(\frac{-\alpha c^2 \zeta^2 (\lambda^-)^2}{8 \cdot 24^2 \cdot 4 b_3^2}\right)$. Therefore, for all $X_{j,k-1} \in   \Gamma_{j,k-1}$, $\mathbf{P}\left( \|T4 \|_2 \leq 2 b_4 + \frac{c\zeta\lambda^-}{24} \Big| X_{j,k-1} \right) \geq 1- (n-c) \exp\left(\frac{-\alpha c^2 \zeta^2 (\lambda^-)^2}{32 \cdot 24^2 \cdot 4 b_3^2}\right)$.

By the union bound, for all $X_{j,k-1} \in  \Gamma_{j,k-1}$,
\begin{multline}
\mathbf{P} \left( \max(\|T2 \|_2,\|T4 \|_2)\leq 2b_4 +  \frac{c\zeta\lambda^-}{24} \Big|X_{j,k-1}\right) \geq \\
1- n \exp\left(\frac{-\alpha c^2\zeta^2 (\lambda^-)^2}{32 \cdot 24^2 \cdot 4b_3^2}\right)
\label{add_maxT}
\end{multline}

Consider $\|B_k\|_2$. Let $Z_t := {E_{\new,\perp}}'\Phi_{0} (L_t-e_t)({L_t}'-{e_t}')\Phi_{0} E_{\new}$ which is of size $(n-c)\times c$. Then $B_k = \frac{1}{\alpha} \sum_t Z_t$.
Using item \ref{rnew}, ${E_{\new,\perp}}'\Phi_{0} (L_t-e_t) = {E_{\new,\perp}}'( D_{*} a_{t,*} - \Phi_{0} e_t)$,  ${E_{\new}}' \Phi_{0} (L_t - e_t) = R_{\new} a_{t,\new}+ {E_{\new}}' D_* a_{t,*} + (R_\new')^{-1} D_\new' e_t$. Also, $\|Z_t\|_2 \leq b_7$ w.p. one for all $X_{j,k-1} \in  \Gamma_{j,k-1}$ and  $\|\frac{1}{\alpha} \sum_t \E(Z_t|X_{j,k-1})\|_2 \leq b_8$ for all $X_{j,k-1} \in  \Gamma_{j,k-1}$.
 Here
\begin{align*}
b_7 := &(\sqrt{r} \zeta_*^+  (1+ \phi^+)\gamma_*  + (\kappa_s^+) \zeta_{k-1}^+ \phi^+ \sqrt{c} \gamma_{\new,k})\cdot \\
&\left( \sqrt{c} \gamma_{\new,k} + \sqrt{r} \zeta_*^+ \left(1+\frac{1}{\sqrt{1-(\zeta_*^+)^2}} \kappa_s^+ \phi^+\right) \gamma_*  + \right.\\
& \left. \frac{1}{\sqrt{1-(\zeta_*^+)^2}} {\kappa_s^+}^2 \zeta_{k-1}^+ \phi^+ \sqrt{c} \gamma_{\new,k}\right)
\end{align*}
and
\begin{align*}
b_8 := &\left(\kappa_s^+ \zeta_{k-1}^+ \phi^+ + \frac{1}{\sqrt{1-(\zeta_*^+)^2}}(\kappa_s^+)^3 (\zeta_{k-1}^+)^2 (\phi^+)^2\right) \lambda_{\new,k}^+  \\ 
&+ (\zeta_*^+)^2 \left(1 + \phi^+ + \frac{1}{\sqrt{1-(\zeta_*^+)^2}}\kappa_s^+ \phi^+ + \right. \\
 &\hspace{1.7in}\left. \frac{1}{\sqrt{1-(\zeta_*^+)^2}}\kappa_s^+(\phi^+)^2 \right) \lambda^+
\end{align*}

Thus, applying Corollary \ref{hoeffding_rec} with $\epsilon=\frac{c\zeta\lambda^-}{24}$,
\begin{multline}
\mathbf{P} \left(\|B_k\|_2 \leq b_8 + \frac{c\zeta\lambda^-}{24} \Big| X_{j,k-1}\right) \geq \\
 1 - n \exp\left(\frac{-\alpha c^2 \zeta^2 (\lambda^-)^2}{32 \cdot 24^2 b_7^2}\right) \ \text{for all $X_{j,k-1} \in   \Gamma_{j,k-1}$}
\label{Bk}
\end{multline}

Using (\ref{add_calH1}), (\ref{add_etet}), (\ref{add_maxT}) and (\ref{Bk}) and the union bound,  for any $X_{j,k-1} \in  \Gamma_{j,k-1}$,
\begin{align*}
&\mathbf{P} \left(\|\mathcal{H}_k\|_2 \leq b_9 + \frac{c\zeta\lambda^-}{8} \Big|X_{j,k-1}\right)\geq \\
&  1-n \exp\left(\frac{-\alpha c^2 \zeta^2 (\lambda^-)^2}{8 \cdot 24^2 b_1^2}\right)- n \exp\left(\frac{-\alpha c^2\zeta^2 (\lambda^-)^2}{32\cdot 24^2 \cdot 4 b_3^2}\right) \\
&\hspace{1.7in} - n\exp\left(\frac{-\alpha c^2 \zeta^2 (\lambda^-)^2 }{32 \cdot 24^2 b_7^2}\right)
\end{align*}
where
\begin{align*}
b_9 &:= b_2 +2b_4+ b_8 \\
&= \left( (\frac{2(\kappa_s^+)^2 \phi^+}{\sqrt{1-(\zeta_*^+)^2}} + \kappa_s^+ \phi^+ )\zeta_{k-1}^+  + \right.\\
&\hspace{.8in}\left. ( (\kappa_s^+)^2 (\phi^+)^2  +  \frac{(\kappa_s^+)^3 (\phi^+)^2 }{\sqrt{1-(\zeta_*^+)^2}})  (\zeta_{k-1}^+)^2
\right)  \lambda_{\new,k}^+ \  \\
& + \left((\phi^+)^2 +  \frac{2\phi^+ }{\sqrt{1-(\zeta_*^+)^2}}
+ 1 + \phi^+ + \right.\\
&\hspace{1in}\left.\frac{\kappa_s^+ \phi^+}{\sqrt{1-(\zeta_*^+)^2}} + \frac{\kappa_s^+(\phi^+)^2 }{\sqrt{1-(\zeta_*^+)^2}} \right) (\zeta_*^+)^2   \lambda^+
\end{align*}

Using $\lambda_{\new,k}^- \ge \lambda^-$ and $f := \lambda^+/\lambda^-$, $b_9 + \frac{c\zeta\lambda^-}{8} \le \lambda_{\new,k}^- (b + 0.125c\zeta)$ where $b$ is defined in Definition \ref{zetakplus}.
Using Fact \ref{constants} and substituting $\kappa_s^+ = 0.15$, $\phi^+=1.2$, one can upper bound $b_1$, $b_3$ and $b_7$ and show that the above probability is lower bounded by $1- p_c(\alpha,\zeta)$. Finally, applying Lemma \ref{rem_prob}, the third claim of the lemma follows.

\end{proof}

\section{Proof of Lemma \ref{bound_R}}\label{proof_lem_bound_R}
\begin{proof} [Proof of Lemma \ref{bound_R}]
\ben
\item The first claim follows because $\|D_{\text{det},k}\|_2 = \|\Psi_{k-1} G_{\text{det},k}\|_2  = \| \Psi_{k-1} [G_1 G_2 \cdots G_{k-1}]\|_2 \leq \sum_{k_1=1}^{k-1}\|\Psi_{k-1} G_{k_1}\|_2
\leq \sum_{k_1=1}^{k-1} \|\Psi_{k_1} G_{k_1}\|_2 =  \sum_{k_1=1}^{k-1} \tilde{\zeta}_{k_1} \leq  \sum_{k_1=1}^{k-1} \tilde{c}_{k_1} \zeta \leq r\zeta$. The first inequality follows by triangle inequality. The second one follows because $\hat{G}_1,\cdots,\hat{G}_{k-1}$ are mutually orthonormal and so $\Psi_{k-1} = \prod_{k_2=1}^{k-1}(I - \hat{G}_{k_2}{\hat{G}_{k_2}}')$.

\item By the first claim, $\|(I - \hat{G}_{\text{det},k} {\hat{G}_{\text{det},k}}') G_{\text{det},k} \|_2 =\|\Psi_{k-1} G_{\text{det},k}\|_2  \leq r\zeta$. By item 2) of Lemma \ref{lemma0} with $P = G_{\text{det},k}$ and $\hat{P} = \hat{G}_{\text{det},k}$, the result $\|G_{\text{det},k} {G_{\text{det},k}}' - \hat{G}_{\text{det},k}{\hat{G}_{\text{det},k}}'\|_2 \leq 2 r\zeta$ follows.

\item Recall that $D_k \overset{QR}{=} E_k R_k$ is a QR decomposition where $E_k$ is orthonormal and $R_k$ is upper triangular. Therefore, $\sigma_i(D_k)  = \sigma_i (R_k)$. Since $\|(I - \hat{G}_{\text{det},k} {\hat{G}_{\text{det},k}}')G_{\text{det},k}\|_2 =\|\Psi_{k-1} G_{\text{det},k}\|_2 \leq r\zeta$ and $G_k' G_{\text{det},k} = 0$, by item 4) of Lemma \ref{lemma0} with $P=G_{\text{det},k}$, $\Phat=\hat{G}_{\text{det},k}$ and $Q=G_k$, we have $\sqrt{1-r^2\zeta^2} \leq \sigma_i((I - \hat{G}_{\text{det},k} {\hat{G}_{\text{det},k}}')G_k)=\sigma_i(D_k)\leq 1$.

\item Since $D_k \overset{QR}{=} E_k R_k$, so  $\|{D_{\text{undet},k}}'E_k \|_2 =\|{D_{\text{undet},k}}'D_k R_k^{-1} \|_2 = \|{G_{\text{undet},k}}'\Psi_{k-1}' \Psi_{k-1} G_k R_k^{-1} \|_2 =  \|{G_{\text{undet},k}}'\Psi_{k-1} G_k R_k^{-1} \|_2 = \|{G_{\text{undet},k}}'D_k R_k^{-1} \|_2 = \|{G_{\text{undet},k}}'E_k\|_2$.
Since $E_k = D_k R_k^{-1} = ( I -\hat{G}_{\text{det},k} {\hat{G}_{\text{det},k}}') G_k R_k^{-1}$,
\bea
\|{G_{\text{undet},k}}'E_k \|_2  &=& \| {G_{\text{undet},k}}' ( I -\hat{G}_{\text{det},k} {\hat{G}_{\text{det},k}}') G_k R_k^{-1}\|_2 \nn \\
&\leq& \frac{\| {G_{\text{undet},k}}' ( I -\hat{G}_{\text{det},k} {\hat{G}_{\text{det},k}}') G_k \|_2}{ \sqrt{1-r^2 \zeta^2})} \nn \\ 
&=&  \frac{\| {G_{\text{undet},k}}' \hat{G}_{\text{det},k} {\hat{G}_{\text{det},k}}' G_k \|_2}{\sqrt{1-r^2 \zeta^2})} \nn
\eea
By item 3) of Lemma \ref{lemma0} with $P = {G}_{\text{det},k}$, $\Phat = \hat{G}_{\text{det},k}$ and $Q= G_{\text{undet},k}$, we get $\| {G_{\text{undet},k}}' \hat{G}_{\text{det},k}\|_2 \leq r\zeta$. By item 3) of Lemma \ref{lemma0} with $\Phat = \hat{G}_{\text{det},k}$ and $Q= G_k$, we get $ \| {\hat{G}_{\text{det},k}}'G_{k} \|_2 \leq r\zeta$.  Therefore, $\|{G_{\text{undet},k}}'E_k \|_2 = \|{E_k}' G_{\text{undet},k}\|_2 \leq \frac{r^2 \zeta^2}{\sqrt{1-r^2\zeta^2}}$.
\een
\end{proof}

\section{Proof of Lemma \ref{lem_bound_terms}} \label{proof_lem_bound_terms}

\begin{proof}
We use $\frac{1}{\tilde{\alpha}}\sum_t$ to denote $\frac{1}{\tilde{\alpha}} \sum_{t \in \tilde{\mathcal{I}}_{j,k}}$.

For $ t \in \tilde{\mathcal{I}}_{j,k}$, let $a_{t,k} := {G_{j,k}}'L_t$, $a_{t,\text{det}} := {G_{\text{det},k}}'L_t = [G_{j,1},\cdots G_{j,k-1}]'L_t$ and $a_{t,\text{undet}}:= {G_{\text{undet},k}}'L_t = [G_{j,k+1}\cdots G_{j,\vartheta_j}]'L_t$. Then $a_t:= P_j'L_t$ can be split as
$a_t = [ a_{t,\text{det}}' \ a_{t,k}' \ a_{t,\text{undet}}']'$. 

This lemma follows using the following facts and the Hoeffding corollaries, Corollary \ref{hoeffding_nonzero} and \ref{hoeffding_rec}.
\ben

\item The matrices $D_k$, $R_k$, $E_k$, $D_{\text{det},k}, D_{\text{undet},k}$, $\Psi_{k-1}$, $\Phi_K$ are functions of the r.v. $\tilde X_{j,k-1}$. All terms that we bound for the first two claims of the lemma are of the form $\frac{1}{\alpha} \sum_{t \in \mathcal{\tilde{I}}_{j,k}} Z_t$ where $Z_t= f_1(\tilde X_{j,k-1}) Y_t f_2(\tilde X_{j,k-1})$, $Y_t$ is a sub-matrix of $a_t a_t'$ and $f_1(.)$ and $f_2(.)$ are functions of $\tilde X_{j,k-1}$. For instance, one of the terms while bounding $\lambda_{\min}(\mathcal{A}_k)$ is $\frac{1}{\tilde{\alpha}} \sum_t R_k a_{t,k} {a_{t,k}}'{R_k}'$.
 $\tilde X_{j,k-1}$ is independent of any $a_{t}$ for $t \in  \mathcal{\tilde{I}}_{j,k}$ , and hence the same is true for the matrices  $D_k$, $R_k$, $E_k$, $D_{\text{det},k}, D_{\text{undet},k}$, $\Psi_{k-1}$, $\Phi_K$. Also, $a_{t}$'s for different $t \in \mathcal{\tilde{I}}_{j,k}$ are mutually independent. Thus, conditioned on  $\tilde X_{j,k-1}$, the $Z_t$'s defined above are mutually independent.
\label{X_at_indep}

\item All the terms that we bound for the third claim contain $e_t$. Using Lemma \ref{cslem}, conditioned on $\tilde X_{j,k-1}$, $e_t$ satisfies (\ref{etdef0}) w.p. one whenever $\tilde X_{j,k-1} \in \tilde\Gamma_{j,k-1}$. Conditioned on $\tilde X_{j,k-1}$, all these terms are also of the form $\frac{1}{\alpha} \sum_{t \in \mathcal{\tilde{I}}_{j,k}} Z_t$ with $Z_t$ as defined above, whenever $\tilde X_{j,k-1} \in \tilde\Gamma_{j,k-1}$. Thus, conditioned on  $\tilde X_{j,k-1}$, the $Z_t$'s for these terms are mutually independent, whenever $\tilde X_{j,k-1} \in \tilde\Gamma_{j,k-1}$.

\item By Remark \ref{Gamma_rem} and the definition of $\tilde\Gamma_{j,k-1}$, $\tilde X_{j,k-1} \in \tilde\Gamma_{j,k-1}$ implies that $\zeta_{*} \le r \zeta$, $\tilde\zeta_{k'} \le c_{k'} \zeta, \ \text{for all} \ k'=1,2,\dots k-1$, $\zeta_K \le \zeta_K^+ \le c \zeta$, (iv) $\phi_K \le \phi^+$ (by Lemma \ref{cslem}); (v) $\|\Phi_K P_j\|_2 \le (r+c)\zeta$; and (vi) all conclusions of Lemma \ref{bound_R} hold. 


\item By the clustering assumption, $ \lambda_k^- \le \lambda_{\min}(\E(a_{t,k}{a_{t,k}}')) \le \lambda_{\max}(\E(a_{t,k}{a_{t,k}}')) \le  \lambda_k^+$; $\lambda_{\max}(\E(a_{t,\text{det}}{a_{t,\text{det}}}')) \le \lambda_1^+ = \lambda^+$; and $\lambda_{\max}(\E(a_{t,\text{undet}}{a_{t,\text{undet}}}')) \le \lambda_{k+1}^+$. Also, $\lambda_{\max}(\E(a_t a_t')) \le \lambda^+$.

\item By Weyl's theorem, for a sequence of matrices $B_t$, $\lambda_{\min}(\sum_t B_t) \ge \sum_t \lambda_{\min}(B_t)$ and $\lambda_{\max}(\sum_t B_t) \le \sum_t \lambda_{\max}(B_t)$.
\een


 Consider $\tilde{A}_k = \frac{1}{\tilde{\alpha}} \sum_t {E_k}' \Psi_{k-1} L_t{L_t}' \Psi_{k-1} E_k$. Notice that
${E_k}' \Psi_{k-1} L_t = R_k a_{t,k} + {E_k}'(D_{\text{det},k} a_{t,\text{det}} + D_{\text{undet},k} a_{t,\text{undet}})$. Let $Z_t = R_k a_{t,k} {a_{t,k}}'{R_k}'$ and let $Y_t = R_k a_{t,k}  ({a_{t,\text{det}}}'{D_{\text{det},k}}' + {a_{t,\text{undet}}}'{D_{\text{undet},k}}')E_k +  E_k'(D_{\text{det},k} a_{t,\text{det}} + D_{\text{undet},k} a_{t,\text{undet}}) {a_{t,k}}'{R_k}'$. Then
\beq
\tilde{A}_k \succeq \frac{1}{\tilde{\alpha}} \sum_t Z_t + \frac{1}{\tilde{\alpha}} \sum_t Y_t\label{lemmabound_1_pt2}
\eeq

Consider $\frac{1}{\tilde{\alpha}} \sum_t Z_t = \frac{1}{\tilde{\alpha}} \sum_t R_k a_{t,k} {a_{t,k}}'{R_k}'$. (a) As explained above, the $Z_t$'s are conditionally independent given $\tilde X_{j,k-1}$. (b) Using Ostrowoski's theorem and Lemma \ref{bound_R}, for all $\tilde X_{j,k-1} \in \tilde\Gamma_{j,k-1}$, $\lambda_{\min}( \E(\frac{1}{\tilde{\alpha}}\sum_t Z_t|\tilde X_{j,k-1})) = \lambda_{\min}( R_{k} \frac{1}{\tilde{\alpha}}\sum_t \E(a_{t,k}{a_{t,k}}') {R_{k}}') \ge \lambda_{\min} (R_{k} {R_{k}}')\lambda_{\min} (\frac{1}{\tilde{\alpha}}\sum_t \E(a_{t,k}{a_{t,k}}')) \geq (1- r^2 \zeta^2)\lambda_{k}^-$.
(c) Finally, using $\|R_k\|_2 \leq 1$ and $\|a_{t,k}\|_2 \leq \sqrt{\tilde{c}_k} \gamma_* $, conditioned on $\tilde X_{j,k-1}$, $0 \preceq Z_t  \preceq \tilde{c}_k \gamma_{*}^2 I  $ holds w.p. one for all $\tilde X_{j,k-1} \in \tilde\Gamma_{j,k-1}$.

Thus, applying Corollary \ref{hoeffding_nonzero} with $\epsilon = 0.1 \zeta \lambda^-$, and using $\tilde{c}_k \leq r$, for all $\tilde X_{j,k-1} \in \tilde\Gamma_{j,k-1}$,
\begin{multline}
\mathbf{P}\left(\lambda_{\min} \Big(\frac{1}{\tilde{\alpha}} \sum_t Z_t\Big)
\geq   (1- r^2\zeta^2)\lambda_{k}^-  - 0.1 \zeta \lambda^- \Big| \tilde X_{j,k-1}\right)
\geq  \\
1- \tilde{c}_k \exp \left(\frac{-\tilde{\alpha} \epsilon^2 }{8  (\tilde{c}_k  \gamma_{*}^2)^2}\right)
\geq 1 - r \exp \left(\frac{-\tilde{\alpha} \cdot (0.1 \zeta \lambda^-)^2 }{8  r^2 \gamma_{*}^4}\right)
  \label{lemma_add_A1_pt2}
\end{multline}


Consider $Y_t = R_k a_{t,k}  ({a_{t,\text{det}}}'{D_{\text{det},k}}' + {a_{t,\text{undet}}}'{D_{\text{undet},k}}')E_k +  E_k'(D_{\text{det},k} a_{t,\text{det}} + D_{\text{undet},k} a_{t,\text{undet}}) {a_{t,k}}'{R_k}'$. (a) As before, the $Y_t$'s are conditionally independent given $\tilde X_{j,k-1}$. (b) Since $\E[a_t]=0$ and $\text{Cov}[a_t]=\Lambda_t$ is diagonal, $\E(\frac{1}{\alpha}\sum_t Y_t|\tilde X_{j,k-1}) = 0$ whenever $\tilde X_{j,k-1} \in \tilde\Gamma_{j,k-1}$. (c) Conditioned on $\tilde X_{j,k-1}$, $\|Y_t\|_2 \le 2 \sqrt{\tilde{c}_k r} \gamma_*^2 r\zeta(1+ \frac{r\zeta}{\sqrt{1-r^2\zeta^2}}) \leq 2 r^2 \zeta \gamma_*^2 ( 1+ \frac{10^{-4}}{\sqrt{1-10^{-4}}}) \leq \frac{2}{r} ( 1+ \frac{10^{-4}}{\sqrt{1-10^{-4}}}) < 2.1$ holds w.p. one for all $\tilde X_{j,k-1} \in \tilde\Gamma_{j,k-1}$. This follows because $\tilde X_{j,k-1} \in \tilde\Gamma_{j,k-1}$ implies that $\|D_{\text{det},k}\|_2 \leq r\zeta$, $\|{E_k}' D_{\text{undet},k}\|_2 = \| {E_k}' G_{\text{undet},k}\|_2 \leq \frac{r^2 \zeta^2}{\sqrt{1-r^2\zeta^2}}$.
Thus, under the same conditioning, $-b I \preceq Y_t  \preceq b I$ with $b =2.1$ w.p. one.
Thus, applying Corollary \ref{hoeffding_nonzero} with $\epsilon = 0.1 \zeta \lambda^-$, we get
\begin{multline}
\mathbf{P}\left(\lambda_{\min} \Big(\frac{1}{\tilde{\alpha}} \sum_t Y_t\Big) \geq - 0.1 \zeta \lambda^- \Big| \tilde X_{j,k-1} \right) \geq \\
1- r \exp \left( \frac{-\tilde{\alpha} (0.1 \zeta \lambda^-)^2} {8 \dot (4.2)^2 }\right)  \ \text{for all $\tilde X_{j,k-1} \in  \tilde\Gamma_{j,k-1}$}
\label{lemma_add_A2_pt2}
\end{multline}

Combining (\ref{lemmabound_1_pt2}), (\ref{lemma_add_A1_pt2}) and (\ref{lemma_add_A2_pt2}) and using the union bound, $\mathbf{P} (\lambda_{\min}(\tilde{A}_k) \geq \lambda_{k}^-(1 - r^2\zeta^2) - 0.2 \zeta \lambda^-| \tilde X_{j,k-1}) \geq 1-\tilde{p}_1(\tilde{\alpha},\zeta) \ \text{for all $\tilde X_{j,k-1} \in  \tilde\Gamma_{j,k-1}$}$ where
\beq
\tilde{p}_1 (\tilde{\alpha},\zeta) := r \exp \left(\frac{-\tilde{\alpha} \cdot (0.1 \zeta \lambda^-)^2 }{8  r^2 \gamma_{*}^4}\right) + r \exp \left( \frac{-\tilde{\alpha} (0.1 \zeta \lambda^-)^2} {8 \dot (4.2)^2}\right) \label{prob1}
\eeq
The first claim of the lemma follows by using $\lambda_{k}^- \ge \lambda^-$ and applying Lemma \ref{rem_prob} with $X \equiv \tilde X_{j,k-1}$ and $\calc \equiv \tilde\Gamma_{j,k-1}$.

Consider $\tilde{A}_{k,\perp} := \frac{1}{\alpha} \sum_t {E_{k,\perp}}' \Psi_{k-1} L_t {L_t}' \Psi_{k-1} E_{k,\perp}$. Notice that ${E_{k,\perp}}' \Psi_{k-1} L_t = {E_{k,\perp}}' (D_{\text{det},k} a_{t,\text{det}} + D_{\text{undet},k} a_{t,\text{undet}})$. Thus, $\tilde{A}_{k,\perp} = \frac{1}{\tilde{\alpha}} \sum_t Z_t$ with $Z_t={E_{k,\perp}}' (D_{\text{det},k} a_{t,\text{det}} + D_{\text{undet},k} a_{t,\text{undet}})(D_{\text{det},k} a_{t,\text{det}} + D_{\text{undet},k} a_{t,\text{undet}})'E_{k,\perp}$ which is of size $(n-\tilde{c}_k)\times (n-\tilde{c}_k)$.
(a) As before, given $\tilde X_{j,k-1}$, the $Z_t$'s are independent.
(b) Conditioned on $\tilde X_{j,k-1}$, $0 \preceq Z_t  \preceq r \gamma_*^2 I$ w.p. one for all $\tilde X_{j,k-1} \in  \tilde\Gamma_{j,k-1}$.
(c) $\E(\frac{1}{\alpha}\sum_t Z_t|\tilde X_{j,k-1}) \preceq (\lambda_{k+1}^+ + r^2\zeta^2 \lambda^+)I$ for all $\tilde X_{j,k-1} \in  \tilde\Gamma_{j,k-1}$.

Thus applying Corollary \ref{hoeffding_nonzero} with $\epsilon =0.1 \zeta \lambda^-$ and using $\tilde{c}_k \geq \tilde{c}_{\min}$, we get
 \begin{multline}
\mathbf{P}(\lambda_{\max}(\tilde{A}_{k,\perp})  \leq \lambda_{k+1}^+ + r^2 \zeta^2 \lambda^+ + 0.1 \zeta \lambda^- | \tilde X_{j,k-1}) \geq\\
 1- \tilde{p}_2(\tilde{\alpha},\zeta) \ \text{for all $\tilde X_{j,k-1} \in  \tilde\Gamma_{j,k-1}$}
\end{multline} 
where
\beq
\tilde{p}_2(\tilde{\alpha},\zeta) := (n-\tilde{c}_{\min}) \exp \left(\frac{-\tilde{\alpha} (0.1 \zeta \lambda^-)^2}{8   r^2 \gamma_*^4 }\right) \label{prob2}
\eeq
The second claim follows using $\lambda_{k}^- \ge \lambda^-$, $f:=\lambda^+/\lambda^-$, $\tilde{h}_k := {\lambda_{k+1}}^+ / {\lambda_{k}}^-$ in the above expression and applying Lemma \ref{rem_prob}.

Consider the third claim. Using the expression for $\tilde{\mathcal{H}}_k$ given in Definition \ref{defHk_del}, it is easy to see that
\begin{align}
\|\tilde{\mathcal{H}}_k \|_2 &\leq  \max\{ \|\tilde{H}_k\|_2, \|\tilde{H}_{k,\perp}\|_2 \} + \|\tilde{B}_k\|_2 \nn \\
&\leq \Big\|\frac{1}{\tilde{\alpha}} \sum_t e_t {e_t}' \Big\|_2 +  \max(\|T2\|_2, \|T4\|_2) + \|\tilde{B}_k\|_2
\label{add2_calH1}
\end{align}
where $T2:= \frac{1}{\tilde{\alpha}} \sum_t  {E_{k}}' \Psi_{k-1}( L_t {e_t}' + e_t {L_t}')\Psi_{k-1} E_{k}$ and $T4 :=\frac{1}{\tilde{\alpha}} \sum_t {E_{k,\perp}}'\Psi_{k-1} (L_t {e_t}' + {e_t}'L_t)\Psi_{k-1} E_{k,\perp}$. The second inequality follows by using the facts that (i) $\tilde{H}_k = T1 - T2$ where $T1 := \frac{1}{\tilde{\alpha}} \sum_t {E_{k}}' \Psi_{k-1} e_t {e_t}'\Psi_{k-1} E_{k}$, (ii) $\tilde{H}_{k,\perp} = T3 - T4$ where $T3 := \frac{1}{\tilde{\alpha}} \sum_t {E_{k,\perp}}'\Psi_{k-1} e_t {e_t}'\Psi_{k-1}  E_{k,\perp}$, and (iii) $\max(\|T1\|_2, \|T3\|_2) \le \|\frac{1}{\tilde{\alpha}} \sum_t e_t {e_t}'\|_2$.

Next, we obtain high probability bounds on each of the terms on the RHS of (\ref{add_calH1}) using the Hoeffding corollaries. 

Consider $\|\frac{1}{\tilde{\alpha}} \sum_t e_t {e_t}'\|_2$. Let $Z_t = e_t {e_t}'$.
(a) As explained in the beginning of the proof, conditioned on $\tilde X_{j,k-1}$, the various $Z_t$'s in the summation are independent whenever $\tilde X_{j,k-1} \in \tilde\Gamma_{j,k-1}$. Also, by Lemma \ref{cslem}, under this conditioning, $\That_t=T_t$ for all $t \in \tilde{I}_{j,k}$ and hence $e_t$ satisfies (\ref{etdef0}) in this interval. Recall also that in this interval, $\Phi_{(t)} = \Phi_K$. Thus, using $\|\Phi_K P_j\|_2 \leq (r+c) \zeta$,
$$\|e_t\|_2 \le \phi^+ \sqrt{\zeta}$$
(b) Conditioned on $\tilde X_{j,k-1}$, $0 \preceq Z_t \preceq b_1 I$ w.p. one for all $\tilde X_{j,k-1} \in  \tilde\Gamma_{j,k-1}$. Here $b_1:={\phi^+}^2 \zeta$.
(c) Using $\|\Phi_K P_j\|_2 \leq (r+c) \zeta$, $0 \preceq \frac{1}{\alpha} \sum_t \E(Z_t|\tilde X_{j,k-1}) \preceq b_2I, \ b_2:= (r+c)^2 \zeta^2 {\phi^+}^2 \lambda^+$ for all $\tilde X_{j,k-1} \in  \tilde\Gamma_{j,k-1}$.

Thus, applying Corollary \ref{hoeffding_nonzero} with $\epsilon = 0.1 \zeta \lambda^-$,
\begin{multline}
\mathbf{P} \left( \Big\|\frac{1}{\tilde{\alpha}} \sum_t  e_t {e_t}' \Big\|_2 \leq b_2  +  0.1 \zeta \lambda^- \Big| \tilde X_{j,k-1} \right) \\
\geq 1- n \exp\left(\frac{-\tilde{\alpha}( 0.1 \zeta \lambda^-)^2}{ 8 \cdot b_1^2}\right)  \ \text{for all $\tilde X_{j,k-1} \in  \tilde\Gamma_{j,k-1}$}
\label{add2_etet}
\end{multline}

Consider $T2$. Let $Z_t: = {E_{k}}' \Psi_{k-1} (L_t {e_t}' + e_t{L_t}')\Psi_{k-1} E_{k}$ which is of size $\tilde{c}_k \times \tilde{c}_k$. Then $T2 = \frac{1}{\tilde{\alpha}} \sum_t Z_t$.
(a) Conditioned on $\tilde X_{j,k-1}$, the various $Z_t$'s used in the summation are mutually independent whenever $\tilde X_{j,k-1} \in \tilde\Gamma_{j,k-1}$.
(b) Notice that ${E_{k}}'\Psi_{k-1} L_t  = R_{k} a_{t,k} + {E_k}' (D_{\text{det},k} a_{t,\text{det}} + D_{\text{undet},k}a_{t,\text{undet}})$ and
${E_{k}}'\Psi_{k-1} e_t = (R_{k}^{-1})' D_k' e_t = (R_{k}^{-1})' D_k' I_{T_t} [(\Phi_K)_{T_t}' (\Phi_K)_{T_t}]^{-1} {I_{T_t}}' \Phi_K P_j a_{t}$.
Thus conditioned on $\tilde X_{j,k-1}$, $\|Z_t\|_2 \leq  2 b_3$ w.p. one for all $\tilde X_{j,k-1} \in  \tilde\Gamma_{j,k-1}$. Here,
$b_3:= \frac{\sqrt{r\zeta}}{\sqrt{1-r^2\zeta^2}} \phi^+ \gamma_*$. This follows using $\|(R_{k}^{-1})'\|_2\leq 1/\sqrt{1-r^2\zeta^2}$, $\|e_t\|_2\leq \phi^+ \sqrt{\zeta}$ and $\|E_k'\Psi_{k-1}L_t\|_2 \le \|L_t\|_2 \leq \sqrt{r} \gamma_*$.
(c) Also, $\|\frac{1}{\alpha} \sum_t \E(Z_t|\tilde X_{j,k-1})\|_2 \leq  2 b_4$ where $b_4:= \kappa_{s,D}^+ \kappa_{s,e}^+ (r+c) \zeta \phi^+ ( \lambda_k^+ + r \zeta \lambda^+ + \frac{r^2\zeta^2}{\sqrt{1-r^2\zeta^2}} \lambda_{k+1}^+)$. Here $\kappa_{s,D}^+ = \kappa_{s,*}^+ + r\zeta$ defined in Remark \ref{rem_kappa_D} is the bound on $\max_j \max_k \kappa_s( D_{j,k})$.

Thus, applying Corollary \ref{hoeffding_rec} with $\epsilon = 0.1 \zeta \lambda^-$, for all $\tilde X_{j,k-1} \in \tilde\Gamma_{j,k-1}$,
\begin{multline*}
\mathbf{P}( \|T2\|_2 \leq 2 b_4  + 0.1 \zeta \lambda^- | \tilde X_{j,k-1}) \\
\geq 1- \tilde{c}_k\exp\left(\frac{-\tilde{\alpha}(0.1 \zeta \lambda^-)^2}{32 \cdot 4 b_3^2}\right) \nn
\end{multline*}

Consider $T4$. Let $Z_t: = {E_{k,\perp}}'\Psi_{k-1} (L_t {e_t}' + e_t{L_t}')\Psi_{k-1} E_{k,\perp}$ which is of size $(n-\tilde{c}_k)\times (n-\tilde{c}_k)$. Then $T4 = \frac{1}{\tilde{\alpha}} \sum_t Z_t$. (a) conditioned on $\tilde X_{j,k-1}$, the various $Z_t$'s used in the summation are mutually independent whenever $\tilde X_{j,k-1} \in  \tilde\Gamma_{j,k-1}$.
(b) Notice that ${E_{k,\perp}}'\Psi_{k-1} L_t ={E_{k,\perp}}' (D_{\text{det},k} a_{t,\text{det}} + D_{\text{undet},k}a_{t,\text{undet}})$. Thus, conditioned on $\tilde X_{j,k-1}$, $\|Z_t\|_2 \leq  2b_5$ w.p. one for all $\tilde X_{j,k-1} \in  \tilde\Gamma_{j,k-1}$. Here $b_5:=   \sqrt{r\zeta} \phi^+ \gamma_*$.
(c) Also, for all $\tilde X_{j,k-1} \in \tilde\Gamma_{j,k-1}$, $\|\frac{1}{\tilde{\alpha}} \sum_t \E(Z_t|\tilde X_{j,k-1})\|_2 \leq 2 b_6, \ b_6:= \kappa_{s,e}^+  (r+c) \zeta \phi^+ (\lambda_{k+1}^+ + r \zeta \lambda^+)$.
Applying Corollary \ref{hoeffding_rec} with $\epsilon = 0.1 \zeta \lambda^-$, for all $\tilde X_{j,k-1} \in \tilde\Gamma_{j,k-1}$,
\begin{align*}
\mathbf{P}( \|T4\|_2 \leq 2b_6  + & 0.1 \zeta \lambda^-| \tilde X_{j,k-1}) \geq\\
& 1- (n-\tilde{c}_k) \exp\left(\frac{-\tilde{\alpha} (0.1 \zeta \lambda^-)^2}{32 \cdot4 b_5^2}\right) \\
\geq & 1- (n-\tilde{c}_{\min}) \exp\left(\frac{-\tilde{\alpha} (0.1 \zeta \lambda^-)^2}{32 \cdot4 b_5^2}\right).\nn
\end{align*}

Consider $\max(\|T2\|_2,\|T4\|_2)$. 
By union bound and using $b_3 > b_5$, for all $\tilde X_{j,k-1} \in  \tilde\Gamma_{j,k-1}$,
\begin{multline}
\mathbf{P}( \max(\|T2 \|_2,\|T4 \|_2)\leq 2\max(b_4,b_6) +  0.1 \zeta \lambda^- |\tilde X_{j,k-1}) \geq \\
1- n \exp\left(\frac{-\tilde{\alpha} (0.1 \zeta \lambda^-)^2}{32 \cdot 4b_3^2}\right)
\label{add2_maxT}
\end{multline}


Consider $\|\tilde{B}_k\|_2$. Let $Z_t := {E_{k,\perp}}'\Psi_{k-1} (L_t-e_t)({L_t}'-{e_t}')\Psi_{k-1} E_{k}$ which is of size $(n-\tilde{c}_k)\times \tilde{c}_k$. Then $\tilde{B}_k = \frac{1}{\tilde{\alpha}} \sum_t Z_t$.
(a) conditioned on $\tilde X_{j,k-1}$, the various $Z_t$'s used in the summation are mutually independent whenever $\tilde X_{j,k-1} \in  \tilde\Gamma_{j,k-1}$.
(b) Notice that ${E_{k,\perp}}'\Psi_{k-1} (L_t-e_t) = {E_{k,\perp}}'( D_{\text{det},k}a_{t,\text{det}} + D_{\text{undet},k} a_{t,\text{undet}} - \Psi_{k-1} e_t)$ and  ${E_{k}}' \Psi_{k-1} (L_t - e_t) = R_{k} a_{t,k}+ {E_{k}}'( D_{\text{det},k} a_{t,\text{det}} + D_{\text{undet},k} a_{t,\text{undet}} - \Psi_{k-1} e_t)$. Thus,  conditioned on $\tilde X_{j,k-1}$,
$ \|Z_t\|_2 \leq b_7$
w.p. one for all $X_{j,K, k-1} \in  \Gamma_{j,K, k-1}$. Here $b_7 := (\sqrt{r}\gamma_* + \phi^+ \sqrt{\zeta})^2$.
(c)  $\|\frac{1}{\tilde{\alpha}} \sum_t \E(Z_t|\tilde X_{j,k-1})\|_2 \leq b_8$ for all $\tilde X_{j,k-1} \in  \tilde\Gamma_{j,k-1}$ where
\begin{align*}
b_8 := & (r+c) \zeta  \kappa_{s,e}^+  \phi^+ \lambda_k^+  \\
& + \left[(r+c)\zeta \kappa_{s,e}^+ \phi^+ + (r+c)\zeta \kappa_{s,e}^+ \frac{r^2\zeta^2}{\sqrt{1-r^2\zeta^2}}\right]\lambda_{k+1}^+ \\
& + [r^2\zeta^2 + 2(r+c)r \zeta^2 \kappa_{s,e}^+ \phi^+ + (r+c)^2 \zeta^2 {\kappa_{s,e}^+}^2 {\phi^+}^2] \lambda^+  \nn
\end{align*}
Thus, applying Corollary \ref{hoeffding_rec} with $\epsilon=0.1 \zeta \lambda^-$,
\begin{multline}\label{Bktil}
\mathbf{P} (\|\tilde{B}_k\|_2 \leq b_8 + 0.1 \zeta \lambda^- | \tilde X_{j,k-1}) \geq \\
1 - n \exp\left(\frac{-\tilde{\alpha} (0.1 \zeta \lambda^-)^2}{32 \cdot b_7^2}\right) \ \text{for all $\tilde X_{j,k-1} \in   \tilde\Gamma_{j,k-1}$}
\end{multline}

Using (\ref{add2_calH1}), (\ref{add2_etet}), (\ref{add2_maxT}) and (\ref{Bktil}) and the union bound, for any $\tilde X_{j,k-1} \in  \tilde\Gamma_{j,k-1}$,
$$\mathbf{P} (\|\tilde{\mathcal{H}}_k\|_2 \leq b_9 + 0.2 \zeta \lambda^-|\tilde X_{j,k-1}) \geq 1- \tilde{p}_3(\tilde{\alpha},\zeta)$$
where $b_9 := b_2 +2b_4+ b_8$ and
\begin{multline}
\tilde{p}_3(\tilde{\alpha},\zeta):= n \exp\left(\frac{-\tilde{\alpha} \epsilon^2}{8 \cdot b_1^2}\right) + n \exp\left(\frac{-\tilde{\alpha} \epsilon^2}{32\cdot 4 b_3^2}\right) \\
+ n \exp\left(\frac{-\tilde{\alpha} \epsilon^2}{32 \cdot b_7^2}\right) \label{prob3}
\end{multline}
with $b_1 = {\phi^+}^2 \zeta$, $b_3 := \sqrt{r\zeta} \phi^+ \gamma_*$, $b_7 := (\sqrt{r} \gamma_* + \phi^+ \sqrt{\zeta})^2$.
Using $\lambda_{k}^- \ge \lambda^-$, $f:= \lambda^+/ \lambda^-$, $\tilde{g}_k := \lambda_k^+/\lambda_k^-$ and $\tilde{h}_k := \lambda_{k+1}^+ / \lambda_k^-$,  and then applying Lemma \ref{rem_prob}, the third claim of the lemma follows.
\end{proof}

\bibliographystyle{IEEEtran}
\bibliography{tipnewpfmt}

\begin{thebibliography}{10}
\providecommand{\url}[1]{#1}
\csname url@samestyle\endcsname
\providecommand{\newblock}{\relax}
\providecommand{\bibinfo}[2]{#2}
\providecommand{\BIBentrySTDinterwordspacing}{\spaceskip=0pt\relax}
\providecommand{\BIBentryALTinterwordstretchfactor}{4}
\providecommand{\BIBentryALTinterwordspacing}{\spaceskip=\fontdimen2\font plus
\BIBentryALTinterwordstretchfactor\fontdimen3\font minus
  \fontdimen4\font\relax}
\providecommand{\BIBforeignlanguage}[2]{{%
\expandafter\ifx\csname l@#1\endcsname\relax
\typeout{** WARNING: IEEEtran.bst: No hyphenation pattern has been}%
\typeout{** loaded for the language `#1'. Using the pattern for}%
\typeout{** the default language instead.}%
\else
\language=\csname l@#1\endcsname
\fi
#2}}
\providecommand{\BIBdecl}{\relax}
\BIBdecl

\bibitem{rrpcp_allerton}
C.~Qiu and N.~Vaswani, ``Real-time robust principal components' pursuit,'' in
  \emph{Allerton Conference on Communication, Control, and Computing}, 2010.

\bibitem{rrpcp_perf}
C.~Qiu, N.~Vaswani, and L.~Hogben, ``Recursive robust pca or recursive sparse
  recovery in large but structured noise,'' in \emph{IEEE Intl. Conf.
  Acoustics, Speech, Sig. Proc. (ICASSP)}, 2013, longer version in arXiv:
  1211.3754 [cs.IT].

\bibitem{rrpcp_perf2}
C.~Qiu and N.~Vaswani, ``Recursive sparse recovery in large but structured
  noise -- part 2,'' in \emph{IEEE Intl. Symp. on Information Theory (ISIT)},
  2013.

\bibitem{Roweis98emalgorithms}
S.~Roweis, ``Em algorithms for pca and spca,'' \emph{Advances in Neural
  Information Processing Systems}, pp. 626--632, 1998.

\bibitem{Torre03aframework}
F.~D.~L. Torre and M.~J. Black, ``A framework for robust subspace learning,''
  \emph{International Journal of Computer Vision}, vol.~54, pp. 117--142, 2003.

\bibitem{rpca}
E.~J. Cand{\`e}s, X.~Li, Y.~Ma, and J.~Wright, ``Robust principal component
  analysis?'' \emph{Journal of ACM}, vol.~58, no.~3, 2011.

\bibitem{rpca2}
V.~Chandrasekaran, S.~Sanghavi, P.~A. Parrilo, and A.~S. Willsky,
  ``Rank-sparsity incoherence for matrix decomposition,'' \emph{SIAM Journal on
  Optimization}, vol.~21, 2011.

\bibitem{sequentialSVD}
M.~Brand, ``Incremental singular value decomposition of uncertain data with
  missing values,'' in \emph{European Conference on Computer Vision}, 2002, pp.
  707--720.

\bibitem{ipca_weightedand}
D.~Skocaj and A.~Leonardis, ``Weighted and robust incremental method for
  subspace learning,'' in \emph{IEEE Intl. Conf. on Computer Vision (ICCV)},
  vol.~2, Oct 2003, pp. 1494 --1501.

\bibitem{Li03anintegrated}
Y.~Li, L.~Xu, J.~Morphett, and R.~Jacobs, ``An integrated algorithm of
  incremental and robust pca,'' in \emph{IEEE Intl. Conf. Image Proc. (ICIP)},
  2003, pp. 245--248.

\bibitem{error_correction_PCP_l1}
J.~Wright and Y.~Ma, ``Dense error correction via l1-minimization,'' \emph{IEEE
  Trans. on Info. Th.}, vol.~56, no.~7, pp. 3540--3560, 2010.

\bibitem{novel_m_estimator}
T.~Zhang and G.~Lerman, ``A novel m-estimator for robust pca,''
  \emph{arXiv:1112.4863v1}, 2011.

\bibitem{outlier_pursuit}
H.~Xu, C.~Caramanis, and S.~Sanghavi, ``Robust pca via outlier pursuit,''
  \emph{IEEE Tran. on Information Theorey}, vol.~58, no.~5, 2012.

\bibitem{mccoy_tropp11}
M.~McCoy and J.~Tropp, ``Two proposals for robust pca using semidefinite
  programming,'' \emph{arXiv:1012.1086v3}, 2010.

\bibitem{rpca_tropp}
M.~B. McCoy and J.~A. Tropp, ``Sharp recovery bounds for convex deconvolution,
  with applications,'' \emph{arXiv:1205.1580}.

\bibitem{linear_inverse_prob}
V.~Chandrasekaran, B.~Recht, P.~A. Parrilo, and A.~S. Willsky, ``The convex
  geometry of linear inverse problems,'' \emph{Foundations of Computational
  Mathematics}, no.~6, 2012.

\bibitem{rpca_hu}
Y.~Hu, S.~Goud, and M.~Jacob, ``A fast majorize-minimize algorithm for the
  recovery of sparse and low-rank matrices,'' \emph{IEEE Transactions on Image
  Processing}, vol.~21, no.~2, p. 742=753, Feb 2012.

\bibitem{SpaRCS}
A.~E. Waters, A.~C. Sankaranarayanan, and R.~G. Baraniuk, ``Sparcs: Recovering
  low-rank and sparse matrices from compressive measurements,'' in \emph{Proc.
  of Neural Information Processing Systems(NIPS)}, 2011.

\bibitem{rpca_vayatis}
E.~Richard, P.-A. Savalle, and N.~Vayatis, ``Estimation of simultaneously
  sparse and low rank matrices,'' \emph{arXiv:1206.6474, appears in Proceedings
  of the 29th International Conference on Machine Learning (ICML 2012)}.

\bibitem{rpca_zhang}
D.~Hsu, S.~M. Kakade, and T.~Zhang, ``Robust matrix decomposition with
  outliers,'' \emph{arXiv:1011.1518}.

\bibitem{rpca_Giannakis}
M.~Mardani, G.~Mateos, and G.~B. Giannakis, ``Recovery of low-rank plus
  compressed sparse matrices with application to unveiling traffic anomalies,''
  \emph{arXiv:1204.6537}.

\bibitem{compressivePCP}
J.~Wright, A.~Ganesh, K.~Min, and Y.~Ma, ``Compressive principal component
  pursuit,'' \emph{arXiv:1202.4596}.

\bibitem{rpca_reduced}
A.~Ganesh, K.~Min, J.~Wright, and Y.~Ma, ``Principal component pursuit with
  reduced linear measurements,'' \emph{arXiv:1202.6445}.

\bibitem{noisy_undersampled_yuan}
M.~Tao and X.~Yuan, ``Recovering low-rank and sparse components of matrices
  from incomplete and noisy observations,'' \emph{SIAM Journal on
  Optimization}, vol.~21, no.~1, pp. 57--81, 2011.

\bibitem{rrpcp_allerton11}
C.~Qiu and N.~Vaswani, ``Recursive sparse recovery in large but correlated
  noise,'' in \emph{48th Allerton Conference on Communication Control and
  Computing}, 2011.

\bibitem{han_tsp}
H.~Guo, C.~Qiu, and N.~Vaswani, ``An online algorithm for separating sparse and
  low-dimensional signal sequences from their sum,'' \emph{arXiv: 1303.4261
  [cs.IT] to appear in IEEE Transactions on Signal Processing}, 2014.

\bibitem{grass_undersampled}
J.~He, L.~Balzano, and A.~Szlam, ``Incremental gradient on the grassmannian for
  online foreground and background separation in subsampled video,'' in
  \emph{IEEE Conf. on Comp. Vis. Pat. Rec. (CVPR)}, 2012.

\bibitem{matrix_com_candes}
E.~J. Cand{\`e}s and B.~Recht, ``Exact matrix completion via convex
  optimization,'' \emph{Foundations of Computational Mathematics}, 2009.

\bibitem{Bresler_matrix}
K.~Lee and Y.~Bresler, ``Admira: Atomic decomposition for minimum rank
  approximation,'' \emph{IEEE Transactions on Information Theory}, vol.~56,
  no.~9, September 2010.

\bibitem{candes_rip}
E.~Candes, ``The restricted isometry property and its implications for
  compressed sensing,'' \emph{Compte Rendus de l'Academie des Sciences, Paris,
  Serie I}, pp. 589--592, 2008.

\bibitem{davis_kahan}
C.~Davis and W.~M. Kahan, ``The rotation of eigenvectors by a perturbation.
  iii,'' \emph{SIAM Journal on Numerical Analysis}, Mar. 1970.

\bibitem{tail_bound}
J.~A. Tropp, ``User-friendly tail bounds for sums of random matrices,''
  \emph{Foundations of Computational Mathematics}, vol.~12, no.~4, 2012.

\bibitem{nadler}
B.~Nadler, ``Finite sample approximation results for principal component
  analysis: A matrix perturbation approach,'' \emph{The Annals of Statistics},
  vol.~36, no.~6, 2008.

\bibitem{decodinglp}
E.~Candes and T.~Tao, ``Decoding by linear programming,'' \emph{IEEE Trans.
  Info. Th.}, vol. 51(12), pp. 4203 -- 4215, Dec. 2005.

\bibitem{rpca_regression}
Y.~Jin and B.~Rao, ``Algorithms for robust linear regression by exploiting the
  connection to sparse signal recovery,'' in \emph{IEEE Intl. Conf. Acoustics,
  Speech, Sig. Proc. (ICASSP)}, 2010.

\bibitem{rpca_regression_sparse}
K.~Mitra, A.~Veeraraghavan, and R.~Chellappa, ``A robust regression using
  sparse learing for high dimensional parameter estimation problems,'' in
  \emph{IEEE Intl. Conf. Acous. Speech. Sig.Proc.(ICASSP)}, 2010.

\bibitem{bpdn}
S.~S. Chen, D.~L. Donoho, and M.~A. Saunders, ``Atomic decomposition by basis
  pursuit,'' \emph{SIAM Journal on Scientific Computing}, vol.~20, pp. 33--61,
  1998.

\bibitem{rrpcp_isit}
C.~Qiu and N.~Vaswani, ``Support-predicted modified-cs for principal
  components' pursuit,'' in \emph{IEEE Intl. Symp. on Information Theory
  (ISIT)}, 2011.

\bibitem{grimmett}
G.~Grimmett and D.~Stirzaker, \emph{Probability and Random Processes}.\hskip
  1em plus 0.5em minus 0.4em\relax Oxford University Press, 2001.

\bibitem{hornjohnson}
R.~Horn and C.~Johnson, \emph{Matrix Analysis}.\hskip 1em plus 0.5em minus
  0.4em\relax Cambridge University Press, 1985.

\bibitem{PP_PCA_Li_Chen}
G.~Li and Z.~Chen., ``Projection-pursuit approach to robust dispersion matrices
  and principal components: Primary theory and monte carlo,'' \emph{Journal of
  the American Statistical Association}, vol.~80, no. 391, pp. 759--766, 1985.

\bibitem{feng_bresler}
P.~Feng and Y.~Bresler, ``Spectrum-blind minimum-rate sampling and
  reconstruction of multiband signals,'' in \emph{IEEE Intl. Conf. Acoustics,
  Speech, Sig. Proc. (ICASSP)}, vol.~3, 1996, pp. 1688--1691.

\bibitem{gorod_rao}
I.~F. Gorodnitsky and B.~D. Rao, ``Sparse signal reconstruction from limited
  data using focuss: A re-weighted norm minimization algorithm,'' \emph{IEEE
  Trans. Sig. Proc.}, pp. 600 -- 616, March 1997.

\bibitem{candes}
E.~Candes, J.~Romberg, and T.~Tao, ``Robust uncertainty principles: Exact
  signal reconstruction from highly incomplete frequency information,''
  \emph{IEEE Trans. Info. Th.}, vol. 52(2), pp. 489--509, February 2006.

\bibitem{donoho}
D.~Donoho, ``Compressed sensing,'' \emph{IEEE Trans. on Information Theory},
  vol. 52(4), pp. 1289--1306, April 2006.

\bibitem{dantzig}
E.~Candes and T.~Tao, ``The dantzig selector: statistical estimation when p is
  much larger than n,'' \emph{Annals of Statistics}, 2006.

\bibitem{hsu2011robust}
D.~Hsu, S.~M. Kakade, and T.~Zhang, ``Robust matrix decomposition with sparse
  corruptions,'' \emph{Information Theory, IEEE Transactions on}, vol.~57,
  no.~11, pp. 7221--7234, 2011.

\bibitem{reprocs_cor}
J.~Zhan and N.~Vaswani, ``Performance guarantees for reprocs - correlated
  low-rank matrix entries case,'' in \emph{IEEE Intl. Symp. on Information
  Theory (ISIT)}, 2014.

\bibitem{zhan_ISIT}
------, ``Performance guarantees for reprocs - correlated low-rank matrix
  entries case.''

\bibitem{rrpcp_globalsip}
B.~Lois, N.~Vaswani, and C.~Qiu, ``Performance guarantees for undersampled
  recursive sparse recovery in large but structured noise,'' in
  \emph{GlobalSIP}, 2013.

\end{thebibliography}

\end{document}